\newenvironment{lessspaceitemize*}%
  {\begin{itemize}%
  \vspace{-1mm}
    \setlength{\itemsep}{0pt}%
    \setlength{\parskip}{0pt}}%
    {\vspace{-1mm} \end{itemize}}
\newenvironment{lessspaceenum*}%
  {\begin{enumerate}%
  \vspace{-1mm}
    \setlength{\itemsep}{0pt}%
    \setlength{\parskip}{0pt}}%
  {\end{enumerate}}
\newenvironment{definition}[1][Definition]{\begin{trivlist}
\item[\hskip \labelsep {\bfseries #1}]}{\end{trivlist}}
\newenvironment{example}[1][Example]{\begin{trivlist}
\item[\hskip \labelsep {\bfseries #1}]}{\end{trivlist}}
\newtheorem{theorem}{Theorem}[section]
\newtheorem{lemma}[theorem]{Lemma}
\newtheorem{claim}[theorem]{Claim}
\newtheorem{proposition}[theorem]{Proposition}
\newtheorem{corollary}[theorem]{Corollary}
\newenvironment{thm_app}[1]{\noindent\textbf{Theorem~\ref{#1}.}}{\par\addvspace{\baselineskip}}
\newenvironment{clm_app}[1]{\noindent\textbf{Claim~\ref{#1}.}}{\par\addvspace{\baselineskip}}
\newenvironment{cor_app}[1]{\noindent\textbf{Corollary~\ref{#1}.}}{\par\addvspace{\baselineskip}}
\newcommand{\lx}{\ensuremath{\lambda(x)}}%
\newcommand{\ldx}{\ensuremath{\lambda'(x)}}%
\newcommand{\pe}{\ensuremath{\overline{p}_e}}%
\title{Price Competition in Networked Markets: How do Monopolies Impact Social Welfare?}
\author{Elliot Anshelevich\thanks{Computer Science Dept, Rensselaer Polytechnic Institute, Troy, NY. \texttt{eanshel@cs.rpi.edu}}
\and Shreyas Sekar\thanks{Computer Science Dept, Rensselaer Polytechnic Institute, Troy, NY. \texttt{sekars@rpi.edu}}}
\begin{document}

% Page heads
%\markboth{ANSHELEVICH AND SEKAR}{Market price competition in networks: How do monopolies impact social welfare?}

% Title portion
% NOTE! Affiliations placed here should be for the institution where the
%       BULK of the research was done. If the author has gone to a new
%       institution, before publication, the (above) affiliation should NOT be changed.
%       The authors 'current' address may be given in the "Author's addresses:" block (below).
%       So for example, Mr. Abdelzaher, the bulk of the research was done at UIUC, and he is
%       currently affiliated with NASA.
\maketitle

\begin{abstract}
We study the efficiency of allocations in large markets with a network structure where every seller owns an edge in a graph and every buyer desires a path connecting some nodes. While it is known that stable allocations in such settings can be very inefficient, the exact properties of equilibria in markets with multiple sellers are not fully understood even in single-source single-sink networks. In this work, we show that for a large class of natural buyer demand functions, we are guaranteed the existence of an equilibrium with several desirable properties. The crucial insight that we gain into the equilibrium structure allows us to obtain tight bounds on efficiency in terms of the various parameters governing the market, especially the number of monopolies $M$. All of our efficiency results extend to markets without the network structure.

While it is known that monopolies can cause large inefficiencies in general, our main results for single-source single-sink networks indicate that for several natural demand functions the efficiency only drops linearly with $M$. For example,  for concave demand we prove that the efficiency loss is at most a factor $1+\frac{M}{2}$ from the optimum,  for demand with monotone hazard rate it is at most $1+M$, and for polynomial demand the efficiency decreases logarithmically with $M$. 
In contrast to previous work that showed that monopolies may adversely affect welfare, our main contribution is showing that monopolies may not be as `evil' as they are made out to be; the loss in efficiency is bounded in many natural markets. Finally, we consider more general, multiple-source networks and show that in the absence of monopolies, mild assumptions on the network topology guarantee an equilibrium that maximizes social welfare. \end{abstract}

%\category{J.4}{Social and Behavioral Sciences}{Economics}
%
%\terms{Theory, Economics}
%
%\keywords{Market Equilibria, Quality of Equilibrium, Bertrand Networks}
%
%\acmformat{Anshelevich, Elliot and Sekar, Shreyas 2015. Market price competition in networks: How do monopolies impact social welfare?}
%% At a minimum you need to supply the author names, year and a title.
%% IMPORTANT:
%% Full first names whenever they are known, surname last, followed by a period.
%% In the case of two authors, 'and' is placed between them.
%% In the case of three or more authors, the serial comma is used, that is, all author names
%% except the last one but including the penultimate author's name are followed by a comma,
%% and then 'and' is placed before the final author's name.
%% If only first and middle initials are known, then each initial
%% is followed by a period and they are separated by a space.
%% The remaining information (journal title, volume, article number, date, etc.) is 'auto-generated'.
%
%\begin{bottomstuff}
%Author's addresses: Department of Computer Science, Rensselaer Polytechnic Institute, Troy, NY 12180; email: enshel@cs.rpi.edu, sekars@rpi.edu.
%\end{bottomstuff}

\section{Introduction}
The mechanism governing large decentralized markets is often straightforward: sellers post prices for their goods and buyers buy bundles that meet their requirements. Given this framework, the challenge faced by researchers has been to characterize the equilibrium states at which these markets operate. More concretely, consider a market with multiple sellers that can be represented by a network as follows:
\begin{itemize}
\item Every seller owns an item, which is a link in the network.
\item Every infinitesimal buyer seeks to purchase a path in the network (set of items) connecting some pair of nodes.
\end{itemize}

In addition to actual bandwidth markets where users purchase capacity on links for routing traffic, networks are commonly used in the literature to model combinatorial markets where the items are a mix of substitutes and complements~\cite{acemoglu2007competition,chawla2008bertrand,kuleshov2012efficiency,melo2014price}. As an illustrative example, consider the (sample) sandwich market of substitutes and complements in Figure~\ref{fig_examplemarket}; the prices of different sandwich ingredients are controlled by different sellers, and every buyer wants to purchase a set of ingredients to make a full sandwich. Our goal in this paper is to analyze the effects of \emph{price competition} in such networked markets, i.e., the pricing strategies employed by competing sellers and their effect on equilibrium welfare. 

In this paper, we are interested in the following question: ``How efficient are the equilibrium allocations in such markets and how do they depend on \emph{buyer demand} and \emph{network structure}?". Our interest stems in part, from the fact that a vast majority of literature has focused exclusively on Walrasian Equilibrium as the market operating point. There is limited understanding of efficiency beyond the Walrasian framework. Walrasian Equilibria are indeed attractive: they always exist in large markets~\cite{azevedo2013walrasian} and they are guaranteed to be optimal. However, the idea that prices are just `handed out' to buyers and sellers so that the market clears may not always be applicable in a decentralized market. As~\cite{gale2000strategic} remarks,
\begin{quotation}
``...an embarrassing lacuna in the theory of walrasian equilibrium is the failure to explain where the prices come from."
\end{quotation}

In contrast, we take the view that the sellers control their own prices in order to maximize revenue and not just clear the market, i.e., they act as \emph{price-setters} and not \emph{price-takers}. In such a setting, it does not make sense to consider Walrasian Equilibrium as the notion of stability. Our objective therefore, is to quantify the inefficiency as compared to Walrasian Equilibrium due to the strategic behavior of agents. A large body of work in Computer Science has attempted to bound this inefficiency~\cite{babaioff2014efficiency, chen2012incentive, hassidim2011non}, albeit with a focus on the strategic behavior of buyers who can misreport their preferences, in settings with a single central seller. On the other hand, we consider a market with many strategic sellers and a continuum of buyers. In such a model, it is reasonable to assume that buyers behave as price-taking agents since their individual demand is small compared to the market size. Moreover, in large markets, the seller does not need to know every buyer's value as long as they can anticipate the aggregate demand at any given price. 

\subsection*{Our Model and Equilibrium Concept}
We model the interaction between buyers and sellers as a two-stage pricing game in a networked market. Each seller $e$ controls a single good or link in the network; he can produce any quantity $x$ of this good incurring a production cost of $C_e(x)$. Every non-atomic buyer $i$ in the market wants to purchase infinitesimal amount of some path (bundle of edges) connecting a source and a sink node for which she receives a value $v_i$. For the majority of this work, we will focus on \emph{single-source, single-sink networks}, i.e., markets where every buyer wants to purchase a path between the same source and sink node although they may hold different values for the same.

We consider a full information game where sellers can estimate the aggregate demand at any given price. In the first stage of the game, sellers set prices on the edges and in the second stage buyers buy edges along a path.  For any seller $e$, if at a price of $p_e$ per unit amount of the good, a population $x_e$ of buyers purchase the good, then the profit is $p_ex_e - C_e(x_e)$. The buyer's utility is $v_i$ minus the total price paid. A solution is said to be a Nash Equilibrium if (i) Every buyer receives a utility maximizing bundle, i.e, she purchases the cheapest available path as long as its price is at most $v_i$, (ii) No seller can unilaterally change his price and improve his profit at the new allocation (which depends on how many buyers purchase the good at the new prices).

\subsection*{Bertrand Competition with Monopolies}
Our two-stage game is essentially a generalization of the classic model of competition proposed by Bertrand where sellers fix prices and buyers choose quantities to purchase. Bertrand's model has been extensively studied in settings where all the items are substitutes\cite{baye1999folk, dastidar1995existence,guzman2011price}), a common theme being that perfect competition leads to socially optimal outcomes.

In combinatorial markets such as ours, however, inefficiencies arise mainly due to the monopolizing power held by some sellers, i.e., their items are not substitutable. In fact, as we show in Example~\ref{ex:1}, even a monopoly in a single-link market can improve its profit from the Walrasian outcome by raising prices, leading to a loss in social welfare. Our contribution is breaking down the dependence of efficiency on network topology into a single parameter $M$: the number of monopolies simultaneously operating in the market.

Our work is most closely related to the model of Bertrand Competition in Networks with supply limited sellers studied in~\cite{chawla2008bertrand} and later in~\cite{chawla2009price}. Our model is more general as the convex production costs that we consider strictly generalize limited supply. The behavior of Bertrand Networks with seller costs was posed as an open question in~\cite{chawla2008bertrand}. We address this question by extensively applying techniques from the theory of minimum-cost flows. The above paper also considered the efficiency of such markets and showed that in the worst case the equilibrium solution can be arbitrarily worse than the social optimum. We provide a more nuanced understanding of efficiency and characterize the loss in welfare as a function of the number of monopolies ($M$) for a wide spectrum of demand functions. Our main result is that for a large class of reasonable demand functions, the loss in efficiency is at most a factor $(1+M)$ from the optimum solution. We interpret this as a positive result for two reasons,
\begin{enumerate}
\item Given the previous results~\cite{chawla2008bertrand, chawla2009price} that welfare drops exponentially as the number of monopolies increases, a linear loss in welfare indicates that monopolies may not be as `evil' as they are made out to be in many markets.

\item Although a market may consist of a large number of distinct goods, it is reasonable to expect that the number of sellers independently monopolizing the market may be limited. Our bound depends only on such sellers; it is one of very few results interpolating between perfect competition and complete monopoly structure.
\end{enumerate}

\vskip3pt \noindent{\bf The Inverse Demand Function~}
In this work, our primary focus will be on single-source single-sink networks where every buyer has a different value $v_i$, although we do look at more general models in Sections~\ref{sec:generalizations} and~\ref{sec:pos1}. In large markets, it is reasonable to assume that sellers know exactly how many buyers value the $s$-$t$ path at $v$ or more. Formally, we define an inverse demand function $\lambda(x)$ such that for any $v$, $\lambda(x) = v$ implies that exactly $x$ amount of buyers value the path at $v$ or larger. Inverse demand functions are extremely common in Economics literature and provide a direct method to relate buyer demand and welfare in large markets, i.e., total value derived when $t$ buyers receive the good is $\int_{x=0}^{t} \lx dx$.

\subsection{Our contributions}
Our objective in this paper is to characterize the quality of equilibrium in terms of the demand and network structure, and specifically show the effect of monopolies on efficiency. Therefore, all our efficiency bounds depend only on the number of monopolies $M$ which is equivalent to the number of edges present in \emph{all} $s$-$t$ paths. Note that we define efficiency to be the ratio of the optimum social welfare of the market to that at equilibrium. We make no assumptions on the graph structure and sellers' cost functions other than convexity, which is the standard way to model production costs in literature.

\vskip5pt \noindent{\bf Single-Source Single-Sink Networks~}

Our first results concern existence and uniqueness. We show that:
\begin{enumerate}
\item There exists a Nash Equilibrium Pricing in every market where the inverse demand has a monotone `price elasticity'. (MPE functions, see Appendix~\ref{app:inverse_demand} for details). This is a very natural assumption, which is obeyed by most of the demand functions considered in the literature.

\item Our existence proof is constructive: we characterize both the equilibrium prices and the allocation and show how to compute it efficiently.

\item The equilibrium that we construct satisfies several desirable properties including fairness and Pareto-optimality, and it is reasonable to expect this equilibrium to arise in practice. In fact, although there may exist multiple equilibria, ours is the unique equilibrium that is robust or resilient to small perturbations, i.e., it strictly dominates all neighboring solutions.

\end{enumerate}
Therefore, we bound the efficiency of this solution.

\subsubsection*{Efficiency for general classes of demand functions}

\begin{figure}[ht]
\includegraphics[width=\textwidth]{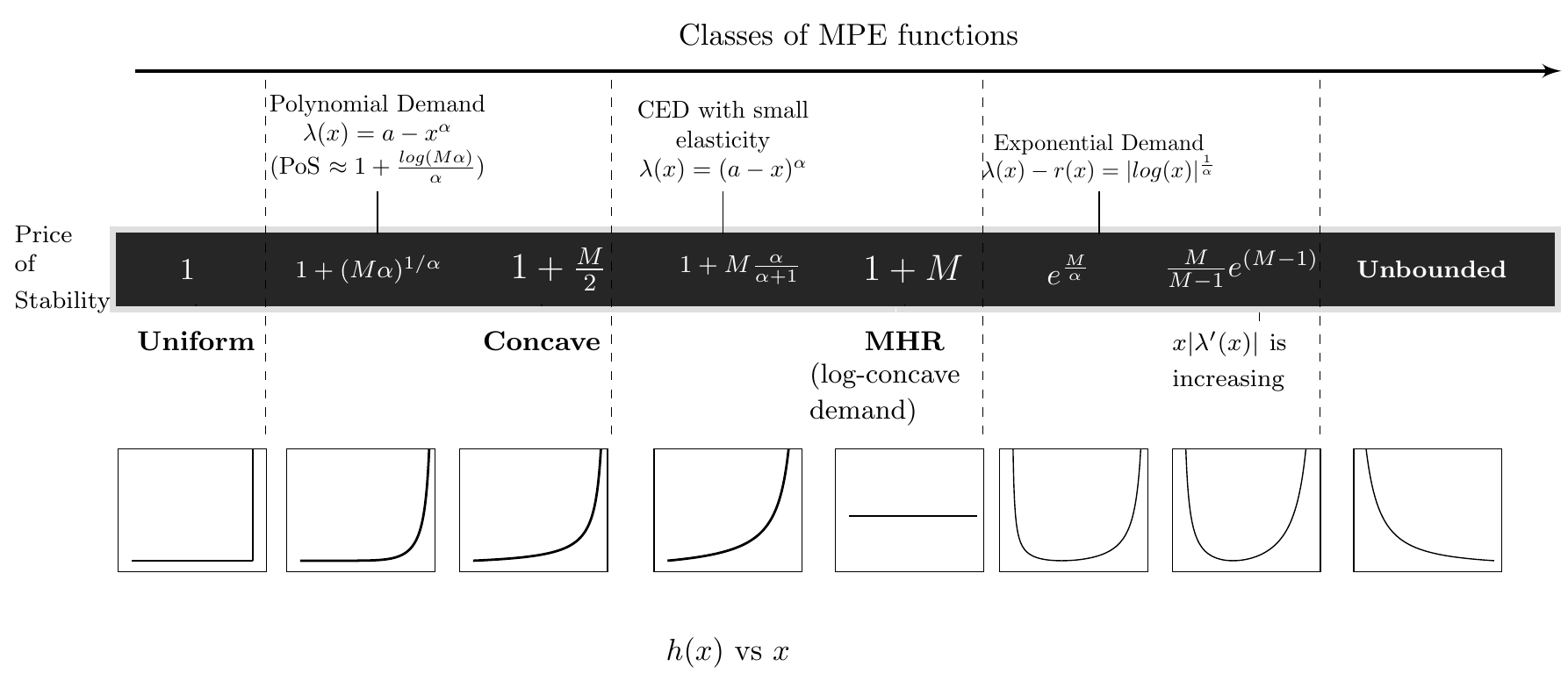}

\caption{Our bounds on the efficiency for different types of inverse demand functions $\lx$. As the hazard rate $h(x)=\frac{|\ldx|}{\lx}$ of the function changes from highly increasing (left) to quickly decreasing (right), the quality of equilibrium solutions changes from perfectly efficient to infinitely worse than the optimum. Equilibrium exists for all functions in the figure, since they all satisfy the MPE property, and it is easy to show that it is perfectly efficient when $M=0$.}
\label{fig:results}
\end{figure}

Our main contribution is showing that for a large class of demand functions, the efficiency drops only linearly as the number of monopolies increases. In particular, we prove the following
\begin{itemize}
\item When all buyers have symmetric valuations (uniform demand), there exists Nash Equilibrium Pricing maximizing social welfare.
\item If the inverse demand function $\lx$ has a \textbf{monotone hazard rate} (MHR), the loss in efficiency at equilibrium is bounded by a factor of $1+M$.
\item When $\lx$ is \textbf{concave} (a subset class of MHR functions), the efficiency loss is $1+\frac{M}{2}$.

\end{itemize}
These efficiency bounds are all tight. Both concave and MHR inverse demand assumptions are quite general and include many popular demand functions (see Section~\ref{app:inverse_demand} for details and examples).

\subsubsection*{Filling the gaps: Efficiency for specific demand functions}
We also come close to characterizing the efficiency for all demand functions obeying the MPE condition. This characterization is summarized in Figure~\ref{fig:results}. This result is also useful because some of the special classes of functions considered here capture buyer demand in some specific applications in the real world. For instance, in highly inelastic markets (eg. electricity market), the efficiency is close to $1+\frac{M}{2}$ and in settings with concave but polynomial demand, the efficiency only drops logarithmically with $M$.

The main conclusion to draw from this is that the presence of monopolies does not completely destroy efficiency: it crucially depends on the properties of the demand curve and the {\em number} of these monopolies. We supplement this result by showing that in several settings without any monopolies, there exist welfare-maximizing Nash Equilibrium. This generalizes Bertrand's classic maxim of `competition causes efficiency' to combinatorial markets. Finally, we also generalize our results to markets that do not have a graph structure, i.e, buyers may desire arbitrary bundles.  All our results extend to this setting with one difference: $M$ is the number of sellers with `monopoly-like' power at equilibrium. This may include sellers who are not monopolies in the traditional sense as their items do not belong to all bundles.

\vskip5pt \noindent{\bf Multiple-Source Networks~}
We provide a first step towards understanding efficiency in multiple-source networked markets by tackling a question of special interest: what conditions cause equilibrium to be fully efficient in such markets? We show that even when buyers desire different paths, as long as the network has a series-parallel structure and no monopolies, equilibrium is efficient. In contrast, without the series-parallel topology, even simple networks with no monopolies may have inefficient equilibria. We also show that the equilibrium is optimal if the buyer demand is fully elastic, or if every buyer has a `last-mile monopoly' (See Section~\ref{sec:pos1}).

\subsection{Related Work}
\noindent{\bf Price Competition with Multiple Sellers.}
As mentioned earlier, our model generalizes previous papers on Bertrand competition in networks, especially~\cite{chawla2008bertrand} and~\cite{chawla2009price}, which showed worst case bounds on efficiency. As our model is more general than theirs, we cannot hope to do better over all instances, however we show that for many important demand functions this inefficiency is bounded. Nevertheless some of our results (specifically Theorem~\ref{thm_mpepos}) are essentially generalizations of results from~\cite{chawla2009price}. Moreover, the behavior in markets with production costs may be quite different from that in supply-limited markets as illustrated in Claim~\ref{thm_specposcap}.

Another closely related model that also considers price competition was studied in~\cite{babaioff2014price}. They show that when there are multiple sellers but a single buyer, equilibrium allocations are efficient. Although we consider large markets, our Theorem~\ref{thm_gen_uniformdemand} is similar in spirit to their result. Our main results, however, are for more complex demand functions, which are not considered in that paper.

The negative results in~\cite{chawla2008bertrand} and~\cite{chawla2009price} have led some researchers to consider more sophisticated pricing mechanisms and other notions of equilibrium: see \cite{kuleshov2012efficiency, correa2008pricing, melo2014price}. For example,~\cite{kuleshov2012efficiency} and ~\cite{correa2008pricing} consider non-linear pricing, where the unit price of a good increases with demand. While complex pricing mechanisms do sometimes lead to improvement in efficiency, it imposes additional complications on the buyers as they now have to anticipate the change in price due to the behavior of other buyers. Thus even though fixed pricing (as we consider) is more natural, its effects are not really well-understood beyond the fact that competition leads to better outcomes and monopolies cause a loss in welfare. Our work captures both these maxims, but attempts to provide additional insight on the structure and quality of equilibrium.

Work such as \cite{acemoglu2007competition, acemoglu2007competitionorig} has also considered two-sided markets where buyers pay the price on each edge, but also incur a cost due to the congestion on the edge; such settings are essentially a combination of the type of market we consider and the classic selfish routing games~\cite{roughgarden2005selfish}. Unfortunately, most of the results in this settings are only known for simple structures such as parallel links or parallel paths. One exception is ~\cite{papadimitriou2010new}, which considers a unique one-sided model where the routing decisions are taken locally by sellers and not buyers as in our paper. They show that in the absence of monopolies, local decisions by sellers can result in efficient solutions.

%[However, the prices that they consider are not directly applicable in market settings where buyers actually purchase links. In their paper, prices are used to propagate information down the chain.]

%\subsection*{Combinatorial Auctions}
\noindent{\bf Combinatorial Auctions with a Single Seller.~} Algorithmic Game Theory has also focused on Nash Equilibria of games derived from market mechanisms with strategic buyers and a single centralized seller~\cite{babaioff2014efficiency, chen2012incentive, hassidim2011non}. The motivation in such papers is complementary to ours, to bound the efficiency loss due to non-price taking buyers. In addition, there has been a surge in the field of envy-free monopoly pricing~\cite{briest2011buying, guruswami2005profit}, two-stage games where all the items are owned by a single seller.

Finally, our two-stage game bears some similarity to first price procurement or path auctions (see \cite{immorlica2005first, moulin2013price} and the references therein). Such auctions are very useful for modeling competition between sellers, but usually ignore the buyer side of the market by assuming that there is a single buyer who wants to purchase exactly one bundle (instead of having some price-dependent demand). Contrary to our setting, path auctions become uninteresting in the presence of monopolies; existing work has mostly focused on concepts like frugality~\cite{immorlica2005first, moulin2013price} and not social welfare.

\section{Definitions and Preliminaries}
\label{sec:prelim}
An instance of our two-stage game is specified by a directed graph $G=(V,E)$, a source and a sink $(s,t)$, an inverse demand function $\lx$ and a cost $C_e(x)$ on each edge. There is a population $T$ of infinitesimal buyers; every buyer wants to purchase edges on some $s$-$t$ path and $x$ amount of buyers hold a value of $\lx$ or more for these paths. A buyer is satisfied if she purchases all the edges on some path connecting $s$ and $t$ and is indifferent among the different paths.

 Equivalently, we could consider a single atomic buyer with demand for $T$ units such that $\Lambda(x) = \int_{x=0}^{y} \lx dx$ is her value for a total of $y$ units of the bundles corresponding to $s$-$t$ paths. We define $M$ to be the number of monopolies in the market: an edge $e$ is a monopoly if removing it disconnects the source and sink. We make the following standard assumptions on the demand and the cost functions.

\begin{enumerate}
\item The inverse-demand function $\lx$ is continuous on $[0,T]$ and non-increasing. The latter assumption simply means that the total demand should not increase if sellers increase their price.

%
%Buyers' utility $\Lambda(x)$ is non-decreasing and concave; equivalently $\lx$ is non-increasing. This is a very standard assumption whether modeling a single buyer or a continuum of buyers: for a single buyer with utility $\Lambda(x)$ this means that the buyer's utility experiences diminishing returns; for a continuum of buyer this means that the demand should not increase if we increase the price.

\item $C_e(x)$ is non-decreasing and convex $\forall e$, a standard assumption for production or congestion costs. Moreover, $C_e(x)$ is continuous, twice differentiable, and its derivative $c_e(x)=\frac{d}{dx}C_e(x)$ satisfies $c_e(0)=0$. %The final assumption means that it costs the seller very little to produce the first quantity of his item.
We relax these assumptions in Section~\ref{sec:generalizations}. %, with our results staying essentially the same.
\end{enumerate}

\vskip 3pt \noindent{\bf Nash Equilibrium Pricing.~}
A solution of our two-stage game is a vector of prices on each item $\vec{p}$ and an allocation or flow $\vec{x}$ of the amount of each $s$-$t$ path purchased, representing the strategies of the sellers and buyers respectively. The total flow or market demand is equal to the number of buyers with non-zero allocation $x = \sum_{P \in \mathbb{P}}x_P$, where $\mathbb{P}$ is the set of $s$-$t$ paths. We can also decompose this flow $\vec{x}$ into the amount of each edge purchased by the buyers ($x_e$). Given such a solution, the total utility of the sellers is $\sum_{e \in E}(p_e x_e - C_e(x_e))$ and the aggregate utility of the buyers is $\int_{t=0}^x \lambda(t) dt - \sum_{e \in E}p_e x_e$. The total social welfare is simply $\int_{t=0}^x \lambda(t)dt - \sum_e C_e(x_e)$, i.e., prices are intrinsic to the system and do not appear in the welfare.

We now formally define the equilibrium states of our game. An allocation $\vec{x}$ is said to be a \textbf{best-response} by the buyers to prices $\vec{p}$ if buyers only buy the cheapest paths and for any cheapest path $P$, $ \lambda(x) = \sum_{e \in P}p_e$. That is, buyers act as price-takers and any buyer whose value is at least the price of the cheapest path will purchase some such path. A solution $(\vec{p},\vec{x})$ is a Nash Equilibrium if $\vec{x}$ is a best-response allocation to the prices and, $\forall e$ if the seller unilaterally changes his price from $p_e$ to $p'_e$, then for {\em every} feasible best-response flow $(x'_e)$ for the new prices, seller $e$'s profit cannot increase, i.e., $p_ex_e - C_e(x_e) \geq p'_ex'_e - C_e(x'_e)$. Our notion of equilibrium is quite strong as the seller does not have to predict exactly the resulting flow: for every best-response by the buyers, the seller's profit should not increase.

\vspace{-1mm}
\subsubsection*{Classes of inverse demand functions that we are interested in}

We now define some classes of demand functions that we consider in this paper. For the sake of notational convenience, we assume that the inverse demand function is continuously differentiable, and thus $\ldx$ is well-defined (and not positive as $\lx$ is non-increasing). However, all our results hold exactly even without this assumption (see Section~\ref{sec:generalizations}). The reader is asked to refer to Appendix~\ref{app:inverse_demand} for additional discussion and interpretation of each class of functions and other related economic concepts.

\begin{description}
\setlength{\itemsep}{-1pt}
\item[\textbf{Uniform buyers}:] $\lx=\lambda_0 > 0$ for $x \leq T$. In other words, a population of $T$ buyers all have the same value $\lambda_0$ for the bundles.

\item[\textbf{Concave Demand}:] $\ldx$ is a non-increasing function of $x$. This includes the popular linear inverse demand case ($\lx = a-x$)~\cite{abolhassani2014network, tsitsiklis2012efficiency} where the demand drops linearly as price increases.

\item[\textbf{Monotone Hazard Rate} (MHR) Demand:] $\frac{\ldx}{\lx}$ is non-increasing or $h(x)=\frac{|\ldx|}{\lx}$ is non-decreasing in $x$. This is equivalent to the class of {\em log-concave} functions~\cite{amir1996cournot} where $\log(\lx)$ is concave, and essentially captures inverse demand functions without a heavy tail. Example function: $\lx = e^{-x}$.

\item[\textbf{Monotone Price Elasticity} (MPE):] $xh(x) = \frac{x |\ldx|}{\lx}$ is a non-decreasing function of $x$ which tends to zero as $x \to 0$. This is equivalent to functions where the price elasticity of demand is non-decreasing as the price increases. Price elasticity measures the responsiveness of the market demand over its sensitivity to price, and exactly equals $\lx/x\ldx$ using our notation.
\end{description}

Each class of demand function defined above strictly contains all the classes defined previously, i.e., Uniform Demand $\subset$ Concave $\subset$ MHR $\subset$ MPE.

\vskip 3pt\noindent\textbf{Min-Cost Flows and the Social Optimum:} Since an allocation vector on a graph is equivalent to a $s$-$t$ flow, we briefly dwell upon minimum cost flows. Formally, we define $R(x)$ to be the cost $\sum_e C_e(x_e)$ of the min-cost flow of magnitude $x\geq 0$ and $r(x)$, its derivative, i.e., $r(x) = \frac{d}{dx}R(x)$. Both the flow and its cost can be computed via a simple convex program given a graph and cost functions. Clearly, $R(x)$ is non-decreasing since increasing the amount of  flow can only lead to an increase in cost, as the production costs $C_e$ are non-decreasing. We prove in the Appendix that:

\begin{proposition}\label{prop:Rstuff}
\label{prop_mincostfunction}
$R(x)$ is continuous, differentiable, and convex for all $x\geq 0$.
\end{proposition}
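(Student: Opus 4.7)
The plan is to establish the three properties in order, with convexity being a short averaging argument, continuity following immediately together with a boundary check at $x=0$, and differentiability requiring a duality argument for the underlying min-cost-flow program.

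For convexity, fix $x_1, x_2 \geq 0$ and $\alpha \in [0,1]$ and let $\vec{f}_1, \vec{f}_2$ be optimal min-cost $s$-$t$ flows of the respective magnitudes. Flow conservation and non-negativity are preserved under convex combinations, so $\vec{f}_\alpha := \alpha \vec{f}_1 + (1-\alpha)\vec{f}_2$ is a feasible $s$-$t$ flow of magnitude $\alpha x_1 + (1-\alpha)x_2$. Applying the convexity of each $C_e$ edge-by-edge,
\[
R(\alpha x_1 + (1-\alpha)x_2) \;\leq\; \sum_{e} C_e\bigl(\alpha f_{1,e} + (1-\alpha)f_{2,e}\bigr) \;\leq\; \alpha R(x_1) + (1-\alpha) R(x_2).
\]
Continuity on $(0,\infty)$ then reduces to the standard fact that convex functions are continuous on the interior of their domain. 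For continuity at $x=0$, I would route all of a magnitude-$h$ flow along a single fixed $s$-$t$ path $P$ to bound $0 \leq R(h) \leq \sum_{e\in P}C_e(h)$, which vanishes as $h\to 0^+$ since each $C_e$ is continuous with $C_e(0)=0$ (obtained from $c_e(0)=0$ by integration).

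For differentiability, convexity already gives one-sided derivatives $r^+(x_0), r^-(x_0)$ at every $x_0>0$ with $r^-(x_0) \leq r^+(x_0)$. The plan is to apply the KKT conditions for the convex min-cost-flow program at magnitude $x_0$ to obtain node potentials $\{\pi_v\}$ such that for every edge $e=(u,v)$, $c_e(x_e^\ast) \geq \pi_v - \pi_u$, with equality whenever $x_e^\ast>0$. The potential gap $\pi_t - \pi_s$ then upper bounds $r^+(x_0)$ (augmenting flow along a cheapest residual $s$-$t$ path costs exactly this much in the limit) and lower bounds $r^-(x_0)$ (retracting flow along any used $s$-$t$ path saves $\pi_t - \pi_s$ per unit, by complementary slackness). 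Combined with $r^-(x_0) \leq r^+(x_0)$, this forces $r^-(x_0) = r^+(x_0) = \pi_t - \pi_s$.

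The main obstacle is this last step: because the $C_e$ are convex but not necessarily strictly convex, both the optimal flow $\vec{x}^\ast$ and the potentials $\pi$ may fail to be unique, so I must argue that the gap $\pi_t-\pi_s$ is intrinsic. Fortunately the quotient $(R(x_0+h)-R(x_0))/h$ is itself intrinsic to $R$, so the sandwich above forces every KKT witness to yield the same value; the hypotheses $c_e(0)=0$ and continuity of each $c_e$ enter here to rule out pathological kinks at the boundary of the support of $\vec{x}^\ast$ and to guarantee that $r(x_0)=\pi_t-\pi_s$ varies continuously in $x_0$.
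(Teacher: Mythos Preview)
Your proposal is correct but organizes the proof differently from the paper, and your differentiability argument takes a genuinely different route.

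For convexity and continuity, you do essentially what the paper does (the paper's continuity proof is just the same flow-averaging argument you use for convexity, applied before convexity is formally stated; your ordering---prove convexity first, then invoke the fact that convex functions are continuous on the interior---is cleaner and shorter).

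For differentiability, the paper avoids KKT entirely. Its argument is: fix the optimal allocation $\alpha$ at $x_0$, and use $C^{\alpha}$ as a (suboptimal) cost function at nearby magnitudes to get $r^+(x_0)\leq c^{\alpha}(x_0)\leq r^-(x_0)$; then a flow-averaging contradiction (identical in spirit to the convexity proof) rules out $r^+(x_0)<r^-(x_0)$. This is completely elementary---no Lagrangian, no potentials. Your KKT/node-potential sandwich $\pi_t-\pi_s\leq r^-(x_0)\leq r^+(x_0)\leq \pi_t-\pi_s$ is equally valid and has the side benefit of simultaneously identifying $r(x_0)$ with the marginal cost $\sum_{e\in P}c_e(x_e^*)$ along any used path, which the paper establishes separately (Lemmas~\ref{lemma_nonzeroflowcost} and~\ref{lemma_cequivalence}). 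The trade-off is that you import the KKT machinery and must handle the non-uniqueness of duals, whereas the paper's argument needs nothing beyond the definition of $R$ and convexity of the $C_e$. Your remark that $c_e(0)=0$ is needed to ``rule out pathological kinks'' is slightly off: the argument goes through for any continuous $c_e$, since you only augment or retract along fully-used paths where complementary slackness is tight.
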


It is also easy to see from the KKT conditions that for a min-cost flow $\vec{x}$, we have $r(x)=\sum_{e \in P}c_e(x_e)$ for any path $P$ with non-zero flow (for a full proof see the Appendix). Given an instance of our game, the optimum solution is an allocation or flow which maximizes the social welfare $\Lambda(x)-\sum_e C_e(x_e)$. Since the buyers' utility depends only on the magnitude of the flow, welfare is maximized
when the flow is of minimal cost. The optimum solution therefore maximizes $\Lambda(x)-R(x)$ and must satisfy the following condition:

\begin{proposition}
\label{prop_optflow}
The solution maximizing social welfare is a min-cost flow of magnitude $x^*$ satisfying $\lambda(x^*) \geq r(x^*)$. Moreover, $\lambda(x^*)=r(x^*)$ unless $x^* = T$. \end{proposition}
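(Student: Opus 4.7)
The plan is to reduce the welfare-maximization problem to a one-dimensional concave optimization in the flow magnitude $x$, and then read off the claim from standard first-order conditions with attention to the boundary $x=T$.

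First I would separate the choice of flow magnitude from the choice of how to route it. For a fixed total flow value $x$, the buyer-side contribution $\Lambda(x) = \int_0^x \lambda(t)\,dt$ is independent of the particular routing, so maximizing social welfare $\Lambda(x) - \sum_e C_e(x_e)$ at this magnitude amounts to minimizing $\sum_e C_e(x_e)$ subject to $\vec{x}$ being an $s$-$t$ flow of magnitude $x$. By the definition of $R$, the minimum value is exactly $R(x)$, and any welfare-optimal solution must route its magnitude as a min-cost flow. This reduces the problem to maximizing the scalar function
\begin{equation*}
f(x) \;=\; \Lambda(x) - R(x)
\end{equation*}
over $x \in [0,T]$.

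Next I would argue concavity of $f$. The function $\Lambda$ is concave because its derivative $\lambda(x)$ is non-increasing by assumption. By Proposition~\ref{prop_mincostfunction}, $R$ is convex and differentiable with derivative $r(x)$, so $-R$ is concave. Hence $f$ is concave on $[0,T]$ and differentiable with $f'(x) = \lambda(x) - r(x)$, a non-increasing function of $x$.

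Finally I would apply the standard optimality conditions for maximizing a concave differentiable function on a compact interval. An optimum $x^*$ must satisfy $f'(x^*) \geq 0$ at the right endpoint and $f'(x^*) = 0$ at any interior maximizer, i.e.\ $\lambda(x^*) \geq r(x^*)$ in general and $\lambda(x^*) = r(x^*)$ whenever $x^* < T$. Combined with the first step (that welfare optima route as min-cost flows), this gives exactly the proposition. The only subtlety worth handling explicitly is the left boundary $x^*=0$: there $f'(0)\leq 0$ would force $\lambda(0) \leq r(0) = 0$ (using $c_e(0)=0$), so $\lambda(0)=r(0)$ and the equality clause still holds. There is no real obstacle here; the only delicate point is being careful that Proposition~\ref{prop_mincostfunction} supplies differentiability of $R$ so that the first-order conditions can be written pointwise, rather than as subgradient inclusions.
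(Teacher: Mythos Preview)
Your proposal is correct and matches the paper's approach: the paper does not spell out a proof but states just before the proposition that the optimum ``maximizes $\Lambda(x)-R(x)$,'' from which the claimed first-order conditions follow exactly as you outline. You have simply filled in the details the paper leaves implicit, including the boundary cases.
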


\section{Existence, Uniqueness, and Computation of Equilibrium Prices}
\label{sec:existenceBody}
In this section, we show that for a large class of demand functions, we are always guaranteed the existence of Nash Equilibrium. Although there may be multiple equilibria in general, we show that some of these are highly unrealistic. The equilibrium that we consider, on the other hand, satisfies several desirable properties and is the unique solution that is resilient to small perturbations. Finally, we show how to efficiently compute this unique equilibrium.

We now show our first result that reinforces the fact that even in arbitrarily large networks (not necessarily parallel links), competition results in efficiency. This result is only a starting point for us since it is the addition of monopolies that leads to interesting behavior. We only sketch our proofs here, full proofs are located in the Appendix.

\begin{claim}
\label{clm_monopolypos1}
In any network with no monopolies (i.e, you cannot disconnect $s$, $t$ by removing any one edge), there exists a Nash Equilibrium maximizing social welfare.
\end{claim}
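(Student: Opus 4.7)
The plan is to exhibit an explicit Nash equilibrium at the welfare-maximizing allocation via marginal-cost pricing. Let $\vec{x}^*$ denote the optimal flow (a minimum-cost $s$-$t$ flow of magnitude $x^*$ satisfying $\lambda(x^*) = r(x^*)$ by Proposition~\ref{prop_optflow}), and I would set $p_e^* := c_e(x_e^*)$ on every edge. The goal is then to verify both that $\vec{x}^*$ is a best response to $\vec{p}^*$ and that no seller can profit from a unilateral deviation.

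The best-response check uses the KKT optimality of the min-cost flow: there exist node potentials $\pi_v$ with $c_e(x_e^*) = \pi_v - \pi_u$ on every positive-flow edge $(u,v)$ and $0 = c_e(0) \geq \pi_v - \pi_u$ on each zero-flow edge. Telescoping along any $s$-$t$ path $P$ yields $\sum_{e \in P} p_e^* \geq \pi_t - \pi_s = r(x^*) = \lambda(x^*)$, with equality on every path supported by $\vec{x}^*$. So the cheapest path is priced at $\lambda(x^*)$, fixing the induced demand at $x^*$ and routing it along positive-flow paths, which recovers $\vec{x}^*$.

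For deviation-proofness, fix a seller $e$. If $x_e^* = 0$ then $p_e^* = 0$, and any upward deviation merely raises the price of paths through $e$ above the unchanged cheapest price $\lambda(x^*)$, keeping $e$'s demand at zero. For $x_e^* > 0$ the key structural fact is that $e$ is not a bridge of the positive-flow support, so some used path $P'$ avoids $e$ and is priced at exactly $\lambda(x^*)$ by the KKT argument above. A downward deviation $p_e' < p_e^*$ pulls all demand onto paths through $e$: writing $y = \lambda^{-1}(p_e' + r(x^*) - p_e^*) > x_e^*$ and $\Delta = p_e^* - p_e' > 0$, the convexity bound $C_e(y) - C_e(x_e^*) \geq c_e(x_e^*)(y - x_e^*)$ together with $p_e^* = c_e(x_e^*)$ makes the profit change at most $-\Delta y < 0$. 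An upward deviation $p_e' > p_e^*$ strictly raises the cost of every path through $e$ above the alternative $P'$, so buyers abandon $e$ and the seller's profit collapses to $0$, which does not exceed the original nonnegative profit.

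The hardest step will be the no-bridge claim: for every positive-flow edge $e$, some positive-flow $s$-$t$ path avoids $e$. I would argue by contradiction, combining both hypotheses. If every used path went through $e$, then no-monopoly supplies an $s$-$t$ path $P'$ in $G \setminus e$; since the positive-flow support minus $e$ has no $s$-$t$ path, $P'$ must traverse a zero-flow edge. Using the node potentials and the boundary condition $c_e(0)=0$, I would then construct a first-order cost-preserving rearrangement that reroutes a small amount of flow onto $P'$, and since one can in fact select a min-cost flow whose positive-flow support is $s$-$t$ 2-edge-connected, this contradicts the supposed bridge. This delicate interplay of no-monopoly with the $c_e(0)=0$ boundary behavior is exactly what breaks once monopolies are admitted, which motivates the rest of the paper's analysis for $M \geq 1$.
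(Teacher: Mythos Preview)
Your overall strategy---marginal-cost pricing at the optimum, then checking that no seller benefits from a unilateral deviation---is exactly the paper's approach, and your treatment of downward deviations via convexity is essentially the paper's Lemma~\ref{cl_margcost}. Two issues, one minor and one substantive. First, Proposition~\ref{prop_optflow} gives $\lambda(x^*)\geq r(x^*)$, with equality only when $x^*<T$; you should not assume equality, though the buyer best-response argument survives in either case.

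The real gap is your ``no-bridge'' step. The claim that one can always select a min-cost flow whose positive-flow support is $s$-$t$ 2-edge-connected is false under the paper's hypotheses. Take two parallel $s$-$t$ paths $P_1$ and $P_2$ with $C_e\equiv 0$ on every edge of $P_1$ and $C_e(x)=x^2$ on the single edge of $P_2$; both satisfy $c_e(0)=0$, yet the unique min-cost flow places everything on $P_1$, and every edge of $P_1$ is a bridge of the support. Your ``first-order cost-preserving rearrangement'' does not help: rerouting $\epsilon$ onto $P_2$ changes cost by $\epsilon^2>0$, so no alternative min-cost flow exists. The paper sidesteps this entirely by never asking the alternative path to carry flow. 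Instead it argues directly on prices: if $x_e=x^*$, then since $e$ is not a monopoly there exist nodes $v_i$ (before $e$) and $v_j$ (after $e$) on a flow-carrying path $P$ joined by a detour $\gamma$ through zero-flow edges; because each such edge is priced at $c_{e'}(0)=0$, the spliced path $P[s,v_i]\cup\gamma\cup P[v_j,t]$ has price at most that of $P$, namely $p^*$, and by your own KKT telescoping its price is also at least $p^*$. That is the alternative path you need---it is cheap, not used.
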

\emph{(Proof Sketch)} Consider the optimum solution $\vec{x^*}$ and price every edge at $p_e = c_e(x^*_e)$. Notice that if any edge increases its price, it stands to lose all of its flow since buyers will have other cheaper paths. No edge can improve profits by pricing below marginal cost because $p_e \Delta x \leq C_e(x^*_e+\Delta x) - C_e(x^*_e) \leq c_e(x^*_e+\Delta x)\Delta x$. $\qed$

We remark that our notion of a ``no monopoly" graph is weaker than what has been considered in some other papers~\cite{correa2008pricing, papadimitriou2010new} and therefore, our result is stronger. We are now in a position to prove our main existence result. Unlike~\cite{chawla2008bertrand}, our results do not depend on how much the buyers value the goods (no assumption on how large $\lambda(0)$ can be). The result is also constructive, we are able to characterize the equilibrium prices and properties of the demand function at equilibrium.

\begin{theorem}\label{thm:existence} For any MPE demand function $\lambda$, there exists a Nash equilibrium.
\end{theorem}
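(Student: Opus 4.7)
The plan is to construct a Nash equilibrium explicitly: I will specify a total flow magnitude $x^*$ and a common monopoly markup $m^*$, let $\vec{x}^*$ be the min-cost flow of magnitude $x^*$, price every non-monopoly edge $e$ at its marginal cost $c_e(x^*_e)$, and price every monopoly edge $e$ at $c_e(x^*) + m^*$. When $M=0$, the construction reduces to that of Claim~\ref{clm_monopolypos1}, so the interesting case is $M \ge 1$.

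First I would choose $(x^*, m^*)$ so that the total price along any cheapest $s$-$t$ path equals $\lambda(x^*)$ and every monopoly is satisfying its own first-order profit condition. Using $r(x^*) = \sum_{e \in P} c_e(x^*_e)$ for any flow-carrying path $P$ and $m^* = x^*\,|\lambda'(x^*)|$ from the monopoly FOC, this reduces to the scalar equation
\[
\lambda(x^*)\bigl(1 - M\, x^*\, h(x^*)\bigr) \;=\; r(x^*),
\]
where $h(x) = |\lambda'(x)|/\lambda(x)$. The MPE hypothesis will make this equation solvable: as $x \to 0$ the LHS tends to $\lambda(0)$ and the RHS to $0$ (since each $c_e(0) = 0$), while $xh(x)$ is non-decreasing and $r$ is non-decreasing, so the LHS either falls below the RHS at some interior $x^* \in (0, T)$, or we land in a boundary case $x^* = T$ with $m^*$ adjusted to $(\lambda(T) - r(T))/M$. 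In either case, continuity and the intermediate value theorem produce a feasible $(x^*, m^*)$.

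Next I would verify that no seller can profitably deviate. For non-monopoly edges, the argument mirrors that of Claim~\ref{clm_monopolypos1}: raising $p_e$ diverts all flow to an alternative cheapest path (which exists because $e$ does not disconnect $s$ from $t$), and pricing below marginal cost is unprofitable by convexity of $C_e$. For a monopoly edge $e$, I will parameterize any unilateral deviation by the resulting total flow $x$: the other prices along a cheapest path are fixed and sum to $S_e := \lambda(x^*) - p_e^*$, so $e$'s profit becomes $\pi_e(x) = (\lambda(x) - S_e)\,x - C_e(x)$, with
\[
\pi_e'(x) \;=\; \lambda(x)\bigl(1 - x\, h(x)\bigr) \;-\; S_e \;-\; c_e(x).
\]
Under MPE, the factor $1 - xh(x)$ is non-increasing and changes sign at most once, so $\lambda(x)(1 - xh(x))$ is positive and non-increasing up to the (possibly unbounded) root of $xh(x) = 1$ and strictly negative beyond; subtracting the non-decreasing quantity $S_e + c_e(x)$ preserves this single-crossing property. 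Hence $\pi_e'$ has at most one zero, making $\pi_e$ strictly unimodal on $[0,T]$, and since $x^*$ is either that critical point or the constrained optimum $T$ with $\pi_e'(T) \ge 0$, no monopoly can profitably deviate.

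The hard part will be establishing unimodality of $\pi_e$ under only the MPE condition, which is strictly weaker than MHR or concavity; the sign-tracking argument above is what makes this work, and monotonicity of $xh(x)$ with the limit $xh(x) \to 0$ is exactly the property of MPE that is needed. A secondary subtlety is that when a monopoly deviates, flow redistributes among non-monopoly edges (whose posted prices remain fixed), but this does not affect the monopolist's perceived demand: only the total price along the surviving cheapest path matters, and that shifts rigidly with $p_e$, giving the seller a clean residual demand curve of the form $\lambda^{-1}(S_e + p_e)$.
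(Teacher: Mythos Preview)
Your construction is the paper's: the same pricing rule (marginal cost on non-monopolies, marginal cost plus a common markup $m^*$ on monopolies), the same scalar fixed-point equation $\lambda(\tilde{x})-r(\tilde{x})=M\tilde{x}|\lambda'(\tilde{x})|$ located by the intermediate value theorem (note that the paper reserves $x^*$ for the social optimum and writes $\tilde{x}$ for the equilibrium flow), the same boundary case at $T$, and the same parameterization of a monopoly's deviation profit as $\pi_e(x)=(\lambda(x)-S_e)x-C_e(x)$.

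The one place where your write-up differs in style is the verification that $\pi_e$ is maximized at $\tilde{x}$. The paper splits this into two inequalities, $\lambda(x)-x|\lambda'(x)|\ \gtrless\ \lambda(\tilde{x})-\tilde{x}|\lambda'(\tilde{x})|$ according as $x\lessgtr\tilde{x}$, and proves each by a short contradiction using the MPE ratio inequality together with $\lambda(\tilde{x})\ge\tilde{x}|\lambda'(\tilde{x})|$. You instead note that on the region $\{xh(x)<1\}$ the term $\lambda(x)\bigl(1-xh(x)\bigr)$ is a product of two positive non-increasing functions, hence non-increasing, while beyond that region it is non-positive; subtracting the non-decreasing $S_e+c_e(x)\ge 0$ then gives $\pi_e'$ a single sign change, so $\pi_e$ is unimodal with its critical point at $\tilde{x}$. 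This is exactly the same fact, packaged more transparently; the paper's two propositions are precisely the monotonicity of $\lambda(x)(1-xh(x))$ restated pointwise. Either way, the key hypothesis used is only monotonicity of $xh(x)$ and $xh(x)\to 0$ as $x\to 0$, i.e.\ MPE.
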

\emph{(Proof Sketch)} We define a simple pricing rule for a min-cost flow $\vec{x}$ of any given magnitude $x$ and an instance with $M$ monopolies. The rule is fair from the monopolies' perspective: first every edge is priced at its marginal and then the remaining slack is divided among the monopolies. The rest of the proof lies in showing that $\exists$ $\tilde{x} > 0$ that obeys the conditions of Corollary~\ref{corr_eqconditions} and that no seller can increase or decrease his price to improve profits when the following prices are used.

\[
 p_e(\vec{x}) =
  \begin{cases}
      \hfill \frac{\lx - r(x)}{M} + c_e(x) \hfill & \text{if e is a Monopoly} \\
      \hfill c_e(x_e) \hfill & \text{otherwise} \\
  \end{cases} \qed
\]

\begin{corollary}
\label{corr_eqconditions}
For any inverse demand function $\lambda$ belonging to the class MPE, we are guaranteed the existence of a Nash Equilibrium with a min-cost flow $(\tilde{x}_e)$ of size $\tilde{x}\leq x^*$ such that,
\begin{enumerate}
\item The prices obey the pricing rule above.
\item Either $\frac{\lambda(\tilde{x})-r(\tilde{x})}{M} = \tilde{x}|\lambda'(\tilde{x})|$ or $\tilde{x} = x^*$, the optimum solution.
\end{enumerate}
\end{corollary}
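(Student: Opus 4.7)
The plan is to extract the construction implicit in the proof sketch of Theorem~\ref{thm:existence} and verify the stated properties. I will (i) produce a valid $\tilde{x}$ using an intermediate-value argument on an auxiliary function, (ii) verify that pricing a min-cost flow of magnitude $\tilde{x}$ by the stated rule gives a cheapest-path price of $\lambda(\tilde{x})$ so that buyers' best response has size exactly $\tilde{x}$, and (iii) check that no seller can profitably deviate. Condition 1 of the corollary is true by construction; the work is in satisfying condition 2 and simultaneously justifying the Nash property.

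\textbf{Finding $\tilde{x}$.} Define $\phi(x) = \frac{\lambda(x) - r(x)}{M} - x|\lambda'(x)|$ on $(0, x^*]$; this is continuous by Proposition~\ref{prop:Rstuff}. Since $c_e(0) = 0$ gives $r(0) = 0$, and the MPE hypothesis forces $x|\lambda'(x)| = xh(x)\cdot\lambda(x) \to 0$ as $x \to 0^+$, we have $\phi(0^+) = \lambda(0)/M > 0$. At the upper endpoint, Proposition~\ref{prop_optflow} splits into two sub-cases: if $\lambda(x^*) = r(x^*)$ then $\phi(x^*) = -x^*|\lambda'(x^*)| \leq 0$ and the intermediate value theorem supplies $\tilde{x} \in (0, x^*]$ with $\phi(\tilde{x}) = 0$; otherwise $x^* = T$ and either $\phi$ has a zero in $(0, T)$ (pick one) or $\phi > 0$ throughout and we set $\tilde{x} = T = x^*$. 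Either branch realizes the disjunction in condition 2.

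\textbf{Best response and monopoly deviations.} Let $\vec{\tilde{x}}$ be a min-cost flow of magnitude $\tilde{x}$ and apply the pricing rule. Since every monopoly lies on every $s$-$t$ path, the total price of any path $P$ rearranges to $\sum_{e \in P} c_e(\tilde{x}_e) + (\lambda(\tilde{x}) - r(\tilde{x}))$, which by the KKT characterization of min-cost flow equals exactly $\lambda(\tilde{x})$ on the support of $\vec{\tilde{x}}$ and is at least $\lambda(\tilde{x})$ elsewhere. Hence precisely the mass $\tilde{x}$ of buyers whose value is at least $\lambda(\tilde{x})$ purchase, so $\vec{\tilde{x}}$ is a best response to the prices. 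For a monopoly $e$ deviating by $\Delta p$, every path still contains $e$, so the new cheapest-path price is $\lambda(\tilde{x}) + \Delta p$ and the induced demand is $x' = \lambda^{-1}(\lambda(\tilde{x}) + \Delta p)$ routed along a min-cost flow; $e$ still carries the entire flow $x'$. Writing the deviation profit $\Pi(x') = (p_e + \lambda(x') - \lambda(\tilde{x}))x' - C_e(x')$ and differentiating, one obtains $\Pi'(\tilde{x}) = p_e - c_e(\tilde{x}) + \tilde{x}\lambda'(\tilde{x}) = \phi(\tilde{x})$, which vanishes by our choice of $\tilde{x}$ (in the boundary case $\tilde{x} = T$, the only admissible $x'$ are below $T$, and $\Pi$ being increasing at $T$ together with the fact that lowering price cannot increase flow past $T$ rules out both directions). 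For a non-monopoly $e$ priced at $c_e(\tilde{x}_e)$, the Claim~\ref{clm_monopolypos1}-style argument applies: any strict price increase drains $e$ to an alternative $e$-avoiding path still priced at $\lambda(\tilde{x})$ by the min-cost flow structure, while any price decrease sells additional units below marginal cost and is unprofitable by convexity of $C_e$.

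\textbf{Main obstacle.} The delicate step is upgrading the first-order condition $\Pi'(\tilde{x}) = 0$ to a \emph{global} maximum of $\Pi$ on $[0, T]$, since without that we only have a stationary point and the deviation could still be profitable. This is exactly the role of the MPE hypothesis: monotonicity of $xh(x)$ is the classical sufficient condition (as used in the Cournot competition literature cited in the paper) for the monopolist's revenue $x\lambda(x)$ to be unimodal, which combined with convexity of $C_e$ forces $\Pi$ to be single-peaked in $x'$. The stationary point $\tilde{x}$ is therefore the unique global maximizer, no monopoly profits from any deviation, and both parts of the corollary follow.
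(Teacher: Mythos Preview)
Your overall architecture matches the paper's: an intermediate-value argument locates $\tilde{x}$, the pricing rule plus min-cost flow structure handles buyer best response and non-monopoly stability, and the heart of the matter is the monopoly deviation profit $\Pi$. You correctly identify that the first-order condition $\Pi'(\tilde{x})=\phi(\tilde{x})=0$ is not enough and that one must upgrade to a global maximum.

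The gap is in how you discharge that obstacle. You argue: MPE makes $x\lambda(x)$ unimodal, and unimodality of revenue together with convexity of $C_e$ forces $\Pi$ to be single-peaked. The first claim is fine, but the second implication is false in general. Unimodality of $R(x)=x\lambda(x)$ only says $R'$ crosses zero once; it does not say $R'$ is nonincreasing. Since $\Pi'(x')=A+R'(x')-c_e(x')$ with $A=p_e-\lambda(\tilde{x})$ a constant, a non-monotone $R'$ can produce multiple sign changes of $\Pi'$ even when $c_e$ is nondecreasing (take any unimodal $R$ whose derivative first rises then falls, $c_e\equiv 0$, and a suitable negative $A$). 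So ``unimodal revenue plus convex cost'' is not the right lever.

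What MPE actually delivers is stronger than unimodality: writing $q(x)=xh(x)$, one has $R'(x)=\lambda(x)(1-q(x))$, and differentiating gives $R''(x)=\lambda'(x)(1-q(x))-\lambda(x)q'(x)\le 0$ on the entire region $\{q\le 1\}=\{R'\ge 0\}$. Thus marginal revenue is \emph{nonincreasing} on its nonnegative part, and negative thereafter; combined with $c_e$ nondecreasing this does yield single-peakedness of $\Pi$. The paper reaches the same conclusion by a different route: it proves directly (Proposition~\ref{sublem_exist_smallx} and Lemma~\ref{sublem_exist_largex}) that for MPE $\lambda$ with $\lambda(\tilde{x})\ge \tilde{x}|\lambda'(\tilde{x})|$ one has $\lambda(\tilde{x})-\lambda(x)\le \tilde{x}|\lambda'(\tilde{x})|-x|\lambda'(x)|$ for $x<\tilde{x}$ and the reverse inequality for $x>\tilde{x}$, which feeds straight into showing $\Pi'\ge 0$ on $(0,\tilde{x}]$ and $\Pi'\le 0$ on $[\tilde{x},T]$. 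Either argument closes the gap; the appeal to bare unimodality does not.
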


\begin{figure}
\centering
\subfigure[Serial Network]{\label{figure:serial} \includegraphics[scale=1.1]{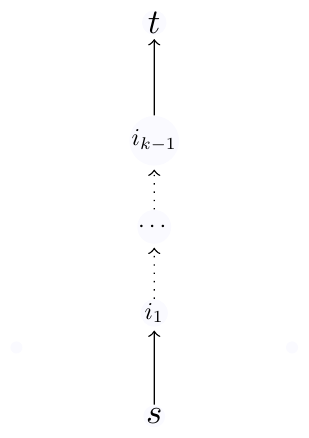}}
\subfigure[Parallel Network]{\label{figure:parallel} \includegraphics[width=0.3\linewidth]{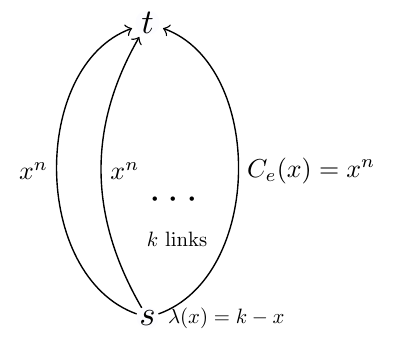}}
\caption{~}
\end{figure}

Before analyzing the efficiency of equilibrium, it is important to address the question of what equilibria are likely to be formed. In markets such as ours, it does not make sense to provide a blanket bound on all stable solutions since some of these are highly unrealistic. In contrast, the solution obeying Corollary~\ref{corr_eqconditions} is the unique solution that satisfies useful desiderata, motivating us to study its efficiency in the following sections. We begin by illustrating some of the ``unreasonable" equilibria that exist even in the simplest of markets.

\begin{itemize}

\item \textbf{Trivial Equilibrium}~\cite{chawla2009price}: In any instance where all paths have length at least $2$, we can easily show the existence of a Nash Equilibrium with zero flow. Suppose that in the example of Figure~\ref{figure:serial}, every seller sets a really high price (say larger than $\lambda(0)$). No buyer can afford the bundle, moreover, no seller can unilaterally lower his price and induce any flow.

\item \textbf{Non-Competitive Equilibrium:} Consider the parallel network shown in Figure~\ref{figure:parallel}. There exists an equilibrium  that has non-zero flow but closely mirrors trivial equilibria. Assume for concreteness that $k=100$ and $n = 102$. Consider the solution where each edge is priced at (say) $p = 98$ and receives $\frac{2}{100}$ units of flow. No edge can lower its price because then it would receive the entire 2 units of flow, which is too expensive to transit. This solution is Pareto-dominated by the optimum equilibrium in Claim~\ref{clm_monopolypos1} since every seller's profit is larger in the latter. As $k \to \infty$, this violates the well-established idea that perfect competition causes efficiency~\cite{harrington1989re, chowdhury2004coalition}.
\end{itemize}
Motivated by these, we now define some natural properties that one desires from market equilibrium.  \textbf{All of these are obeyed by the equilibrium from Corollary~\ref{corr_eqconditions}.} We formally prove this fact in the Appendix.

\begin{enumerate}

\item \textbf{(Non-Trivial Pricing)}: Every edge that does not admit flow must be priced at $0$, or more generally $c_e(0)$.

\item \textbf{(Recovery of Production Costs)}: Given an equilibrium $(\vec{p}, \vec{x})$, every item's price is at least $c_e(x_e)$.

This property induces a basic fairness criteria for the seller: the seller's price for any item is at least the cost of producing it.

\item \textbf{(Pareto-Optimality)}: A Pareto-optimal solution over the space of equilibria is an equilibrium solution such that for any other equilibrium, at least one agent prefers the former solution to the latter. Pareto-Optimality is often an important criterion in games with multiple equilibria; research suggests that in Bertrand Markets, Pareto optimal equilibria are the solutions that arise in practice~\cite{cabon2010end}.

\item \textbf{(Local Dominance)}: Given an equilibrium $(\vec{p}, \vec{x})$, no seller's profit should strictly increase when a small fraction of buyers shift their flow from one path to another. The essence of this property is that the solution is resilient against small buyer perturbations. By definition of an equilibrium, no seller can change his price. However, at the same price, a seller may be able to attract a small fraction of buyers towards his item. If the resulting solution is strictly preferred by the seller, then it indicates that the original equilibrium is not robust. Local dominance provides a strong guarantee: that an equilibrium dominates all neighboring solution along every single component.
\end{enumerate}

We now show that for any given instance with strictly monotone demand and non-zero production costs, either all equilibria are optimal or the equilibrium in Corollary~\ref{corr_eqconditions} is the unique non-trivial equilibrium with that satisfies local dominance. From the perspective of efficiency, this means that we only have to bound the welfare of a single equilibrium solution. In cases where this is not true, all equilibrium solutions are optimal. In addition to uniqueness, we also show that the above solution can be computed efficiently by means of a simple binary search followed by a single min-cost flow computation. As always in the case of real-valued settings (e.g., for convex programming, other equilibrium concepts, etc), ``computing" a solution means getting within arbitrary precision of the desired solution; the exact solution may not be computable efficiently or even be irrational.

\begin{theorem}
\label{thm:uniqueness}
For any given instance with strictly monotone MPE demand and non-zero production costs, we are guaranteed that {\em one of the following} is always true,
\begin{enumerate}
\item There is a unique equilibrium that obeys Local Dominance and Non-Trivial Pricing, {\em or}
\item All equilibria that satisfy Local Dominance and Non-Trivial Pricing maximize welfare.
\end{enumerate}
\end{theorem}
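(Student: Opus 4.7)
The plan is to sharply constrain any equilibrium $(\vec{p}, \vec{x})$ obeying Local Dominance and Non-Trivial Pricing so that its prices and flow must match those produced by Corollary~\ref{corr_eqconditions}, and then invoke MPE and strict monotonicity of $\lambda$ to get uniqueness of the flow magnitude $x$. First, I would show $\vec{x}$ is a min-cost flow of its magnitude. For any two positive-flow paths $P, P'$, the buyer best response forces $\sum_{e \in P} p_e = \sum_{e \in P'} p_e = \lambda(x)$. Local Dominance applied to the infinitesimal shift from $P$ to $P'$ requires $p_e \leq c_e(x_e)$ on every $e \in P' \setminus P$ (else that seller strictly gains), and the reverse shift gives the same on $P \setminus P'$. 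Combined with Recovery of Production Costs ($p_e \geq c_e(x_e)$), we obtain $p_e = c_e(x_e)$ throughout $P \triangle P'$; substituting back into the two equal path totals yields $\sum_{e \in P \setminus P'} c_e(x_e) = \sum_{e \in P' \setminus P} c_e(x_e)$, exactly the KKT optimality condition characterizing a min-cost flow (per the remark following Proposition~\ref{prop:Rstuff}).

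Second, I would show every non-monopoly edge is priced at marginal cost. Edges with zero flow satisfy this immediately via Non-Trivial Pricing. For a non-monopoly edge $e$ with $x_e > 0$ that lies in the symmetric difference of two positive-flow paths, Step 1 directly delivers $p_e = c_e(x_e)$. The remaining case is a non-monopoly $e$ lying on every positive-flow path; here the key observation is that by directedness the first and last edges of any bypass of $e = (u,v)$ must carry zero flow (otherwise they would witness a positive-flow path avoiding $e$), so Non-Trivial Pricing makes those boundary edges free. Applying Local Dominance to the shift from a cheapest path through $e$ to its detour along any bypass forces the bypass's positive-flow interior edges to also be at marginal cost, and then a careful comparison of path prices against the min-cost structure certifying $r(x)$ gives $p_e = c_e(x_e)$ (otherwise the detour would be strictly cheaper than the current cheapest path, contradicting the equilibrium). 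Third, for each of the $M$ monopolies, a first-order best-response analysis---perturb $p_e$ by $\pm\epsilon$, and note that by strict monotonicity of $\lambda$ total demand shifts by $\mp \epsilon/|\lambda'(x)|$ while all flow continues to route through $e$---yields $p_e - c_e(x) = x|\lambda'(x)|$ in the interior. Summing these equal slacks across monopolies and matching against $\sum_{\text{mono}}(p_e - c_e(x)) = \lambda(x) - r(x)$ (obtained from a cheapest path decomposition using Step~2) reproduces the exact equation of Corollary~\ref{corr_eqconditions}.

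To conclude, I would rewrite that equation as $1 - r(x)/\lambda(x) = M x h(x)$: by strict monotonicity of $\lambda$ together with the non-decreasing $r$ from Proposition~\ref{prop:Rstuff}, the left-hand side is strictly decreasing in $x$, while MPE makes the right-hand side non-decreasing, giving at most one interior solution $\tilde{x} \in (0, x^*)$. If such a $\tilde{x}$ exists, Steps~1--3 pin down both the flow (to a min-cost flow of magnitude $\tilde{x}$) and all prices, yielding the unique equilibrium of alternative~(1). Otherwise the only admissible magnitude is $x = x^*$, every such equilibrium is a min-cost flow of that magnitude, and alternative~(2) holds. The main difficulty I anticipate is the subcase of Step~2 where a non-monopoly edge lies on every positive-flow path: a bypass may carry positive flow on interior edges from through-$e$ paths, so the naive ``bypass has zero price'' intuition needs to be supplemented by Local Dominance on interior bypass edges and a min-cost path comparison before one can finally conclude $p_e = c_e(x_e)$.
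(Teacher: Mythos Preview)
Your overall structure matches the paper's, but there is one genuine error and one ordering choice that makes the argument harder than necessary. The error: in Step~1 you invoke Recovery of Production Costs to obtain $p_e \geq c_e(x_e)$, but that property is not among the hypotheses of Theorem~\ref{thm:uniqueness}---only Local Dominance and Non-Trivial Pricing are assumed. The fix is immediate: the same shift $P \to P'$ that gives $p_e \leq c_e(x_e)$ on $P' \setminus P$ simultaneously gives $p_e \geq c_e(x_e)$ on $P \setminus P'$ (a seller losing flow strictly gains when underpriced), so two opposite shifts yield equality on all of $P \triangle P'$ from Local Dominance alone. Note also that your Step~1 only compares pairs of positive-flow paths; the full KKT condition for a min-cost flow also requires that no zero-flow path have strictly smaller marginal cost, which you do not address.

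On ordering: the paper does your Step~2 before your Step~1, and this makes the case you flag as hardest essentially trivial. For a non-monopoly $e$ carrying the entire flow $x_e = x$, the paper observes (Lemma~\ref{sublem_u1}) that since $e$ is not a cut edge there is a bypass around it consisting entirely of zero-flow edges; Non-Trivial Pricing prices that bypass at zero, so if $p_e > 0$ the detour path is strictly cheaper and buyers would take it---a contradiction with the buyers' best response, not with Local Dominance. Hence $p_e = 0$, and non-negative equilibrium profit forces $c_e(x) = 0$ as well. Once all non-monopolies are at marginal cost, min-cost optimality of $\vec{x}$ (including the zero-flow-path inequality) follows in one line by comparing any path's price to $\lambda(x)$, as in Lemma~\ref{sublem_u3}. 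Your Step~3 and your final uniqueness argument---rewriting the equation as $1 - r(x)/\lambda(x) = M x h(x)$, with the left side strictly decreasing (this is exactly where non-zero costs and strict monotonicity of $\lambda$ enter) and the right side non-decreasing by MPE---are correct, and the latter is in fact cleaner than the paper's case analysis in Lemma~\ref{lem_subuniqueness3}.
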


As is the case with our uniqueness result, in most systems, a certain degree of strong monotonicity is essential to show uniqueness. We now focus on understanding the quality of these equilibrium solutions, in comparison with the social optimum.

\section{Effects of Demand Curves and Monopolies on Efficiency of Nash Equilibrium}
\label{sec:efficiency}
In this paper, we are interested in settings where approximately efficient outcomes are reached despite the presence of self-interested sellers with monopolizing power over the market. While for general functions $\lx$ belonging to the class MPE, the efficiency can be arbitrarily bad, we show that for many natural classes of functions it is relatively small, even in the presence of a limited number of monopolies. The proofs can be found in Appendix \ref{app:pos}.

We begin with the special case of buyers with uniform demand, i.e., every buyer in the continuum values the $s$-$t$ paths at the same value $\lambda_0$. In this case, the inverse demand function becomes $\lambda(x) = \lambda_0$ for $x \leq T$ and $0$ otherwise. We show that for buyers with uniform demand, $\exists$ an equilibrium maximizing social welfare. It is however, interesting to note that this solution may not necessarily coincide with the Walrasian Equilibrium, an example of the same is provided in the Appendix.

\begin{theorem}
\label{theorem_linearutilitymain}
Every instance with uniform demand buyers admits an efficient Nash Equilibrium.
\end{theorem}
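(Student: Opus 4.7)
The plan is to derive Theorem~\ref{theorem_linearutilitymain} as a direct corollary of the existence machinery already developed for MPE demand. First I would verify that uniform demand lies in the MPE class: with $\lambda(x) \equiv \lambda_0$ on $[0,T]$ we have $\lambda'(x) \equiv 0$, so the price-elasticity measure $xh(x) = x|\lambda'(x)|/\lambda(x)$ is identically zero, which is trivially non-decreasing and tends to $0$ as $x \to 0$. When $M = 0$ I would simply invoke Claim~\ref{clm_monopolypos1}, which already delivers an efficient Nash equilibrium whenever there are no monopolies, so from here on I focus on $M \geq 1$.

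Next, Theorem~\ref{thm:existence} and Corollary~\ref{corr_eqconditions} supply a Nash equilibrium whose allocation is a min-cost flow $(\tilde x_e)$ of some magnitude $\tilde x \leq x^*$ satisfying either (ii) $\tilde x = x^*$, or (i) $(\lambda(\tilde x) - r(\tilde x))/M = \tilde x|\lambda'(\tilde x)|$, which under uniform demand collapses to $r(\tilde x) = \lambda_0$ since $|\lambda'(\tilde x)| = 0$. In case (ii) the equilibrium already routes the welfare-optimal magnitude as a min-cost flow, so its social welfare $\Lambda(x^*)-R(x^*)$ is optimal by definition.

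The substantive step is case (i), which I would handle by a short ``squeeze'' on $r$. By Proposition~\ref{prop:Rstuff}, $R$ is convex so $r$ is non-decreasing; by Proposition~\ref{prop_optflow} we also have $\lambda_0 \geq r(x^*)$. Combined with $\tilde x \leq x^*$ and $r(\tilde x) = \lambda_0$, this forces $\lambda_0 = r(\tilde x) \leq r(y) \leq r(x^*) \leq \lambda_0$ for every $y \in [\tilde x, x^*]$, so $r \equiv \lambda_0$ on that interval. Integrating yields $R(x^*) - R(\tilde x) = \lambda_0(x^* - \tilde x)$, whence the equilibrium welfare $\lambda_0\tilde x - R(\tilde x)$ equals the optimum $\lambda_0 x^* - R(x^*)$, and we are done.

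I do not expect a substantive obstacle beyond this squeeze argument: the theorem essentially certifies that once $\lambda'\equiv 0$ the two-pronged condition in Corollary~\ref{corr_eqconditions} vacuously forces optimality. The only boundary case worth flagging is $x^* = T$, where Proposition~\ref{prop_optflow} supplies only the inequality $\lambda_0 \geq r(x^*)$ rather than equality; but this is exactly the ingredient the squeeze uses, so no extra work is required.
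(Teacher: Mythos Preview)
Your argument is correct and is a genuinely different route from the paper's. The paper proves the theorem by a direct construction: it takes the welfare-maximizing min-cost flow $\vec{x^*}$, applies the pricing rule of Theorem~\ref{thm:existence} (marginal costs plus slack split over monopolies), and then verifies the Nash conditions by hand for the uniform case, splitting on whether $x^*=T$ (so the entire population is already served and no seller gains by lowering price) or $x^*<T$ (so $\lambda_0=r(x^*)$ by Proposition~\ref{prop_optflow} and every edge sits at marginal cost). Your approach instead treats Corollary~\ref{corr_eqconditions} as a black box: you observe that uniform demand is MPE, invoke the existence machinery to get an equilibrium at some $\tilde x\le x^*$, and then argue that both branches of the corollary force optimal welfare, the nontrivial branch via the squeeze $\lambda_0=r(\tilde x)\le r(x^*)\le\lambda_0$.

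What each approach buys: the paper's hands-on verification is self-contained and does not lean on the full MPE existence proof; it also extends cleanly to the $c_e(0)>0$ case treated in the appendix (where the pricing rule is replaced by an ascending-price process over virtual monopolies), which your black-box invocation of Corollary~\ref{corr_eqconditions} would not cover without further work. Your route is shorter and more conceptual, and in fact the squeeze in your case~(i) is mild overkill: under uniform demand $\lambda'\equiv 0$, so the ``no price increase'' condition $\lambda(x^*)-r(x^*)\ge Mx^*|\lambda'(x^*)|$ of the existence proof always holds at $x^*$, meaning Corollary~\ref{corr_eqconditions} already places you in the $\tilde x=x^*$ branch outright.
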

\emph{(Proof Sketch)} The essence of the proof lies in computing the socially optimal flow, pricing every edge at the marginal and dividing the slack evenly on the monopolies. This way the total price is $\lambda_0$ and no edge can increase its price and as long as $x = T$, reducing the price serves no purpose. When $x < T$, from Proposition~\ref{prop_optflow}, there is no slack so every edge is priced at its marginal cost.$\qed$

\subsection*{Main Result: MHR Demand and Concave Demand}

We now show our main result that the efficiency drop is a factor $1+M$ when the inverse demand function has a monotone hazard rate. As mentioned previously, these correspond exactly to the case when the inverse demand function is log-concave, i.e., $\log(\lambda(x))$ is concave. Log-concavity is a very natural assumption on the inverse demand function and it is not surprising to see that such functions  have received considerable attention in Economics literature~\cite{amir1996cournot, baldenius2000comparative,lyoo2006efficient}. Finally, note that $\lambda(x)$ being concave is a special case of log-concavity. For this class of functions, we obtain an improved bound of $1+\frac{M}{2}$ on the efficiency. The extremely popular linear inverse demand function $\lambda(x) = 1-x$ falls under this class.

\begin{theorem}\label{thm:PoSbody}
The social welfare of Nash equilibrium from Section \ref{sec:existenceBody} is always within a factor of:
\begin{itemize}
\item $1+\frac{M}{2}$ of the optimum for concave $\lambda$.
\item $1+M$ of the optimum for Log-concave (i.e., MHR) $\lambda$.
\end{itemize}
Both these bounds are tight, i.e., there exist instances where the optimum solution has a welfare that is exactly $1+M$ times the Nash Equilibrium for MHR demand and $1+\frac{M}{2}$ times that of the equilibrium for concave demand.
\end{theorem}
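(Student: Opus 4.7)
The plan is to leverage the characterization of the equilibrium flow $\tilde{x}$ given by Corollary~\ref{corr_eqconditions}. If $\tilde{x}=x^*$ the equilibrium already maximizes welfare and the statement is trivial, so I will focus on the alternative condition
\[
D \;:=\; \lambda(\tilde{x}) - r(\tilde{x}) \;=\; M\,\tilde{x}\,|\lambda'(\tilde{x})|.
\]
Writing $W(x) = \Lambda(x) - R(x)$ for the welfare of a min-cost flow of magnitude $x$, one has $W^* - W(\tilde{x}) = \int_{\tilde{x}}^{x^*}(\lambda(t) - r(t))\,dt$ and $W(\tilde{x}) = \int_0^{\tilde{x}}(\lambda(t)-r(t))\,dt$, so the entire task is to bound the first integral by a multiple of the second. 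A uniform lower bound on the denominator is available in both cases: $\lambda$ is non-increasing so $\int_0^{\tilde{x}}\lambda(t)\,dt \geq \tilde{x}\lambda(\tilde{x})$, and convexity of $R$ together with $R(0)=0$ (Proposition~\ref{prop_mincostfunction}) gives $R(\tilde{x}) \leq \tilde{x}\,r(\tilde{x})$, whence $W(\tilde{x}) \geq \tilde{x}D$. The work lies in upper-bounding the welfare loss, and this is where the two demand classes diverge.

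For concave $\lambda$, I would use the tangent upper bound $\lambda(t) \leq \lambda(\tilde{x}) + \lambda'(\tilde{x})(t-\tilde{x})$ together with the fact that $r$ is non-decreasing (again from convexity of $R$) to get
\[
\lambda(t) - r(t) \;\leq\; D - |\lambda'(\tilde{x})|\,(t-\tilde{x}) \qquad \text{for } t \geq \tilde{x}.
\]
This linear envelope vanishes at $t = \tilde{x} + D/|\lambda'(\tilde{x})| = (1+M)\tilde{x}$, and since $\lambda - r \geq 0$ on $[\tilde{x},x^*]$ by Proposition~\ref{prop_optflow}, this forces $x^* - \tilde{x} \leq M\tilde{x}$. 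Integrating the envelope over $[\tilde{x},(1+M)\tilde{x}]$ yields $W^* - W(\tilde{x}) \leq M\tilde{x}D/2$, and combined with $W(\tilde{x}) \geq \tilde{x}D$ this gives the ratio $1 + M/2$.

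In the MHR case the linear envelope is replaced by the exponential one $\lambda(t) \leq \lambda(\tilde{x})\exp(-h(\tilde{x})(t-\tilde{x}))$ for $t \geq \tilde{x}$, where $h=|\lambda'|/\lambda$. Setting $\rho := r(\tilde{x})/\lambda(\tilde{x})$, the equilibrium condition rewrites as $1-\rho = M\tilde{x}h(\tilde{x})$. I would then extend the integration to $\{t \geq \tilde{x} : \lambda(\tilde{x})e^{-h(\tilde{x})(t-\tilde{x})} \geq r(\tilde{x})\}$ (the integrand is already non-positive past this point), compute the resulting integral in closed form as $(\lambda(\tilde{x})/h(\tilde{x}))\bigl[(1-\rho)+\rho\log\rho\bigr]$, and divide by the lower bound $W(\tilde{x}) \geq \tilde{x}D$. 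Using $\tilde{x}h(\tilde{x}) = (1-\rho)/M$, the desired estimate $W^*/W(\tilde{x}) \leq 1+M$ collapses to the one-variable inequality
\[
y + (1-y)\log(1-y) \;\leq\; y^2 \qquad \text{for all } y \in [0,1],
\]
with $y = 1-\rho$. This scalar inequality is the main technical obstacle. I would handle it by rewriting $y+(1-y)\log(1-y) = \sum_{n\geq 2} y^n/(n(n-1))$ via the Taylor series of $\log(1-y)$; then the claimed bound reduces to $\sum_{n\geq 2} y^{n-2}/(n(n-1)) \leq 1$ on $[0,1]$, which is clear at $y=1$ by telescoping and hence for all smaller $y$ by termwise domination.

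Finally, for tightness I would exhibit explicit one-parameter families matching each bound in the limit. For concave demand, an inverse demand that is essentially constant on $[0,\tilde{x}]$ and then falls linearly, together with vanishing marginal production costs on the $M$ monopolies, simultaneously drives both $W(\tilde{x})\geq\tilde{x}D$ and $W^*-W(\tilde{x})\leq M\tilde{x}D/2$ to equality and so achieves the factor $1+M/2$. For MHR demand, an exponential profile such as $\lambda(x)=e^{-x}$ with $M$ monopolies of vanishing marginal cost realizes the factor $1+M$ as $\tilde{x}\to 0$.
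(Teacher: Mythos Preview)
Your concave argument is essentially the paper's: both replace $\lambda$ by its tangent at $\tilde{x}$, replace $r$ by the constant $r(\tilde{x})$, and compute the resulting triangle area. Your concave tightness sketch also matches the paper's example.

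For the MHR upper bound you take a genuinely different route. The paper does \emph{not} use the exponential envelope; instead it proves (Lemma~\ref{lem_mhrcost}) that $\frac{|\lambda'(x)|}{\lambda(x)-r(x)}$ is itself non-decreasing whenever $\lambda$ is log-concave, so that for $x\ge\tilde{x}$ one has $\lambda(x)-r(x)\le M\tilde{x}\,|\lambda'(x)|$ directly from the equilibrium condition. Integrating this pointwise bound gives $\int_{\tilde{x}}^{x^*}(\lambda-r)\le M\tilde{x}(\lambda(\tilde{x})-\lambda(x^*))\le M\tilde{x}D$ in two lines, with no scalar inequality needed. Your approach is also valid --- the envelope bound, the closed-form integral, and the Taylor-series verification of $y+(1-y)\log(1-y)\le y^2$ are all correct --- but it trades the paper's structural lemma for more computation. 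The paper's route has the advantage that the same lemma, with $|\lambda'|$ replaced by $x|\lambda'|$, immediately yields the MPE bounds of Claim~\ref{thm_mpepos}, whereas your envelope technique would need a separate argument there.

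There is a real gap in your MHR tightness example. With $\lambda(x)=e^{-x}$ and zero costs the equilibrium condition forces $\tilde{x}=1/M$, not $\tilde{x}\to 0$, and the efficiency is $1/(1-e^{-1/M})$, which is strictly below $1+M$ for every finite $M$ (for $M=1$ it is about $1.58$, not $2$). The reason is that the lower bound $W(\tilde{x})\ge\tilde{x}D$ is slack: on $[0,\tilde{x}]$ the exponential $\lambda$ is strictly larger than $\lambda(\tilde{x})$. To make the bound tight you must force $W(\tilde{x})=\tilde{x}D$ exactly, which requires $\lambda$ to be \emph{constant} on $[0,\tilde{x}]$. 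The paper's example does precisely this: it takes $\lambda(x)=e^{-1/M}$ for $x\le 1/M$ and $\lambda(x)=e^{-x}$ for $x\ge 1/M$ (a function that is MHR in the piecewise sense, or after smoothing), and then the ratio tends to $1+M$ as $x^*\to\infty$. Your intuition that the exponential tail is the right shape is correct, but the flat initial segment is essential.
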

\emph{(Proof Sketch for the $1+M$ bound on MHR functions)} \\
The proof relies crucially on our characterization of equilibria. Showing that $\exists$ an equilibrium satisfying several nice properties and Corollary~\ref{corr_eqconditions} was the hard part. Armed with these results, the rest of the efficiency proof becomes extremely intuitive. For MHR functions, the proof hinges on an interesting claim (that may be of independent interest) linking the welfare loss at equilibrium to the profit made by all sellers. Specifically, our key claim is that `the loss in welfare is at most a factor $M$ times the total profit in the market at equilibrium'. In addition, it is also not hard to see that in any market, the profit cannot exceed the total social welfare of a solution.

Why is this property useful? Using profit as an intermediary, we can now compare the welfare lost at equilibrium to the welfare retained. This implies that the welfare loss cannot be too high because that would mean that the profit and hence the welfare retained is also high. But then, the sum of welfare lost $+$ retained is the optimum welfare and is bounded. Therefore, once we bound the welfare lost at equilibrium, we can immediately bound the overall efficiency. Our key claim is,

$$\text{Lost Welfare} = \int_{\tilde{x}}^{x^*}\lx dx - [R(x^*) - R(\tilde{x})]  \leq M (p\tilde{x} - R(\tilde{x})),$$
where $p$ is the payment made by every buyer, $\tilde{x}$ is the amount of buyers in the equilibrium solution and $x^*$, in the optimum. The integral in the LHS can be rewritten as $\int_{\tilde{x}}^{x^*}(\lx - r(x)) dx$. Now, we apply some fundamental properties of MHR functions ($\lambda$) and show in the appendix that for all $x \geq \tilde{x}$, the following is true,
$$\frac{\lx - r(x)}{|\ldx|} \leq \frac{\lambda(\tilde{x}) - r(\tilde{x})}{|\lambda'(\tilde{x})|} = M\tilde{x}.$$
The final equality comes from our equilibrium characterization in Corollary~\ref{corr_eqconditions}. Therefore, we have,
\begin{align*}
\int_{\tilde{x}}^{x^*}(\lx - r(x)) dx & \leq  M\tilde{x} \int_{\tilde{x}}^{x^*}|\ldx|dx & \\
& \leq  M\tilde{x}(\lambda(\tilde{x}) - \lambda(x^*)) & \text{($\lx$ is non-increasing and $\tilde{x} \leq x^*$)}\\
& \leq  M(\lambda(\tilde{x})\tilde{x} - R(\tilde{x})) & \text{$(\lambda(x^*)\tilde{x} \geq r(x^*)\tilde{x} \geq r(\tilde{x})\tilde{x} \geq R(\tilde{x}))$}
\end{align*}
The total payment $p$ on any path must exactly equal $\lambda(\tilde{x})$ (See Equilibrium Def.). $\blacksquare$

We reiterate here that both the results in the theorem make no assumption on the graph structure or cost functions other than the ones mentioned in Section~\ref{sec:prelim}. It is reasonable to assume that even in multi-item markets where consumers may desire large bundles, the number of purely monopolizing goods is not too large: in such cases the equilibrium quality is high. 

\subsection*{Filling the gaps: Other classes of Demand Functions}

The above classes of inverse demand functions encapsulate many different types of functions, but there is a large gap between an efficiency factor of 1 and that of $1+\frac{M}{2}$, for example. We attempt to interpolate between our bounds above by considering two very specific (but important and commonly studied) classes of functions defined below, and obtain the following results (see Appendix \ref{appsection:specificpos} for proofs).

\begin{theorem}
\label{thm_spec2}
\begin{enumerate}
\item Let $F_p$ denote functions of the form $\lx=\lambda_0(1-x^{\alpha})$ for any $\lambda_0\geq 0$ and $\alpha \geq 1$. For $\lambda\in F_p$, the efficiency is at most $(1+M\alpha)^{\frac{1}{\alpha}}$. When $\alpha \geq M$, this quantity is approximately $1+\frac{\log(M\alpha)}{\alpha}$.

\item Let $F_{ced}$ denote functions of the form  $\lx = \lambda_0(1-x)^{\alpha}$ for any $\lambda_0\geq 0$ and $\alpha \geq 1$. For $\lambda \in F_{ced}$, the efficiency is at most $1+M\frac{\alpha}{\alpha+1}$ for $\alpha \geq 1$.

\end{enumerate}
\end{theorem}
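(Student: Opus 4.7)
The plan is to follow the template of Theorem~\ref{thm:PoSbody}: combine the equilibrium characterization from Corollary~\ref{corr_eqconditions} with the convexity of $R$ from Proposition~\ref{prop:Rstuff}. From $R(x^*) \geq R(\tilde{x}) + r(\tilde{x})(x^* - \tilde{x})$ and $R(\tilde{x}) \leq r(\tilde{x})\tilde{x}$ (since $r$ is non-decreasing with $r(0) = 0$), we obtain the uniform upper bound
\[ \frac{W^*}{W} \;\leq\; \frac{\Lambda(x^*) - r(\tilde{x})\,x^*}{\Lambda(\tilde{x}) - r(\tilde{x})\,\tilde{x}}. \]
Moreover $x^* \leq \lambda^{-1}(r(\tilde{x}))$ (because $\lambda(x^*) = r(x^*) \geq r(\tilde{x})$), and the right-hand side is monotone increasing in $x^*$ on that range, so it suffices to analyze the bound at $\lambda(x^*) = r(\tilde{x})$---the ``flat-cost'' worst case where $R$ approaches a step function.

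For Part~1, plugging $\lambda(x) = \lambda_0(1-x^\alpha)$ into the equilibrium identity gives $r(\tilde{x}) = \lambda_0[1 - (1+M\alpha)\tilde{x}^\alpha]$, and then $\lambda(x^*) = r(\tilde{x})$ forces $x^* = (1+M\alpha)^{1/\alpha}\,\tilde{x}$. A direct computation yields
\[ \Lambda(\tilde{x}) - r(\tilde{x})\,\tilde{x} = \frac{\alpha\,\tilde{x}^{\alpha+1}[1+M(\alpha+1)]}{\alpha+1}, \qquad \Lambda(x^*) - r(\tilde{x})\,x^* = \frac{\alpha\,(x^*)^{\alpha+1}}{\alpha+1}, \]
so the ratio simplifies to $(1+M\alpha)^{(\alpha+1)/\alpha}/[1+M(\alpha+1)]$. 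The claimed bound $(1+M\alpha)^{1/\alpha}$ then reduces to the elementary inequality $1+M\alpha \leq 1+M(\alpha+1)$, which is immediate.

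For Part~2, the change of variables $\tilde{s} = 1-\tilde{x}$, $s^* = 1-x^*$ recasts the equilibrium identity as $\tilde{s}^\alpha - (s^*)^\alpha = M\alpha\,\tilde{s}^{\alpha-1}(1-\tilde{s})$, while $W = \int_{\tilde{s}}^1(u^\alpha - (s^*)^\alpha)\,du$ and $W^* - W = \int_{s^*}^{\tilde{s}}(u^\alpha - (s^*)^\alpha)\,du$. The target inequality $(\alpha+1)(W^*-W) \leq M\alpha\,W$ becomes a polynomial inequality in $\tilde{s}$ and $\rho = s^*/\tilde{s} \in [0,1]$, with the equilibrium identity pinning down $\tilde{s} = M\alpha/(M\alpha + 1 - \rho^\alpha)$. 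I would show that the ratio is non-increasing in $\rho$ (by differentiating and simplifying with the equilibrium identity), which reduces verification to the endpoint $\rho = 0$, namely the case $r \equiv 0$, $\tilde{x} = 1/(1+M\alpha)$, $x^* = 1$. At this endpoint the bound collapses to $1 + \alpha/(1+M\alpha) \leq (1+1/(M\alpha))^\alpha$, which follows from Bernoulli's inequality $(1+1/(M\alpha))^\alpha \geq 1 + \alpha/(M\alpha) = 1 + 1/M$ combined with the observation $1/M \geq \alpha/(1+M\alpha)$.

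The main obstacle is Part~2: because $\lambda$ is convex for $\alpha > 1$, the clean chord/tangent estimates behind the concave bound in Theorem~\ref{thm:PoSbody} are not directly available, so the polynomial manipulation and the monotonicity analysis in $\rho$ must be carried out by hand. Part~1, by contrast, collapses to essentially one line of algebra once the equilibrium identity is substituted, which is why its bound $(1+M\alpha)^{1/\alpha}$ admits such a clean closed form, while Part~2's bound $1+M\alpha/(\alpha+1)$ represents a more delicate interpolation between the concave ($\alpha = 1$, giving $1+M/2$) and MHR ($\alpha \to \infty$, giving $1+M$) regimes already covered by Theorem~\ref{thm:PoSbody}.
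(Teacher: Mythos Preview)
Your treatment of Part~1 is correct, though more elaborate than needed. The paper dispatches it in one line: since $\lambda-r$ is non-increasing, the efficiency is at most $x^*/\tilde x$, and the equilibrium identity plus $\lambda(x^*)\ge r(x^*)\ge r(\tilde x)$ immediately gives $(x^*/\tilde x)^\alpha\le 1+M\alpha$. Your detour through the exact welfare integrals yields the sharper intermediate bound $(1+M\alpha)^{(\alpha+1)/\alpha}/[1+M(\alpha+1)]$, which is a nice byproduct but not required.

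Part~2 has two problems. First, the monotonicity of the ratio in $\rho$ is the entire content of the argument and you have not supplied it; ``differentiating and simplifying with the equilibrium identity'' hides a genuinely messy two-variable computation (since $\tilde s$ itself moves with $\rho$), and it is not at all clear a priori that the derivative has a sign. Second, your endpoint inequality is misstated: at $\rho=0$ the condition $(W^*-W)/W\le M\alpha/(\alpha+1)$ with $\tilde s=M\alpha/(1+M\alpha)$ reduces to
\[
\Bigl(1+\tfrac{1}{M\alpha}\Bigr)^{\alpha+1}\;\ge\;1+\tfrac{\alpha+1}{M\alpha},
\]
not to $1+\alpha/(1+M\alpha)\le(1+1/(M\alpha))^\alpha$. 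The correct version is still Bernoulli, so the endpoint itself is fine once the algebra is fixed---but the monotonicity step remains a gap.

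The paper sidesteps both issues. It works directly with the decomposition $\eta\le 1+\int_{\tilde x}^{y}[\lambda(x)-r(\tilde x)]\,dx\,/\,[\tilde x(\lambda(\tilde x)-r(\tilde x))]$ (your same ``flat-cost'' replacement, with $y=\lambda^{-1}(r(\tilde x))$), evaluates the numerator explicitly, and then uses one clean geometric fact: because $\lambda$ is \emph{convex} for $\alpha\ge 1$, the tangent at $\tilde x$ lies below $\lambda$, which together with the equilibrium identity $|\lambda'(\tilde x)|=(\lambda(\tilde x)-r(\tilde x))/(M\tilde x)$ forces $y-\tilde x\ge M\tilde x$. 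Substituting this single inequality into the explicit integral collapses everything to $1+M\alpha/(\alpha+1)$ with no parametrization or monotonicity argument required.
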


\subsubsection*{Discussion}
\begin{enumerate}

\item Polynomially decreasing inverse demand functions such as the ones in $F_p$ are extremely common in papers that assume concave demand, e.g, in \cite{bulow1983note}; the special case when $\alpha=1$ is popularly referred to as ``linear inverse demand functions" in the literature~\cite{abolhassani2014network}. It is interesting to note that the efficiency drops only logarithmically with $M$ for these functions as opposed to linearly in the general case.

\item {\em CED} stands for {\em constant elastic demand} (see for Appendix~\ref{app:inverse_demand}), where the elasticity of demand is $\frac{1}{\alpha}-1$ and denotes how much the demand changes with the changes in price. The main application of our result is showing that when the demand is relatively inelastic, the efficiency can still be much better than $1+M$. It is well-known that several essential commodities have such a relatively inelastic demand: only a large increase in price can lead to a lot of consumers dropping out. For instance, the elasticity of the market for electricity is believed to be somewhere around $-0.3$~\cite{filippini1999swiss}; our results would guarantee a Nash equilibrium within a factor of $\approx 1+0.59M$ of optimum for such functions.
\end{enumerate}
 By varying $\alpha$ in all the functions above, we can interpolate between our results in Theorem \ref{thm:PoSbody}, and quantify how the quality of equilibrium changes as the functions become less concave. For the sake of completeness, we also consider functions that do not have a monotone hazard rate but still fall under the class MPE. Perhaps most important among these seem to be the class of MPE functions where the quantity $x|\ldx|$ is non-decreasing. This class of functions was considered in~\cite{chawla2009price}, where it was shown that the efficiency loss can be as bad as $e^M$. We generalize their results to markets with cost functions, and further are able to slightly improve upon the bound in that paper.

\begin{claim}
\label{thm_mpepos}
If $\lx$ has $x|\ldx|$ non-decreasing, then the loss in efficiency for any instance is at most $\frac{M}{M-1}e^M$.
\end{claim}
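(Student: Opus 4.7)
The plan is to adapt the template used for the MHR case in Theorem~\ref{thm:PoSbody} to the weaker MPE-style hypothesis that only $x|\lambda'(x)|$ is non-decreasing (not the full hazard rate). By Corollary~\ref{corr_eqconditions}, I may assume the equilibrium flow $\tilde{x}$ satisfies $\lambda(\tilde{x}) - r(\tilde{x}) = M\tilde{x}|\lambda'(\tilde{x})|$, since otherwise $\tilde{x} = x^*$ and the bound is trivial. Define $g(x) := \lambda(x) - r(x)$, so that $g(\tilde{x}) = M\tilde{x}|\lambda'(\tilde{x})|$. The overall strategy is to first bound the ratio $x^*/\tilde{x}$ via a differential inequality for $g$, then bound the welfare loss $\int_{\tilde{x}}^{x^*} g(x)\,dx$, and finally divide by a clean lower bound on the equilibrium welfare $\Lambda(\tilde{x}) - R(\tilde{x})$.

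The central step is to derive a differential inequality for $g$ on $[\tilde{x}, x^*]$. Since $R$ is convex (Proposition~\ref{prop_mincostfunction}), $r'(x) \geq 0$; and since $x|\lambda'(x)|$ is non-decreasing, for $x \geq \tilde{x}$ we have $|\lambda'(x)| \geq \tilde{x}|\lambda'(\tilde{x})|/x = g(\tilde{x})/(Mx)$. Combining, $g'(x) = \lambda'(x) - r'(x) \leq -g(\tilde{x})/(Mx)$. Integrating from $\tilde{x}$ to $x$ yields
\[
g(x) \;\leq\; g(\tilde{x})\Bigl(1 - \tfrac{1}{M}\ln(x/\tilde{x})\Bigr) \qquad \text{for all } x \geq \tilde{x}.
\]
Since $g(x^*) \geq 0$ by Proposition~\ref{prop_optflow}, the right-hand side must be non-negative at $x = x^*$, forcing $\ln(x^*/\tilde{x}) \leq M$, i.e., $x^* \leq \tilde{x} e^M$. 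This exponential blow-up is precisely what turns the factor $M$ in the MHR analysis into $e^M$ here.

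To finish, I will integrate the pointwise bound on $g$ over $[\tilde{x}, x^*]$ (substituting $u = \ln(x/\tilde{x})$ to tame the logarithm) to obtain an upper bound on the welfare loss of the form $c(M)\cdot \tilde{x} g(\tilde{x})$ for an explicit constant $c(M)$. For the denominator, I combine $\Lambda(\tilde{x}) \geq \tilde{x}\lambda(\tilde{x})$ (since $\lambda$ is non-increasing) with $R(\tilde{x}) \leq \tilde{x} r(\tilde{x})$ (convexity of $R$ with $R(0)=0$) to get $\Lambda(\tilde{x}) - R(\tilde{x}) \geq \tilde{x} g(\tilde{x})$. The efficiency ratio is then at most $1 + c(M)$, which routine algebra brings into the claimed form $\tfrac{M}{M-1}e^M$. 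The main obstacle lies in the first step: under MHR the quantity $(\lambda-r)/|\lambda'|$ is itself non-increasing and gives a linear-in-$M$ bound at once, but this pointwise structure fails for MPE. The workaround is the logarithmic decay bound above, obtained by working with $x|\lambda'(x)|$ rather than the hazard rate; everything downstream is a careful but routine integration.
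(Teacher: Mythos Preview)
Your proposal is correct and in fact yields a sharper bound than the paper's own argument. Both proofs share the first step: from $x|\lambda'(x)|$ non-decreasing and the equilibrium condition $g(\tilde{x})=M\tilde{x}|\lambda'(\tilde{x})|$ one gets $|\lambda'(x)|\ge g(\tilde{x})/(Mx)$ for $x\ge\tilde{x}$, hence $x^*/\tilde{x}\le e^M$. The divergence is in how the welfare-loss integral $\int_{\tilde{x}}^{x^*} g(x)\,dx$ is controlled. The paper invokes the full MPE property through Claim~\ref{claim_lrMPE} to obtain the multiplicative inequality $(\ln g)'(x)\le -1/(Mx)$, giving the power-law envelope $g(x)\le g(\tilde{x})(\tilde{x}/x)^{1/M}$; integrating and using $x^*/\tilde{x}\le e^M$ yields $\eta\le \tfrac{M}{M-1}e^{M-1}$. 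You instead integrate the additive inequality $g'(x)\le -g(\tilde{x})/(Mx)$ directly, giving $g(x)\le g(\tilde{x})\bigl(1-\tfrac{1}{M}\ln(x/\tilde{x})\bigr)$. Since $1-\tfrac{1}{M}\ln y\le y^{-1/M}$ for $y\ge 1$, your pointwise envelope is strictly tighter, and carrying out your integral (maximized at $\ln(x^*/\tilde{x})=M$) gives $\eta\le (e^M-1)/M$, which is stronger than both the paper's $\tfrac{M}{M-1}e^{M-1}$ and the stated $\tfrac{M}{M-1}e^M$. Your route is also slightly more elementary, as it never needs Claim~\ref{claim_lrMPE}; the only place MPE enters is through Corollary~\ref{corr_eqconditions} to pin down the equilibrium. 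So: same skeleton, different second half, and your version buys a cleaner and quantitatively better constant.
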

Finally, we are also able to characterize the efficiency spectrum between linear and exponential. Our next result shows that for an interesting class of functions with logarithmic inverse demand, the efficiency is $e^{\frac{M}{\alpha}}$. Such demand functions are reasonably popular (see~\cite{bulow1983note, tsitsiklis2012efficiency}) and are considered to represented actual buyer demand in the transportation industry~\cite{evans1987theoretical}.

\begin{claim}
\label{clm_mpespecem}
Let $F_{exp}$ denote the demand and cost functions which can be represented as $\lx - r(x) = |\ln(\frac{x}{a})|^{\frac{1}{\alpha}}$ for $x \leq a$, and $\alpha\geq 1$. Then the efficiency is at most $e^{\frac{M}{\alpha}}$ for $\alpha \geq 1$.
\end{claim}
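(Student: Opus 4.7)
The plan rests on a useful identity: the social welfare of any flow of magnitude $x$ depends only on $\phi(x) := \lambda(x) - r(x)$. Indeed, since $R(0)=0$ and $R'(x)=r(x)$, we have $W(x) = \int_0^x \lambda(t)\,dt - R(x) = \int_0^x \phi(t)\,dt$. In our case $\phi(x) = (\ln(a/x))^{1/\alpha}$ on $[0,a]$, vanishing only at $x=a$. Proposition~\ref{prop_optflow} then forces $x^* = a$, so all that remains is to bound the ratio $W(a)/W(\tilde x)$.

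To control $\tilde x$, I invoke Corollary~\ref{corr_eqconditions}: either $\tilde x = x^*$ (whence efficiency is $1 \leq e^{M/\alpha}$) or $\phi(\tilde x) = M\tilde x\,|\lambda'(\tilde x)|$. A direct differentiation gives
$$|\phi'(x)| \;=\; \frac{\phi(x)}{\alpha x \ln(a/x)}.$$
Since $r$ is non-decreasing, $r'(x) \geq 0$, so $|\lambda'(x)| = -\phi'(x) - r'(x) \leq |\phi'(x)|$. Plugging this into the equilibrium equation yields
$$\phi(\tilde x) \;\leq\; M\tilde x\,|\phi'(\tilde x)| \;=\; \frac{M\,\phi(\tilde x)}{\alpha \ln(a/\tilde x)},$$
whence $\alpha \ln(a/\tilde x) \leq M$ and therefore $\tilde x \geq a e^{-M/\alpha}$.

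With this bound in hand, the efficiency calculation reduces to an elementary integral estimate. Via the substitution $u = \ln(a/t)$ one obtains $W(x) = a\int_{\ln(a/x)}^{\infty} u^{1/\alpha} e^{-u}\,du$, so $W(a) = a\,\Gamma(1 + 1/\alpha)$. Shifting by $v = u - M/\alpha$,
$$W(a e^{-M/\alpha}) \;=\; a\,e^{-M/\alpha}\int_0^\infty (v + M/\alpha)^{1/\alpha} e^{-v}\,dv \;\geq\; e^{-M/\alpha}\,a\,\Gamma(1 + 1/\alpha),$$
using $(v + M/\alpha)^{1/\alpha} \geq v^{1/\alpha}$. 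Since $W$ is non-decreasing on $[0,a]$, $W(\tilde x) \geq W(a e^{-M/\alpha})$, and combining the displays gives $W(a)/W(\tilde x) \leq e^{M/\alpha}$.

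The main obstacle I anticipate is the mismatch between what the hypothesis constrains (the combined quantity $\phi = \lambda - r$) and what the equilibrium condition involves ($|\lambda'|$). The monotonicity of $r$ saves the day by yielding $|\lambda'| \leq |\phi'|$, which lets me transport the MPE-type equilibrium equation into a statement purely about $\phi$; once this translation is made, the rest is a clean calculus exercise with the gamma integral.
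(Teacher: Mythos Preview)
Your argument is correct, and the crucial step---using $r'\geq 0$ to pass from $|\lambda'|$ to $|\phi'|$ and thereby extract $\ln(a/\tilde x)\leq M/\alpha$---is exactly what the paper does. One small overclaim: Proposition~\ref{prop_optflow} only yields $x^*\leq a$, not $x^*=a$, since the alternative $x^*=T$ could occur with $T<a$. This does not damage your bound, because $W$ is non-decreasing on $[0,a]$ and hence $W(x^*)\leq W(a)$; your displayed inequality $W(a)/W(\tilde x)\leq e^{M/\alpha}$ still dominates the efficiency.

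Where you genuinely diverge from the paper is in the final welfare estimate. The paper avoids the gamma integral entirely: since $\phi(x)=\lambda(x)-r(x)$ is non-increasing on $[0,a]$, the elementary inequality $\int_0^{x^*}\phi\big/\int_0^{\tilde x}\phi\leq x^*/\tilde x$ (each slice of $[0,\tilde x]$ contributes at least as much as the corresponding slice of $[\tilde x,x^*]$) immediately gives efficiency $\leq x^*/\tilde x\leq a/(ae^{-M/\alpha})=e^{M/\alpha}$. Your route through the incomplete gamma function and the shift $v=u-M/\alpha$ is perfectly sound and even yields the same constant, but it is more machinery than needed; the paper's one-line monotonicity bound is the cleaner finish.
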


\section{Looking beyond Graphical Markets: More General Models}
\label{sec:generalizations}
All the results above hold for the somewhat limited case in which the market structure is that of a graph. However, what if all the buyers still have an identical set $B$ of valid bundles (with each buyer possibly having different valuations), but this set $B$ did not correspond to the set of $s$-$t$ paths in a graph? This case can become a lot more intractable, since there can be sellers $e$ which are not true monopolies (they do not belong to all bundles in $B$), but still hold much more power than other sellers. Similar to~\cite{chawla2008bertrand}, this can be formalized through the notion of {\em virtual monopolies}: a seller $e$ is a virtual monopoly for an allocation $x$ if $e$ is part of every bundle which is allocated a non-zero amount in $x$.

Formally, consider a market with a set $E$ of goods. Every buyer $i$ desires one bundle from a subset $B \subseteq 2^E$ of valid and is indifferent as to which bundle she gets. In addition, we also drop the requirement that $c_e(0) =0$ for every edge. In a sense, this assumption helped maintain competition. Consider a market with two parallel links  $e$ and $e'$ but $c_e(0)$ is much larger than $c_{e'}(0)$. It is clear that if the demand is not too high, then $e'$ has the power of a monopoly here as the entry cost for $e$ is just too high. Now, we can generalize our results as follows:

\begin{theorem}
For buyers with identical sets of valid bundles $B_i$, all the results from Section~\ref{sec:existenceBody} and~\ref{sec:efficiency} hold, but with $M$ being the number of {\em virtual monopolies} of the equilibrium solution $\tilde{x}$, instead of the number of monopolies.
\end{theorem}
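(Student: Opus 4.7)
The plan is to re-run every argument from Section~\ref{sec:existenceBody} and Section~\ref{sec:efficiency} with the graph-theoretic objects replaced by their bundle analogues, exploiting the fact that on the support of any feasible allocation, a virtual monopoly behaves exactly like a topological monopoly did in the single-source single-sink setting. First I would re-define the min-cost allocation function: for any demand $x \geq 0$, let
\[
R(x) \;=\; \min \sum_{e \in E} C_e(x_e) \quad \text{s.t.} \quad \sum_{S \in B} x_S = x,\; x_e = \sum_{S \ni e} x_S,\; x_S \geq 0.
\]
Since the objective is convex and the constraints are linear, the same parametric convex-programming argument used for Proposition~\ref{prop:Rstuff} shows that $R$ is continuous, convex, and differentiable, with $r(x) = \sum_{e \in S} c_e(x_e)$ common to every bundle $S$ carrying positive allocation. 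The characterization of the social optimum in Proposition~\ref{prop_optflow} then holds verbatim.

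Next I would port the pricing construction of Theorem~\ref{thm:existence}. Given a candidate demand level $x$ with corresponding min-cost allocation $\tilde{x}$, let $M(\tilde{x})$ denote the set of virtual monopolies of $\tilde{x}$. Set
\[
 p_e \;=\; c_e(x_e) \;+\; \mathbf{1}[e \in M(\tilde{x})]\,\cdot\,\frac{\lambda(x) - r(x)}{|M(\tilde{x})|},
\]
for edges in the support of $\tilde{x}$, and $p_e = c_e(0)$ elsewhere (this is where dropping $c_e(0)=0$ matters: these ``inactive'' edges are priced at their entry cost rather than zero). The same intermediate-value step on $x|\lambda'(x)|$ versus $\lambda(x)-r(x)$ that worked for MPE demand produces an equilibrium magnitude $\tilde{x}$, and the analogue of Corollary~\ref{corr_eqconditions} follows with $M$ replaced by $|M(\tilde{x})|$.

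The main obstacle I anticipate is the deviation analysis for sellers \emph{outside} $M(\tilde{x})$, which is subtler than in the graph case for two reasons: we have dropped $c_e(0)=0$, and a non-virtual-monopoly $e$ faces competition from \emph{some} bundle in $B$ omitting $e$ rather than a concrete alternative path. For upward deviations, the definition of virtual monopoly supplies a bundle $S'\in B$ with $\tilde{x}_{S'}>0$ avoiding $e$; that bundle costs exactly $\lambda(\tilde{x})$, so any price increase by $e$ drives its demand to zero, exactly as in the graph proof. For downward deviations I would use the KKT envelope characterization of $R$: the minimum marginal cost of rerouting additional units through some bundle containing $e$ is at least $c_e(x_e)$, which rules out profitable undercutting; the extra care needed here is verifying that the higher entry costs $c_e(0)>0$ of currently inactive edges do not create a cheaper bundle upon price reduction, which is handled by the optimality of $\tilde{x}$ for $R(\tilde{x})$. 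The local-dominance, Pareto-optimality, and uniqueness claims then transfer by the same minimal-support arguments, with ``neighboring allocations'' now indexed by $B$ instead of by $s$-$t$ paths.

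Finally, the efficiency bounds of Section~\ref{sec:efficiency} require no new ideas: their proofs use only Corollary~\ref{corr_eqconditions}, convexity/monotonicity of $R$, and purely analytic properties of $\lambda$ (MHR, concave, CED, polynomial). Since the Corollary now reads with $|M(\tilde{x})|$ and $R$ still satisfies $\lambda(\tilde{x})\tilde{x} \geq r(\tilde{x})\tilde{x} \geq R(\tilde{x})$, the inequality chains driving the bounds $1+M/2$, $1+M$, $(1+M\alpha)^{1/\alpha}$, etc., carry through word-for-word with $M$ reinterpreted as the number of virtual monopolies of the equilibrium allocation.
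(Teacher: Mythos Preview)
Your proposal has a genuine gap at exactly the point the paper flags as the crux: the closed-form pricing rule
\[
p_e \;=\; c_e(x_e) \;+\; \mathbf{1}[e\in M(\tilde{x})]\cdot\frac{\lambda(x)-r(x)}{|M(\tilde{x})|}
\]
need not make the candidate allocation $\tilde{x}$ a best response. A virtual monopoly $e$ belongs to every bundle \emph{with positive allocation}, but there can be an unused bundle $B'\in B$ that omits $e$. Under your pricing the used bundles cost $\lambda(x)$, while $B'$ costs only the marginal costs of its items plus the premiums of whichever VMs happen to lie in $B'$; this can fall strictly below $\lambda(x)$. Concretely, take $B=\{\{e_1,e_2\},\{e_3\}\}$ with all of $\tilde{x}$ on $\{e_1,e_2\}$, $c_{e_1}(x)=c_{e_2}(x)=1$, $c_{e_3}(0)=5$, $\lambda(x)=10$: both $e_1,e_2$ are virtual monopolies, your rule prices $\{e_1,e_2\}$ at $10$ and $\{e_3\}$ at $5$, so buyers abandon the supposed equilibrium bundle. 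Once best-response fails, Lemmas~\ref{lem_eqnconditionsincr}--\ref{lem_eqnconditionsdecr} do not apply and the intermediate-value existence step has nothing to stand on.

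The paper's route is substantially different. It replaces the closed form by an \emph{ascending price algorithm}: start at marginal prices and raise prices uniformly only on the currently ``active'' virtual monopolies, freezing any VM the moment some unused bundle omitting it becomes price-tied with the allocated bundles. This guarantees the allocation remains a best response at every stage and yields, for each $x$, a well-defined premium $\Gamma(x)$ on the still-active VMs. The bulk of the work is then proving that the prices returned by this process vary \emph{continuously} in $x$ --- a lengthy argument via ``profile vectors,'' closedness of the regions where a given profile is valid, and convergent subsequences of min-cost allocations --- so that an intermediate-value argument on $\Gamma(x)$ versus $x|\lambda'(x)|$ produces $\tilde{x}$. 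The resulting equilibrium satisfies $\lambda(\tilde{x})-r(\tilde{x})=M'\tilde{x}|\lambda'(\tilde{x})|$ for some $M'\le |M(\tilde{x})|$, which is the correct analogue of Corollary~\ref{corr_eqconditions}. Your final paragraph is right that, once this analogue is secured, the analytic efficiency arguments of Section~\ref{sec:efficiency} transfer verbatim with $M$ reinterpreted as the virtual-monopoly count; but obtaining that analogue is the hard part and does not follow from the graph-case pricing rule.
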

The proof is non-trivial and does not follow from our previous techniques in Section~\ref{sec:existenceBody}. In fact we consider a new ascending price algorithm that generalizes the pricing rule mentioned in the proof of Theorem~\ref{thm:existence}. The algorithm for setting prices at equilibrium may be of independent interest.

\textbf{Discussion.} While in the worst case, the number of virtual monopolies can be as large as the number of sellers, for markets with reasonable (but not perfect) competition it is likely to be a lot less.

\subsection*{Uniform Demand Buyers with General Combinatorial Valuations}
So far the buyer valuation functions have all had a certain structure: the buyer has a set of desired bundles and cares only about these bundles. What if instead all buyers had a single monotone valuation function $v(S)$ that captures how much they value the set of items $S \subseteq E$. We generalize both our Theorem~\ref{theorem_linearutilitymain} and a result in~\cite{babaioff2014price} and show that for uniform demand buyers with arbitrary combinatorial valuations, there exists an efficient Nash Equilibrium.

Formally, consider a monotone valuation function $v(S)$ defined for every $S \subseteq E$. Every buyer in the continuous population of $T$ receives a value of $v(S)$ from set $S$. Then we show the following,
\begin{theorem}
\label{thm_gen_uniformdemand}
In any setting where uniform buyers have arbitrary combinatorial valuations, there exists an efficient Nash Equilibrium.
\end{theorem}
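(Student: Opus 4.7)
The plan is to construct an efficient Nash equilibrium explicitly: find the welfare-maximizing allocation via a convex program, price each item at its marginal production cost, and then load extra ``monopoly rent'' onto the virtual monopolies in a Pareto-maximal way. This extends the ``marginal cost plus equally-split slack'' rule from the proof of Theorem~\ref{theorem_linearutilitymain} and also generalizes the single-buyer efficiency result of \cite{babaioff2014price}. The key new difficulty is that, because different bundles now carry different values, the slack cannot be split arbitrarily among the virtual monopolies; the split must simultaneously keep every bundle outside the optimum support from becoming strictly more attractive.

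First I would solve the convex program $\max \sum_{S \subseteq E} y_S v(S) - \sum_e C_e(x_e)$ subject to $x_e = \sum_{S \ni e} y_S$, $\sum_S y_S \le T$, $y_S \ge 0$, obtaining an optimum $(y^*, x^*)$ with support $\mathcal{S} = \{S : y^*_S > 0\}$. KKT gives a multiplier $\mu \ge 0$ with $v(S) - \sum_{e \in S} c_e(x^*_e) = \mu$ for $S \in \mathcal{S}$, $v(S') - \sum_{e \in S'} c_e(x^*_e) \le \mu$ for every $S' \subseteq E$, and $\mu = 0$ unless $\sum_S y^*_S = T$. Let $V = \bigcap_{S \in \mathcal{S}} S$ be the virtual monopolies and set $p_e = c_e(x^*_e) + \delta_e \mathbf{1}[e \in V]$, where $(\delta_e)_{e \in V} \ge 0$ is a Pareto-maximal vector satisfying
\[
\sum_{e \in V \setminus S'} \delta_e \;\le\; \mu + \sum_{e \in S'} c_e(x^*_e) - v(S') \qquad \text{for every } S' \subseteq E \text{ with } V \not\subseteq S'.
\]
The feasible set is non-empty ($\delta = 0$ is feasible, since each RHS is non-negative by KKT) and bounded (the $S' = \emptyset$ constraint forces $\sum_{e \in V} \delta_e \le \mu$), so such a $\delta$ exists by compactness. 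With the allocation $y^*$, every $S \in \mathcal{S}$ has utility $\mu - \sum_{e \in V} \delta_e \ge 0$ and the displayed inequalities guarantee that no bundle outside $\mathcal{S}$ has strictly larger utility, so $y^*$ is a valid buyer best-response. For a seller $e \notin V$, some bundle in $\mathcal{S}$ excludes $e$, so an upward deviation shifts all buyers to it and drops $x_e$ to $0$, while a downward deviation loses money via the standard convexity inequality $C_e(x_e) \ge C_e(x^*_e) + c_e(x^*_e)(x_e - x^*_e)$. For $e \in V$, if $\mu > 0$ then $x^*_e = T$ so a price cut cannot raise demand; if $\mu = 0$ then $\delta = 0$ and the same convexity argument rules out downward deviations.

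The main obstacle is the upward-deviation argument for a virtual monopoly $e \in V$. Pareto-maximality of $\delta$ implies that some constraint indexed by an $S'$ with $e \in V \setminus S'$ is tight, and a direct calculation shows that at the original prices this $S'$ already has utility exactly $\mu - \sum_{e \in V} \delta_e$ (the same as every support bundle). After a deviation $\delta_e \to \delta_e + \Delta$ with $\Delta > 0$, the utility of $S'$ is unchanged (since $e \notin S'$), whereas every bundle containing $e$ — in particular every $S \in \mathcal{S}$ — loses exactly $\Delta$ in utility; hence the new utility-maximizing bundles are all subsets of $E \setminus \{e\}$ and $x_e = 0$ in every buyer best-response. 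The seller's profit therefore drops by at least $\delta_e T$, which is not an improvement, completing the verification that $(\vec{p}, y^*)$ is an efficient Nash equilibrium.
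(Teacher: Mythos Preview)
Your proof is correct and follows essentially the same approach as the paper's: price everything at marginal cost, then load surplus onto the virtual monopolies in such a way that each one is ``tight'' (has an equally attractive alternative bundle that excludes it). The paper reaches that point by an explicit ascending-price algorithm (raise each virtual monopoly's price until some non-support bundle ties, and repeat until no further increase is possible), whereas you obtain it non-constructively by invoking KKT and taking any Pareto-maximal $\delta$ in the resulting polytope; the termination condition of the paper's algorithm is precisely your Pareto-maximality, so the two arguments coincide in substance.
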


\section{Multiple-Source Networks and Efficient Equilibrium}
\label{sec:pos1}

\subsection{Single-Source Networks: When does competition lead to complete efficiency?}

We conclude our findings for single-source single-sink networks by showing that when the demand is `somewhat elastic' and sellers are limited in supply, there exists an efficient Nash Equilibrium. This is an interesting special case for capacitated networks not considered in~\cite{chawla2008bertrand} and~\cite{chawla2009price}. Surprisingly, for the exact same demand but with production costs, we show in the Appendix (Claim~\ref{clm_unboundedpos}) that the equilibrium can be arbitrarily inefficient. Therefore, this result reiterates our claim that the behavior in networks with general production costs can be quite different from the special case with capacities.

Formally, consider a networked market where every edge $e$ has a capacity of $c_e$. This is a special convex cost function where $C_e(x) = 0$ if $x \leq c_e$ and $C_e(x) = \infty$ otherwise. The `somewhat elastic' demand functions that we consider have the form $\lx = a x^{-\frac{1}{r}}$ where $a \geq 0$ and $r \geq 1$. The exact value of $r$ depends on $M$.
\begin{claim}
\label{thm_specposcap}
For any single-source single-sink network with edge capacities, $M$ monopolies, and demand function $\lx = ax^{1/r}$, there exists an efficient Nash Equilibrium as long as $r > M$.
\end{claim}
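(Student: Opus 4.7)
The plan is to exhibit a Nash equilibrium whose allocation is the max $s$-$t$ flow $x^*$. This is socially optimal in our setting: since $\lambda(x)=ax^{-1/r}>0$ for all $x>0$ and production cost vanishes below capacity, welfare $\int_0^x\lambda(t)\,dt$ is strictly increasing in total flow up to capacity, so the optimum is exactly the max-flow. Let $P=\lambda(x^*)=a(x^*)^{-1/r}$ denote the intended equilibrium path price.

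I would construct prices as follows: fix any max-flow of magnitude $x^*$, set each of the $M$ monopoly edges to price $P/M$, and set every non-monopoly edge to $0$ (its marginal cost along the chosen flow). Since every $s$-$t$ path contains all $M$ monopolies, every path carries total price exactly $P$, so any feasible split of $x^*$ buyers across paths is a valid buyer best response.

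The core verification is that no seller gains by deviating. For a monopoly raising its price by $\Delta>0$, the path price becomes $P+\Delta$ and the total flow shrinks to $\lambda^{-1}(P+\Delta)=(a/(P+\Delta))^r$. Writing $u=\Delta/P$, the ratio of the deviator's new profit $(P/M+\Delta)\,\lambda^{-1}(P+\Delta)$ to its old profit $Px^*/M$ collapses to $(1+Mu)(1+u)^{-r}$. Bernoulli's inequality gives $(1+u)^r\geq 1+ru\geq 1+Mu$ whenever $r\geq M$, with strict inequality when $r>M$ and $u>0$; hence every positive deviation strictly lowers the monopoly's profit. A monopoly cannot gain by lowering price either, since flow is already capped at the network's max-flow and no additional buyers can be served.

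The main obstacle is ruling out profitable deviations by non-monopoly edges that are critical to achieving the max-flow (those lying on a min-cut). For a non-critical non-monopoly edge, a price hike reroutes all of its flow onto alternative paths with spare capacity, leaving the deviator with zero profit, so it cannot improve. For critical non-monopoly edges I would strengthen the pricing rule by borrowing shadow prices from the max-flow LP and distributing them symmetrically among the edges of each critical cut (while decreasing the monopoly prices correspondingly so every path still sums to $P$). A deviation by such an edge then either loses all of its flow to the other edges of its cut or reduces total served flow, in which case the same Bernoulli-style computation governed by $r>M$ applies. Combining these cases yields an efficient Nash equilibrium whenever $r>M$.
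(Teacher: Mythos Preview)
Your Bernoulli argument for why a monopoly priced at $P/M$ never gains by raising is clean and correct, and your observation that no monopoly gains by lowering (flow is already at the network max) is fine. The gap is in your last paragraph, where you handle the case that the min-cut contains non-monopoly edges.

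You propose to shift price onto the critical cut edges ``while decreasing the monopoly prices correspondingly so every path still sums to $P$'', and then claim ``the same Bernoulli-style computation governed by $r>M$ applies''. But the Bernoulli bound you proved is sensitive to the monopoly's starting price. If a monopoly is at price $p_e^*$ and raises by $\Delta$, the profit ratio is $(1+\Delta/p_e^*)(1+\Delta/P)^{-r}$; this is $\le 1$ only when $p_e^*\ge P/r$. So once you lower any monopoly below $P/r$ to make room for the cut edges, that monopoly \emph{wants} to raise its price and your equilibrium collapses. You never identify this threshold, and since you give no quantitative rule for how much to decrease monopoly prices, the construction is underspecified precisely at the point where it matters. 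Separately, your claim that Bernoulli handles a price increase by a critical non-monopoly edge is unjustified: such an edge does not carry the entire flow, so the profit ratio is not $(1+Mu)(1+u)^{-r}$ and the computation does not transfer.

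The paper's construction is organized around exactly this threshold. It splits into two cases. If some monopoly is saturated (its capacity equals $x^*$), the unsaturated monopolies are priced at $\frac{1}{r}\lambda(x^*)$ each and the remaining slack is divided among the saturated ones, which then get strictly more than $\frac{1}{M}\lambda(x^*)$; a derivative computation (equivalent to your Bernoulli bound at the endpoints $p_e^*=P/r$ and $p_e^*>P/M$) shows neither type of monopoly deviates. If no monopoly is saturated, all $M$ monopolies sit at $\frac{1}{r}\lambda(x^*)$ and the leftover $\lambda(x^*)(1-M/r)>0$ is placed on each edge of a min-cut of non-monopoly edges; these edges are saturated and cannot gain flow, and raising their price would drive buyers to alternative cheapest paths. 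The specific price $\frac{1}{r}\lambda(x^*)$ is what makes the whole scheme work---it is the minimum price at which an unsaturated monopoly is indifferent to raising, which frees up exactly enough to pay the saturated bottleneck (monopoly or not). Your $P/M$ rule is too high for this purpose, and your unspecified decrease risks going too low.
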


\subsection{Multiple-Source Networks}
We now move on to more general networks where different buyers have different $s$-$t$ paths that they wish to connect and the demand function can be different for different sources. Unfortunately, our intuition from the previous sections does not carry over. As we show in Appendix \ref{app:multsourcesink}, Nash equilibrium may not exist even for instances with two sources and a single sink. Perhaps more surprisingly, we give relatively simple examples in which perfect efficiency is no longer achieved, even with complete absence of monopolies! Nevertheless, we prove that for some interesting special cases, fully efficient Nash equilibrium still exists even when buyers desire different sets of bundles. In particular, we believe that our result on series-parallel networks is an important starting point for
truly understanding multiple-source networks.

We begin by showing that even in a simple network with two sources, a common sink and no monopolies, there is no equilibrium maximizing social welfare.
\begin{claim}
\label{ex:nomonopolyeq_main}
There exists an instance with two sources, one sink, and no monopoly edges for either source, where all equilibria are inefficient.
\end{claim}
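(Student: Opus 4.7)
The plan is to exhibit a concrete small instance with two sources $s_1, s_2$, one sink $t$, and no edge that is a monopoly for either source, whose socially optimal allocation cannot be supported at any Nash equilibrium. I would take the ``double-diamond'' graph on nodes $\{s_1, s_2, a, b, t\}$ with edges $s_i \to a$ and $s_i \to b$ for $i \in \{1,2\}$ together with two shared edges $e_a = a \to t$ and $e_b = b \to t$. Each source has two edge-disjoint $s_i$--$t$ paths, so removing any single edge still leaves every source connected to $t$; nevertheless $e_a$ and $e_b$ together form a joint bottleneck that carries all the flow of both sources.

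I would instantiate this with identical linear inverse demands $\lambda_1=\lambda_2=1-x$ and strictly convex quadratic costs $C_e(x)=\alpha_e x^2/2$, choosing the $\alpha_e$ so that the unique socially optimal min-cost flow (cf.\ Proposition~\ref{prop_mincostfunction}) sends strictly positive flow from each source through each of $e_a$ and $e_b$. The optimal flow magnitudes and the path marginal costs $r_i(x_i^*)$ then come out of a small linear KKT system for the min-cost multi-commodity flow.

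The key step is to show that no efficient pricing can survive a deviation by the seller of $e_a$. In any Nash equilibrium that achieves the optimum, the buyers' best-response condition forces $\sum_{e\in P}p_e=\lambda_i(x_i^*)=r_i(x_i^*)$ on every used path $P$ from source $i$, which by Proposition~\ref{prop_optflow} equals the sum of marginal costs along $P$. If the seller of $e_a$ now raises his price by $\varepsilon$, buyers from both sources partially reroute via $e_b$; because $C_{e_b}$ is strictly convex, the rerouted mass is only $O(\varepsilon)$, while the $\Theta(1)$ flow retained by $e_a$ yields an $\varepsilon\cdot x_{e_a}^*-O(\varepsilon^2)$ gain. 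Hence the deviation is strictly profitable for all sufficiently small $\varepsilon$.

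The main obstacle is promoting the deviation from ``marginal-cost pricing fails'' to ``\emph{every} efficient pricing fails,'' because the equilibrium prices need not coincide with marginal costs edge by edge. I would resolve this by exploiting the joint structure of the four paths: the four path-price equations above, together with the profit-maximizing first-order conditions on the four non-shared edges $s_i\to a$ and $s_i\to b$, pin down the admissible $(p_{e_a},p_{e_b})$ up to a low-dimensional family on which the $e_a$-deviation argument can be verified uniformly. This combinatorial pinning-down of the entire family of potentially efficient pricings is where most of the technical care lies and is the real crux of the claim.
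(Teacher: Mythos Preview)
Your deviation argument rests on a misreading of the model. In this paper, the costs $C_e(x)$ are \emph{production costs borne by the seller}; buyers see only the posted prices $p_e$ and purchase the cheapest path (see the best-response definition in Section~\ref{sec:prelim}). So when the seller of $e_a$ raises his price by $\varepsilon$, the convexity of $C_{e_b}$ is irrelevant to the buyers: every $a$-path is now strictly more expensive than the corresponding $b$-path, and \emph{all} flow from both sources reroutes to the $b$-paths. The seller of $e_a$ loses his entire flow, not an $O(\varepsilon)$ fraction. Your profit calculation $\varepsilon\cdot x_{e_a}^*-O(\varepsilon^2)$ therefore does not apply. In fact, marginal-cost pricing at the optimum \emph{is} a Nash equilibrium in your double-diamond instance (the argument is the same as in Claim~\ref{clm_monopolypos1}: any edge that raises its price is undercut by the alternative path, and Lemma~\ref{cl_margcost} rules out profitable price decreases). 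Hence your instance does not witness the claim at all.

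The paper's construction (Figure~\ref{fig_monopolyposgeq1_main}) works for a different reason. It uses \emph{asymmetric} demands, with $s_1$ low-value ($\lambda_1(x)=1-x$) and $s_2$ high-value ($\lambda_2(x)=4-x$), and a network in which $s_1$'s alternative path shares an edge $e_2=(i_1,t)$ with one of $s_2$'s paths. At the optimum, $e_2$ carries one unit of $s_2$-flow at total cost $1$, so any equilibrium supporting the optimum must have $p_{e_2}\ge 1$ for $e_2$ to break even. But $s_1$'s flow-carrying path must be priced at $\lambda_1(1/3)=2/3<1$, so $s_1$ cannot afford its alternative path: the direct edge $e_1$ becomes a \emph{virtual monopoly} for $s_1$, and a standard single-monopoly deviation (raise $p_{e_1}$ to $3/4$) is strictly profitable. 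The crux is this interaction---one source's flow forces a shared edge to be priced above what the other source can pay---and it cannot arise in your symmetric instance with identical demands.
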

\emph{(Proof Sketch)}\\
\begin{wrapfigure}{l}{4cm}
\centering
\includegraphics[scale=0.8]{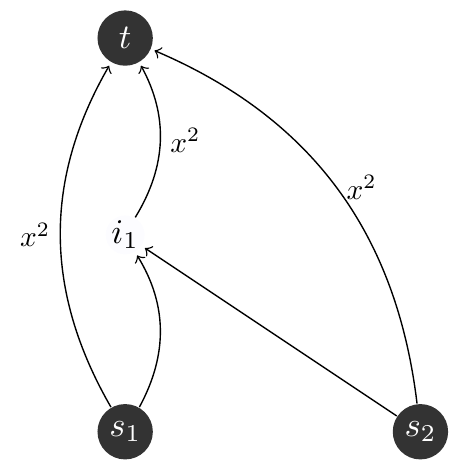}
\caption{Instance with two sources $s_1$, $s_2$ and one sink $t$.}
\label{fig_monopolyposgeq1_main}
\end{wrapfigure}

Consider the instance shown in the figure, where the cost function on each edge is given as the edge weight. Suppose that $s_1$ has demand $\lx = 1-x$ and $s_2$ has $\lx = 4-x$. At the unique optimum point $s_1$ sends a flow of $1/3$ on its direct link (edge $e_1$) to the sink and $s_2$ sends $2$ units of flow, one on each of its paths.

Assume by contradiction that $\exists$ prices stabilizing the optimum flow. For $s_1$, $\lambda(1/3) = 2/3$ and so this must be the price on $e_1$. On the other hand, $e_2 = (i_1, t)$ is carrying $1$ unit of flow at a cost of $1$ and therefore, for non-zero profit, its price must be at least one. But this means that $e_1$ is virtually a monopoly for $s_1$. Therefore, if $e_1$ increases its price to $p'_1=\frac{3}{4}$, its profit will also strictly improve. This contradicts the assumption that the original solution is stable. $\qed$

Despite this discouraging result, we show that there are important special cases where equilibrium allocations do maximize social welfare. We first consider the case when all the buyers desiring to connect a given source-sink pair have uniform demand, i.e., we consider a multiple-source, multiple-sink network where every $(s_t,t_i)$ pair has a buyer demand $\lambda_i(x) = \lambda^0_i$ for $x \leq T_i$ and zero otherwise. For this, we show conditions on both the demand and network topology that lead to efficient equilibria.

\begin{claim}
\label{clm_multsourceundemand}
There exists efficient equilibrium in multiple-source multiple-sink networks with uniform demand buyers at each source if one of the following is true
\begin{enumerate}

\item Buyers have a large demand and production costs are strictly convex.

\item Every source node is a leaf in the network.
\end{enumerate} \end{claim}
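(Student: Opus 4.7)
My plan for both parts is to extend the pricing rule from Theorem~\ref{theorem_linearutilitymain} to the multi-source setting: compute a socially optimal multi-commodity flow $\vec{x}^*$, price edges at marginal cost where possible, and designate certain edges to absorb the slack between marginal cost and willingness to pay. The new difficulty, compared with the single-source case, is that several sources may share an edge and one seller's deviation can perturb the routing of buyers from multiple source-sink pairs simultaneously.

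For Case~1, strict convexity of every $C_e$ together with the large-demand hypothesis forces the optimum to satisfy $r_i(x^*_i)=\lambda^0_i$ with $x^*_i<T_i$ strictly for every source $s_i$. Pricing each edge at $p_e = c_e(x^*_e)$, where $x^*_e$ aggregates flow across all sources on $e$, makes every $s_i$-$t_i$ path in the support of $\vec{x}^*$ sum to exactly $\lambda^0_i$ by the KKT conditions for min-cost flow, certifying buyer best response. To rule out an upward deviation by seller $e$: each source $s_i$ using $e$ either has an alternative $s_i$-$t_i$ path priced at $\lambda^0_i$ and switches away (if $e$ is not a monopoly for $s_i$), or drops out of the market because the new path price exceeds $\lambda^0_i$ (if $e$ is a monopoly for $s_i$). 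Either way $e$'s flow collapses to zero, so its new profit is $0$ while its old profit $c_e(x^*_e)x^*_e - C_e(x^*_e)$ is strictly positive by strict convexity plus $C_e(0)=0$. A downward deviation of size $\epsilon$ is ruled out by the convexity bound $C_e(x^*_e+\Delta) - C_e(x^*_e)\geq c_e(x^*_e)\Delta$, which gives profit change $\leq -\epsilon(x^*_e+\Delta)<0$ irrespective of how flow rearranges.

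For Case~2 the unique edge $e_i$ incident to each leaf source $s_i$ is a monopoly for $s_i$. I would price every non-leaf edge at $c_e(x^*_e)$ and set $p_{e_i} = \lambda^0_i - q_i$, where $q_i = r_i(x^*_i) - c_{e_i}(x^*_{e_i})$ denotes the marginal cost of the portion of any $s_i$-$t_i$ path beyond $e_i$. This is well-defined because every support path shares marginal cost $r_i(x^*_i)$, and non-negative since $\lambda^0_i\geq r_i(x^*_i)$ at the optimum. Non-leaf deviations are handled exactly as in Case~1. For the leaf $e_i$: raising the price pushes the $s_i$-path above $\lambda^0_i$ and forces all $s_i$-buyers out, zeroing profit while old profit $(\lambda^0_i - q_i)x^*_{e_i} - C_{e_i}(x^*_{e_i})$ is $\geq 0$; lowering the price attracts no new buyers when $x^*_i = T_i$, and when $x^*_i<T_i$ optimality gives $p_{e_i} = c_{e_i}(x^*_{e_i})$ so the Case~1 downward-deviation argument applies directly.

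The main obstacle is the upward-deviation analysis for non-leaf sellers, since the multi-commodity structure could in principle allow subtle cross-source flow reallocations. The key observation that unlocks the argument is that under the proposed prices, every $s_i$-$t_i$ path in the support of $\vec{x}^*$ carries the same total price $\lambda^0_i$ (with the slack absorbed by the leaf monopoly in Case~2), so any unilateral price increase elicits a clean Bertrand-style response, namely immediate switching to alternatives at the same price or full market exit, rather than a tangled rebalancing; this reduces every remaining check to the one-dimensional convexity inequalities used above.
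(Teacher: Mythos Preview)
Your proposal is correct and follows essentially the same approach as the paper: marginal-cost pricing everywhere in Case~1 (where the large-demand hypothesis forces $\lambda^0_i=\sum_{e\in P_i}c_e(x^*_e)$ on every support path via KKT, so a price increase immediately prices every path through $e$ above $\lambda^0_i$), and in Case~2 the same marginal-cost pricing with the slack $\lambda^0_i-\sum_{e\in P_i}c_e(x^*_e)$ loaded onto the private leaf monopoly $e_i$. One small imprecision worth tightening: in your Case~1 upward-deviation dichotomy, when $e$ is not a monopoly for $s_i$ the alternative $s_i$--$t_i$ path need not be priced exactly at $\lambda^0_i$ (it may be a non-flow path priced $\ge\lambda^0_i$), but this does not matter since either way no $s_i$-buyer will pay more than $\lambda^0_i$, so $e$ still loses all its flow.
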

The second case commonly arises in several real-world networks including Internet, postal, and transportation networks~\cite{schintler2005complex}. The commonality between these physical networks is that there exists a well-connected, easily-accessible central network that links major hubs, but the last-mile between the hub and the final user is often controlled by a local monopoly and thus the source is a leaf. It is not feasible for firms to compete at the last-mile due to the heavy infrastructure costs and minimal returns.

\vskip 3pt \textbf{Multiple-Source Single-Sink Series-Parallel Networks}
In some sense, Claim~\ref{clm_monopolypos1} embodies the very essence of the Bertrand Paradox, the fact that competition drives prices to their marginal cost. So it is perhaps surprising that this does not hold in general networks as illustrated by Claim~\ref{ex:nomonopolyeq_main}. However, we now show that for a large class of markets which have the series-parallel structure, the absence of monopolies still gives us efficient equilibria. Series-Parallel networks have been commonly used~\cite{correa2008pricing,kuleshov2012efficiency} to model the substitute and complementary relationship that exists between various products in arbitrary combinatorial markets. We conjecture that this result is tight in the sense that no larger class of graphs without monopolies have this property.

We define a multiple-source single-sink graph to be a series-parallel graph if the super graph of the given network obtained by adding a super-source and connecting it to all the sources has the series-parallel structure. The notion of ``no-monopolies" for a complex network has the same idea as a single-source network: there is no edge in the graph such that its removal would disconnect any source from the sink. We are now in a position to show our main result extending the Bertrand Paradox to arbitrary series-parallel graphs.

\begin{theorem}
\label{thm_seriesparallelpos}
A multiple-source single-sink series-parallel network with no monopolies admits a Nash Equilibrium that has the same social welfare as the optimum solution for any given instance where all edges have $c_e(0)=0$.
\end{theorem}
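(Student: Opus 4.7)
The plan is to adapt the marginal-cost pricing of Claim~\ref{clm_monopolypos1} to the multi-source setting. Compute the socially optimal multi-commodity min-cost flow $\vec{x^*}$, whose per-source demands satisfy $\lambda_i(x^*_i) = r_i(\vec{x^*})$, and set $p_e = c_e(x^*_e)$ on every edge. Buyer best-response is immediate: at these prices, every active $s_i$-$t$ path costs exactly $\lambda_i(x^*_i)$ while every inactive one costs at least as much, so buyers purchase along $\vec{x^*}$. No seller can profit by lowering his price, since the inequality $p_e \Delta x \le C_e(x^*_e + \Delta x) - C_e(x^*_e) \le c_e(x^*_e + \Delta x)\Delta x$ used in Claim~\ref{clm_monopolypos1} relies only on the convexity of $C_e$ and is unaffected by the graph topology.

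The real work is ruling out profitable price \emph{increases}, and this is where series-parallelism is essential. Fix any real edge $e$ and consider the SP decomposition tree of the supergraph $G'$ (with super-source $s^*$ joined to every original source). Because $e$ is not a monopoly of any source, not every $s^*$-$t$ path of $G'$ goes through $e$, so the sequence of operators between $e$'s leaf and the root cannot be entirely series. Let $H = H_1 \oplus_p H_2$ denote the \emph{lowest} parallel ancestor of $e$ in the SP tree, with $e \in H_1$ and endpoints $u,v$. Since every operator strictly between $e$ and the root of $H_1$ is series, $e$ is a cut edge of $H_1$: every $u$-$v$ path lying inside $H_1$ must traverse $e$.

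The key structural lemma I would establish next is: at $\vec{x^*}$, \emph{either} both $H_1$ and $H_2$ carry positive flow and the marginals of their cheapest $u$-$v$ paths are equal, \emph{or} $c_e(x^*_e) = 0$. This follows from the multi-commodity KKT conditions. If some commodity routes through $H_1$ and some through $H_2$, optimality against commodity diversion upper-bounds each side's marginal by the other, forcing equality. If instead $H_2$ carries no flow, then the hypothesis $c_{e'}(0) = 0$ on every edge makes the marginal of any $H_2$-path equal to zero, and optimality of $H_1$-routing against switching to $H_2$ forces the shortest $H_1$-path marginal to be at most zero; combined with $c_e \ge 0$ this gives $c_e(x^*_e) = 0$.

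Armed with this lemma, a price increase by any $\delta > 0$ adds exactly $\delta$ to every $u$-$v$ path through $H_1$ (because $e$ is a cut edge of $H_1$) while leaving all $H_2$-paths and everything outside $H$ unchanged. In the non-degenerate case, every source whose optimal route passes through $H$ now strictly prefers $H_2$, so in every best response $e$ carries zero flow and earns profit $0$, at most the original profit $c_e(x^*_e)x^*_e - C_e(x^*_e) \ge 0$. In the degenerate case, $c_e(x^*_e) = 0$ together with the monotonicity of $c_e$ and $c_e(0) = 0$ forces $C_e(x^*_e) = 0$, so the original profit is already zero and the deviation cannot improve it. The chief obstacle is the structural lemma itself, which requires careful multi-commodity KKT reasoning localized to the parallel composition $H$; the hypothesis $c_e(0) = 0$ is genuinely used in the degenerate branch, since without it an unused alternative $H_2$ could carry strictly positive marginal cost at zero flow and leave $e$ with residual market power that the cut-edge argument would miss.
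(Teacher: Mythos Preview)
Your approach is correct in spirit and genuinely different from the paper's. The paper does not use the SP decomposition tree at all; instead it proves two \emph{forbidden-bridge} lemmas (any internal node of one branch of a $u$--$v$ parallel pair can reach $t$ only through $v$, and can be reached from any source only through $u$) and then, for each flow-carrying edge $e$ and each source $s_i$ using it, explicitly constructs an equally-priced $s_i$--$t$ detour. When the obvious detour $P'$ carries no $s_i$-flow, the paper locates some other source $s_j$ whose flow touches $P'$, invokes the bridge lemmas to force $s_j$'s path through $u$ and $v$, and uses $s_j$'s KKT condition to pin down the marginal. Your argument sidesteps this case analysis by localizing everything to the parallel block $H=H_1\oplus_p H_2$ and reasoning about the cheapest $u$--$v$ segment on each side; this is cleaner and makes the role of $c_e(0)=0$ more transparent (it is exactly what forces the unused side to have marginal zero).

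There is one gap you should close. Your ``commodity diversion'' step and the degenerate branch both silently assume that any commodity with flow in $H_1$ \emph{can} reroute through $H_2$. This fails if the left terminal $u$ of $H$ is the artificial super-source $s^*$: a source sitting inside $H_1$ then has no access to $H_2$ at all, and your KKT inequality between the two sides evaporates. The fix is a one-line observation: if $u=s^*$, then $H_1$ is a series chain $C_1\cdots C_m$ with $e=C_j$; since $e$ is a real edge we have $j\geq 2$, and every source whose super-edge lies in $C_1$ must traverse $e$ to reach $v$, making $e$ a monopoly for that source and contradicting the hypothesis. Hence $u\neq s^*$, every commodity entering $H$ does so at a real node, and the diversion argument goes through. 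With that patched, your proof stands; the paper's route is more hands-on path surgery, yours is more structural, and both rely on $c_e(0)=0$ at essentially the same place.
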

%\emph{(Proof Sketch)} Once again, we take the optimum flow and price every edge at its marginal
%cost. We show that S-P graphs have an interesting property: for every edge $e$ and any source which has a path through $e$, the source must also have a path to $t$ not containing $e$. As seen in Figure~\ref{fig_monopolyposgeq1_main}, this is not generally true for non S-P graphs. Therefore, if $e$ increases its price, then it stands to lose all of its flow. $\qed$

\section{Conclusions and Future Work}
In this work, we considered large, decentralized markets with production costs, where Walrasian Equilibrium may not be an appropriate solution concept. Our main focus is on single-source single-sink network markets where our results provide significant understanding of how monopolies affect the efficiency of the market equilibrium. In particular, our main contribution is that as long as the inverse demand obeys a natural condition (monotone hazard rate), the efficiency is at most $1+M$. This result relies crucially on the characterization of the equilibrium prices and allocation that we provide. Our results also hint that studying the inverse demand as opposed to direct demand provides a better understanding of welfare properties. In particular Figure~\ref{fig:results} indicates that efficiency seems to depend on the hazard rate of the inverse demand. A deeper understanding of welfare as a function of a single parameter that captures most demand functions is an obvious future step.

We also generalize our model in two directions: non-networked markets where buyers desire arbitrary bundles and multiple-source markets where different buyers desire different bundles. For the latter case, our results provide a modest but important first step in understanding the factors controlling efficiency in more complex networks. The study of general multiple-source networks with some structure (trees, series-parallel networks) that builds on existing work appears to be the most promising future direction.

%\appendixhead{ANSHELEVICH}

% Acknowledgments
% Bibliography
\bibliographystyle{plain}
\bibliography{bibliographyec}

\begin{thebibliography}{10}

\bibitem{abolhassani2014network}
Melika Abolhassani, Mohammad~Hossein Bateni, MohammadTaghi Hajiaghayi, Hamid
  Mahini, and Anshul Sawant.
\newblock Network cournot competition.
\newblock In {\em Proceedings of the 10th Conference on Web and Internet
  Economics (WINE)}, pages 15--29. Springer, 2014.

\bibitem{acemoglu2007competitionorig}
Daron Acemoglu and Asuman Ozdaglar.
\newblock Competition and efficiency in congested markets.
\newblock {\em Mathematics of Operations Research}, 32(1):1--31, 2007.

\bibitem{acemoglu2007competition}
Daron Acemoglu and Asuman Ozdaglar.
\newblock Competition in parallel-serial networks.
\newblock {\em IEEE Journal on Selected Areas in Communications},
  25(6):1180--1192, 2007.

\bibitem{amir1996cournot}
Rabah Amir.
\newblock Cournot oligopoly and the theory of supermodular games.
\newblock {\em Games and Economic Behavior}, 15(2):132--148, 1996.

\bibitem{anderson2003efficiency}
Simon~P Anderson and R{\'e}gis Renault.
\newblock Efficiency and surplus bounds in cournot competition.
\newblock {\em Journal of Economic Theory}, 113(2):253--264, 2003.

\bibitem{azevedo2013walrasian}
Eduardo~M Azevedo, E~Glen Weyl, and Alexander White.
\newblock Walrasian equilibrium in large, quasilinear markets.
\newblock {\em Theoretical Economics}, 8(2):281--290, 2013.

\bibitem{babaioff2014efficiency}
Moshe Babaioff, Brendan Lucier, Noam Nisan, and Renato Paes~Leme.
\newblock On the efficiency of the walrasian mechanism.
\newblock In {\em Proceedings of the fifteenth ACM Conference on Economics and
  Computation (EC)}, pages 783--800. ACM, 2014.

\bibitem{babaioff2014price}
Moshe Babaioff, Noam Nisan, and Renato Paes~Leme.
\newblock Price competition in online combinatorial markets.
\newblock In {\em Proceedings of the 23rd international conference on World
  wide web (WWW)}, pages 711--722. International World Wide Web Conferences
  Steering Committee, 2014.

\bibitem{baldenius2000comparative}
Tim Baldenius and Stefan Reichelstein.
\newblock Comparative statics of monopoly pricing.
\newblock {\em Economic Theory}, 16(2):465--469, 2000.

\bibitem{baye1999folk}
Michael~R Baye and John Morgan.
\newblock A folk theorem for one-shot bertrand games.
\newblock {\em Economics Letters}, 65(1):59--65, 1999.

\bibitem{bimpikis2014cournot}
Kostas Bimpikis, Shayan Ehsani, and Rahmi Ilkili{\c{c}}.
\newblock Cournot competition in networked markets.
\newblock In {\em Proceedings of the fifteenth {ACM} Conference on Economics
  and Computation (EC)}, page 733, 2014.

\bibitem{briest2011buying}
Patrick Briest and Piotr Krysta.
\newblock Buying cheap is expensive: Approximability of combinatorial pricing
  problems.
\newblock {\em SIAM Journal on Computing}, 40(6):1554--1586, 2011.

\bibitem{bulow1983note}
Jeremy~I Bulow and Paul Pfleiderer.
\newblock A note on the effect of cost changes on prices.
\newblock {\em The Journal of Political Economy}, pages 182--185, 1983.

\bibitem{cabon2010end}
Marie-Laure Cabon-Dhersin and Nicolas Drouhin.
\newblock Tacit collusion in a one-shot game of price competition with soft
  capacity constraints.
\newblock {\em Journal of Economics \& Management Strategy}, 23(2):427--442,
  2014.

\bibitem{caplin1991aggregation}
Andrew Caplin and Barry Nalebuff.
\newblock Aggregation and imperfect competition: On the existence of
  equilibrium.
\newblock {\em Econometrica: Journal of the Econometric Society}, pages 25--59,
  1991.

\bibitem{chang1981characterization}
XM~Chang, DZ~Du, and FK~Hwang.
\newblock Characterization for series-parallel channel graphs.
\newblock {\em Bell System Technical Journal, The}, 60(6):887--891, 1981.

\bibitem{chawla2009price}
Shuchi Chawla and Feng Niu.
\newblock The price of anarchy in bertrand games.
\newblock In {\em Proceedings of the 10th ACM conference on Electronic Commerce
  (EC)}, pages 305--314. ACM, 2009.

\bibitem{chawla2008bertrand}
Shuchi Chawla and Tim Roughgarden.
\newblock Bertrand competition in networks.
\newblock In {\em Proceedings of the 1st International Symposium on Algorithmic
  Game Theory (SAGT)}, pages 70--82. Springer, 2008.

\bibitem{chen2012incentive}
Ning Chen, Xiaotie Deng, Hongyang Zhang, and Jie Zhang.
\newblock Incentive ratios of fisher markets.
\newblock In {\em Automata, Languages, and Programming - 39th International
  Colloquium, {ICALP} 2012, Warwick, UK, July 9-13, 2012, Proceedings, Part
  {II}}, pages 464--475, 2012.

\bibitem{chowdhury2004coalition}
Prabal~Roy Chowdhury and Kunal Sengupta.
\newblock Coalition-proof bertrand equilibria.
\newblock {\em Economic Theory}, 24(2):307--324, 2004.

\bibitem{correa2008pricing}
Jos{\'e}~R Correa, Nicol{\'a}s Figueroa, Roger Lederman, and Nicol{\'a}s~E
  Stier-Moses.
\newblock Pricing with markups in industries with increasing marginal costs.
\newblock {\em Mathematical Programming}, pages 1--42, 2008.

\bibitem{dastidar1995existence}
Krishnendu~Ghosh Dastidar.
\newblock On the existence of pure strategy bertrand equilibrium.
\newblock {\em Economic Theory}, 5(1):19--32, 1995.

\bibitem{evans1987theoretical}
Andrew Evans.
\newblock A theoretical comparison of competition with other economic regimes
  for bus services.
\newblock {\em Journal of Transport Economics and Policy}, pages 7--36, 1987.

\bibitem{filippini1999swiss}
Massimo Filippini.
\newblock Swiss residential demand for electricity.
\newblock {\em Applied Economics Letters}, 6(8):533--538, 1999.

\bibitem{gale2000strategic}
Douglas Gale.
\newblock {\em Strategic foundations of general equilibrium: dynamic matching
  and bargaining games}.
\newblock Cambridge University Press, 2000.

\bibitem{guruswami2005profit}
Venkatesan Guruswami, Jason~D Hartline, Anna~R Karlin, David Kempe, Claire
  Kenyon, and Frank McSherry.
\newblock On profit-maximizing envy-free pricing.
\newblock In {\em Proceedings of the Sixteenth Snnual ACM-SIAM Symposium on
  Discrete algorithms (SODA)}, pages 1164--1173. Society for Industrial and
  Applied Mathematics, 2005.

\bibitem{guzman2011price}
Carlos~Lever Guzm{\'a}n.
\newblock Price competition on network.
\newblock Technical Report Working Papers 2011-04, July 2011.

\bibitem{harrington1989re}
Joseph~E Harrington.
\newblock A re-evaluation of perfect competition as the solution to the
  bertrand price game.
\newblock {\em Mathematical Social Sciences}, 17(3):315--328, 1989.

\bibitem{hassidim2011non}
Avinatan Hassidim, Haim Kaplan, Yishay Mansour, and Noam Nisan.
\newblock Non-price equilibria in markets of discrete goods.
\newblock In {\em Proceedings of the 12th {ACM} Conference on Electronic
  Commerce (EC)}, pages 295--296, 2011.

\bibitem{immorlica2005first}
Nicole Immorlica, David Karger, Evdokia Nikolova, and Rahul Sami.
\newblock First-price path auctions.
\newblock In {\em Proceedings of the 6th ACM conference on Electronic Commerce
  (EC)}, pages 203--212. ACM, 2005.

\bibitem{johari2005efficiency}
Ramesh Johari and John~N Tsitsiklis.
\newblock Efficiency loss in cournot games.
\newblock {\em Cambridge, MA: MIT Lab. Inf. Decision Syst., 2005. Publication
  2639.}, 2005.

\bibitem{kuleshov2012efficiency}
Volodymyr Kuleshov and Gordon~T. Wilfong.
\newblock On the efficiency of the simplest pricing mechanisms in two-sided
  markets.
\newblock In {\em Proceedings of the 8th International Workshop on Internet and
  Network Economics (WINE)}, pages 284--297, 2012.

\bibitem{lyoo2006efficient}
Tae-Ho Lyoo, Jongwook Jeong, Hyun-Jung Lee, and Jeong-Dong Lee.
\newblock Efficient spectrum policy using real options and game theoretic
  methods.
\newblock In {\em The Economics of Online Markets and ICT Networks}, pages
  157--166. Springer, 2006.

\bibitem{melo2014price}
Emerson Melo.
\newblock Price competition, free entry, and welfare in congested markets.
\newblock {\em Games and Economic Behavior}, 83:53--72, 2014.

\bibitem{moulin2013price}
Herv{\'e} Moulin and Rodrigo~A Velez.
\newblock The price of imperfect competition for a spanning network.
\newblock {\em Games and Economic Behavior}, 2013.

\bibitem{papadimitriou2010new}
Christos~H. Papadimitriou and Gregory Valiant.
\newblock A new look at selfish routing.
\newblock In {\em Innovations in Computer Science (ICS) 2010, Tsinghua
  University, Beijing, China, January 5-7, 2010. Proceedings}, pages 178--187,
  2010.

\bibitem{roughgarden2005selfish}
Tim Roughgarden.
\newblock {\em Selfish routing and the price of anarchy}.
\newblock MIT press, 2005.

\bibitem{schintler2005complex}
Laurie~A Schintler, Sean~P Gorman, Aura Reggiani, Roberto Patuelli, Andy
  Gillespie, Peter Nijkamp, and Jonathan Rutherford.
\newblock Complex network phenomena in telecommunication systems.
\newblock {\em Networks and Spatial Economics}, 5(4):351--370, 2005.

\bibitem{tsitsiklis2012efficiency}
John~N Tsitsiklis and Yunjian Xu.
\newblock Efficiency loss in a cournot oligopoly with convex market demand.
\newblock {\em Journal of Mathematical Economics}, 53:46--58, 2014.

\end{thebibliography}
                             % Sample .bib file with references that match those in
                             % the 'Specifications Document (V1.5)' as well containing
                             % 'legacy' bibs and bibs with 'alternate codings'.
                             % Gerry Murray - March 2012

% History dates

% Electronic Appendix
\appendix

\section{The Inverse Demand Function}
\label{app:inverse_demand}
Our main aim in this paper is to understand the efficiency of equilibrium allocations for different types of demand.
In order to characterize the structure of demand in the market, we used an inverse demand function $\lx$. Intuitively, the inverse demand function gives us a value $\lx=v$ such that exactly $x$ of the non-atomic buyers hold a value of $v$ or more for the desired bundles. It is not hard to see that the total value derived by the buyers when $x$ buyers buy the desired bundles is the integral of $\lx$ from $0$ to $x$. Alternatively, we could define the inverse demand function for a market with one atomic buyer with a large demand.

{\em Atomic buyers:} While we defined the model above with non-atomic buyers who each desire an infinitesimal amount of any bundle in $B_i$, we could equivalently have defined the market as consisting of a single buyer with a large demand, so that this buyer's utility is $\Lambda(x)$ for obtaining $x$ amount of any combination of bundles in $B_i$. Note that this is very different from simply assuming a single buyer who wants one unit of a good at a fixed value; properties of the utility function $\Lambda(x)$ can greatly change the properties of the system.

\textbf{Direct Demand} The inverse demand function captures the price that the market can pay as a function of the total demand. Such functions are very commonly used in Economics while studying the efficiency of markets with divisible goods as they provide a simple way to relate social welfare and prices. On the other hand, in markets with several atomic buyers, it is also natural to model the distribution of demand using a `direct demand function', which is essentially the inverse of the inverse demand function. Mathematically, we can represent the direct demand by a non-increasing function $D(p)$, such that exactly $D(p)$ buyers hold a value of $p$ or more for the bundles.

Let us also define a quantity $d(p) = -\frac{d}{dp}D(p)$, which in some sense tells us the amount of buyers who who hold a value of exactly $p$ for the bundles.

We now define some commonly used quantities in terms of the direct and inverse demand functions.

\begin{enumerate}
\item The value $\frac{dp}{dx}$ is the change in the value or `lowest-valued' buyers when you add or remove some buyers. It is not hard to see that this is equal to $\lambda'(x)$ at some value of $x$. Similarly, the quantity $\frac{dx}{dp} = -d(p)$.

\item \textbf{Elasticity of Demand} Informally, this is the relative `responsiveness' of the market over the sensitivity to price. Mathematically $\epsilon = \frac{dx/x}{dp/p}$. The elasticity can be rewritten as a function of demand ($x$) as $\epsilon (x) = \frac{\lambda(x)}{x\lambda'(x)}$ because $p$ as a function of $x$ is exactly equal to $\lambda(x)$. The elasticity of demand is negative for the vast majority of goods.

\item \textbf{Constant Elasticity of Demand (CED)} This represents the class of (inverse demand functions) such that the elasticity of demand is constant, i.e., $\epsilon(x) = \epsilon \leq 0$ for all $x$. It known that these functions can be represented in the form of $d(p) = Ap^{\epsilon}$ where $\epsilon < 0$ is referred to as the elasticity of demand. When $-1 < \epsilon < 0$, the demand is said to highly inelastic. As $\epsilon \to -\infty$, the demand becomes more and more elastic. Using some simple algebra, the same inverse demand functions for the CED case are
\begin{align*}
\lx  = (a-x)^{\frac{1}{1-\epsilon}} & \; -1 < \epsilon < 0 & \text{Relatively inelastic demand}\\
\lx  = e^{-x} & \; \epsilon = -1 & \text{Monotone Hazard Rate}\\
\lx  = x^{\frac{1}{1-\epsilon}} & \; -\infty < \epsilon < -1 & \text{Elastic Demand}\\
\end{align*}

Usually the demand for essential commodities (food, electricity) are considered to be inelastic whereas the demand for non-essentials are somewhat elastic depending on the case.
\end{enumerate}

We defined several classes of demand functions in the paper and show bounds on the efficiency for many of these classes. We now discuss each of these demand functions, their interpretations and give examples. Although for ease of presentation, we assume that $\lx$ is continuously differentiable, this assumption is not necessary for any of our results, as we argue in Section \ref{sec:generalizations}.

\begin{description}
\setlength{\itemsep}{-1pt}
\item[Uniform buyers] $\lx=\lambda_0 > 0$ for $x \leq T$. In other words, a population of $T$ buyers all have the same value $\lambda_0$ for the bundles. Alternatively, one or more atomic buyers exist in the market and the total amount demanded by all atomic buyers is $T$. And each atomic buyer receives a value of $\lambda_0$ for every unit of the good(s).

From a traditional Economics point of view, the quantity $\frac{dx}{dp} = 0$ for $p < \lambda_0$.

\item[Concave Demand] $\ldx$ is a non-increasing function of $x$. Informally, $\lx$ is decreasing faster and faster as $x$ increases. This means that if it so happens that the buyers are in agreement over the value of the bundles, then more buyers are clustered around a higher price than a lower price.

This includes the popular linear inverse demand case ($\lx = a-x$)~\cite{abolhassani2014network, tsitsiklis2012efficiency} where the demand drops linearly as price increases. In this case $\frac{dx}{dp}$ is a constant and so changing the price always leads to the same (addition) or removal of buyers. That is buyers at every price are equally senstive to a change in price. Another example function that we examine later is concave and polynomially decreasing, i.e., $\lx = a-x^\alpha$ for $\alpha \geq 1$.

The classic way of representing such functions would be to consider $|\frac{dx}{dp}|$ is non-decreasing with $p$.

\item[Monotone Hazard Rate (MHR) Demand] This is a strict generalization of Concave Demand. Mathematically, it is the class of functions where $\frac{\ldx}{\lx}$ is non-increasing or $h(x)=\frac{|\ldx|}{\lx}$ is non-decreasing in $x$. This is equivalent to the class of {\em log-concave} functions~\cite{amir1996cournot, baldenius2000comparative} and essentially captures inverse demand functions without a heavy tail.

While uniformly decreasing demand (linear inverse demand, $\lx = a-x$) has been assumed more commonly due to its tractable nature, it is more likely that the elasticity of demand is not constant across different prices. Indeed different segments of the market may react differently to a change in price. MHR functions capture a very interesting class of such functions where the responsiveness of a market relative to the value of the buyers is non-decreasing. More concretely if at some price $2p$, an increase of $dp$ in the price leads to a reduction in $dx$ number of buyers. Then at a price $p$, in order to make the same number of consumers ($dx$) drop out, the increase in price has to be at least $\frac{1}{2}dp$. In simple terms, the market cannot be `overly sensitive' at smaller prices compared to its sensitivity at a larger price. Example function: $\lx = e^{-x}$.

Such functions denote the case where $\frac{dx}{dp/p}$ is non-decreasing with $p$.

%decreasing or $h(x)= \frac{|\ldx|}{\lx}$ is an increasing function of $x$. This contains inverse demand functions without a heavy tail, and can be considered a measure of convexity \cite{.}. When $\lx$ is small the function cannot decrease too slowly. Example function: $\lx = e^{-x}$.

\item[Monotone Price Elasticity (MPE)] $xh(x) = \frac{x |\ldx|}{\lx}$ is a non-decreasing function of $x$ which tends to zero as $x \to 0$. This is equivalent to functions where the price elasticity of demand is non-decreasing as the price increases. Notice that $\epsilon(x) = \frac{1}{-xh(x)}$. So since $xh(x)$ is non-decreasing, this means that $\epsilon(x)$ is also non-decreasing with $x$. If we look at the absolute value of $\epsilon(x)$, it is decreasing in $x$, which means that the market cannot be too elastic at a lower price as compared to a higher price. Or buyers who value the item more are more sensitive.

In addition to all papers that capture MHR, Concave and Uniform demand, several other papers have also looked at these kind of functions~\cite{abolhassani2014network, chawla2009price, kuleshov2012efficiency, bimpikis2014cournot, caplin1991aggregation, anderson2003efficiency}.
\end{description}

 While uniformly decreasing demand (linear inverse demand, $\lx = a-x$) has been assumed more commonly due to its tractable nature, it is more likely that the elasticity of demand is not constant across different prices. Indeed different segments of the market may react differently to a change in price. MHR functions capture a very interesting class of such functions where the responsiveness of a market relative to the value of the buyers is non-decreasing. More concretely if at some price $2p$, an increase of $dp$ in the price leads to a reduction in $dx$ number of buyers. Then at a price $p$, in order to make the same number of consumers ($dx$) drop out, the increase in price has to be at least $\frac{1}{2}dp$. In simple terms, the market cannot be `overly sensitive' at smaller prices compared to its sensitivity at a larger price.

Each class of demand functions contains all the previously defined classes, i.e., Uniform $\subseteq$ Concave $\subseteq$ MHR $\subseteq$ MPE. We also extend these definitions to functions which may not be differentiable everywhere. However, it is well known that if the $\lx$ is monotone, then its derivative exists `almost everywhere'. In this case, we define the MPE class to be obey monotonicity where the derivative is defined and at any $x_0$ where it is not defined, we require
$$\lim_{x \to x^-_0}x\frac{|\ldx|}{\lx} \leq \lim_{x \to x^+_0}x\frac{|\ldx|}{\lx}.$$

We also consider specific demand functions that have more specific forms. $\lx = a-x^{\alpha}$ for $\alpha \geq 1$ which are concave but essentially more `concave' than simple linear inverse demand. $\lx = (a-x)^{\alpha}$, which belong to the class MHR but are relatively more inelastic. Notice that linear inverse demand functions form some sort of a border between these two classes.

Finally, we look at functions of the form $\lx - r(x) = |\ln(x/a)|^{1\alpha}$ where $\alpha \geq 1$. Ignoring the structure of the cost function, this essentially means that the direct demand is exponentially decreasing.

\section{Min-Cost Flows}
\label{app:mincostflows}

In this section we prove Proposition \ref{prop:Rstuff}, establishing the properties of the min-cost flow value $R(x)$.

\noindent\textbf{Flow Allocations:} We now discuss flow allocations, which we use only in our proofs. Given a flow $\vec{x}$ whose total magnitude is $x$, we represent the flow by an allocation vector $\alpha$, such that the flow on edge $e$ is $\alpha_e x=x_e$ for $\alpha_e \leq 1$. Then, we can define the total cost as a function of just $x$, i.e., $C^{\alpha}(x) = \sum_e C_e(\alpha_e x)$ for a fixed allocation rule $\alpha$. We also define a total differential cost function, the marginal cost of sending additional flow according to a given allocation.
$$c^{\alpha}(x)=\frac{d}{dx}C^{\alpha}(x) = \sum_e \frac{d}{dx}C_e(\alpha_ex) = \sum_e \alpha_e c_e(\alpha_e x).$$

We are now in a position to prove some properties of the min-cost function $R(x)$. Intuitively, it also seems like $R(x)$ must be continuous since increasing the flow by a small amount should not lead to a large increase in the cost. We begin by showing this formally.
\begin{proposition}
$R(x)$ is continuous $\forall x$.
\end{proposition}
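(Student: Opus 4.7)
The plan is to prove continuity by sandwiching $R(x)$ between values $R(x \pm \delta)$ that converge to it, using the monotonicity of $R$ (already established) together with the continuity of each $C_e$. The essential trick is that, given a feasible min-cost flow at some magnitude, we can produce a feasible flow of a slightly different magnitude by uniformly rescaling, and the resulting cost change is small because each $C_e$ is continuous and there are finitely many edges.

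Concretely, fix $x > 0$ and $\delta > 0$ small. Let $\vec{x}$ be a min-cost flow of magnitude $x$, so that $R(x) = \sum_e C_e(x_e)$. Define the rescaled flow $\vec{x}^{(\pm)}$ by $x^{(\pm)}_e = \frac{x \pm \delta}{x}\, x_e$. Since flow conservation and non-negativity are preserved under uniform scaling, $\vec{x}^{(\pm)}$ is a feasible $s$-$t$ flow of magnitude $x \pm \delta$, and therefore
\[
R(x+\delta) \leq \sum_e C_e\!\left(\tfrac{x+\delta}{x}\,x_e\right), \qquad R(x-\delta) \leq \sum_e C_e\!\left(\tfrac{x-\delta}{x}\,x_e\right).
\]
By continuity of each $C_e$ and finiteness of $E$, both right-hand sides converge to $\sum_e C_e(x_e)=R(x)$ as $\delta \to 0$. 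Combined with $R(x-\delta) \leq R(x) \leq R(x+\delta)$, which follows from monotonicity of $R$, this forces $R(x \pm \delta) \to R(x)$, i.e., $R$ is continuous at $x$.

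The only case this argument does not cover is $x = 0$, because of the division by $x$. For continuity at $0$, I would instead build the comparison flow directly: pick any $s$-$t$ path $P$ and route the full $\delta$ units along $P$, giving a feasible flow of magnitude $\delta$ with cost $\sum_{e \notin P} C_e(0) + \sum_{e \in P} C_e(\delta)$. Continuity of each $C_e$ at $0$ makes this tend to $\sum_e C_e(0) = R(0)$, and combined with $R(0) \leq R(\delta)$ from monotonicity, right-continuity at $0$ follows. There is no left approach since $R$ is only defined for $x \geq 0$.

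I do not expect any real obstacle here: the only subtlety is avoiding the $x=0$ case in the scaling argument, which is easily dispatched by a separate direct construction. The main structural idea — monotonicity plus a cheap feasible comparison flow produced by scaling — is what will also be recycled to prove convexity of $R$ (by comparing a convex combination of min-cost flows with the min-cost flow at the combined magnitude), although that lies beyond the present statement.
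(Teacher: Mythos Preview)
Your sandwich works cleanly for right-continuity: $R(x)\le R(x+\delta)\le \sum_e C_e\bigl(\tfrac{x+\delta}{x}x_e\bigr)\to R(x)$, and the squeeze finishes it. But for left-continuity the argument does not close. You produce two inequalities for $R(x-\delta)$: monotonicity gives $R(x-\delta)\le R(x)$, and your rescaled flow gives $R(x-\delta)\le \sum_e C_e\bigl(\tfrac{x-\delta}{x}x_e\bigr)\to R(x)$. Both are \emph{upper} bounds; neither rules out $\lim_{\delta\to 0}R(x-\delta)<R(x)$. So as written, $R(x-\delta)\to R(x)$ is asserted, not proved.

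The natural fix within your own framework is to scale in the other direction: take an optimal flow $\vec{y}^{(\delta)}$ at magnitude $x-\delta$ and blow it up to $\tfrac{x}{x-\delta}\vec{y}^{(\delta)}$, which is feasible at magnitude $x$. Then $R(x)\le \sum_e C_e\bigl(\tfrac{x}{x-\delta}y^{(\delta)}_e\bigr)$, and since $\bigl|\tfrac{x}{x-\delta}y^{(\delta)}_e - y^{(\delta)}_e\bigr|\le \delta$ with all arguments lying in the compact interval $[0,x]$, uniform continuity of each $C_e$ there gives $\sum_e C_e\bigl(\tfrac{x}{x-\delta}y^{(\delta)}_e\bigr)\le R(x-\delta)+o(1)$. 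Hence $R(x)\le R(x-\delta)+o(1)$, which is the missing lower bound. The paper takes a different route for this half: it assumes $\lim R(x-\epsilon)<R(x)$ and averages the min-cost flows at $x-\epsilon$ and $x+\epsilon$ to build a feasible flow at $x$ of cost strictly below $R(x)$, using convexity of the $C_e$. Your (repaired) scaling argument is slightly more elementary since it needs only continuity, not convexity; the paper's averaging trick, on the other hand, is exactly the move that later yields convexity of $R$.
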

\begin{proof}
%In order to show that the function is continuous, we need to show that for any $p$, $\lim_{x \to p^-}R(x) = \lim_{x \to p^+}R(x)= R(p)$.
Since $R$ is a non-decreasing function, we know the following inequality must hold: $\lim_{\epsilon \to 0}R(x-\epsilon) \leq R(x) \leq \lim_{\epsilon \to 0}R(x+\epsilon)$. Let $R(x) = C^{\alpha}(x)$, i.e., $\alpha$ is the optimal allocation for the min-cost flow of value $x$.  Then for any $\epsilon > 0$, $R(x+\epsilon) \leq C^{\alpha}(x+\epsilon)$, simply because $R(x+\epsilon)$ is the cost of the best allocation, which includes the allocation $\alpha$. Taking the limit on both sides as $\epsilon$ tends to zero, we get that $\lim_{\epsilon \to 0}R(x+\epsilon)$ cannot be any larger than $C^{\alpha}(x) = R(x)$. Thus we get that $\lim_{\epsilon \to 0}R(x+\epsilon) = R(x)$ for all $x$.

Moving on to the other limit, suppose that $\lim_{\epsilon \to 0}R(x-\epsilon) = r_1 < R(x) = \lim_{\epsilon \to 0}R(x+\epsilon)$. For some sufficiently small $\epsilon$, let $\vec{x_1}$ be the min-cost flow corresponding to $R(x-\epsilon)$ and $\vec{x_2}$ be the flow corresponding to $R(x+\epsilon)$. Consider the flow $\vec{x}=\frac{1}{2}\vec{x_1}  + \frac{1}{2}\vec{x_2}$. Clearly, this is a feasible flow of magnitude $x$. Moreover, since $C$ is convex, we know that $C(\vec{x}) \leq \frac{1}{2}C(\vec{x_1})  + \frac{1}{2}C(\vec{x_2}) < R(x)$. The last inequality holds because for some sufficiently small $\epsilon$, the convex combination $\frac{1}{2}C(\vec{x_1})  + \frac{1}{2}C(\vec{x_2})$ lies in between $r_1$ and $R(x)$. This is a contradiction since $R(x)$ is the minimum cost of a flow of magnitude $x$. Thus $R(x)$ is continuous at $x$.
\end{proof}

Recall the derivative of the min-cost flow $r(x) = \frac{d}{dx}R(x)$. Although $R(x)$ is continuous, it is not clear that it is differentiable, so we define the appropriate left and right hand derivatives $r^-(x)$, $r^+(x)$ according to the first principles. We now show that these two are the same. Then, we show that the derivative $r(x)$ is continuous and non-decreasing in $x$.

\begin{proposition}\label{prop:Rdiff}
$R(x)$ is differentiable for all $x$. %, i.e., $r^+(x) = r^-(x)$.
\end{proposition}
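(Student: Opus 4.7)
The plan is to combine convexity of $R$ with the KKT optimality conditions for the min-cost flow program; together these pin down both one-sided derivatives of $R$ to a single common value. I would first establish that $R$ is convex, which immediately yields one-sided derivatives with $r^-(x) \le r^+(x)$ and reduces the task to proving the reverse inequality. Convexity follows from a standard averaging argument: given magnitudes $x_1, x_2 \ge 0$ and $\lambda \in [0,1]$, the convex combination of two corresponding optimal flows is a feasible flow of magnitude $\lambda x_1 + (1-\lambda) x_2$, and edgewise convexity of each $C_e$ forces its cost to be at most $\lambda R(x_1) + (1-\lambda) R(x_2)$.

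I would then identify a candidate common derivative via duality. Writing the min-cost flow problem in path variables $y_P \ge 0$ with $\sum_{P \in \mathbb{P}} y_P = x$ and objective $\sum_e C_e\bigl(\sum_{P \ni e} y_P\bigr)$, the program is convex and satisfies Slater's condition, so it admits a Lagrange multiplier $\mu$ for the equality constraint. Complementary slackness at any optimal flow $\vec{x^*}$ gives $\sum_{e \in P} c_e(x^*_e) = \mu$ for every path $P$ with $y_P > 0$, so $\mu$ is the common marginal cost along used paths. For $x > 0$ fix some $P^*$ with $y_{P^*} > 0$. Augmenting $\delta > 0$ units along $P^*$ is feasible at magnitude $x + \delta$; Taylor expansion, legitimate by the assumed twice-differentiability of each $C_e$, gives cost $R(x) + \mu \delta + O(\delta^2)$ and hence $r^+(x) \le \mu$. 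Symmetrically, for $\delta < y_{P^*}$, de-augmenting along $P^*$ produces a feasible flow of magnitude $x - \delta$ and cost $R(x) - \mu\delta + O(\delta^2)$, so $r^-(x) \ge \mu$. Combined with convexity, $\mu \le r^-(x) \le r^+(x) \le \mu$, forcing $r(x) = \mu$. The boundary case $x = 0$ is handled separately: routing $\delta$ units along any single path and using $c_e(0) = 0$ makes $R(\delta)/\delta \to 0$, so $r^+(0) = 0$.

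The main obstacle is that convexity alone permits kinks in $R$, so the real content is showing that perturbations in opposite directions produce the \emph{same} first-order cost change. This is precisely what complementary slackness delivers: the common marginal cost $\mu$ along any used path of a fixed optimum simultaneously upper-bounds $r^+$ (via augmentation along that path) and lower-bounds $r^-$ (via de-augmentation along the same path). Possible non-uniqueness of the Lagrange multiplier across different optimal primals is irrelevant, since both bounds are extracted from a single fixed optimum $\vec{x^*}$ and its associated common marginal cost.
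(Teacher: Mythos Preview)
Your proof is correct and follows the same two-ingredient structure as the paper's: convexity of $R$ supplies one inequality between the one-sided derivatives, and perturbing a fixed optimal flow supplies the other. The paper orders these in reverse---it first obtains $r^+(x)\le c^{\alpha(x)}(x)\le r^-(x)$ by perturbing, then uses the averaging argument to rule out $r^+<r^-$---but the logic is equivalent. The one substantive difference is the choice of perturbation: the paper keeps the optimal \emph{allocation} $\alpha$ fixed and scales the total magnitude, so its candidate derivative is $c^{\alpha(x)}(x)=\sum_e\alpha_e c_e(x_e)$, whereas you augment and de-augment along a single used path and identify the candidate as the KKT multiplier $\mu=\sum_{e\in P^*}c_e(x_e^*)$. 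These two quantities coincide (this is exactly Lemma~\ref{lemma_cequivalence}, which the paper proves separately), so both routes pin down the same value. Your explicit appeal to KKT packages what the paper derives by hand in Lemmas~\ref{lemma_nonzeroflowcost} and~\ref{lemma_cequivalence}; the paper's allocation-scaling argument is slightly more self-contained but no more general.
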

\begin{proof}
Recall the differential cost function for a fixed allocation $c^{\alpha}(x)=\sum_e \alpha_e c_e(\alpha_e x)$. Since $C_e$ is differentiable, then $c^\alpha$ is continuous. More precisely,
%Fix a value of $x=x^*$ and let the corresponding optimal allocation vector be $\alpha$. Then, we define
$c^{\alpha}(x) = \displaystyle \lim_{\epsilon \to 0} \frac{C^{\alpha}(x) - C^{\alpha}(x-\epsilon)}{\epsilon} = \lim_{\epsilon \to 0} \frac{C^{\alpha}(x+\epsilon) - C^{\alpha}(x)}{\epsilon}$.

Let $\alpha(x)$ denote the optimal flow allocation for a flow of size $x$ (i.e., $R(x)=C^{\alpha(x)}(x)$). Then, we know that
$$r^-(x) = \lim_{\epsilon \to 0}\frac{R(x) - R(x-\epsilon)}{\epsilon} = \lim_{\epsilon \to 0} \frac{C^{\alpha(x)}(x) - C^{\alpha(x - \epsilon)}(x - \epsilon)}{\epsilon},$$

$$r^+(x) = \lim_{\epsilon \to 0}\frac{R(x+\epsilon) - R(x)}{\epsilon} = \lim_{\epsilon \to 0} \frac{C^{\alpha(x + \epsilon)}(x + \epsilon)  - C^{\alpha(x)}(x)}{\epsilon}.$$

From the above definitions, we see that $r^-(x)$ is always greater than $c^{\alpha(x)}(x)$. This is true because $R(x-\epsilon) \leq C^{\alpha(x)}(x-\epsilon)$ since the former is the min-cost flow of that magnitude, and the latter may be using a suboptimal allocation. Similarly, $R(x+\epsilon) \leq C^{\alpha(x)}(x+\epsilon)$, so %$\displaystyle \lim_{\epsilon \to 0}\frac{R(x^*+\epsilon) - R(x^*)}{\epsilon} \leq \lim_{\epsilon \to 0} \frac{C^{\alpha}(x^*+\epsilon) - C^{\alpha}(x^*)}{\epsilon}$ and
we have $r^+(x) \leq c^{\alpha(x)}(x)$. This gives us our first bound,

$$r^+(x) \leq c^{\alpha(x)}(x) \leq r^-(x).$$

Now, suppose that $r^+(x) < r^-(x)$. This would imply that
$$\lim_{\epsilon \to 0}\frac{R(x) - R(x-\epsilon)}{\epsilon} > \lim_{\epsilon \to 0}\frac{R(x+\epsilon) - R(x)}{\epsilon}.$$ Thus there exists a sufficiently small $\epsilon_0$ such that
%Or equivalently, $\exists \epsilon_0$, such that $\forall \epsilon \leq \epsilon_0$,
$R(x) - R(x-\epsilon_0) > R(x+\epsilon_0) - R(x)$ which implies $R(x) > \frac{1}{2}(R(x+\epsilon_0) + R(x - \epsilon_0))$.

Let the flows corresponding to $R(x+\epsilon_0)$ and $R(x-\epsilon_0)$ be $\vec{x_1}$ and $\vec{x_2}$ respectively. Once again, take the average flow $\vec{x}=\frac{1}{2}(\vec{x_1} + \vec{x_2})$. This flow has a magnitude of $x$ and its cost $C(\vec{x}) \leq \frac{1}{2}(C(\vec{x_1}) + C(\vec{x_2})) = \frac{1}{2}(R(x+\epsilon_0) + R(x - \epsilon_0)) < R(x)$. However, this is a contradiction since no feasible flow of magnitude $x$ can have a cost less than $R(x)$. So we conclude that $r^+(x) \geq r^-(x)$. So finally, we have

$$r^+(x) = c^{\alpha(x)}(x) = r^-(x).$$

Moreover, since all $C_e$ are differentiable, then $c^{\alpha(x)}(x)$ is always finite, thus proving the desired claim that $r(x)$ always exists and is continuous.
\end{proof}

\begin{proposition}
\label{rcontinuous}
$R(x)$ is convex.
\end{proposition}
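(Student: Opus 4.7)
The plan is to prove convexity of $R$ directly from the definition, exploiting the convexity of each edge cost function $C_e$ together with the fact that the set of feasible $s$-$t$ flows of a given magnitude is convex. The derivative-based alternative (showing $r(x)$ is non-decreasing) would also work, but it requires tracking how the optimal allocation $\alpha(x)$ changes with $x$, which is indirect; the definitional approach avoids this entirely.

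First, I would fix arbitrary $x_1, x_2 \geq 0$ and $\mu \in [0,1]$, and let $\vec{x}^{(1)}$ and $\vec{x}^{(2)}$ denote min-cost $s$-$t$ flows witnessing $R(x_1)$ and $R(x_2)$ respectively, so that $R(x_i) = \sum_{e} C_e(x_e^{(i)})$ for $i=1,2$. Then I would form the convex combination $\vec{x} = \mu \vec{x}^{(1)} + (1-\mu)\vec{x}^{(2)}$ edge by edge, i.e., $x_e = \mu x_e^{(1)} + (1-\mu)x_e^{(2)}$.

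Second, I would check that $\vec{x}$ is a feasible $s$-$t$ flow of magnitude $\mu x_1 + (1-\mu)x_2$. Non-negativity is preserved since both $\vec{x}^{(i)}$ have non-negative components and $\mu \in [0,1]$. Flow conservation at every internal node is a linear constraint satisfied by both $\vec{x}^{(1)}$ and $\vec{x}^{(2)}$, and is therefore preserved under convex combinations; and the total flow out of $s$ is exactly $\mu x_1 + (1-\mu) x_2$ by linearity of the net outflow.

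Third, the key step: using convexity of each $C_e$ pointwise, I would bound
$$C(\vec{x}) \;=\; \sum_{e} C_e\!\bigl(\mu x_e^{(1)} + (1-\mu)x_e^{(2)}\bigr) \;\leq\; \sum_{e} \Bigl[\mu C_e(x_e^{(1)}) + (1-\mu) C_e(x_e^{(2)})\Bigr] \;=\; \mu R(x_1) + (1-\mu) R(x_2).$$
Since $\vec{x}$ is a feasible flow of magnitude $\mu x_1 + (1-\mu)x_2$, by definition $R(\mu x_1 + (1-\mu)x_2) \leq C(\vec{x})$, which combined with the display above yields the convexity inequality.

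There is essentially no main obstacle here; the argument is a standard ``convexity is preserved by minimization over a convex feasible set with a pointwise-convex objective'' template. The only item worth stating explicitly is the feasibility check for the convex combination, which rests on the linearity of the flow-conservation constraints. Note also that this proof does not require continuity or differentiability of $R$ — those were established earlier in Propositions above, but the convexity bound is purely combinatorial-analytic.
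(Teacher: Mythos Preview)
Your proof is correct and takes essentially the same approach as the paper: form the convex combination of the two min-cost flow vectors, observe it is a feasible flow of the intermediate magnitude, and apply edge-wise convexity of the $C_e$ to bound its cost. The only minor difference is that the paper argues by contradiction and checks only the midpoint case $\mu=\tfrac{1}{2}$ (which suffices given the continuity of $R$ established just before), whereas you prove the inequality directly for arbitrary $\mu\in[0,1]$; your version is thus marginally more self-contained.
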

\begin{proof}
Consider arbitrary flow values $x_1$ and $x_2$, and let $x=\frac{1}{2}x_1 + \frac{1}{2}x_2$. Suppose to the contrary that $R(x)>\frac{1}{2}R(x_1)+\frac{1}{2}R(x_2)$. Now consider the flow vector $\vec{x} = \frac{1}{2}\vec{x_1} + \frac{1}{2}\vec{x_2}$ where $\vec{x_1}, \vec{x_2}$ are the flow vectors corresponding to the minimum cost flows at $x_1$, $x_2$ respectively. Clearly $\vec{x}$ has magnitude $x$. Moreover since $C$ is convex, $C(\vec{x}) \leq \frac{1}{2}(C(\vec{x_1}) + C(\vec{x_2})) =\frac{1}{2}(R(x_1)+R(x_2)) < R(x).$ This is a contradiction as no flow of magnitude $x$ can have a cost less than $R(x)$, and so $R(x)$ must be convex.
\end{proof}

Now that we have a good understanding of the min-cost function $R(x)$, we show that if we take the min-cost flow at any value $x$, the cost of sending an additional infinitesimal flow $dx$ equals the marginal cost of any one path with flow on it. First we show some simple lemmas that yield some insight on the allocation on the min-cost flow.

\begin{lemma}
\label{lemma_nonzeroflowcost}
For any given $x$, let $\vec{x}$ be the minimum cost flow vector and $P_i, P_j$ be any two paths with non-zero flow in $\vec{x}$. Then, $\sum_{e \in P_i}c_e(x_e) = \sum_{e \in P_j}c_e(x_e)$. If $P_i$ is a path with no flow and $P_j$ has non-zero flow in the minimum cost flow, then
$\sum_{e \in P_i}c_e(x_e) \geq \sum_{e \in P_j}c_e(x_e)$.
\end{lemma}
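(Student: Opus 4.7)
The plan is to use a standard flow-exchange argument that mirrors the classical KKT optimality conditions for min-cost flow. Starting from the min-cost flow $\vec{x}$, I fix an arbitrary path decomposition and let $x_P$ denote the flow carried by each path $P$. The strategy is to show that if two non-zero-flow paths had different ``path marginal costs'' $\sum_{e \in P} c_e(x_e)$, then shifting an infinitesimal amount of flow from the more expensive path to the cheaper one would strictly decrease total cost, contradicting minimality.

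Concretely, suppose for contradiction that $P_i$ and $P_j$ both have $x_{P_i}, x_{P_j} > 0$ and that $\sum_{e \in P_i} c_e(x_e) > \sum_{e \in P_j} c_e(x_e)$. For any $\delta \in (0, x_{P_i}]$, define the perturbed flow $\vec{x}(\delta)$ that decreases the flow on $P_i$ by $\delta$ and increases the flow on $P_j$ by $\delta$; this remains a feasible $s$-$t$ flow of the same magnitude $x$. Edges lying in $P_i \cap P_j$ see no net change, edges in $P_i \setminus P_j$ lose $\delta$ of flow, and edges in $P_j \setminus P_i$ gain $\delta$. Since each $C_e$ is differentiable with derivative $c_e$, the cost of $\vec{x}(\delta)$ satisfies
\[
\frac{d}{d\delta} C(\vec{x}(\delta))\Big|_{\delta = 0^+} \;=\; \sum_{e \in P_j \setminus P_i} c_e(x_e) \;-\; \sum_{e \in P_i \setminus P_j} c_e(x_e) \;=\; \sum_{e \in P_j} c_e(x_e) - \sum_{e \in P_i} c_e(x_e) \;<\; 0.
\]
Hence for all sufficiently small $\delta > 0$, the perturbed flow has strictly smaller cost than $\vec{x}$, contradicting the choice of $\vec{x}$ as the min-cost flow of magnitude $x$. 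This forces equality of $\sum_{e \in P} c_e(x_e)$ across every pair of paths carrying positive flow in the decomposition, which gives the first claim.

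For the second claim, the argument is symmetric. If $P_i$ has zero flow, $P_j$ has positive flow, and $\sum_{e \in P_i} c_e(x_e) < \sum_{e \in P_j} c_e(x_e)$, then shifting $\delta > 0$ of flow from $P_j$ onto the currently-empty $P_i$ is feasible for any $\delta \leq x_{P_j}$ and yields a first-order cost change of $\sum_{e \in P_i} c_e(x_e) - \sum_{e \in P_j} c_e(x_e) < 0$, again contradicting optimality. So the inequality $\sum_{e \in P_i} c_e(x_e) \geq \sum_{e \in P_j} c_e(x_e)$ must hold.

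The main technical point to check carefully is feasibility and the justification of the linearization. Feasibility is immediate because we bound $\delta$ by the path flow being removed, so no edge flow becomes negative. The linearization is valid because $C_e$ is twice differentiable (Section~\ref{sec:prelim}), so by Taylor's theorem the error term in $C_e(x_e \pm \delta) = C_e(x_e) \pm c_e(x_e)\delta + O(\delta^2)$ is uniformly bounded on a neighborhood of each $x_e$ appearing in the finitely many edges of $P_i \cup P_j$, and hence the first-order term dominates for $\delta$ small enough. This is really the only subtle obstacle; everything else is a direct consequence of the exchange construction.
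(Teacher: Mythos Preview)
Your proof is correct and follows essentially the same flow-exchange argument as the paper: both contradict optimality by shifting an infinitesimal amount of flow from the path with higher marginal cost to the one with lower marginal cost and observing a first-order decrease in total cost. You are somewhat more explicit than the paper about the Taylor-expansion justification and the feasibility bound on $\delta$, but the underlying idea is identical.
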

\begin{proof}
The lemma can be formally proved using the Karush-Kuhn-Tucker conditions for the min-cost flow optimization program. However, observe any cost minimizing flow must also be a local optimum (for a convex program, the local and global optima coincide). If the above lemma were not true, and for two such paths suppose that $\sum_{e \in P_i}c_e(x_e) > \sum_{e \in P_j}c_e(x_e)$, without loss of generality. Then, consider a new flow with the flow on $P_i$ reduced by some sufficiently small $\epsilon$ and the flow on $P_j$ increased by the same amount. If the cost of the old flow is $z$, then the cost of the new flow is
\begin{align*}
& z + \sum_{e \in P_j}[C_e(x_e + \epsilon) - C_e(x_e)] + \sum_{e \in P_i}[C_e(x_e) - C_e(x_e - \epsilon)]\\
= & z + \epsilon(\sum_{e \in P_j}c_e(x_e) - \sum_{e \in P_i}c_e(x_e))\\
< z,
\end{align*}
which is a contradiction because $z$ is minimum cost among all feasible flows supporting a flow magnitude of $x$. The same argument works for switching a small amount of flow to a path with zero flow.
\end{proof}

\begin{lemma}
\label{lemma_cequivalence}
For a minimum cost flow vector $\vec{x}$ of magnitude $x$, and for any path $P_i$ with $x_{P_i} > 0$, we have $r(x) = \sum_{e \in P_i}c_e(x_e)$.
\end{lemma}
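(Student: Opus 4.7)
The plan is to use Lemma~\ref{lemma_nonzeroflowcost} to observe that there is a common ``marginal path cost'' $c^* = \sum_{e \in P_i} c_e(x_e)$ shared by every path $P_i$ with positive flow in the min-cost flow $\vec{x}$, and then sandwich the derivative $r(x)$ between $c^*$ from above and from below by perturbing flow along such a path. Since Proposition~\ref{prop:Rdiff} guarantees that $r^+(x)$ and $r^-(x)$ coincide, matching them both to $c^*$ will yield the claim.

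First I would fix an arbitrary path $P_j$ with $x_{P_j} > 0$ and, for small $\epsilon > 0$, construct a feasible flow $\vec{x}^{\,+}$ of magnitude $x+\epsilon$ by adding $\epsilon$ units along $P_j$ (leaving all other edge flows untouched). Since $\vec{x}^{\,+}$ is a candidate for the min-cost flow at $x+\epsilon$, we get
\[
R(x+\epsilon) \;\leq\; R(x) + \sum_{e \in P_j}\bigl[C_e(x_e+\epsilon) - C_e(x_e)\bigr].
\]
Dividing by $\epsilon$ and letting $\epsilon \to 0$, the continuity and differentiability of each $C_e$ give $r^+(x) \leq \sum_{e \in P_j} c_e(x_e) = c^*$. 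Symmetrically, for small enough $\epsilon$ (so that $x_{P_j} - \epsilon \geq 0$) I would subtract $\epsilon$ along $P_j$ to obtain a feasible flow of magnitude $x-\epsilon$, yielding
\[
R(x-\epsilon) \;\leq\; R(x) - \sum_{e \in P_j}\bigl[C_e(x_e) - C_e(x_e - \epsilon)\bigr],
\]
which rearranges to $R(x) - R(x-\epsilon) \geq \epsilon\, c^* + o(\epsilon)$, hence $r^-(x) \geq c^*$.

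Combining the two bounds with Proposition~\ref{prop:Rdiff}, which asserts $r^+(x) = r^-(x) = r(x)$, gives $c^* \geq r(x) \geq c^*$, so $r(x) = c^* = \sum_{e \in P_i} c_e(x_e)$ for every positive-flow path $P_i$ (the choice of $P_j$ was arbitrary, and Lemma~\ref{lemma_nonzeroflowcost} ensures the sum is the same on any such path). I do not anticipate a serious obstacle here: the only delicate point is ensuring the perturbation is a \emph{feasible} flow, which is immediate because we are rerouting along a single existing path rather than modifying the allocation globally, so no KKT or optimality-of-$\alpha(x)$ argument is needed beyond what Lemma~\ref{lemma_nonzeroflowcost} already provides.
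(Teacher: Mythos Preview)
Your proof is correct, but it takes a somewhat different route than the paper. The paper leverages the identity $r(x)=c^{\alpha}(x)$ already extracted inside the proof of Proposition~\ref{prop:Rdiff}, and then computes $c^{\alpha}(x)=\sum_e \alpha_e c_e(x_e)$ by rewriting the edge sum as a path sum via $\alpha_e=\sum_{P_i\ni e}\alpha_{P_i}$; since Lemma~\ref{lemma_nonzeroflowcost} forces every positive-flow path to have the same marginal cost $y$, this collapses to $y\sum_{P_i}\alpha_{P_i}=y$. In contrast, you bypass the allocation quantity $c^{\alpha}(x)$ entirely and sandwich $r(x)$ directly by perturbing along a \emph{single} positive-flow path: adding $\epsilon$ gives $r^+(x)\le c^*$, removing $\epsilon$ gives $r^-(x)\ge c^*$, and Proposition~\ref{prop:Rdiff} closes the sandwich. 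Your argument is a little more self-contained (no need to invoke the intermediate object $c^{\alpha}$ or the path decomposition of $\alpha_e$), while the paper's version reuses machinery it has already built and makes the averaging structure explicit. Both rest on Lemma~\ref{lemma_nonzeroflowcost} in the same essential way.
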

\begin{proof}
From the proof of $R$ being differentiable, we know that if $\alpha$ is the min-cost allocation for a flow of magnitude $x$, then $c^{\alpha}(x) = r(x)$. So we just have to prove that $c^{\alpha}(x) = \sum_{e \in P_i}c_e(x_e)$ for any path with non-zero flow.

For every path $P_i$ with non-zero flow, we know by Lemma~\ref{lemma_nonzeroflowcost} that $\sum_{e \in P_i}c_e(x_e)$ is the same. Let the value of this quantity by $y$. We also know that for any given edge $e$, $\alpha_e = \sum_{P_i \ni e}\alpha_{P_i}$ by definition. So if $P$ is the set of paths with non-zero flow, we have
\begin{align*}
c^{\alpha}(x) & = & \sum_e \alpha_e c_e(x_e) \\
 & = & \sum_{P_i \in P} \sum_{e \in P_i} \alpha_{P_i}c_e(x_e) \\
 & = & \sum_{P_i \in P} \alpha_{P_i} \sum_{e \in P_i} c_e(x_e) \\
 & = & y \sum_{P_i \in P} \alpha_{P_i}= y \\
\end{align*}
\end{proof}

\section{Proofs from Section 3: Existence and Computation of Equilibrium Prices}
\label{app:existence}

\subsection{Our Pricing rule:}
We redefine our pricing rule which we will use for both the existence and uniqueness proofs. Let $E$ be the set of edges and $c_e(x_e)$ be the differential cost function on each edge. Throughout this section $\vec{x}$ will refer to a valid flow and $x$ to the magnitude of this flow. Consider any instance with $M$ monopolies. For any given minimum cost flow $\vec{x}$ of magnitude $x$, we will use the following pricing rule:
\begin{equation}
\label{appeqn_ourpricingrule}
 p_e(\vec{x}) =
  \begin{cases}
      \hfill \frac{\lx - r(x)}{M} + c_e(x) \hfill & \text{if e is a Monopoly} \\
      \hfill c_e(x_e) \hfill & \text{otherwise} \\
  \end{cases}
\end{equation}

It is easy to verify (from the properties of minimum cost flows) that every flow carrying path has a price of exactly $\lambda(x)$. Indeed, we already know that the total marginal cost of every flow carrying path is $r(x)$. According to the pricing rule, edges are first priced at their marginal cost. In addition, for any flow $\lx-r(x)$ is the total available slack or surplus, which is divided equally among all the monopolies. For example, if all edges are monopolies having the same cost function, then $r(x) = Mc_e(x)$ and $p_e(x) = \frac{\lx}{M}$. Our first simple lemma shows that any edge priced at its marginal cost has no incentive to lower its price.

\begin{lemma}
\label{cl_margcost}
At any flow $\vec{x}$, an edge priced at $p_e = c_e(x_e)$ can never increase its profit by lowering its price.
\end{lemma}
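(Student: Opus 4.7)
The plan is to argue directly from the convexity of $C_e$. Suppose seller $e$ deviates from $p_e = c_e(x_e)$ to any price $p'_e < c_e(x_e)$, and let $x'_e$ be the flow on $e$ under any best response to the new prices. The goal is to show that the new profit $p'_e x'_e - C_e(x'_e)$ is no larger than the current profit $c_e(x_e) x_e - C_e(x_e)$, regardless of the value of $x'_e$ (so we do not need to reason about how the flow redistributes globally).

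The key inequality I would invoke is the standard first-order characterization of convexity: since $C_e$ is convex and differentiable with derivative $c_e$, we have
\[
C_e(x'_e) - C_e(x_e) \;\geq\; c_e(x_e)\,(x'_e - x_e)
\]
for any $x'_e \geq 0$. Note this works in both directions, i.e.\ whether $x'_e \geq x_e$ or $x'_e < x_e$, so I do not even need to establish monotonicity of flow in price. Plugging this into the profit difference gives
\[
\bigl[p'_e x'_e - C_e(x'_e)\bigr] - \bigl[c_e(x_e) x_e - C_e(x_e)\bigr] \;\leq\; p'_e x'_e - c_e(x_e) x_e - c_e(x_e)(x'_e - x_e) \;=\; (p'_e - c_e(x_e))\, x'_e.
\]
Since $p'_e < c_e(x_e)$ by assumption and $x'_e \geq 0$, this quantity is non-positive, so the deviation cannot strictly increase profit.

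There is essentially no obstacle here: the entire argument is one application of the subgradient inequality for convex $C_e$, followed by algebraic simplification. The one thing to be careful about is to phrase the statement so that it covers every possible best-response flow $x'_e$ (consistent with the paper's strong equilibrium notion, where a deviation is deemed improving only if every induced best response strictly improves the seller's profit); the bound above is uniform in $x'_e$, so this is handled automatically.
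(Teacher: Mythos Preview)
Your proof is correct and takes essentially the same approach as the paper: both arguments reduce to a single application of the convexity of $C_e$ at the point $x_e$ with slope $c_e(x_e)$, followed by the observation that $p'_e \leq c_e(x_e)$. The only cosmetic difference is that the paper first assumes the flow weakly increases after a price cut and uses the derivative of $p_e x - C_e(x)$ on $[x_e,\infty)$, then handles the price drop separately, whereas you invoke the subgradient inequality directly and so cover all $x'_e \geq 0$ in one stroke without needing that monotonicity.
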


\begin{proof}
Suppose an edge decreases its price from $p_e=c_e(x_e)$ to $p'_e$ and its flow increases from $x_e$ to $x'_e$, then we need to show that $p_ex_e - C_e(x_e) \geq p'_ex'_e - C_e(x'_e)$. Consider the function $p_ex - C_e(x)$. For a fixed $p_e = c_e(x_e)$, its derivative is negative for $x > x_e$. Therefore, for any $x'_e > x_e$, we have $p_ex_e - C_e(x_e) \geq p_ex'_e - C_e(x'_e)$. But since $p'_e \leq p_e$, we know $p_ex_e - C_e(x_e) \geq p_ex'_e - C_e(x'_e) \geq p'_ex'_e - C_e(x'_e)$. This completes the proof.
\end{proof}

\begin{clm_app}{clm_monopolypos1}
In any network with no monopolies (i.e, you cannot disconnect $s$, $t$ by removing any one edge), there exists a Nash Equilibrium maximizing social welfare.
\end{clm_app}
\begin{proof}
The solution is quite straight-forward. We compute the optimum solution $\vec{x^*}$ and price edges according to our pricing rule above, which translates to each edge being priced at $p_e = c_e(x_e)$ since the instance has no monopolies. By Lemma~\ref{lemma_nonzeroflowcost} and Proposition~\ref{prop_optflow}, we know that for every flow carrying path $P_i$, $\lambda(x^*) \geq r(x^*) = \sum_{e \in P_i}c_e(x_e) = \sum_{e \in P_i}p_e$. So all flow carrying paths have the same total price, call it $p^*$. We claim that sending a flow of magnitude $x^*$ is a best-response by the buyers to this price. First consider the case when $\lambda(x^*) > r(x^*)$. According to Proposition~\ref{prop_optflow}, $x^*=T$ is the total flow available in the market. Thus all the buyers have a value of at least $\lambda(x^*)$ for the paths and since the price is not larger than this, all of them purchase one of the $s-t$ paths. When $\lambda(x^*) = r(x^*) = p^*$, it means that exactly $x^*$ of the buyers value the paths at $p^*$ or more, so these many buyers send the flow. In both cases, buyer behavior is indeed a best-response to the prices.

Now we show that sellers cannot change their price unilaterally and increase their profit. We claim that for every edge $e$ with flow, if the edge is removed from the graph, then there exists at least one $s-t$ path with a total price of $p^*$. Then, no edge would wish to unilaterally increase its price as the flow would switch to the alternative path. Since the graph has no monopolies, there exists at least one $s-t$ path not containing $e$ for every edge $e$. Let $P_i = v_1v_2\cdots v_kev_{k+1}\cdots v_r$ be a flow carrying path with $e$ where $v_1=s$ and $v_r=t$. If the flow on $e$ is $x_e < x^*$, then there exists at least one flow carrying path without $e$ and the price of this path is $p^*$, so we are done. If not, then for some $i\leq k$ and some $j\geq k+1$, there must exist a path $\gamma$ between $v_i$ and $v_j$ that only passes through edges with no flow on them. The price on these edges without flow must be $c_e(0) = 0$. Consider the new path $P' = v_1\cdots v_i \gamma v_j\cdots v_r$. The price of this path is no larger than $p^*$, since the price of edges in $\gamma$ is zero. So the price of this path must be exactly $p^*$. Therefore, we conclude that no single edge can increase its price and still retain some flow.

By Lemma~\ref{cl_margcost}, we already know that no edge can priced at its marginal can decrease its price and make more profit, no matter how much the flow increases by. For the $\lambda(x^*) > r(x^*)$ case, the whole market demand is satisfied so decreasing the price has no additional impact anyway. The edges without flow are already priced at $0$, so they cannot decrease their price. This completes the proof.
\end{proof}
\subsection{Proof of Existence}\begin{thm_app}{thm:existence}
For any MPE demand function $\lambda$, there exists a Nash equilibrium Pricing.
\end{thm_app}
\begin{proof}
Recall that a solution of our game is given by $(\vec{p},\vec{x})$. Clearly, once prices are fixed, buyers always buy best-response bundles so we only consider solutions of this form. Finally for any edge, we define $\pe$ to be the increase in the price of an edge from its marginal cost. That is, given a solution, $p_e = c_e(x_e) + \pe$. We will now show some sufficient conditions that this `increased' price must obey in an equilibrium.

\begin{lemma}
\label{lem_eqnconditionsincr}
Given a solution $(\vec{p}, \vec{\tilde{x}})$ with $\vec{\tilde{x}}$ a best-response flow to prices $\vec{p}$, with $\lambda(\tilde{x}) > 0$ and $\lambda(\tilde{x}) \geq \tilde{x}|\lambda'(\tilde{x})|$, we have that no seller $e$ can increase his price and improve profits as long as either one of the following conditions hold,
\begin{enumerate}
\item The good $e$ is tight, i..e, $\exists$ some $s$-$t$ path that does not contain the edge $e$ and has the same total price as the flow-carrying $s-t$ paths that do contain $e$ \emph{(or)}

\item $\pe \geq \tilde{x}|\lambda'(\tilde{x})|$
\end{enumerate}
\end{lemma}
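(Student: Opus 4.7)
My plan is to reduce the question of profitability of upward deviations to a single-variable inequality governed by the MPE structure of $\lambda$. First I dispose of the trivial case: if $\tilde x_e = 0$, then $e$ lies on no flow-carrying $s$--$t$ path at the current prices, so raising $p_e$ leaves $x_e = 0$ and the profit unchanged. Assume henceforth $\tilde x_e > 0$, let $p^\star$ be the common total price of every flow-carrying path (all equal by buyer best-response), and let $q$ be the smallest total price of an $s$--$t$ path that avoids $e$ (with $q = \infty$ if $e$ is a monopoly). For Case~1 we have $q = p^\star$: after any upward deviation $\delta > 0$, every path through $e$ is strictly costlier than the alternative path at price $p^\star$, so buyer best-response forces all flow to route around $e$ and hence $x_e = 0$ in every admissible new allocation. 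The seller's new profit is $-C_e(0) \leq 0$, whereas the old profit satisfies $p_e \tilde x_e - C_e(\tilde x_e) \geq c_e(\tilde x_e)\tilde x_e - C_e(\tilde x_e) \geq 0$ using $p_e \geq c_e(\tilde x_e)$, convexity of $C_e$, and $c_e(0) = 0$; Case~1 is complete.

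For Case~2 I assume Case~1 fails, so $q > p^\star$ and every flow-carrying path must contain $e$, giving $\tilde x_e = \tilde x$. Fix an upward deviation $\delta > 0$ and any best-response flow it induces. If $\delta > q - p^\star$ the cheapest $s$--$t$ paths avoid $e$, so $x_e = 0$ and the Case~1 analysis applies. Otherwise $\delta \leq q - p^\star$, every cheapest path contains $e$, and the new total flow $x(\delta) := \lambda^{-1}(p^\star+\delta) \leq \tilde x$ passes entirely through $e$; at the tie $\delta = q - p^\star$, the seller's profit $(p_e+\delta)x_e - C_e(x_e)$ is concave in $x_e$ with unconstrained maximizer $c_e^{-1}(p_e+\delta) \geq \tilde x \geq x(\delta)$, so I only need to rule out $x_e = x(\delta)$. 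Using the convexity bound $C_e(x(\delta)) \geq C_e(\tilde x) + c_e(\tilde x)(x(\delta)-\tilde x)$ and $p_e = c_e(\tilde x) + \pe$, a short rearrangement yields
\[
\pi_e(\delta) - \pi_e(0) \;\leq\; \pe\,\bigl(x(\delta) - \tilde x\bigr) \;+\; \delta\,x(\delta).
\]
Substituting $\delta = \lambda(x(\delta)) - \lambda(\tilde x)$ and applying condition~2, it suffices to show that for every $x \in (0,\tilde x]$,
\[
\psi(x) \;:=\; \tilde x\,|\lambda'(\tilde x)|(\tilde x - x) \;-\; x\bigl(\lambda(x) - \lambda(\tilde x)\bigr) \;\geq\; 0.
\]

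The crux of the proof, and where I expect the real work to lie, is establishing this single-variable inequality, since it is exactly what couples the argument to the MPE structure. Because $\psi(\tilde x) = 0$, I will instead show that $\psi$ is non-increasing on $(0,\tilde x]$. Differentiating and cancelling, $\psi'(x) \leq 0$ is equivalent to $x|\lambda'(x)| \leq \tilde x|\lambda'(\tilde x)| + \lambda(x) - \lambda(\tilde x)$. Writing $a = \tilde x|\lambda'(\tilde x)|$, $b = \lambda(\tilde x)$, $c = \lambda(x)$, the first hypothesis of the lemma gives $a \leq b$, monotonicity of $\lambda$ gives $c \geq b$, and the MPE property applied at $x \leq \tilde x$ yields $x|\lambda'(x)| \leq ca/b$. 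The desired inequality then reduces to $ca/b \leq a + c - b$, i.e., $a(c-b) \leq b(c-b)$, which is immediate from $a \leq b$ and $c \geq b$. Everything outside this one inequality is routine convexity bookkeeping.
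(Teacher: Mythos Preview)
Your proposal is correct and follows essentially the same approach as the paper. Both arguments dispose of Case~1 by the obvious routing argument, and in Case~2 both reduce the question to the single MPE inequality $\lambda(\tilde x)-\lambda(x)\le \tilde x|\lambda'(\tilde x)|-x|\lambda'(x)|$ for $x<\tilde x$ (this is exactly the paper's Proposition~\ref{sublem_exist_smallx}, and is equivalent to your $\psi'(x)\le 0$). The only differences are cosmetic: the paper differentiates the profit $\pi(x)=[p_e+\lambda(x)-\lambda(\tilde x)]x-C_e(x)$ directly and uses $c_e(\tilde x)\ge c_e(x)$, whereas you first linearize $C_e$ at $\tilde x$ via the tangent bound and then differentiate the resulting one-variable function $\psi$; and the paper proves the MPE inequality by contradiction, whereas your $a(c-b)\le b(c-b)$ derivation is a cleaner direct argument.
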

\begin{proof}
The first part of the lemma is fairly trivial: let $P_1$ be any flow-carrying path that contains $e$ and $P_2$ be a path not containing $e$ but having the same overall price as $P_1$. Since buyers always buy the best-response bundles, if seller $e$ increases his price, the price of $P_1$ (or any other flow carrying path containing $e$) would become strictly larger than that of $P_2$. All the buyers that originally purchased from this seller would now shift to using $P_2$ or any other path not containing $e$, and seller $e$'s profit would become zero which cannot be strictly larger than his original profit.

Suppose an edge $e$ is not tight and a seller who obeys $\pe \geq \tilde{x}|\lambda'(\tilde{x})|$ can increase his price (perhaps to some extent) and still have the cheapest paths pass through this edge. In this case, it is necessary that the entire flow $\tilde{x}$ must be using this edge. To see why, suppose if the flow on this edge $x_e < \tilde{x}$, then there must be atleast one other $s$-$t$ path not containing $e$ with non-zero flow on it. Since all the paths with non-zero flow must have the same price ($\lambda(\tilde{x}))$, this boils down to our first condition of $e$ being tight since there is an alternative path with the same price. Therefore, $x_e = \tilde{x}$ and every flow carrying path contains $e$.

Now suppose that seller $e$ increases his price from $p_e$ to $p'_e$ and the resulting flow on the edge is $x$. Then, it is not hard to see that $x < \tilde{x}$, because the price of every flow carrying path has increased by a non-zero amount ($p'_e - p_e$) and if the flow remained the same, then it would mean that $\lambda(x)$ is not uniquely defined at $\tilde{x})$. We first establish the relation between $p'_e$ and $\lx$, namely that $p'_e - p_e = \lx - \lambda(\tilde{x})$. To see why, notice that the flow of size $x$ is a best-response flow to the prices in our original solution, except with price $p'_e$ on edge $e$. As we argued above, all cheapest paths in our given solution $\vec{\tilde{x}}$ have cost $\lambda(\tilde{x})$; thus all cheapest paths in this new pricing have cost $\lambda(\tilde{x})-p_e+p'_e$. Since this pricing results in a flow of size $x$, it must be that $\lambda(x)=\lambda(\tilde{x})-p_e+p'_e$: the buyers who purchase the cheapest bundles are exactly the ones who value them more than $\lambda(x)$. Thus, we know that $p'_e=p_e + \lx - \lambda(\tilde{x})$.

Now consider seller $e$'s original profit, $p_e\tilde{x} - C_e(\tilde{x})$. After $e$ changes its price to $p'_e$, this profit become $p'_ex - C_e(x)$, which by the above argument equals $[p_e + \lx - \lambda(\tilde{x})]x-C_e(x)$. We want to show that this new profit is at most the old profit as long as Condition (2) is obeyed. Define $\pi(x)=[p_e + \lx - \lambda(\tilde{x})]x-C_e(x)$. We will prove that in the domain $(0,\tilde{x}]$, $\pi(x)$ is maximized when $x=\tilde{x}$, thus implying that no matter what amount the seller increases his price by, at the resulting flow of magnitude $x$, his profit cannot be strictly larger than the original profit.

%Before proving that $\pi$ is maximized at $\tilde{x}$, we establish the following helpful inequality.
We now proceed to prove that $\pi(x)$ is maximized at $\tilde{x}$. Specifically, we look at the derivative of $\pi(x)$ and show that it is non negative for $x \leq \tilde{x}$. Recall that the seller obeys Condition (2) which implies,
$$p_e = \pe + c_e(\tilde{x}) \geq \tilde{x}|\lambda'(x)| + c_e(\tilde{x}).$$

%By our pricing rule, and by our choice of $\tilde{x}$, we know that $$p_e=\frac{\lambda(\tilde{x}) - r(\tilde{x})}{M} + c_e(\tilde{x}) = \tilde{x}|\lambda'(\tilde{x})|+c_e(\tilde{x}),$$
Since
$$\pi(x)=[p_e + \lx - \lambda(\tilde{x})]x-C_e(x).$$ Thus the derivative of $\pi(x)$ is equal to
\begin{align*}
\pi'(x) & = p_e + \lx - \lambda(\tilde{x}) + x\ldx - c_e(x) \\
& = \pe + c_e(\tilde{x}) + \lx - \lambda(\tilde{x}) + x\ldx - c_e(x)\\
& \geq \lx - \lambda(\tilde{x}) + \tilde{x}|\lambda'(\tilde{x})| - x|\ldx| + (c_e(\tilde{x}) - c_e(x))
\end{align*}

Since the last term in the parenthesis, $c_e(\tilde{x}) - c_e(x)$ is non-negative ($c_e$ is non-decreasing), in order to show that $\pi'(x) \geq 0$ for all $x \leq \tilde{x}$, it suffices if we show the first terms are non-negative. The following proposition implies that for MPE functions, this is indeed true.

\begin{proposition}
\label{sublem_exist_smallx}
 For $\lambda\in MPE$ and $M\geq 1$, we have that $\lambda(\tilde{x})-\lambda(x)\leq \tilde{x}|\lambda'(\tilde{x})|-x|\lambda'(x)|$ for $x<\tilde{x}$.
\end{proposition}

\begin{proof}
First suppose that $|\lambda'(\tilde{x})| > 0$. Let $A=-(\lambda(\tilde{x})-\lambda(x))$ and $B=-(\tilde{x}|\lambda'(\tilde{x})|-x|\lambda'(x)|)$; we want to show that $A\geq B$. Suppose to the contrary that $A<B$. $A$ is non-negative since $\lambda$ is non-increasing, and $B$ is nonnegative since $B>A$. We know by the property of MPE functions that

\begin{equation}\label{eq.helper2}
\frac{\lambda(\tilde{x})}{\tilde{x}|\lambda'(\tilde{x})|}\leq \frac{\lx}{x|\ldx|} = \frac{\lambda(\tilde{x})+A}{\tilde{x}|\lambda'(\tilde{x})|+B}.
\end{equation}

Let $C=\lambda(\tilde{x})$ and $D=\tilde{x}|\lambda'(\tilde{x})|$, and consider how $C/D$ compares with $(C+B)/(D+B)$. First, notice that we have at $\tilde{x}$,
$$C = \lambda(\tilde{x}) \geq \lambda(\tilde{x}) - r(\tilde{x}) \geq M\tilde{x}|\lambda'(\tilde{x})| \geq \tilde{x}|\lambda'(\tilde{x})| = D.$$ Therefore, $C \geq D$. Thus we know that $C(D+B)\geq D(C+B)$, and thus $\frac{C}{D}\geq \frac{C+B}{D+B}$. By our assumption that $A<B$, this implies that $\frac{C}{D}> \frac{C+A}{D+B}$, i.e.,

$$\frac{\lambda(\tilde{x})}{\tilde{x}|\lambda'(\tilde{x})|} >  \frac{\lambda(\tilde{x})+A}{\tilde{x}|\lambda'(\tilde{x})|+B}.$$
This contradicts Inequality (\ref{eq.helper2}) above, thus proving that $A\geq B$.

What if $\lambda'(\tilde{x}) = 0$? Since MPE functions must have non-decreasing $\frac{x|\lambda'(x)|}{\lx}$, we can only conclude that for all $0 < x \leq \tilde{x}$, $\lambda'(x) = 0$. In this case, it is not hard to see that the proposition trivially holds since $\lambda(\tilde{x}) = \lambda(x)$ and the RHS is zero as well.
\end{proof}

We have therefore shown that for all $0 < x < \tilde{x}$, $\pi'(x)$ is non-negative. This means that $\pi(x)$ is non-decreasing in this region and therefore maximized at $x=\tilde{x}$ in the domain $[0,\tilde{x}]$. Therefore, no `monopoly edge' can benefit by changing its price, as desired.
%A small remark on how much the monopoly can actually increase its price by is in order here. It may be possible (as we show later for the case with Virtual monopolies in Section~\ref{appsec:generalizations}) that a monopoly may only be able to increase his price up to some amount and so may not be able to obtain every possible flow size $x < \tilde{x}$. That said, the above proof is still valid for this case because we have shown that in every possible domain $[x,\tilde{x}]$, $\pi(x)$ is non-decreasing. So our statement is much stronger.
 Also notice that if $\tilde{x} = T$, then $\lambda'(T)$ has to be finite if the prices are finite; the above argument works for this case as well. $\qed$
\end{proof}

We now show sufficient conditions on the other half of seller behavior, namely give conditions on when a seller cannot decrease his price.

\begin{lemma}
\label{lem_eqnconditionsdecr}
Given a solution $(\vec{p}, \vec{\tilde{x}})$ with $\vec{\tilde{x}}$ a best-response flow to prices $\vec{p}$, satisfying $\lambda(\tilde{x}) \geq \tilde{x}|\lambda'(\tilde{x})|$, no seller $e$ can decrease his price and improve profits as long as any  one of the following conditions hold,
\begin{enumerate}
\item $\tilde{x_e} = \tilde{x} =T$ \emph{or}

\item $\pe = 0$ \emph{or}

\item $\pe \leq \tilde{x}|\lambda'(\tilde{x})|$ and $\tilde{x_e} = \tilde{x}$
\end{enumerate}
\end{lemma}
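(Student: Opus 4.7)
The plan is to mirror the structure of the proof of Lemma~\ref{lem_eqnconditionsincr}, handling each of the three sufficient conditions in turn. Conditions (1) and (2) admit essentially immediate arguments, while Condition (3) is the substantive case and calls for a derivative computation analogous to that used for price increases, together with a monotone--price--elasticity inequality valid in the regime $x \geq \tilde{x}$.

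For Condition (1), the hypothesis $\tilde{x}_e = \tilde{x} = T$ says that the entire buyer population already purchases through $e$, so market demand is saturated. After any strict price decrease $p'_e < p_e$, every path containing $e$ becomes strictly cheaper than it was at equilibrium, while paths not containing $e$ are unaffected; in particular the cheapest $s$--$t$ paths still all contain $e$, so any best-response routes all the flow through $e$, and since there are only $T$ buyers, the new flow on $e$ is still exactly $T$. The seller's new profit is $p'_e T - C_e(T) < p_e T - C_e(T)$. Condition (2) is immediate: $\pe = 0$ means $p_e = c_e(\tilde{x}_e)$, and Lemma~\ref{cl_margcost} then implies a seller already pricing at marginal cost cannot benefit from a further decrease.

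For Condition (3), the assumption $\tilde{x}_e = \tilde{x}$ means every flow-carrying $s$--$t$ path passes through $e$, so, just as in the previous paragraph, after any strict price decrease all cheapest paths still contain $e$ and the unique best-response routes the entire new flow through $e$. If the new total flow is $x \geq \tilde{x}$, then by the buyers' indifference condition the new price of any cheapest path equals $\lambda(x)$, so the decrease is exactly $\lambda(\tilde{x}) - \lambda(x)$ and $p'_e = p_e - (\lambda(\tilde{x}) - \lambda(x))$. Define
\[
\pi(x) \;=\; \bigl[p_e - (\lambda(\tilde{x}) - \lambda(x))\bigr]\, x \;-\; C_e(x) \qquad \text{for } x \geq \tilde{x},
\]
so that $\pi(\tilde{x})$ is the original profit and $\pi(x)$ is the deviation profit at new flow $x$. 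It suffices to show $\pi'(x) \leq 0$ on $[\tilde{x}, \infty)$. A direct differentiation gives
\[
\pi'(x) \;=\; \pe + c_e(\tilde{x}) - \bigl(\lambda(\tilde{x}) - \lambda(x)\bigr) - x|\lambda'(x)| - c_e(x),
\]
and since $c_e$ is non-decreasing and $x \geq \tilde{x}$, the difference $c_e(\tilde{x}) - c_e(x)$ is non-positive. Combining this with the hypothesis $\pe \leq \tilde{x}|\lambda'(\tilde{x})|$, it therefore suffices to establish the key inequality
\[
\tilde{x}|\lambda'(\tilde{x})| \;-\; x|\lambda'(x)| \;\leq\; \lambda(\tilde{x}) - \lambda(x) \qquad \text{for all } x \geq \tilde{x}.
\]

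The main obstacle is proving this inequality from the MPE assumption; this is the analogue of Proposition~\ref{sublem_exist_smallx} but in the $x \geq \tilde{x}$ direction, and it is here that the standing assumption $\lambda(\tilde{x}) \geq \tilde{x}|\lambda'(\tilde{x})|$ (which is absent from Lemma~\ref{lem_eqnconditionsincr}) becomes essential. Setting $C = \lambda(\tilde{x})$, $D = \tilde{x}|\lambda'(\tilde{x})|$, $A = \lambda(\tilde{x}) - \lambda(x) \geq 0$ and $B = \tilde{x}|\lambda'(\tilde{x})| - x|\lambda'(x)|$, the MPE property $\frac{x|\lambda'(x)|}{\lambda(x)}$ non-decreasing becomes $\frac{D}{C} \leq \frac{D-B}{C-A}$, which rearranges to $CB \leq DA$; the hypothesis $D \leq C$ then yields $B \leq A$, as desired. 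Degenerate cases where $\lambda'(\tilde{x}) = 0$ or $\lambda$ is locally flat (so that strict decreases may not change flow, or flow may jump) are handled by the same one-sided-limit argument used at the end of Proposition~\ref{sublem_exist_smallx}, making the bound hold at the boundary.
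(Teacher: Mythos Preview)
Your proposal is correct and follows essentially the same approach as the paper: handle Conditions (1) and (2) by the saturation and marginal-cost arguments respectively, and for Condition (3) define the same profit function $\pi(x)$, bound its derivative, and reduce to the inequality $\lambda(\tilde{x})-\lambda(x)\geq \tilde{x}|\lambda'(\tilde{x})|-x|\lambda'(x)|$ for $x\geq\tilde{x}$, which the paper isolates as Lemma~\ref{sublem_exist_largex}. Your proof of that inequality is a direct rearrangement ($CB\leq DA$ together with $D\leq C$ forces $B\leq A$) rather than the paper's proof by contradiction, but the content is identical.
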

\begin{proof}
Once again the first two conditions are easy to prove. If $\tilde{x_e} = T$, it just means that there are no more buyers left in the market. So for any such seller, a decrease in price is not going to lead to any additional flow. In the second case, $\pe = 0$ simply means that a seller is priced at its marginal price and so by Lemma~\ref{cl_margcost} no such seller would wish to decrease his price. For the final condition, suppose that $\tilde{x} < T$ and $\pe > 0$.

Notice that as with the previous proof, any decrease in price from $p_e$ to $p'_e$ will result in a flow of magnitude $x > \tilde{x}$. As we argued previously, since buyers will only indulge in best-response behavior, it is necessary that $p'_e = p_e + \lambda(\tilde{x}) - \lambda(x)$.

We want to show that the new profit a seller makes is at most the old profit as long as Condition (3) is obeyed. Consider seller $e$'s profit, $\pi(x)=[p_e + \lx - \lambda(\tilde{x})]x-C_e(x)$. We will prove that in the domain $[\tilde{x},T]$, $\pi(x)$ is maximized when $x=\tilde{x}$, thus implying that no matter what amount the seller decreases his price by, at the resulting flow of magnitude $x$, his profit cannot be strictly larger than the original profit.

%Before proving that $\pi$ is maximized at $\tilde{x}$, we establish the following helpful inequality.
We now proceed to prove that $\pi(x)$ is maximized at $\tilde{x}$. Specifically, we look at the derivative of $\pi(x)$ and show that it is not positive for $x \geq \tilde{x}$. Recall that the seller obeys Condition (3) which implies,
$$p_e = \pe + c_e(\tilde{x}) \leq \tilde{x}|\lambda'(x)| + c_e(\tilde{x}).$$

%By our pricing rule, and by our choice of $\tilde{x}$, we know that $$p_e=\frac{\lambda(\tilde{x}) - r(\tilde{x})}{M} + c_e(\tilde{x}) = \tilde{x}|\lambda'(\tilde{x})|+c_e(\tilde{x}),$$
Since
$$\pi(x)=[p_e + \lx - \lambda(\tilde{x})]x-C_e(x).$$ Thus the derivative of $\pi(x)$ is equal to
\begin{align*}
\pi'(x) & = p_e + \lx - \lambda(\tilde{x}) + x\ldx - c_e(x) \\
& = \pe + c_e(\tilde{x}) + \lx - \lambda(\tilde{x}) + x\ldx - c_e(x)\\
& \leq \lx - \lambda(\tilde{x}) + \tilde{x}|\lambda'(\tilde{x})| - x|\ldx| + (c_e(\tilde{x}) - c_e(x))
\end{align*}

Since the last term in the parenthesis, $c_e(\tilde{x}) - c_e(x)$ is not positive ($c_e$ is non-decreasing), in order to show that $\pi'(x) \leq 0$ for all $x \geq \tilde{x}$, it is sufficient to show the first terms are not positive either. The following proposition implies that for MPE functions, this is indeed true.

\begin{lemma}
\label{sublem_exist_largex}
For $\lambda\in MPE$ and $M\geq 1$, we have that $\lambda(\tilde{x})-\lambda(x)\geq \tilde{x}|\lambda'(\tilde{x})|-x|\lambda'(x)|$ for $x>\tilde{x}$.
\end{lemma}
\begin{proof}
First notice that if $\lambda'(\tilde{x}) = 0$, the inequality is trivially true. $\lambda'(\tilde{x})$ is also bonded because the only case where it is not is when $\tilde{x} = T$ and for that case we already know that no seller will decrease the price. Now, let $A=\lambda(\tilde{x})-\lambda(x)$ and $B=\tilde{x}|\lambda'(\tilde{x})|-x|\lambda'(x)|$; we want to show that $A\geq B$. Suppose to the contrary that $A<B$. We know by the property of MPE functions that

\begin{equation}\label{eq.helper1}
\frac{\lambda(\tilde{x})}{\tilde{x}|\lambda'(\tilde{x})|}\geq \frac{\lx}{x|\ldx|} = \frac{\lambda(\tilde{x})-A}{\tilde{x}|\lambda'(\tilde{x})|-B}.
\end{equation}

Let $C=\lambda(\tilde{x})$ and $D=\tilde{x}|\lambda'(\tilde{x})|$, and consider how $C/D$ compares with $(C-A)/(D-A)$. First, notice that $C\geq D$:
this is because by our choice of $\tilde{x}$, we have that $\lambda(\tilde{x})-r(\tilde{x})=M\tilde{x}|\lambda'(\tilde{x})|,$ and we can assume that $M\geq 1$ since $e$ is a monopoly edge. Also, $D\geq B>A$, so we know that $C(D-A)\leq D(C-A)$, and thus $\frac{C}{D}\leq \frac{C-A}{D-A}$. By our assumption that $A<B$, this implies that $\frac{C}{D} \leq \frac{C-A}{D-A} < \frac{C-A}{D-B}$, i.e.,

$$\frac{\lambda(\tilde{x})}{\tilde{x}|\lambda'(\tilde{x})|} <  \frac{\lambda(\tilde{x})-A}{\tilde{x}|\lambda'(\tilde{x})|-B}.$$
This contradicts Inequality (\ref{eq.helper1}) above, thus proving that $A\geq B$.
\end{proof}

Therefore, we have shown that for MPE functions, as long as condition (3) is obeyed, $\pi'(x)$ is positive for all $\tilde{x} \leq x \leq T$. So it means that $\pi(x)$ is non increasing in this domain and therefore it is maximized at $x=\tilde{x}$. Also notice that $\pi(x)$ is a continuous function of $x$. So even if $\lambda'(T)$ is not defined, we know that for all $x < T$, $\pi(x) \leq \pi(\tilde{x})$. This means that $\pi(T) = \lim_{x \to T}\pi(x) \leq \pi(\tilde{x})$ as well.$\qed$
\end{proof}

\begin{corollary}
\label{corr_eqconditionssuff}
Any given solution $(\vec{p}, \vec{x})$ with $\lambda(\tilde{x}) > 0$  and $\vec{x}$ a best-response flow to prices $\vec{p}$ is a Nash Equilibrium if the following conditions are met
\begin{enumerate}
\item $\vec{x}$ is a minimum cost flow of magnitude $x$.

\item All non monopoly edges are priced at their marginal cost, i.e, $\pe= 0$.

\item For all monopoly edges, $\pe \geq \tilde{x}|\lambda'(\tilde{x})|$ and one of the following is true,
\begin{enumerate}
\item $\pe$ is strictly equal to $\tilde{x}|\lambda'(\tilde{x})|$ \emph{or}
\item $\pe = 0$ \emph{or}
\item $\tilde{x} = T$.
\end{enumerate}

\end{enumerate}
\end{corollary}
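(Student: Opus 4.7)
The plan is to show that the three structural conditions in the corollary are precisely enough to invoke Lemmas~\ref{lem_eqnconditionsincr} and~\ref{lem_eqnconditionsdecr} for every seller simultaneously, so that no unilateral deviation is profitable. Before doing so, I would first verify the standing hypothesis of those lemmas, namely $\lambda(\tilde{x}) \geq \tilde{x}|\lambda'(\tilde{x})|$. Since $\vec{x}$ is a best-response, the total price on any flow-carrying path equals $\lambda(\tilde{x})$; using Lemma~\ref{lemma_cequivalence} together with the pricing structure of condition~2, this total price is exactly $r(\tilde{x}) + \sum_{e \text{ monopoly}} \pe$. Condition~3 forces this to be at least $r(\tilde{x}) + M\tilde{x}|\lambda'(\tilde{x})|$. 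Since $\lambda(\tilde{x})>0$ and $r(\tilde{x})\geq 0$, whenever $M\geq 1$ we immediately get $\lambda(\tilde{x}) \geq \tilde{x}|\lambda'(\tilde{x})|$; the case $M=0$ reduces to Claim~\ref{clm_monopolypos1}.

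Next I would handle sellers edge by edge. For a monopoly edge~$e$, no upward deviation helps by Lemma~\ref{lem_eqnconditionsincr} condition~(2), which is exactly the first clause of our condition~3. For downward deviations, note that since $e$ is on every $s$--$t$ path we automatically have $\tilde{x}_e = \tilde{x}$, so the sub-clauses 3(a), 3(b), 3(c) of the corollary match precisely the three alternatives (3), (2), and (1) of Lemma~\ref{lem_eqnconditionsdecr}. So no monopoly seller can profitably raise or lower price.

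For a non-monopoly edge~$e$ priced at marginal cost ($\pe=0$), no downward deviation helps by Lemma~\ref{cl_margcost} (equivalently, Lemma~\ref{lem_eqnconditionsdecr} condition~(2)). The step I expect to require the most care is the upward deviation: here I cannot rely on the ``$\pe \geq \tilde{x}|\lambda'(\tilde{x})|$'' clause of Lemma~\ref{lem_eqnconditionsincr}, and must instead establish tightness, namely that there is an alternative $s$--$t$ path not containing~$e$ with the same total price as the flow-carrying paths through~$e$. This is where the argument from Claim~\ref{clm_monopolypos1} is reused in a slightly refined form: because every $s$--$t$ path shares the \emph{same} set of monopoly edges, the $\sum \pe$ contribution of monopolies is identical on every path, and so tightness reduces to comparing marginal-cost sums. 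If the flow through $e$ is strictly less than $\tilde{x}$, some other flow-carrying path avoids $e$ and, by Lemma~\ref{lemma_nonzeroflowcost}, shares the same marginal cost, hence the same total price. Otherwise, I would splice a zero-flow detour (priced at $c_e(0)=0$) around $e$ into a flow-carrying path to construct a cheaper or equal-priced alternative; by the min-cost property, any such alternative must in fact have price exactly $\lambda(\tilde{x})$, giving the required tightness.

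Combining the monopoly and non-monopoly analyses, every seller is in one of the covered cases of Lemma~\ref{lem_eqnconditionsincr} and Lemma~\ref{lem_eqnconditionsdecr}, and therefore cannot strictly improve profit by any unilateral price change. Together with the assumption that $\vec{x}$ is already a best-response flow, this is exactly the definition of a Nash equilibrium, completing the proof. The only real obstacle is the tightness step for non-monopoly edges; everything else is a direct bookkeeping check that our conditions line up with the lemma hypotheses.
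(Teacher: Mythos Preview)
Your proposal is correct and follows essentially the same route as the paper: verify the standing hypothesis $\lambda(\tilde{x})\ge\tilde{x}|\lambda'(\tilde{x})|$ from the price decomposition along a flow-carrying path, dispatch monopoly edges directly via Lemmas~\ref{lem_eqnconditionsincr} and~\ref{lem_eqnconditionsdecr}, and handle non-monopoly edges by Lemma~\ref{cl_margcost} for downward deviations together with the shortcut/ear tightness argument of Claim~\ref{clm_monopolypos1} for upward deviations. Your observation that all $s$--$t$ paths share the same monopoly set (so the $\sum\pe$ contribution cancels and tightness reduces to a marginal-cost comparison governed by Lemma~\ref{lemma_nonzeroflowcost}) is exactly the point the paper has in mind when it says the proof ``is very similar to that of Claim~\ref{clm_monopolypos1}.''
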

\begin{proof}
The cost of every flow-carrying path $P$ is exactly $\sum_{e \in P}c_e(\tilde{x}_e) + \sum_{e\in \mathbb{M}}\pe$ where $\mathbb{M}$ is the set of monopolies. Since $\tilde{x}$ is a best-response flow, it must be that $\lambda(\tilde{x})$ equals the cost of these path, so for every monopoly $e$, $\lambda(\tilde{x}) \geq \pe \geq \tilde{x}|\lambda'(\tilde{x})|$. So the requirement for our previous lemmas that $\lambda(\tilde{x}) \geq \tilde{x}|\lambda'(\tilde{x})|$ is satisfied. No monopoly edge will want to change its price due to Lemmas \ref{lem_eqnconditionsincr} and \ref{lem_eqnconditionsdecr}.

Now coming to the non-monopoly edges, since $\pe = 0$, no non-monopoly edge would ever wish to decrease its price. Moreover, since $\vec{x}$ is a minimum-cost allocation, we claim that for every non-monopoly edge $e$, $\exists$ a $s$-$t$ path not containing $e$ with the same price. The proof is very similar to that of Claim~\ref{clm_monopolypos1} and involves showing that for every edge $e$, there exists a \emph{shortcut} or \emph{ear} not containing this edge and having the same price. So we conclude that no edge can increase or decrease its price. Therefore this is a Nash Equilibrium.
\end{proof}

\begin{figure}
\begin{center}
\includegraphics[height=2in]{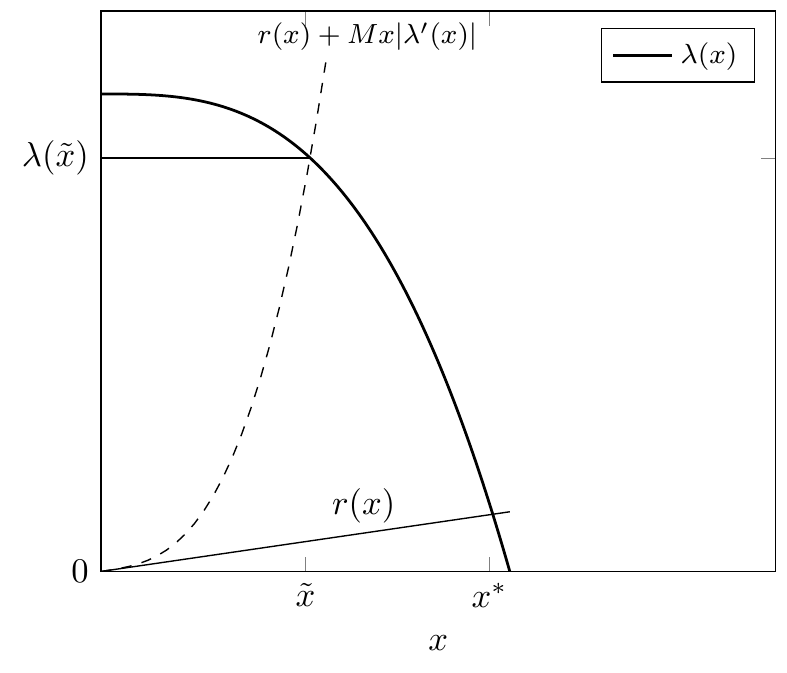}
\end{center}
\caption{The magnitude of the optimal flow $x^*$ and the equilibrium flow $\tilde{x}$.}
\label{fig:existence}
\end{figure}

Recall that at any given minimum cost allocation vector $\vec{\tilde{x}}$, our pricing rule prices all the non-monopoly edges at $\pe = 0$ or equivalently the total price $p_e = c_e(\tilde{x}_e)$. The monopoly edges are priced at $\pe = \frac{\lambda(\tilde{x}) - r(\tilde{x})}{M}$. We have already shown that the flow $\vec{\tilde{x}}$ is a best-response flow to these prices. To show that the prices are stable, paraphrasing Corollary~\ref{corr_eqconditionssuff}, all we need to show is that for any given instance with a MPE demand, we need to show that some $\tilde{x} > 0$ with $\lambda(\tilde{x})>0$ satisfies the following conditions
\begin{itemize}
\item $\lambda(\tilde{x}) - r(\tilde{x}) \geq M\tilde{x}|\lambda'(\tilde{x})|$ (no price increase) and any one of
\begin{enumerate}
\item $\lambda(\tilde{x}) - r(\tilde{x}) = M\tilde{x}|\lambda'(\tilde{x})|$ \emph{or}
\item $\lambda(\tilde{x}) - r(\tilde{x}) = 0$ \emph{or}
\item $\tilde{x} = T$.
\end{enumerate}
\end{itemize}

Let $\vec{x^*}$ be any optimal flow of magnitude $x^*$. Assume for now that $\lambda(x^*) > 0$. If this point satisfies $\lambda(x^*) - r(x^*) \geq Mx^*|\lambda'(x^*)|$, then we claim to have found a point that satisfies all our requirements. Clearly the `no price increase' condition is met by definition. Moreover, either $x^* = T$ in which case Condition (3) is met or $x^* < T$, which implies that $\lambda(x^*) = r(x^*)$ by Proposition~\ref{prop_optflow} and Condition (2) is met. In this case, we set $\tilde{x} = x^*$ this is a Nash Equilibrium. Observe that if the optimum point satisfies this condition, then $|\lambda'(x^*)|$ is bounded because $\lambda(x^*) - r(x^*)$ is bounded. So in the event that $x^* = T$, $\lambda'(T)$ exists and is bounded.

Now suppose that the optimal point does not meet $\lambda(x^*) - r(x^*) \geq Mx^*|\lambda'(x^*)$ (see Figure \ref{fig:existence} for an illustration). Then we claim that $\exists$ some $\tilde{x}\in (0,x^*]$, which satisfies Condition (1) above, and therefore the no price increase condition as well. Since the optimum point does not satisfy the ``No Price Increase" condition, we have $\lambda(x^*) - r(x^*) < \lim_{x \to x^*}Mx|\ldx|$ (the limit is necessary for the case that $x^*=T$ and this $\lambda'(x^*)$ is not defined). Then, we claim that $\exists$ some $\tilde{x}$ obeying the following condition.

\begin{equation}\label{eq:NEcondition} \lambda(\tilde{x})-r(\tilde{x})=M\tilde{x}|\lambda'(\tilde{x})|
\end{equation}

For MPE functions as $x \to 0$, the ratio $\frac{Mx|\ldx|}{\lx}$ also tends to zero. This means that $\exists$ some $\epsilon_0 > 0$ such that for all $0 \leq x \leq \epsilon_0$, $\lx > Mx|\ldx|$ since $M$ is finite. Moreover $r(x)$ goes to zero as $x$ tends to zero. This means that $\lim_{x \to 0}(\lx - r(x) - Mx|\lambda'(x)|) = \lim_{x \to 0}\left(\lx - Mx|\ldx|\right) > 0$ since the term $\lx - Mx|\ldx|$ is strictly positive for sufficiently small $x$ and $r(x) \to 0$. But since the ``no price increase" condition is not true at $x^*$, $\lambda(x^*)-r(x^*) - M x^*|\lambda'(x^*)| < 0$. As $\lx$, $r(x)$ and $\lambda'(x)$ are all continuous for MPE functions (see Appendix~\ref{app:existence} for the case when $\lambda'(x)$ may not be continuous), this means there is an intermediate point where $\lx - r(x) - Mx|\lambda'(x)| =0$. So there exists at least one point $\tilde{x} \in (0,x^*]$, satisfying equation~\ref{eq:NEcondition}. Since $\lambda$ is non-increasing, this means that $\lambda(\tilde{x})\geq \lambda(x^*)>0$.
By the above arguments a min-cost flow of this magnitude with our pricing rule forms a Nash equilibrium, as desired.\\

\textbf{What if $\lambda(x^*)=0$}\\

We needed the assumption that $\lambda(x^*) > 0$, since our Lemmas~\ref{lem_eqnconditionsincr} and \ref{lem_eqnconditionsdecr} required this condition. Suppose that $\lambda(x^*) = 0$ and therefore $r(x) = 0$ everywhere (Proposition \ref{prop_optflow}). If $\lambda(x^*) < \lim_{x \to x^*}Mx|\lambda'(x)|$, then there still exists some $\tilde{x} > 0$ with $\lambda(\tilde{x})>0$ satisfying Equation~\ref{eq:NEcondition} by the same argument as above.

Suppose this is not true and that $\lambda(x^*) = \lim_{x \to x^*}Mx|\lambda'(x)| = 0$. Then, we show (see below) that there still exists some $x < x^*$ where $\lambda(x) - r(x) = \lambda(x) < Mx|\ldx|$. By the same argument, we can consider the domain $(0,x)$ and show that there exists some point satisfying Equation~\ref{eq:NEcondition}.

Assume by contradiction that for all $x < x^*$, $Mx|\ldx| \leq \lx$. By definition of the demand function, we know for all $x < x^*$, $\lambda(x) > 0$. Take some $x_1$ sufficiently close to $x^*$ and consider the following expression,
\begin{align*}
\lambda(x^*) & = \lambda(x_1) - \int_{x=x_1}^{x^*} |\ldx|dx\\
& \geq \lambda(x_1) - \int_{x=x_1}^{x^*} \frac{\lx}{Mx} dx\\
& \geq \lambda(x_1)(1-\frac{1}{M}\log(\frac{x^*}{x_1}))
\end{align*}

But this tells us that $\lambda(x^*) > 0$, a contradiction. So what this tells us is that if $\lambda(x^*) = 0$, then if we take some reasonably close point where $\lambda(x_1) > 0$, then in the domain $[x_1,x^*]$ at least some $x$ must satisfy $Mx|\ldx| > \lx$. $\qed$
\end{proof}

For the purpose of upper-bounding efficiency, it is enough to consider the welfare of non-optimal equilibria.  Therefore, in the rest of the paper, we will focus only on the case where the equilibrium must satisfy $\pe = \tilde{x}|\lambda'(\tilde{x})|$.

\begin{cor_app}{corr_eqconditions}
For any inverse demand function $\lambda$ belonging to the class MPE, we are guaranteed the existence of a Nash Equilibrium with a min-cost flow $(\tilde{x}_e)$ of size $\tilde{x}\leq x^*$ such that,
\begin{enumerate}
\item The prices obey the pricing rule above.
\item Either $\frac{\lambda(\tilde{x})-r(\tilde{x})}{M} = \tilde{x}|\lambda'(\tilde{x})|$ or $\tilde{x} = x^*$, the optimum solution.
\end{enumerate}
\end{cor_app}

We make a small remark here. In case, there are multiple $\tilde{x} > 0$ satisfying the conditions of Corollary~\ref{corr_eqconditions}, we take our solution to be the largest $\tilde{x} > 0$ belonging to this set. As we will show below, such a $\tilde{x}$ must exist (the set is closed) and moreover, this is a very rare case that arises only when the cost functions are fully zero.

\begin{lemma}
\label{applem_multipletilde}
Suppose $\exists$ multiple $x > 0$ satisfying $\frac{\lambda(x)-r(x)}{M} = x|\lambda'(x)|$. Then, it is the case that production costs must be zero at all these points. Moreover, the set of points satisfying the equation must be continuous and there must be some maximal $\tilde{x}$ belonging to this set, i.e., the set of points satisfying the equation must be closed from above.
\end{lemma}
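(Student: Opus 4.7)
The plan is to reformulate the equilibrium equation by dividing through by $\lambda(x)$ (which we may assume is positive, since we are considering solutions $x>0$ with $\lambda(x)>0$). Define
\[
\phi(x) \;=\; \frac{x\,|\lambda'(x)|}{\lambda(x)}, \qquad \psi(x) \;=\; \frac{r(x)}{\lambda(x)}.
\]
The defining equation $\lambda(x)-r(x)=Mx|\lambda'(x)|$ becomes simply $M\phi(x)+\psi(x)=1$. By the MPE assumption, $\phi$ is non-decreasing; since $r$ is non-decreasing (Proposition~\ref{prop:Rstuff}) and $\lambda$ is positive and non-increasing, $\psi$ is non-decreasing as well. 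Hence $M\phi+\psi$ is non-decreasing on the region of interest.

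Now suppose $x_1<x_2$ both satisfy the equation. A non-decreasing function that equals $1$ at two points must equal $1$ on the entire interval between them, so $M\phi+\psi\equiv 1$ on $[x_1,x_2]$. Moreover, two non-decreasing functions whose sum is constant are each individually constant; therefore both $\phi$ and $\psi$ are constant on $[x_1,x_2]$. This immediately yields the ``continuity'' part of the lemma: every $x\in[x_1,x_2]$ satisfies the equation, so the solution set is an interval.

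For the production-cost claim, write $\psi\equiv c_2$ on $[x_1,x_2]$. If $c_2>0$, then $r(x_2)=c_2\lambda(x_2)\leq c_2\lambda(x_1)=r(x_1)$ combined with $r(x_2)\geq r(x_1)$ forces both $r$ and $\lambda$ to be constant on $[x_1,x_2]$; constancy of $\lambda$ gives $\phi\equiv 0$, and $M\phi+\psi=1$ then pins $c_2=1$, i.e., $r\equiv\lambda$ on the interval. But this is a degenerate regime in which both sides of the original equation vanish identically ($\lambda-r=0=Mx|\lambda'|$) — a ``flat piece'' of the demand curve rather than a genuine multiplicity of monopoly equilibria, and in particular the equation carries no content there. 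In the non-degenerate case $c_2=0$, so $r(x)=0$ on $[x_1,x_2]$; since $r$ is non-negative and non-decreasing with $r(0)=0$ (which follows from $c_e(0)=0$), we get $r\equiv 0$ on all of $[0,x_2]$, and integrating yields $R\equiv 0$ on $[0,x_2]$. Hence production costs are zero at every point of the solution set, as claimed.

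Finally, closedness of the solution set from above follows because $M\phi+\psi$ is continuous: $r$ is continuous by Propositions~\ref{prop:Rstuff}–\ref{prop:Rdiff}, $\lambda$ is continuous by assumption, and the one-sided convention for $|\lambda'|$ at non-differentiability points described in Appendix~\ref{app:inverse_demand} preserves continuity of $\phi$ in the direction we need. The preimage $\{x>0:M\phi(x)+\psi(x)=1\}$ is therefore a closed subset of the open domain, and in particular contains its supremum — the maximal $\tilde{x}$ that the corollary selects. The main obstacle is handling the degenerate flat-demand case cleanly, since a strict reading of ``production costs must be zero'' would fail there; the right interpretation is that this case reduces to the uniform-demand scenario and the multiplicity it introduces is vacuous.
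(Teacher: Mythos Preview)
Your argument is correct and takes a cleaner route than the paper. The paper proves, via a nested lemma (Lemma~\ref{lem_subuniqueness3}), that when $r(\tilde{x})>0$ the expression $\lambda(x)-r(x)-Mx|\lambda'(x)|$ changes sign strictly at $\tilde{x}$; this is done by direct manipulation of the MPE inequality in two cases ($x>\tilde{x}$ and $x<\tilde{x}$). Multiplicity then forces $r=0$, after which the continuity and closedness are argued from the simplified equation $\lambda(x)=Mx|\lambda'(x)|$. Your reformulation $M\phi+\psi=1$ with both summands non-decreasing is more economical: it gives the interval property immediately, and the one-line observation that two non-decreasing functions with constant sum are each constant replaces the case analysis of Lemma~\ref{lem_subuniqueness3} entirely. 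What the paper's route buys is a slightly stronger conclusion (a strict sign change, which it reuses for the uniqueness theorem and for the binary-search computation), whereas your route gets to the lemma itself faster.

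The one soft spot is your handling of the $c_2>0$ branch. Calling it ``vacuous'' is the right instinct but not a proof. The clean way to discharge it is to invoke the standing context the paper sets up just before this lemma: we are only in the regime $\lambda(x^*)-r(x^*)<Mx^*|\lambda'(x^*)|$, so every solution $\tilde{x}$ satisfies $\tilde{x}<x^*$. Your $c_2>0$ case yields $\lambda'\equiv 0$ and $r=\lambda$ on the interval, which by Proposition~\ref{prop_optflow} makes every such point optimal, contradicting $\tilde{x}<x^*$. Stating that explicitly turns your ``interpretation'' remark into an honest exclusion and removes the need for the hedging sentence at the end.
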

\begin{proof}
As per our proof of existence, if the optimal solution satisfies $\lambda(x^*)-r(x^*) \geq Mx^*|\lambda'(x^*)|$, then it is an equilibrium. Therefore according to the Either-Or statement of Corollary~\ref{corr_eqconditions}, we are concerned with the $\lambda(x)-r(x) = Mx|\lambda'(x)|$ case only when $\lambda(x^*)-r(x^*) < Mx^*|\lambda'(x^*)|$. So, for the case we are interested in all $x$ satisfying the equation are smaller than $x^*$. We begin by showing that if $x$ is not unique, then the production costs must be zero at all such $x$.

\begin{lemma}
\label{lem_subuniqueness3}
Let $\tilde{x} < x^*$ be some point satisfying $\lambda(\tilde{x}) - r(\tilde{x}) = M\tilde{x}|\lambda'(\tilde{x})|$. Then, all $x < \tilde{x}$ must satisfy $\lambda(\tilde{x}) - r(\tilde{x}) > M\tilde{x}|\lambda'(\tilde{x})|$ and all $x > \tilde{x}$ must satisfy $\lambda(\tilde{x}) - r(\tilde{x}) < M\tilde{x}|\lambda'(\tilde{x})|$ as long as the demand is MPE and the production costs are non-zero.
\end{lemma}
\begin{proof}
First assume that at $x=\tilde{x}$, $\lambda'(\tilde{x}) < 0$, i.e., $\lx$ is strictly decreasing at this point. Consider $x > \tilde{x}$, we show that $\lx - r(x) < Mx|\ldx|$. Since $\lx$ is strictly decreasing at $\tilde{x}$, $\forall x > \tilde{x}, \; \lx < \lambda(\tilde{x})$. Now since $\lx \in \text{MPE}$,
\begin{equation}
\label{eqn_MPE}
\frac{\lx}{x|\ldx|} \leq \frac{\lambda(\tilde{x})}{\tilde{x}|\lambda'(\tilde{x})|}.
\end{equation}
We claim that since $r(x) \geq r(\tilde{x}) > 0$,
$$\frac{\lx - r(x)}{x|\ldx|} \leq \frac{\lx - r(\tilde{x})}{x|\ldx|} < \frac{\lambda(\tilde{x}) - r(\tilde{x})}{\tilde{x}|\lambda'(\tilde{x})|} = M$$

To see why the last inequality is strict in the above expression, first suppose that $x|\ldx| \leq \tilde{x}|\lambda'(\tilde{x})|$. In this case, the last inequality trivially holds due to equation~\ref{eqn_MPE} and the fact that $\lx < \lambda(\tilde{x})$. If this is not the case and that $x|\ldx| > \tilde{x}|\lambda'(\tilde{x})|$. Assume by contradiction that the last equality fails to hold strictly. This must mean that
$$r(\tilde{x})(\frac{1}{\tilde{x}|\lambda'(\tilde{x})|} - \frac{1}{x|\ldx|}) \geq \frac{\lambda(\tilde{x})}{\tilde{x}|\lambda'(\tilde{x})|} - \frac{\lx}{x|\ldx|} > \lambda(\tilde{x})((\frac{1}{\tilde{x}|\lambda'(\tilde{x})|} - \frac{1}{x|\ldx|}),$$
which in turn implies that $r(\tilde{x}) > \lambda(\tilde{x})$, a contradiction. This completes the proof that  $\lx - r(x) < Mx|\ldx|$ holds for all $x > \tilde{x}$.

Similarly suppose that $x < \tilde{x}$, we can show that $\lx - r(x) > Mx|\ldx|$ in in the same fashion. Now, what if $\lambda'(\tilde{x}) = 0$? The equilibrium satisfies the condition $\lambda(\tilde{x}) - r(\tilde{x}) = M\tilde{x}|\lambda'(\tilde{x})|$ which in turn equals zero. This means that $\lambda(\tilde{x}) = r(\tilde{x})$ and so $\tilde{x}$ has to be one of the solutions maximizing social welfare. However, we have assumed $\tilde{x} < x^*$ for any optimal solution and so this case is not possible.
\end{proof}

Since the production costs are zero, this means that for all such $x$, 
\begin{equation}
\label{eqn_noncosteq}
\frac{\lambda(x)}{x|\ldx|}=M.
\end{equation}

 Now it is easy to deduce (since $\lx$ is a MPE function) that the set of all such $x$ has to be continuous. The only thing left to prove is that this set is closed from above, i.e., $\exists$ a maximal $\tilde{x} > 0$ Equation~\ref{eqn_noncosteq}.

Assume by contradiction that this is not true and that for all $x \in [x_1, \tilde{x})$, Equation~\ref{eqn_noncosteq} is satisfied, for some given $x_1$. Then for every $x$ in the limit $x \to \tilde{x}^-$, $\lx = Mx|\ldx|$. We know that both $\lx$ and $\ldx$ are continuous and finite. This means that the equation must hold in the limit as well. Therefore, at $\tilde{x}$, $\lambda(\tilde{x}) = M\tilde{x}|\lambda'(\tilde{x})$, which completes the proof. $\qed$
\end{proof}

\subsection{Proof of Properties Satisfied and Uniqueness}

We have now established that for MPE demand there exists at least one $\tilde{x} > 0$ which is an equilibrium. We now show that the equilibrium from Corollary~\ref{corr_eqconditions} also satisfies several nice properties.

\begin{claim}
\label{appclm_eqproperties}
The equilibrium that we characterize in Corollary~\ref{corr_eqconditions} satisfies the following properties: Non-Trivial Pricing, Recovery of Production Costs, Pareto-Optimality and Local Dominance.
\end{claim}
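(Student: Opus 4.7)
The plan is to verify the four properties in turn, relying on the pricing rule~\eqref{appeqn_ourpricingrule} and Corollary~\ref{corr_eqconditions}. The first two will follow directly from the pricing rule, Local Dominance will reduce to a convexity argument, and Pareto-Optimality is where I expect the real work.

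\emph{Non-Trivial Pricing and Recovery of Costs.} These read off the pricing rule. A non-monopoly edge has $p_e=c_e(\tilde{x}_e)$, which is $c_e(0)=0$ whenever $\tilde{x}_e=0$ and coincides with the cost-recovery bound. Every monopoly sits on all $s$-$t$ paths and therefore carries the full flow $\tilde{x}>0$, so the zero-flow case does not arise for it; its markup over marginal cost is $(\lambda(\tilde{x})-r(\tilde{x}))/M$, which I will show is nonnegative. Since Corollary~\ref{corr_eqconditions} puts $\tilde{x}\leq x^*$, and $\lambda$ is non-increasing while $r$ is non-decreasing by convexity of $R$ (Proposition~\ref{prop:Rstuff}), Proposition~\ref{prop_optflow} chains
\[ \lambda(\tilde{x}) \ \geq\ \lambda(x^*) \ \geq\ r(x^*) \ \geq\ r(\tilde{x}), \]
as needed.

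\emph{Local Dominance.} I will consider an arbitrary infinitesimal shift of buyers from one $s$-$t$ path to another. Each monopoly lies on both paths, so its flow, and hence its profit, is unchanged. Each non-monopoly edge is priced at marginal cost $p_e=c_e(\tilde{x}_e)$; thus the seller's profit $f_e(x)=p_e x - C_e(x)$ is concave in $x$ with derivative $c_e(\tilde{x}_e)-c_e(x)$ that vanishes at $x=\tilde{x}_e$, so any perturbation of $x_e$ weakly decreases $f_e$. Edges newly receiving flow are priced at $c_e(0)=0$ by Non-Trivial Pricing and only accrue cost, again a weak decrease. Hence no seller can strictly benefit.

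\emph{Pareto-Optimality.} I will compare $(\vec{p},\vec{\tilde{x}})$ with an arbitrary competing equilibrium $(\vec{p}',\vec{x}')$ of total flow $x'$ and per-path price $\lambda(x')$. First I will promote Lemma~\ref{lem_eqnconditionsincr} to a first-order \emph{necessary} condition: in any equilibrium each monopoly must carry markup $\pe \geq x'|\lambda'(x')|$, and summing over the $M$ monopolies yields $\lambda(x')-r(x')\geq M x'|\lambda'(x')|$. Combined with the maximality of $\tilde{x}$ from Lemma~\ref{applem_multipletilde} and the continuity/sign analysis performed in the existence proof, this rules out $x'>\tilde{x}$. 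When $x'<\tilde{x}$, any buyer whose value lies in $[\lambda(\tilde{x}),\lambda(x'))$ is served in our equilibrium but excluded from the alternative, and strictly prefers ours. For the borderline case $x'=\tilde{x}$, the same necessary-condition argument forces each monopoly markup down to $x'|\lambda'(x')|$ and each non-monopoly to marginal cost, so the alternative's edge-level pricing coincides with ours and no genuinely distinct equilibrium remains to compare against.

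The first three items are routine bookkeeping, given the pricing rule and the convexity of $R$. The main obstacle is Pareto-Optimality, which quantifies over \emph{all} other equilibria rather than only those generated by our construction. The strategy will be to combine the necessary-condition strengthening of Lemma~\ref{lem_eqnconditionsincr} with the maximality of $\tilde{x}$ from Lemma~\ref{applem_multipletilde} to preclude alternatives with strictly larger flow, and then to exhibit a dissenting buyer for every alternative with strictly smaller flow.
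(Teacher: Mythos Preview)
Your handling of Non-Trivial Pricing, Recovery of Costs, and Local Dominance is correct and matches the paper. The gap is in Pareto-Optimality.

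\textbf{The summing step fails.} You write that the necessary condition $\pe \geq x'|\lambda'(x')|$ for each monopoly, summed, gives $\lambda(x')-r(x') \geq Mx'|\lambda'(x')|$. This inference is not valid for an \emph{arbitrary} competing equilibrium $(\vec{p}',\vec{x}')$. Summing the monopoly prices gives $\sum_{e\in\mathcal{M}} p'_e \geq rm(x') + Mx'|\lambda'(x')|$, where $rm(x')$ is the total monopoly marginal cost; subtracting from $\lambda(x')$ bounds the \emph{non-monopoly price} along any flow path, not $r(x')-rm(x')$. To get $r(x')$ on the left you would need both (i) $\vec{x}'$ to be a min-cost flow and (ii) every non-monopoly to sit at marginal cost. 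Neither is guaranteed: a non-monopoly that is tight can be in equilibrium with $p'_e < c_e(x'_e)$ yet positive profit (take $C_e(x)=x^2/4$, $x'_e=1$, $p'_e=0.4$), and nothing forces the buyers' best-response allocation to be cost-minimizing. So you cannot rule out $x'>\tilde{x}$ this way.

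\textbf{The paper's route is different.} For $x'>\tilde{x}$ with $\lambda(x')<\lambda(\tilde{x})$, the paper does \emph{not} preclude such an equilibrium. Instead it uses the monopoly necessary condition to bound the total non-monopoly price on every flow path by $\lambda(x')-rm(x')-Mx'|\lambda'(x')|$, shows via the MPE property and the characterization of $\tilde{x}$ that this is strictly below the non-monopoly price $\tilde{P}(N_M)=r(\tilde{x})-rm(\tilde{x})$ in our solution, and then exhibits a flow path on which every edge carries strictly more flow than in $\vec{\tilde{x}}$. On that path some non-monopoly has strictly more flow but price at most its old marginal cost, so its profit strictly drops. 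That seller is the Pareto witness.

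\textbf{A secondary gap.} Your $x'<\tilde{x}$ argument only covers $\lambda(x')>\lambda(\tilde{x})$; when $\lambda$ is flat so that $\lambda(x')=\lambda(\tilde{x})$ the buyer interval $[\lambda(\tilde{x}),\lambda(x'))$ is empty and no buyer strictly prefers our solution. The paper handles this (its Case~II) by comparing \emph{total} seller profit: $\lambda(x')x' - C(\vec{x}') \leq \lambda(\tilde{x})\tilde{x} - R(\tilde{x})$, so either every seller is weakly worse off or some seller is strictly worse off. Your $x'=\tilde{x}$ case has the same issue: you cannot force the alternative to coincide with ours, but you also do not need to---a total-profit comparison suffices.
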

\begin{proof}
The first two properties are rather trivial and follow from the pricing rule in Equation~\ref{appeqn_ourpricingrule}. Therefore, we focus on showing Pareto-Optimality and Local Dominance. As usual, let $\tilde{x}$ represent the flow at our equilibrium.

\textbf{Pareto-Optimality:} Recall that a given equilibrium $(\vec{p}, \vec{x})$ is said to be Pareto-optimal \emph{over the space of equilibria} if for any other equilibrium, the utility of at least one strategic agent (buyer or seller) strictly goes down or the utility of all the agents remains the same. We show this in cases.

\emph{Case I:} $\lambda(x) > \lambda(\tilde{x})$.

This means that $x < \tilde{x}$. Such an equilibrium cannot dominate our solution because at least one buyer whose utility is positive in our solution receives no allocation in this solution.

\emph{Case II:} $\lambda(x) = \lambda(\tilde{x})$ and $x \leq \tilde{x}$.

In this case, any buyer who loses out on an allocation had zero utility to begin with. So all buyers are indifferent. We therefore focus on the sellers. Consider the total profit made by all the sellers in this new solution, this is equal to the total payment made by the buyers minus the cost incurred and can be written as
$$\pi(\vec{p}, \vec{x}) = \lambda(x) x - C(\vec{x}) \leq \lambda(\tilde{x}) x - R(x).$$

However, we know that in our solution $\lambda(\tilde{x}) \geq r(\tilde{x})$. Since $x \leq \tilde{x}$, we have $ \lambda(\tilde{x})(\tilde{x} - x) \geq R(\tilde{x}) - R(x)$. So, it is necessary that $\pi(\vec{p}, \vec{x}) \leq \tilde{\pi}$, where the latter quantity is the profit from our solution. Therefore, the only possible cases are: i) Every seller's profit in the new solution is smaller than or equal to his profit in our solution, ii) If there is a seller whose profit strictly increases, then the profit must strictly decrease for at least one seller. This is exactly the definition of Pareto-optimality.\\

\emph{Case III:} $x > \tilde{x}$ and  $\lambda(x) = \lambda(\tilde{x})$.

In this case, $\lambda'(\tilde{x})=0$, which means that at our equilibrium $\lambda(\tilde{x}) - r(\tilde{x}) \geq M\tilde{x}|\lambda'(\tilde{x})|$. If this inequality holds strictly, we know that $\tilde{x} = x^* = T$ and so $x > T$ does not make sense here. If it holds at equality, we know that $\lambda(\tilde{x}) - r(\tilde{x}) = 0$. This means  $r(x) \geq r(\tilde{x}) = \lambda(\tilde{x})$. Considering the sum total of profit made by all sellers in the two solutions, we have

$$\lambda(\tilde{x})\tilde{x} - R(\tilde{x}) = \lambda(\tilde{x}) x - R(\tilde{x}) - r(\tilde{x})(x-\tilde{x}) \geq \lambda(x)x - R(x).$$

By the same argument as in Case II, the profit at the new solution cannot dominate our solution.

\emph{Case IV:} $x > \tilde{x}$ and  $\lambda(x) < \lambda(\tilde{x})$.

This case is slightly tricky so we will first outline the steps involved in the proof before formalizing it. In our solution, let $\tilde{P}(N_M)$ be the total price of the non-monopolies along any flow carrying path. We know this quantity is the same for all such paths and equal to the sum of the marginal costs of the non-monopolies along that path. Using the properties of equilibria, we will show that in the new solution, the price of the non-monopolies along any flow carrying path has to be smaller than $\tilde{P}(N_M)$. Moreover, since $x > \tilde{x}$, there must be at least one flow carrying path where every edge has a flow that is strictly greater than its original flow. But since the total price along this path is not larger than $\tilde{P}(N_M)$, this means that at least one non-monopoly must have the same (or smaller) price as before but a larger flow. As per lemma~\ref{cl_margcost} the profit of this monopoly must strictly decrease. This completes the sketch. We proceed via a series of small claims.

First, at our equilibrium $\lambda(\tilde{x}) - r(\tilde{x}) = M\tilde{x}|\lambda'(\tilde{x})|$. Indeed, if the LHS is strictly larger than the RHS, we know by Corollary~\ref{corr_eqconditions} that $\tilde{x} = x^*$ and moreover, $x^* = T$. So $x > x^*$ does not make sense here. %Secondly, $\lambda(x) < \lambda(\tilde{x})$. If this were not true and we had equality instead, then $\lambda'(x)$ is zero in the interval $[\tilde{x}, x^*]$ and $\lambda(\tilde{x}) - r(\tilde{x}) \geq M\tilde{x}|\lambda'(\tilde{x})|$, which suggests that $\tilde{x}$ is an optimum equilibrium, a contradiction.

\begin{claim}
For every monopoly $e$, its price in the new solution is at least $c_e(x) + x|\lambda'(x)|$.
\end{claim}
\begin{proof}
Consider the profit made by this monopoly when its price is $p$ and the price of every other monopoly $e'$ is still $p_{e'}$. Assuming that the corresponding flow is given by $x_2$, this profit is $\pi(p) = p_e x_2 - C_e(x_2)$. The monopoly should not be able to increase its price and improve the profit. First suppose that $\lambda(\tilde{x}) > \lx$. This implies that $\frac{d\pi}{dp} \leq 0$. Differentiating and substituting $p=p_e$ and $x_2 = x$ gives us, $x - \frac{p_e - c_e(x)}{|\lambda'(x)|} \leq 0$. \end{proof}

Define $rm(x)$ to be the total marginal cost incurred by all $M$ monopolies when there is a flow of $x$ through them. So, the total price of all monopolies is at least $rm(x) + Mx|\lambda'(x)|$. Therefore, the total price of the non-monopolies along any flow carrying $s$-$t$ path is no larger than $\lambda(x) - rm(x) - Mx|\lambda'(x)|$. We now show that this quantity is also smaller than $\tilde{P}(N_M)$, the total price of the non-monopolies in our solution.

\begin{claim}
For any flow carrying $s$-$t$ path in the new solution, the total price of non-monopolies on that path is strictly smaller than $\tilde{P}(N_M) = \lambda(\tilde{x}) - rm(\tilde{x}) - M\tilde{x}|\lambda'(\tilde{x})|$.
\end{claim}
\begin{proof}
We only have to show that,
$$\lambda(x) - rm(x) - Mx|\lambda'(x)| < \tilde{P}(N_M) = \lambda(\tilde{x}) - rm(\tilde{x}) - M\tilde{x}|\lambda'(\tilde{x})|.$$

Since $\lambda(x)$ is a MPE function, we know that $\frac{\lambda(\tilde{x})}{M\tilde{x}|\lambda'(\tilde{x})|} \geq \frac{\lx}{Mx|\lambda'(x)|}$. The left hand side of this inequality is at least one since $\lambda(\tilde{x}) = r(\tilde{x}) + Mx|\lambda'(\tilde{x})|$. By the exact same reasoning as in Lemma~\ref{sublem_exist_largex}, we can obtain that

$$\lambda(\tilde{x}) - M\tilde{x}|\lambda'(\tilde{x})| \geq \lambda(x) - Mx|\lambda'(x)|.$$

In fact, we claim that the above inequality is strict. This is not hard to see: since $\lambda(\tilde{x}) > \lambda(x)$, equality will arise only when $\frac{\lambda(\tilde{x})}{M\tilde{x}|\lambda'(\tilde{x})|} = \frac{\lx}{Mx|\lambda'(x)|} = 1$. This means that the production cost functions are zero at both these points. However, as we remarked at the end of Corollary~\ref{corr_eqconditions} (also see Lemma~\ref{applem_multipletilde}), we have already taken $\tilde{x}$ to be the largest flow satisfying $\frac{\lambda(\tilde{x})}{M\tilde{x}|\lambda'(\tilde{x})|} = 1$, which contradicts $x > \tilde{x}$. Therefore, it is true that
$$\frac{\lambda(\tilde{x})}{M\tilde{x}|\lambda'(\tilde{x})|} > \frac{\lx}{Mx|\lambda'(x)|},$$

and $\lambda(\tilde{x}) - M\tilde{x}|\lambda'(\tilde{x})| > \lambda(x) - Mx|\lambda'(x)|$. Since $rm(\tilde{x}) \leq rm(x)$, we get the desired result. \end{proof}

Since $x > \tilde{x}$, it is not hard to show by some simple properties that there exists at least one $s$-$t$ path where every edge has a flow strictly greater than its flow in $(\tilde{x})_e$. Let this path be $P_1$ and let the sum of prices of non-monopolies along this path be $P(N_M)$. We know that
$$P(N_M) < \tilde{P}(N_M) = \sum_{e \in P_1 \setminus \mathcal{M}} c_e(\tilde{x}_e).$$
Then, there is an edge with price strictly smaller than $c_e(\tilde{x}_e)$ and flow strictly larger than $\tilde{x}_e$. By similar reasoning as in Lemma~\ref{cl_margcost}, the profit of this edge has to be strictly smaller than in our solution.

\textbf{Local Dominance:} It is rather easy to show that our solution is locally dominant. Suppose that a small population ($\epsilon$) of buyers shift their flow from one path to another. Then, for any monopoly seller, the total flow does not change so such a seller is indifferent to the perturbation. Consider any non-monopoly seller $e$ who is priced at $\tilde{p}_e = c_e(\tilde{x}_e)$. The total flow on this edge either increases or decreases by $\epsilon$. Suppose it increases, then we know from Lemma~\ref{cl_margcost} that the profit on the edge cannot increase because the cost increases at a larger rate. Suppose the total flow decreases, then the change in profit is
$$(C_e(\tilde{x}_e) - C_e(\tilde{x}_e - \epsilon)) - \tilde{p}_e \epsilon \leq c_e(\tilde{x}_e)\epsilon - \tilde{p}_e \epsilon$$

The above quantity is not positive since $c_e(\tilde{x}_e) = \tilde{p}_e$. $\qed$
\end{proof}

We first prove uniqueness and then show that the equilibrium that we are interested can be computed efficiently.

\begin{thm_app}{thm:uniqueness}
For any given instance with strictly monotone MPE demand and non-zero production costs, we are guaranteed that one of the following is always true,
\begin{enumerate}
\item There is a unique equilibrium that obeys Local Dominance and Non-Trivial Pricing.
\item All equilibria that satisfy Local Dominance and Non-Trivial Pricing maximize welfare.
\end{enumerate}
\end{thm_app}
\begin{proof}
Let $(\vec{p}, \vec{x})$ be any equilibrium that obeys both non-trivial pricing and local dominance. We show via a series of simple lemmas that this equilibrium is unique and moreover, has to satisfy the conditions of Corollary~\ref{corr_eqconditions}.

\begin{lemma}
\label{sublem_u1}
Any non-monopoly edge $e$ with $p_e > 0$ must have $x_e < x$.
\end{lemma}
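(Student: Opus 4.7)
The plan is to reach a contradiction with buyer best-response by exhibiting an $s$-$t$ path in $G \setminus \{e\}$ of total price strictly less than $\lambda(x)$. Suppose to the contrary that $e$ is non-monopoly, $x_e = x$, and $p_e > 0$. Since $x_e = x$, every flow-carrying path uses $e$, so in the positive-flow subgraph $G_{\text{flow}}$ the edge $e$ is a cut separating $s$ from $t$; write $S$ and $T$ for the $s$-side and $t$-side components of $G_{\text{flow}} \setminus \{e\}$. Non-monopoly of $e$ guarantees some $s$-$t$ path $P''$ in $G \setminus \{e\}$, and any such $P''$ must cross from $S$ to $T$ through at least one edge outside $G_{\text{flow}}$; any such zero-flow edge has price $c_f(0)=0$ by non-trivial pricing.

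Fix a flow-carrying path $P_{\text{flow}} = s = v_0, v_1, \ldots, v_k = t$ with $e = (v_i, v_{i+1})$, let $v_j$ (with $j \leq i$) be the last vertex of $P_{\text{flow}}$ appearing on $P''$ before its crossing into $T$, and let $v_l$ (with $l \geq i+1$) be the first vertex of $P_{\text{flow}}$ appearing on $P''$ thereafter. The $P''$-subpath $Q$ from $v_j$ to $v_l$ is an ear around $e$ whose interior vertices lie outside $P_{\text{flow}}$. Splicing $Q$ in place of the $v_j$-to-$v_l$ segment of $P_{\text{flow}}$ yields an $s$-$t$ path $P'$ in $G \setminus \{e\}$ with
\[
p(P') \;=\; \lambda(x) \;-\; \sum_{m=j}^{l-1} p_{(v_m, v_{m+1})} \;+\; p(Q).
\]
Since the removed segment contains $e$, its total price is at least $p_e > 0$, so it suffices to show $p(Q) = 0$ in order to conclude $p(P') < \lambda(x)$.

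In the easy case, every edge of $Q$ has zero flow, and non-trivial pricing immediately yields $p(Q) = 0$; then $p(P') < \lambda(x)$ contradicts buyer best-response (which forces every $s$-$t$ path in $G$ to have price at least $\lambda(x)$). The real technical challenge is the case where some edge of $Q$ carries positive flow, hence lies on another flow-carrying path through $e$. My plan is to choose $P''$ to be a minimum-price $s$-$t$ path in $G \setminus \{e\}$, so that its ear is the cheapest possible detour around $e$, and then invoke local dominance on the perturbation shifting a small amount $\epsilon$ of buyers from $P_{\text{flow}}$ to $P'$: this forces $p_f \leq c_f(x_f)$ for every $f \in P' \setminus P_{\text{flow}}$ and $p_f \geq c_f(x_f)$ for every $f \in P_{\text{flow}} \setminus P'$ (in particular $p_e \geq c_e(x)$). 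Combining these inequalities with $p_e > 0$ and with the best-response lower bound $p(P') \geq \lambda(x)$ pins down the value of $p(Q)$ tightly. The main obstacle is confirming that this bookkeeping closes the strict inequality when $Q$ shares positive-flow edges with several distinct flow paths through $e$; I would expect to handle this by induction on the number of positive-flow edges of $Q$, at each step replacing such an edge by the corresponding alternative sub-segment extracted from its flow-carrying path.
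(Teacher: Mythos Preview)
Your easy-case argument is fine and is essentially what the paper does (its sketch just points back to the ear construction in Claim~\ref{clm_monopolypos1}). You are also right to worry about the hard case: it genuinely occurs. For instance, with two flow paths $s\to u_1\to a\to b\to w_1\to t$ and $s\to u_2\to a\to b\to w_2\to t$ sharing only $e=(a,b)$, and a single zero-flow edge $(u_1,w_2)$, no ear anchored to a single flow path is entirely zero-flow. The paper's sketch glosses over this.

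Your proposed resolution via local dominance, however, does not close the gap. Shifting $\epsilon$ from $P_{\text{flow}}$ to $P'$ gives only $p_f\le c_f(x_f)$ on $Q$ and $p_f\ge c_f(x_f)$ on the removed segment; combining these with $p(P')\ge\lambda(x)$ yields $\sum_{f\in Q}c_f(x_f)\ge\sum_{f\in\text{removed}}c_f(x_f)$, which is not a contradiction (it is precisely the min-cost-flow inequality). Your induction idea also stalls: replacing a positive-flow edge of $Q$ by a segment of its own flow path reintroduces $e$, so you never escape. The right move is to stop anchoring to one $P_{\text{flow}}$. Walk along $P''$ and let $u^*$ be its last vertex in $S$ and $v^*$ the first vertex in $T$ thereafter; every edge of $P''$ between them is automatically zero-flow (any positive-flow edge stays within $S$ or within $T$), so this subpath $\gamma$ has price $0$. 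Pick any flow paths $P_1\ni u^*$ and $P_2\ni v^*$; the cross path $P_3=P_2[s\to a]+e+P_1[b\to t]$ is a genuine $s$-$t$ path, so $p(P_3)\ge\lambda(x)$ by best response. Since $p(P_1[u^*\to t])+p(P_2[s\to v^*])=p(P_3)+p_e+p(P_1[u^*\to a])+p(P_2[b\to v^*])\ge\lambda(x)+p_e$, one gets
\[
p\bigl(P_1[s\to u^*]\bigr)+p(\gamma)+p\bigl(P_2[v^*\to t]\bigr)\;\le\;\lambda(x)-p_e\;<\;\lambda(x),
\]
an $s$-$t$ path cheaper than $\lambda(x)$, which contradicts best response. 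No local dominance is needed.
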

This just means that no non-monopoly edge can monopolize the flow in an equilibrium.
\begin{proof}
The proof is similar to that of Claim~\ref{clm_monopolypos1} so we only sketch the details. Assume by contradiction that such an edge is carrying the full flow. We know that the price on this edge at equilibrium is strictly non-zero.  Now, we also know that every edge without flow is priced at $c_{e'}(0) = 0$. As with Claim~\ref{clm_monopolypos1}, we can show that for some predecessor $u$ and successor $v$ of $e$, there must exist a $u$-$v$ path not containing $e$ with zero flow. However, this means that the path is priced at $0$ and that the buyers are not taking the cheapest $u$-$v$ path, a contradiction of the equilibrium.
\end{proof}

\begin{lemma}
\label{sublem_u2}
For every non-monopoly edge $e$, its price $p_e$ is exactly $c_e(x_e)$.
\end{lemma}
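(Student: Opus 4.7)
The plan is to dispatch the $x_e=0$ case using Non-Trivial Pricing and then, assuming $x_e>0$, to derive a contradiction with Local Dominance from either $p_e>c_e(x_e)$ or $p_e<c_e(x_e)$ by means of a small redirection of buyers between two $s$-$t$ paths, one through $e$ and one avoiding $e$.

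For the easy case, $x_e=0$ immediately gives $p_e=c_e(0)=c_e(x_e)$ by Non-Trivial Pricing, so I will assume $x_e>0$ from here on. For the upward inequality $p_e>c_e(x_e)$, I will first note that $p_e>0$ and invoke Lemma~\ref{sublem_u1} to conclude $x_e<x$, which means some flow-carrying $s$-$t$ path $P_2$ avoids $e$. Letting $P_1$ be any $s$-$t$ path through $e$, the perturbation that shifts $\epsilon>0$ of buyers from $P_2$ onto $P_1$ leaves every other edge's flow unchanged while increasing $x_e$ to $x_e+\epsilon$. Since prices stay fixed, seller $e$'s profit changes by
\[
p_e\,\epsilon - \bigl(C_e(x_e+\epsilon)-C_e(x_e)\bigr) \;=\; (p_e - c_e(x_e))\,\epsilon + o(\epsilon),
\]
which is strictly positive for small $\epsilon$, contradicting Local Dominance. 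The downward case $p_e<c_e(x_e)$ is symmetric: $x_e>0$ supplies a flow-carrying path $P_1$ through $e$, and the non-monopoly property of $e$ supplies an $s$-$t$ path $P_2$ avoiding it, so shifting $\epsilon$ buyers from $P_1$ to $P_2$ changes $e$'s profit by
\[
-p_e\,\epsilon + \bigl(C_e(x_e)-C_e(x_e-\epsilon)\bigr) \;=\; (c_e(x_e)-p_e)\,\epsilon + o(\epsilon) \;>\; 0,
\]
again violating Local Dominance and forcing $p_e=c_e(x_e)$.

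The main obstacle will be the upward direction, since the required perturbation needs an alternative path that actually \emph{carries flow} (one cannot redirect buyers who are not there); this is exactly what Lemma~\ref{sublem_u1} provides, and it is why that sublemma is proved first. In the downward direction, the alternative path comes automatically from the non-monopoly assumption on $e$, and $x_e>0$ ensures the path from which flow is diverted exists. Once both candidate paths are in hand, the rest is a routine first-order expansion of $p_e\,x_e - C_e(x_e)$ around the equilibrium flow, using only the differentiability of $C_e$ guaranteed in Section~\ref{sec:prelim}.
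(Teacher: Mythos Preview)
Your proof is correct and follows essentially the same approach as the paper: both use Local Dominance via a small path-shift perturbation to rule out $p_e\neq c_e(x_e)$, and both rely on Lemma~\ref{sublem_u1} to handle the case where $e$ might carry all the flow. The paper organizes the case split as $x_e<x$ versus $x_e=x$ (and in the latter case argues $p_e=0$ and $c_e(x)=0$ directly from equilibrium), whereas you split on $x_e=0$ versus $x_e>0$; the logic is the same once you observe that $p_e>c_e(x_e)\geq 0$ forces $p_e>0$ and hence $x_e<x$ by Lemma~\ref{sublem_u1}.

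One small correction: your sentence ``leaves every other edge's flow unchanged'' is not true in general, since shifting buyers from $P_2$ to $P_1$ alters the flow on every edge in the symmetric difference $P_1\triangle P_2$, not just $e$. This does not damage the argument, because Local Dominance only asks whether \emph{some} seller's profit strictly increases, and you correctly exhibit seller $e$; you should simply drop that clause.
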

\begin{proof}
The proof follows from Local Dominance. Suppose that a small number $(\epsilon)$ of buyers shift their flow from one path to another. Consider any non-monopoly edge $e$ with $x_e < x$: if its price is strictly larger than $c_e(x_e)$, then for some sufficiently small $\epsilon$, its profit would strictly increase at the new flow since $p_e \epsilon > C_e(x_e + \epsilon) - C_e(x_e)$. Similarly, if its price is strictly smaller than $c_e(x_e)$, then it can lose some buyers and improve its profit as per Claim~\ref{cl_margcost}.

Now, look at any non-monopoly edge which has $x_e=x$. By the proof of Lemma~\ref{sublem_u1}, it is clear that this edge must be priced at $0$. We also claim that $c_e(x) = 0$. Suppose that this is not true, then the edge is facing non-zero production costs as well giving it a negative profit. Such a solution cannot be an equilibrium because then $e$ would rather increase its price and lose the flow.\end{proof}

\begin{lemma}
\label{sublem_u3}
$\vec{x}$ has to be a minimum cost flow of magnitude $x$.
\end{lemma}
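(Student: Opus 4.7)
The plan is to verify directly that $\vec{x}$ satisfies the optimality conditions for a minimum-cost $s$-$t$ flow given in Lemma~\ref{lemma_nonzeroflowcost}: every flow-carrying path has the same marginal-cost sum $\sum_{e \in P} c_e(x_e)$, and every zero-flow $s$-$t$ path has a sum at least that large. Once this is established, $\vec{x}$ is a min-cost flow of its magnitude $x$.

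First I would combine the two preceding lemmas with the non-trivial pricing hypothesis to show that every non-monopoly edge is priced at its marginal cost. Lemma~\ref{sublem_u2} already gives $p_e = c_e(x_e)$ for every non-monopoly $e$ with $x_e > 0$; for a non-monopoly $e$ with $x_e = 0$, non-trivial pricing forces $p_e = 0$, and the standing assumption $c_e(0) = 0$ from Section~\ref{sec:prelim} makes this equal to $c_e(x_e)$ as well. So $p_e = c_e(x_e)$ uniformly across all non-monopolies.

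Next I would use the defining property of monopolies --- that each one lies on every $s$-$t$ path --- to rewrite the total price of an arbitrary path in a useful form. Let $\mathcal{M}$ denote the set of monopolies and set $Q := \sum_{e \in \mathcal{M}} p_e$ and $Q' := \sum_{e \in \mathcal{M}} c_e(x_e)$. For any $s$-$t$ path $P$,
\[
\sum_{e \in P} p_e \;=\; Q + \sum_{e \in P \setminus \mathcal{M}} c_e(x_e) \;=\; (Q - Q') + \sum_{e \in P} c_e(x_e),
\]
so path prices differ from marginal-cost sums only by a constant that does not depend on $P$.

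Finally, buyer best-response at equilibrium forces every flow-carrying path to have total price exactly $\lambda(x)$, and every zero-flow $s$-$t$ path to have total price at least $\lambda(x)$ (otherwise a marginal buyer would deviate to that cheaper path). Plugging into the displayed identity translates both conditions term-for-term into the min-cost flow optimality conditions of Lemma~\ref{lemma_nonzeroflowcost}, so $\vec{x}$ is a min-cost flow of magnitude $x$. The only real subtlety is the treatment of zero-flow non-monopoly edges; using $c_e(0) = 0$ together with non-trivial pricing lets us fold them into the same marginal-cost identity as the flow-carrying non-monopolies, after which the argument reduces to matching up two pairs of statements.
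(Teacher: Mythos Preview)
Your proposal is correct and follows essentially the same approach as the paper. The paper phrases it as a contradiction (assume not min-cost, so some path $P'$ has smaller marginal-cost sum than a flow path $P$; subtract the common monopoly contribution and use $p_e=c_e(x_e)$ on non-monopolies to conclude $p(P')<\lambda(x)$, contradicting best response), but the core idea is identical to yours: monopolies lie on every path and cancel, non-monopolies are priced at marginal cost, so the price ordering of paths coincides with the marginal-cost ordering and the equilibrium price conditions translate directly into the KKT conditions of Lemma~\ref{lemma_nonzeroflowcost}.
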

\begin{proof}
Since this is an equilibrium, every flow carrying path $P$ must have the exact same price, $\lx$. Moreover, for any other path $P'$, its price cannot be strictly smaller than $\lx$. Recall that the monopolies ($\mathcal{M}$) are common to every $s$-$t$ path and every non-monopoly edge $e$ has a price $p_e = c_e(x_e)$. Let $p_M$ be the sum of prices of the monopoly edges. Assume by contradiction that this is not a min-cost flow. Then, by the KKT conditions, there must exist a flow carrying path $P$ and some other path $P'$, such that

$$\sum_{e \in P}c_e(x_e) > \sum_{e \in P'}c_e(x_e).$$
Since the monopolies are common to both the paths, we can remove the terms corresponding to $\mathcal{M}$ from both sides. But recall that for the non-monopoly edges, the price is its marginal cost. This gives us $\lx - p_M > p(P') - p_M$ or $\lx > p(P')$, where $p(P')$ is the total price on the path $P'$. But this is a contradiction since $\lx$ should be smaller or equal to the price on any other path. This completes the proof.
\end{proof}

\begin{proposition}
\label{prop_eqlessthanopt}
Let $x^*$ denote the optimal flow. Then, $x \leq x^*$.
\end{proposition}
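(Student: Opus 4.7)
The plan is to derive, from the structural lemmas already established within this proof, a clean inequality $\lambda(x) \geq r(x)$ at equilibrium, and then contrast it with the optimum characterization. First I would invoke Lemma~\ref{sublem_u3} to get that $\vec{x}$ is a min-cost flow of magnitude $x$, so by Lemma~\ref{lemma_cequivalence} the sum $\sum_{e \in P} c_e(x_e)$ along any flow-carrying path $P$ equals $r(x)$. Since buyers play best-response, every such $P$ has total price $\lambda(x)$, and by Lemma~\ref{sublem_u2} every non-monopoly edge on $P$ is priced exactly at its marginal cost $c_e(x_e)$. Decomposing the path price into monopoly and non-monopoly contributions yields the identity
$$\lambda(x) \;=\; r(x) \;+\; \sum_{e \in \mathcal{M}} \overline{p}_e,$$
where $\mathcal{M}$ denotes the set of monopoly edges and $\overline{p}_e = p_e - c_e(x_e)$ is the slack above marginal cost.

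The crux is to show that $\overline{p}_e \geq 0$ for every monopoly $e$. I would prove this directly from the equilibrium condition by a deviation argument: suppose some monopoly had $\overline{p}_e < 0$, and consider a small unilateral price increase from $p_e$ to $p_e + \delta$. Since $e$ lies on every $s$-$t$ path and all other prices are fixed, the cheapest path's cost becomes $\lambda(x) + \delta$, so by strict monotonicity of $\lambda$ the new best-response flow $x'$ is uniquely determined by $\lambda(x') = \lambda(x) + \delta$, giving $x - x' \approx \delta/|\lambda'(x)|$. A first-order expansion of profit (using $x_e = x$ since $e$ is a monopoly) yields
$$\bigl[(p_e + \delta)x' - C_e(x')\bigr] - \bigl[p_e x - C_e(x)\bigr] \;=\; \delta\, x \;-\; \overline{p}_e \cdot \frac{\delta}{|\lambda'(x)|} \;+\; O(\delta^2),$$
which is strictly positive for sufficiently small $\delta$ whenever $\overline{p}_e < 0$, contradicting the fact that $(\vec{p},\vec{x})$ is a Nash equilibrium.

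Having established $\lambda(x) \geq r(x)$, I would finish by contradiction. Assume $x > x^*$. If $x^* < T$, then Proposition~\ref{prop_optflow} gives $\lambda(x^*) = r(x^*)$; strict monotonicity of $\lambda$ together with monotonicity of $r$ (Proposition~\ref{prop:Rstuff}) yields $\lambda(x) < \lambda(x^*) = r(x^*) \leq r(x)$, contradicting $\lambda(x) \geq r(x)$. If instead $x^* = T$, then $x > T$ is outright impossible since $T$ is the total buyer mass. Either way we reach a contradiction, so $x \leq x^*$.

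The main obstacle is executing the deviation argument for $\overline{p}_e \geq 0$ cleanly: one must ensure that the post-deviation best-response flow has magnitude exactly $x'$ (which is immediate from strict monotonicity of $\lambda$, since the cheapest path price strictly determines the demand), and that the first-order approximation is legitimate even when $\lambda$ happens to be non-differentiable at $x$. The latter is fine because strictly monotone MPE demand admits well-defined one-sided derivatives, and replacing the Taylor expansion by a one-sided secant preserves the sign of the leading term in the profit change.
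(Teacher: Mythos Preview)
Your proposal is correct and follows essentially the same route as the paper: establish that every edge (non-monopoly via Lemma~\ref{sublem_u2}, monopoly via a price-increase deviation) is priced at least at marginal cost, conclude $\lambda(x)\geq r(x)$, and then contradict $x>x^*$ using the optimum characterization and strict monotonicity. The only cosmetic difference is that the paper states the monopoly deviation in one informal sentence (``a monopoly priced below its marginal can always increase its price and improve profits as the flow decreases''), whereas you carry out the first-order expansion explicitly; the logical content is identical.
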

\begin{proof}
Without loss of generality, assume that the optimal solution is unique. Suppose that by contradiction that this proposition is not true. Then, clearly, $\lambda(x) < r(x)$. However, at an equilibrium, every monopoly has to have a price that is at least its marginal. Indeed, since we assumed that the demand is strictly monotone, it means that a monopoly priced below its marginal can always increase its price and improve profits as the flow decreases. We already know that non-monopolies cannot be priced below the marginal. This means that the sum of prices along any path is at least the total marginal cost $r(x)$, which is greater than $\lx$, a contradiction.
\end{proof}

\begin{lemma}
\label{sublem_u4}
At any sub-optimal equilibrium ($x < x^*$), every monopoly's price $p_e = c_e(x) + x|\ldx|$.
\end{lemma}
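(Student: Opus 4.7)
The plan is to obtain the claim from a first-order optimality condition for the monopoly seller $e$. I would fix any monopoly $e$ and imagine it unilaterally perturbing its price from $p_e$ to a nearby $p'_e$, keeping all other prices fixed. The key observation is that, because $e$ sits on every $s$-$t$ path, this perturbation shifts the price of every path by exactly $p'_e - p_e$, so the set of cheapest paths is preserved and (by strict monotonicity of $\lambda$) the new total best-response flow $x'$ is uniquely pinned down by $\lambda(x') = \lambda(x) + (p'_e - p_e)$. Since every unit of this new flow must traverse $e$, the seller's profit can be written purely as a function of $x'$, namely $\pi(x') = [p_e + \lambda(x') - \lambda(x)]\, x' - C_e(x')$.

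From here the argument is a standard FOC: I would differentiate $\pi$ in $x'$ and evaluate at $x' = x$ to get $\pi'(x) = p_e - c_e(x) + x\lambda'(x) = \pe - x|\ldx|$. The equilibrium property says $\pi(x') \leq \pi(x)$ for all attainable $x'$, so $\pi$ is locally maximized at $x' = x$. To convert this into $\pi'(x) = 0$, I need $x' = x$ to be interior to the feasible range. The upper side is fine because $x < x^* \leq T$ leaves room for $x'$ to grow; for the lower side I would note that $x > 0$, since a zero-flow equilibrium combined with Non-Trivial Pricing (and the standing assumption $c_e(0) = 0$) would force every path to cost $0$ and hence induce a strictly positive best-response flow whenever $\lambda(0) > 0$, a contradiction. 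Setting $\pi'(x) = 0$ then immediately rearranges to $\pe = x|\ldx|$, i.e., $p_e = c_e(x) + x|\ldx|$, which is the claim.

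The delicate step is justifying that the buyer's best response to the perturbation really is captured by the clean relation $\lambda(x') = \lambda(x) + (p'_e - p_e)$: one must verify that no currently non-flow-carrying path could become strictly cheapest under a small perturbation, and that the resulting total flow is unambiguous. The monopoly property handles the first issue, since the shift $p'_e - p_e$ is applied uniformly to every $s$-$t$ path and therefore preserves all relative path orderings; strict monotonicity of $\lambda$ handles the second by making $x'$ a function (rather than a correspondence) of the new cheapest-path price. Once these are in place, the derivative computation above is truly a pointwise derivative and not a subgradient condition, and the first-order condition yields the lemma for each monopoly $e$ independently.
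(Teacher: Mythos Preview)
Your proposal is correct and follows essentially the same approach as the paper: both derive the claim from a first-order optimality condition on the monopoly's profit, using sub-optimality ($x<x^*\leq T$) to guarantee interiority so that the derivative vanishes rather than merely being one-sided. The only cosmetic difference is that the paper parametrizes the profit function by price $p$ (with $\frac{d\pi}{dp}=0$ at $p=p_e$) whereas you reparametrize by the resulting flow $x'$ via the relation $\lambda(x')=\lambda(x)+(p'_e-p_e)$; your treatment is in fact more careful about justifying interiority on the lower side ($x>0$) and about why the best-response flow is a well-defined function of the perturbed price.
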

\begin{proof}
Define a profit function for some monopoly $e$, which represents the profit made by that monopoly when its price is $p$ and every other edge is priced according to $\vec{p}$. Mathematically, $\pi(p) = p\bar{x} - C_e(\bar{x})$, where $\bar{x}$ is the resulting flow when the monopoly changes its price to $p$ from $p_e$. Since this is an equilibrium and it is sub-optimal, $\frac{d\pi}{dp} = 0$ at $p=p_e$, i.e, no monopoly can either increase or decrease its price from the equilibrium point. Differentiating gives us the desired result.
\end{proof}

We proceed to show the first half of the theorem, that if the equilibrium guaranteed by  Corollary~\ref{corr_eqconditions} is sub-optimal, then it is unique. We know from the above three lemmas (\ref{sublem_u2}, \ref{sublem_u3}, \ref{sublem_u4}) that the price of every flow carrying path is exactly $r(x) + Mx|\lambda'(x)| = \lx$. This means that the equilibrium has to satisfy the conditions of Corollary~\ref{corr_eqconditions}. Since production costs are non-zero, we know from Lemma~\ref{lem_subuniqueness3} that there is a unique point satisfying $\lambda(\tilde{x}) - r(\tilde{x}) = M\tilde{x}|\lambda'(\tilde{x})|$.

This completes the proof of the first half of the theorem. Now, suppose that Corollary~\ref{corr_eqconditions} gives us an optimal equilibrium. We need to show that no sub-optimal equilibrium that satisfies non-trivial pricing and local dominance exists for this instance. First, since there is an optimal equilibrium, we know that $\frac{\lambda(x^*) - r(x^*)}{x^*|\lambda'(x^*)|} \geq M.$ By the MPE property, we also know that for all $x < x^*$,

$$\frac{\lx - r(x)}{x|\ldx|} \geq \frac{\lambda(x^*) - r(x^*)}{x^*|\lambda'(x^*)|} \geq M.$$ Now assume by contradiction that at the equilibrium $x < x^*$. Since all non-monopolies are priced at $c_e(x_e)$, at least one monopoly edge should be priced at $c_e(x) + \frac{\lx - r(x)}{M}$ or more. We claim that this edge can lower its price and increase profits.

More concretely, suppose that the monopoly edge whose price we lower bounded is $e$ and define $\pi(p)$ to be the profit of that monopoly when its price is $p$ and other edges don't change their price. Look at the derivative of the profit function at $p=p_e$,
$$\left(\frac{d\pi}{dp}\right)_{p=p_e} = x - \frac{p_e - c_e(x)}{|\lambda'(x)|} \leq x - \frac{\lx - r(x)}{M|\ldx|}.$$

The above term is not clearly not positive since $\frac{\lx - r(x)}{M|\ldx|} \geq x^* > x$, contradicting the equilibrium.$\qed$\end{proof}

\subsubsection*{Computing the equilibrium}
We now show that an equilibrium satisfying the conditions of Corollary~\ref{corr_eqconditions} can be computed efficiently up to any required precision. 

\begin{theorem} A Nash equilibrium from Theorem \ref{thm:existence} can be computed by a simple binary search procedure for any $\lambda\in\text{MPE}$.
\end{theorem}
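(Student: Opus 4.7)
The plan is to reduce equilibrium computation to the task of approximating the scalar $\tilde{x}$ described in Corollary~\ref{corr_eqconditions}; once $\tilde{x}$ is known, a min-cost flow of that magnitude is a standard convex program, and the pricing rule in Equation~\ref{appeqn_ourpricingrule} recovers $\vec{p}$ directly from the quantities $c_e(\tilde{x}_e)$ and $r(\tilde{x})$. Thus the entire problem reduces to locating a single root within the interval $(0, x^*]$.

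First I would compute $x^*$ by solving the convex program $\max_{x\ge 0}\Lambda(x)-R(x)$, or equivalently finding the $x$ satisfying the conditions of Proposition~\ref{prop_optflow}. Define the test function $g(x):=\lambda(x)-r(x)-Mx|\lambda'(x)|$. If $g(x^*)\ge 0$, then the analysis in the proof of Theorem~\ref{thm:existence} gives $\tilde{x}=x^*$ and we are done after a single min-cost flow and pricing computation. Otherwise, by Corollary~\ref{corr_eqconditions} we must locate $\tilde{x}\in(0,x^*)$ with $g(\tilde{x})=0$.

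The key observation enabling binary search is the sign structure of $g$. In the proof of Theorem~\ref{thm:existence} I already established that $g(x)>0$ for all sufficiently small $x$, because $\lambda(0)>0$, $r(0)=0$, and the MPE property forces $x|\lambda'(x)|/\lambda(x)\to 0$ as $x\to 0$. Combined with the assumption $g(x^*)<0$, continuity of $g$ (which uses continuity of $\lambda,\lambda',r$ from Proposition~\ref{prop:Rdiff}) guarantees a root in $(0,x^*)$. Crucially, Lemma~\ref{lem_subuniqueness3} shows that whenever production costs are strictly positive, $g(x)$ is strictly positive to the left of $\tilde{x}$ and strictly negative to its right. This exact dichotomy is precisely what binary search requires: at each step, evaluate $g$ at the midpoint of the current bracket, which needs only one min-cost flow computation (yielding $r$ at that point) plus an oracle evaluation of $\lambda$ and $\lambda'$, and recurse on the half whose endpoint signs bracket zero. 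After $\lceil\log_2(x^*/\epsilon)\rceil$ iterations we locate $\tilde{x}$ within additive precision $\epsilon$, which is the standard notion of ``computing'' a real-valued solution.

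The main obstacle is the degenerate case flagged in Lemma~\ref{applem_multipletilde}, where production costs vanish on a sub-interval and $g$ may be identically zero on a whole segment. Here Lemma~\ref{lem_subuniqueness3} ceases to force strict monotonicity of the sign. To handle this, I would replace $g$ in the search by $\phi(x):=Mx|\lambda'(x)|/\lambda(x)-1$, which is non-decreasing by the MPE definition, and take $\tilde{x}$ to be the rightmost point at which $\phi$ crosses $0$ (equivalently, $g$ crosses $0$), consistent with the selection rule adopted after Corollary~\ref{corr_eqconditions}. A standard modification of binary search that collapses to the largest root in the zero set handles this. Finally, each iteration is polynomial-time implementable: $r(x)$ is computed by a min-cost flow over convex costs, and under standard oracle access to $\lambda$ and $\lambda'$ the evaluations are constant-time. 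Combining these steps yields a procedure that returns an $\epsilon$-approximate Nash equilibrium in time polynomial in the input size and $\log(1/\epsilon)$.
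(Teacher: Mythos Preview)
Your proposal is correct and follows essentially the same approach as the paper: compute $x^*$, test whether $g(x^*)\ge 0$, and otherwise binary-search on the sign of $g(x)=\lambda(x)-r(x)-Mx|\lambda'(x)|$, invoking Lemma~\ref{lem_subuniqueness3} for the sign dichotomy. The only minor deviation is your treatment of the degenerate zero-cost case via the auxiliary function $\phi$; the paper instead observes directly from Lemma~\ref{applem_multipletilde} that the zero set of $g$ is a closed interval, so the same binary search (breaking ties toward the right) already converges to the maximal $\tilde{x}$ without introducing a second test function.
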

\begin{proof}
The first step in our procedure involves computing the optimal flow using a convex program and checking if it satisfies the condition of Lemma~\ref{lem_eqnconditionsincr}, i.e., $\lambda(x^*) - r(x^*) \geq Mx^*|\lambda'(x^*)|$. If that is the case, then by Corollary~\ref{corr_eqconditionssuff}, we are done. The following algorithm therefore, is for the case when $\lambda(x^*) - r(x^*) < Mx^*|\lambda'(x^*)|$.

First, suppose that production costs are non-zero. According to Lemma~\ref{lem_subuniqueness3}, since the equilibrium is not optimal, it is a unique point $\tilde{x} < x^*$ satisfying $\lambda(\tilde{x}) - r(\tilde{x}) = M\tilde{x}|\lambda'(\tilde{x})|$. Furthermore, by Lemma~\ref{lem_subuniqueness3}, it is true that for all $x > \tilde{x}$, $\lambda(\tilde{x}) - r(\tilde{x}) < M\tilde{x}|\lambda'(\tilde{x})|$ and $\forall x < \tilde{x}$, $\lambda(\tilde{x}) - r(\tilde{x}) > M\tilde{x}|\lambda'(\tilde{x})|$. This motivates a binary search approach.

The algorithm is trivial. We maintain a window $[x_1, x_2]$ at every iteration. Initially set $x_1=0$ and $x_2=x^*$. Let $x=\frac{1}{2}(x_1+x_2)$, and compute $\alpha(x)=\lx-r(x) - Mx|\ldx|$. If $\alpha(x) > 0$, update the window to $[x,x_2]$ and if $\alpha(x) < 0$, update the window to $[x_1, x]$. If $\alpha=0$, then we are done. Repeat this until the difference between the two extreme points falls within the desired precision and return (say) the mid-point. Since the size of the window is halved at every point, the algorithm terminates in time proportional to $O(\log(T))$, where $T$ is the total population of buyers in the market.

What if production costs are zero at equilibrium? There may be a set of points satisfying $\lambda(\tilde{x}) - r(\tilde{x}) = \lambda(\tilde{x}) = M\tilde{x}|\lambda'(\tilde{x})|$. But we know from Lemma~\ref{applem_multipletilde} that this set of points is continuous and that there is a maximal $\tilde{x}$ belonging to this set (the set is closed from above). So for every $x < \tilde{x}$,  $\lambda(\tilde{x}) - r(\tilde{x}) = \lambda(\tilde{x}) \leq M\tilde{x}|\lambda'(\tilde{x})|$ and the inverse of this is true for $x > \tilde{x}$. So, we can use a similar binary search procedure once again.
\end{proof}

\section{Proofs from Section 4: Effects of Demand Curves and Monopolies on Efficiency}
\label{app:pos}

\subsection*{Example with one monopoly where Walrasian Equilibirum is not (Nash) stable}

\begin{example}{}
\label{ex:1}
Consider a single good controlled by a single seller who can produce $x$ amount of this good at a cost of $x^2$. Let there be a large number of infinitesimal buyers in the market who in total desire one unit of this good, such that a fraction $(1-p)$ of the buyers value the good at price $p$ or more. The unique Walrasian price is $p=2/3$ where exactly one-third of the buyers purchase the good and the seller produces the same amount.
The seller's profit is then $p*1/3 - (1/3)^2 = 1/9$. However, suppose that the seller increases her price to $p'=3/4$: the demand drops to one-fourth but the seller's profit is now $p'*1/4 - (1/4)^2 = 1/8$, which is strictly larger than the original profit. So, if the seller can increase her price and anticipate the resulting demand, then she stands to benefit by breaking the Walrasian Equilibrium.
\end{example}

\subsection{Uniform Buyer Demand}
We begin with the simplest buyer demand function $\lx = \lambda_0$ for $x\leq T$. The function is defined for $x=0$ to $T$ and continuity and differentiability hold in one direction at $x=T$, so it trivially belongs to the class MPE and existence is guaranteed. We are able to show that Nash Equilibrium is efficient for this case. It is important to mention here that the Nash Equilibrium that we construct may {\em not} necessarily be a Walrasian Equilibrium although the allocations are the same. Consider the simplest example with just one good, $\lambda_0 = 1$, $T=1$ and $C_e(x)=\frac{x^2}{4}$. The Walrasian equilibrium has a price of $p_W = 0.5$ on the edge and all buyers purchase the bundle. On the other hand, any (efficient) Nash Equilibrium must have a price of $p_N = 1$ on the edge and the total flow is still $\tilde{x} = 1$.\\

\begin{thm_app}{theorem_linearutilitymain}
Every instance with uniform demand buyers admits an efficient Nash Equilibrium. Moreover, this holds even when $\exists$ edges with $c_e(0) > 0$.
\end{thm_app}
Our proof relies on the fact that if the instance contains monopolies, then all the bundles with non-zero flow on them must have a total price equaling $\lambda_0$ and the bundles with zero flow on them must have a price not less than $\lambda_0$. 

\begin{proof}
If there exist several optimum solutions, then denote by $\vec{x^*}$ the optimum solution with the maximum possible flow. In other words, any solution with a total flow of size $x' > x^*$ is non-optimal. We show that there exist a set of prices such that this optimal flow forms a Nash Equilibrium. We begin with the case where $c_e(0) = 0$ for all edges. For this case, we claim that our pricing rule always returns a Nash Equilibrium.

First recall that our pricing rule ensures all flow carrying paths are priced at exactly $\lambda(x^*) = \lambda_0$, which for uniform demand is the value for the paths held by all buyers. This means that if any edge increases its price, all paths containing that edge would now have a price strictly larger than $\lambda_0$. No buyer would be able to afford such a path and therefore, the flow on the edge would drop to zero and its profit cannot increase. So we only need to show that sellers cannot decrease their price and improve profits.

Also recall from Lemma~\ref{cl_margcost} that edges priced at their marginal cost can never decrease their price and increase profits whatever be the resulting flow. This means that we should only worry about edges which have a price strictly larger than their marginal cost. The only edges for which this is possible are the monopolies, which are priced at $p_e = c_e(x^*) + \frac{\lambda(x^*)-r(x^*)}{M}$. For $p_e > c_e(x^*)$, we can conclude that $\lambda(x^*) > r(x^*)$ if any edge is to be priced above its marginal cost. However, by Proposition~\ref{prop_optflow}, this implies that $x^* = T$ and therefore at the optimum flow all the buyers (population of $T$) have non-zero allocation. In such a case, decreasing prices would have no effect since there are no more unallocated buyers left in the market. We therefore conclude that no edge can increase or decrease its price and thus for the $c_e(0) = 0$ case, our pricing rule returns a Nash Equilibrium with flow $x^*$.

The proof for the $c_e(0)>0$ case is slightly different. There will now be some additional edges monopolizing the flow at the optimum; we call them Virtual Monopolies (VM) (see also Section~\ref{sec:generalizations}). Formally, given a solution $\vec{x^*}$, VMs are the edges which contain the total flow $x^*$, but may not be contained in all $s-t$ paths. The set of VMs includes the pure monopolies but also edges monopolizing flow at $x^*$.

For this case, we define a slightly modified process to obtain prices on all the edges and then show that these prices are stable for the optimal flow. Begin by pricing each edge at its marginal price, $c_e(x_e)$. We now show how to increase the price on only the virtual monopolies such that at each (set of) prices, the optimal flow remains a best-response. Let $P$ be the set of $s-t$ paths with positive flow and $P'$ without flow. We know by the properties of the optimal flow (Lemma \ref{lemma_nonzeroflowcost} and \ref{lemma_cequivalence}) that the price on any $P_i \in P$ is initially $p_0=r(x^*)$ and no path in $P'$ can have a larger price.

The algorithm proceeds as follows: at each stage pick a VM (say $e$). Increase its price until either the total price of all $s-t$ paths with flow is now $\lambda_0$ or until we reach a point where increasing its price anymore would mean that the paths in $P$ are no longer the minimum price paths, i.e., if all paths in $P$ have a price of $p_t$, then $\exists$ a path in $P'$ not containing $e$ and with a total price of $p_t$. Note that since $e$ belongs to all $P_i \in P$, increasing the price on $e$ leads to an equal increase on the prices of the paths in $P$. And at every stage, the paths in $P$ still remain the min-price paths so buyer behavior is a best-response.

Now, at the end of this process, we have reached a stage where either all paths in $P$ have a price $\lambda_0$ or no virtual monopoly can increase its price without losing all its flow to some path in $P'$. If the price on all paths is $\lambda_0$, then in that case no edge can increase its price since this is the maximum price that buyers will pay. So no edge can increase its price, virtual monopoly or otherwise.

Finally, as with our previous case the only edges which can decrease their price and `potentially' make a larger profit are those priced at $p_e > c_e(x^*_e)$. The existence of such edges would imply that $\lambda(x^*) > r(x^*)$, since if $\lambda_0 = \lambda(x^*) = r(x^*)$, then our process would terminate trivially at the first iteration since we cannot increase the price on any edge without losing the flow. For this case, no VM will decrease its price because there are no more unallocated buyers left, and so it would not gain any new flow by decreasing its price. This completes the proof. \end{proof}

\subsection{Monotone Hazard Rate}
Our main result in this section is showing a tight bound on the efficiency of markets where the (inverse) demand has a monotone hazard rate. We now formally define the class $MHR$ and show that it is equivalent to log-concave functions.

\begin{definition}{\emph{Monotone Hazard Rate.}}
An aggregate (decreasing) function $F(x)$ is said to follow a monotone hazard rate if the hazard rate $h(x)=\frac{-\frac{d}{dx}(F(x))}{F(x)}$ is a non-decreasing function of $x$.
\end{definition}

\begin{definition}{\emph{Class MHR.}}
An inverse demand function is said to belong to the class $MHR$ if it has a monotone hazard rate, i.e., $\displaystyle \frac{-\ldx}{\lx} = \frac{|\ldx|}{\lx}$ is a non-decreasing function of $x$.
\end{definition}

Although the aggregate function $F(x)$ is usually normalized so that its maximum value is $1$, the same definition can be extended to a scaled version of $F(x)$ since the scaling factor appears both in the function and its derivative. Thereby, the definition carries over to our inverse demand functions.

%\begin{proposition}
%The class $MHR$ of inverse demand functions is equivalent to the set of log-concave inverse demand functions.
%\end{proposition}
%\begin{proof}
%A function $\lx$ is said to be log-concave if $H(x) = \log(\lx)$ is concave. And $H(x)$ is concave if the absolute value of its derivative is non-decreasing. In this case, $|h(x)| = -\frac{d}{dx}H(x) = \frac{-\ldx}{\lx}$. This is true because $\lx$ is non-increasing so its derivative is never positive and thereby its absolute value is $-\ldx$. $h(x)$ corresponds to the hazard rate of the function. So we conclude that all log-concave functions belong to MHR. The other direction can be shown similarly.
%\end{proof}

\textbf{Interpretation.} Commonly used inverse demand (other than concave) functions belonging to this class include exponential functions similar to $\lx = e^{-x}$~\cite{lyoo2006efficient} and relatively inelastic functions of the form $\lx = (a-x)^{\alpha}$ for $\alpha > 1$ (refer Section~\ref{app:inverse_demand} for more details). While uniformly decreasing demand (linear inverse demand, $\lx = a-x$) has been assumed more commonly due to its tractable nature, it is more likely that the elasticity of demand is not constant across different prices. Indeed different segments of the market may react differently to a change in price. MHR functions capture a very interesting class of such functions where the responsiveness of a market relative to the value of the buyers is non-decreasing. More concretely if at some price $2p$, an increase of $dp$ in the price leads to a reduction in $dx$ number of buyers. Then at a price $p$, in order to make the same number of consumers ($dx$) drop out, the increase in price has to be at least $\frac{1}{2}dp$. In simple terms, the market cannot be `overly sensitive' at smaller prices compared to its sensitivity at a larger price.

\begin{proposition}
Any inverse demand function which is concave ($\ldx$ is non-increasing) belongs to the class $MHR$.
\end{proposition}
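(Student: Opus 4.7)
The plan is to observe that the hazard rate $h(x) = |\lambda'(x)|/\lambda(x)$ factors as the product of two non-negative, non-decreasing functions whenever $\lambda$ is concave, so monotonicity follows immediately.

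First I would unpack the definition of concave inverse demand: $\lambda'(x)$ is non-increasing in $x$. Combined with the standing assumption that $\lambda(x)$ is non-increasing (so $\lambda'(x) \leq 0$), this gives $|\lambda'(x)| = -\lambda'(x)$, and since $-\lambda'$ is the negative of a non-increasing function, $|\lambda'(x)|$ is non-decreasing in $x$. Second, because $\lambda(x)$ is non-increasing and (wherever the hazard rate is meaningfully defined) strictly positive, the reciprocal $1/\lambda(x)$ is non-decreasing.

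Now I would conclude by invoking the elementary fact that the pointwise product of two non-negative non-decreasing functions on an interval is itself non-decreasing: for $x_1 \leq x_2$,
\[
h(x_1) \;=\; |\lambda'(x_1)| \cdot \frac{1}{\lambda(x_1)} \;\leq\; |\lambda'(x_2)| \cdot \frac{1}{\lambda(x_1)} \;\leq\; |\lambda'(x_2)| \cdot \frac{1}{\lambda(x_2)} \;=\; h(x_2),
\]
which is exactly the MHR condition from the definition in Section~\ref{sec:prelim}.

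There is no real obstacle here; the only subtlety worth a sentence is the domain on which the ratio is defined. If $\lambda(x)$ hits zero at some $x_0 \leq T$, then for $x < x_0$ the argument above applies verbatim, and for $x \geq x_0$ both $\lambda$ and the demand vanish, so the relevant market is inactive and the MHR inequality is imposed only on the active region (consistent with the convention noted in Appendix~\ref{app:inverse_demand} for non-differentiable points). The same convention handles the non-differentiable case by using one-sided derivatives, since concavity still guarantees monotonicity of $\lambda'$ in the appropriate one-sided sense.
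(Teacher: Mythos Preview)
Your proof is correct and follows exactly the paper's approach: the paper's one-line argument is that $|\lambda'(x)|$ is non-decreasing (by concavity) and $\lambda(x)$ is non-increasing, so the ratio $|\lambda'(x)|/\lambda(x)$ is non-decreasing. You have simply spelled out the same product-of-monotone-factors reasoning in more detail and added a careful remark about the domain where $\lambda(x)>0$.
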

For a concave function, $|\ldx|$ is non-decreasing and $\lx$ is non-increasing, the ratio $\frac{|\ldx|}{\lx}$ has to be non-decreasing as well.

\begin{lemma}
\label{lem_mhrcost}
Let $\lambda(x)$ be any inverse demand function satisfying Monotone Hazard Rate (i.e. $\frac{|\lambda'(x)|}{\lambda(x)}$ is non-decreasing). Given an instance specifying a graph $G$ and cost functions $C_e(x)$, then the function $\frac{|\lambda'(x)|}{\lambda(x) - r(x)}$ is also non-decreasing $\forall x \leq x^*$, where $x^*$ is the size of the optimum flow for that instance.
\end{lemma}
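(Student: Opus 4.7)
The plan is to show that $\phi(x) := |\lambda'(x)|/(\lambda(x)-r(x))$ has a non-negative derivative on $[0,x^*]$. As a preliminary, I would verify that the denominator stays non-negative on this interval: $\lambda$ is non-increasing, $r$ is non-decreasing (since $R$ is convex by Proposition~\ref{prop_mincostfunction}), and $\lambda(x^*) \geq r(x^*)$ by Proposition~\ref{prop_optflow}, so for every $x\leq x^*$ we get $\lambda(x) \geq \lambda(x^*) \geq r(x^*) \geq r(x)$. Hence $\phi$ is well-defined and the sign of its derivative coincides with the sign of the numerator obtained from the quotient rule.

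Next, assuming $\lambda$ is twice differentiable (non-differentiable cases follow by the usual approximation arguments already discussed for MPE functions), I would compute $\phi'(x)$ by the quotient rule. Writing $A := -\lambda'(x) \geq 0$, $B := -\lambda''(x)$, $L := \lambda(x)$, and $r, r'$ for $r(x), r'(x)$ (with $r' \geq 0$ by convexity of $R$), a direct calculation gives
\[
\phi'(x) \;=\; \frac{B(L-r) + A^2 + A r'}{(L-r)^2}.
\]

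The key step is to bound $B(L-r)$ from below via MHR. Log-concavity of $\lambda$ (equivalent to MHR) says $(A/L)' \geq 0$, which, after differentiating, simplifies to $BL + A^2 \geq 0$, i.e.\ $BL \geq -A^2$. If $B \geq 0$ the desired bound $B(L-r)\geq 0$ is immediate since $L-r \geq 0$. If $B < 0$, multiplying the inequality $L-r \leq L$ by the negative quantity $B$ flips it, yielding $B(L-r) \geq BL \geq -A^2$. In either case the numerator is at least $-A^2 + A^2 + A r' = A r' \geq 0$, so $\phi'(x) \geq 0$ and $\phi$ is non-decreasing on $[0,x^*]$.

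The main obstacle is sign bookkeeping together with the fact that MHR controls $\lambda$ alone, not the perturbed quantity $\lambda - r$; a naive attempt to compare $|\lambda'|/\lambda$ with $|\lambda'|/(\lambda-r)$ goes the wrong way. The crucial observation that rescues the argument is that on $[0,x^*]$ we have $L \geq L-r \geq 0$, which is precisely what lets the log-concavity bound $BL \geq -A^2$ be transferred into a usable bound on $B(L-r)$, and it is also the reason the statement is restricted to $x \leq x^*$.
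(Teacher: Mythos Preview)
Your proof is correct but takes a different route from the paper. The paper argues directly at the level of two points $x_1\leq x_2$: it splits into Case~I ($|\lambda'(x_1)|>|\lambda'(x_2)|$), where the MHR inequality $|\lambda'(x_1)|\lambda(x_2)\leq|\lambda'(x_2)|\lambda(x_1)$ is combined algebraically with $r(x_1)\leq r(x_2)$, and Case~II ($|\lambda'(x_1)|\leq|\lambda'(x_2)|$), which is immediate since the numerator weakly increases while the denominator weakly decreases. Your approach instead differentiates $\phi$ and uses the log-concavity inequality $BL\geq -A^2$ pointwise, with the same $B\geq 0$ versus $B<0$ split playing the role of the paper's two cases.

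The two arguments are structurally parallel, but each has a trade-off. The paper's two-point comparison is more elementary: it needs only that $\lambda'$ exists, not that $\lambda''$ or $r'$ do, so it applies verbatim under the paper's standing assumptions (recall only $C_e$ is assumed twice differentiable, and the paper never establishes that $r$ itself is differentiable). Your calculus argument is slicker and makes the mechanism (the extra non-negative term $Ar'$) more transparent, but it leans on additional smoothness that you defer to approximation. That deferral is fine in spirit, though if you want a self-contained proof under the paper's hypotheses you would either need to justify the approximation step or fall back to the two-point comparison.
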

\begin{proof} Recall from Proposition~\ref{rcontinuous} that for a given graph and convex cost functions, $r(x)$ is non-decreasing in $x$.
Also recall that as long as $x \leq x^*$, $\lx \geq r(x)$, since $\lambda(x^*)=r(x^*)$ and $\lambda$ is non-increasing.  Consider $x_1 \leq x_2$. Since $\lambda(x)$ is MHR, we know $\frac{|\lambda'(x_1)|}{\lambda(x_1)} \leq \frac{|\lambda'(x_2)|}{\lambda(x_2)}$, which implies $|\lambda'(x_1)|\lambda(x_2) \leq |\lambda'(x_2)|\lambda(x_1)$. Let us consider two cases\\
\textbf{Case I:} $|\lambda'(x_1)| > |\lambda'(x_2)|$\\
We need to show
\begin{align}
& \frac{|\lambda'(x_1)|}{\lambda(x_1) - r(x_1)} &\leq & \frac{|\lambda'(x_2)|}{\lambda(x_2) - r(x_2)}\\
\Longleftrightarrow & |\lambda'(x_1)|\lambda(x_2) & \leq & |\lambda'(x_2)|\lambda(x_1) + |\lambda'(x_1)|r(x_2) - |\lambda'(x_2)|r(x_1)
\end{align}
But we already know that $|\lambda'(x_1)|\lambda(x_2)  \leq  |\lambda'(x_2)|\lambda(x_1) + |\lambda'(x_1)|(r(x_2) - r(x_1))$, where the term on the LHS is less than the first term on the RHS due to the MHR assumption and the second term on the RHS is always positive as $r(x)$ is non-decreasing. Since $|\lambda'(x_1)| > |\lambda'(x_2)|$, we know $|\lambda'(x_1)|(r(x_2) - r(x_1)) \leq |\lambda'(x_1)|r(x_2) - |\lambda'(x_2)|r(x_1)$, which gives us the required result.\\
\textbf{Case II:} $|\lambda'(x_1)| \leq |\lambda'(x_2)|$
In this case, we have
$$\frac{|\lambda'(x_1)|}{\lambda(x_1) - r(x_1)} \leq  \frac{|\lambda'(x_2)|}{\lambda(x_1) - r(x_1)} \leq \frac{|\lambda'(x_2)|}{\lambda(x_2) - r(x_2)}.$$
The last step comes from the fact that $\lambda(x_2)-r(x_2) \leq \lambda(x_1)-r(x_1)$ since $\lambda(x)$ is a non-increasing function and $r(x)$ is non-decreasing.
\end{proof}

%\begin{corollary}
%\label{corr_lconcave}
%If $\lx$ is a concave decreasing function, then $\lx - c^{\alpha}(x)$ is also concave decreasing.
%\end{corollary}
%Proof is essentially Case II of the lemma's proof.

We can now show our main result. We prove the bound on MHR functions here and show the bound for concave inverse demand in the following sub-section.\\

\begin{thm_app}{thm:PoSbody}
The social welfare of Nash equilibrium from Section \ref{sec:existenceBody} is always within a factor of:
\begin{itemize}
\item $1+M$ of the optimum for Log-concave (i.e., MHR) $\lambda$ and this factor is tight.
\end{itemize}
\end{thm_app}

\begin{proof}
Let $\vec{x^*}$ be the optimum solution with a total flow of magnitude $x^*$. We use $\vec{\tilde{x}}$ to denote the Nash Equilibrium with flow of size $\tilde{x}$ as guaranteed by our existence result and which obeys the conditions of Corollary~\ref{corr_eqconditions}. Recall from our algorithm that among all feasible flows of magnitude $\tilde{x}$, $\vec{\tilde{x}}$ is the flow with minimum cost, and that $\tilde{x}\leq x^*$. Tthe social welfare of the optimum flow is $\Lambda(x^*)-R(x^*)$, and the social welfare of our equilibrium solution is $\Lambda(\tilde{x})-R(\tilde{x})$. Thus the efficiency ($\eta$) is given by,
\begin{align}
\frac{\int_{0}^{x^*}[\lx - r(x)]dx}{\int_{0}^{\tilde{x}}[\lx - r(x)]dx} & = &\frac{\int_{0}^{\tilde{x}}[\lx - r(x)]dx}{\int_{0}^{\tilde{x}}[\lx - r(x)]dx} &+& \frac{\int_{\tilde{x}}^{x^*}[\lx - r(x)]dx}{\int_{0}^{\tilde{x}}[\lx - r(x)]dx}\\
\label{eqn_final_1}&= & 1 &+& \frac{\int_{\tilde{x}}^{x^*}[\lx - r(x)]dx}{\int_{0}^{\tilde{x}}[\lx - r(x)]dx}
\end{align}

%In the above steps, we just split up the integral in the numerator from $0$ to $x^*$ into two parts, from $0$ to $\tilde{x}$ and $\tilde{x}$ to $x^*$. Now,
We know from Lemma~\ref{lem_mhrcost} that $\frac{|\lambda'(x)|}{\lambda(x) - r(x)}$ is non-decreasing in $x$. So $\forall x \geq \tilde{x}$, we have $\frac{\lx -r(x)}{|\ldx|} \leq \frac{\lambda(\tilde{x}) - r(\tilde{x})}{|\lambda'(\tilde{x})|}$. Equivalently, this implies that $\lx-r(x) \leq |\ldx| \frac{\lambda(\tilde{x}) - r(\tilde{x})}{|\lambda'(\tilde{x})|}$.

The term following $|\ldx|$ is a constant with respect to $x$. So substituting this in the integral in the numerator, we have,
$$\eta \leq 1 + \frac{\lambda(\tilde{x}) - r(\tilde{x})}{|\lambda'(\tilde{x})|}\frac{\int_{\tilde{x}}^{x^*} |\ldx|}{\int_{0}^{\tilde{x}}[\lx - r(x)]dx}.$$

From Corollary~\ref{corr_eqconditions}, we know that $\lambda(\tilde{x}) - r(\tilde{x}) = M\tilde{x} |\lambda'(\tilde{x})|$ at the equilibrium point $\tilde{x}$. Substituting this in the above upper bound for $\eta$, we can further simplify it as
$$\eta \leq 1 + M\tilde{x}\frac{\int_{\tilde{x}}^{x^*} |\ldx|dx}{\int_{0}^{\tilde{x}}[\lx - r(x)]dx}.$$

Now, look at the denominator in the above expression. $\lx-r(x)$ is a non-increasing function of $x$. So, we can lower bound $\int_{0}^{\tilde{x}}[\lx - r(x)]dx$ by $\tilde{x}(\lambda(\tilde{x}) - r(\tilde{x}))$. 

Finally, we can bound the efficiency as follows,
\begin{align*}
\eta & \leq & 1 + M\tilde{x}\frac{\int_{\tilde{x}}^{x^*}|\ldx|dx}{\int_{0}^{\tilde{x}}[\lx - r(x)]dx} & \\
& \leq & 1 + M\tilde{x}\frac{\int_{\tilde{x}}^{x^*}|\ldx| dx}{\tilde{x} (\lambda(\tilde{x}) - r(\tilde{x}))} & \\
& = & 1 + M\frac{\int_{\tilde{x}}^{x^*}-\ldx dx}{\lambda(\tilde{x}) - r(\tilde{x})} & \; \text{ (Since $\ldx$ is negative)}\\
& = & 1 + M\frac{\lambda(\tilde{x}) - \lambda(x^*)}{\lambda(\tilde{x}) - r(\tilde{x})}\\
& \leq & 1+M &
\end{align*}
The last step comes from the fact that $r(\tilde{x}) \leq r(x^*) \leq \lambda(x^*)$ since at the optimum, $\lambda(x^*) \geq r(x^*)$ and we know from Proposition~\ref{prop_mincostfunction} that $r$ is non-decreasing. We now show that this bound is tight.

\textbf{Tight Example for MHR functions:}\\
As is standard with such examples, we consider a function at the boundary of MHR, $\lx = e^{-x}$ for $x\geq \frac{1}{M}$ and $\lx =e^{-1/M}$ for $0 \leq x \leq 1/M$. Note that this function is not differentiable at $1/M$, although it still obeys our more general definition of MHR functions for piecewise-differentiable functions (the hazard rate is 1 for $x>1/M$ and 0 for $x<1/M$, and thus non-decreasing). Alternatively, we can consider a ``smoothed" version of this function which behaves the same way outside of the neighborhood of $1/M$ but is continuously differentiable in that neighborhood as well; the result still holds for such a function.

There are $M$ sellers forming a path of length $M$ between the source and the sink. The production costs for all sellers are identical and are very close to 0 for $x$ smaller than $x^*$, for some large $x^*>1/M$, at which point the production costs increase very rapidly, so that $r(x^*)=\lambda(x^*)$, and thus $x^*$ is the size of the flow at optimum. The social welfare is $\int_0 ^{x^*} \lambda(x)dx$ since the production cost is essentially zero; this welfare converges to $\frac{M+1}{M}e^{-1/M}$ as $x^*$ is large.

%Thus, as $x^*$ becomes larger, the welfare of the optimal solution approaches $\Lambd
%We assume that each seller has zero production cost. First, it is not hard to see that this demand function belongs to the class $MHR$ even though its derivative is not defined at $x=\frac{1}{M}$. First for $x < \frac{1}{M}$, the hazard rate $h(x) = 0$ since the function is a constant. For $x > \frac{1}{M}$, the hazard rate $h(x) = \frac{1}{e^{-x}}|\frac{d}{dx}(e^{-x})| = 1$. The hazard rate is non-decreasing and $\lim_{x \to \frac{1}{M}^-}h(x) \leq \lim_{x \to \frac{1}{M}^+}h(x)$, so the function is in MHR.
%At the optimum solution, the graph admits the entire flow of $x^*$ since there is no production cost.

Now consider any non-trivial equilibrium (with non-zero flow $\tilde{x})$ with prices $\vec{p}$. First we claim that $\tilde{x} \geq \frac{1}{M}$. Indeed if $\tilde{x} < \frac{1}{M}$, then the total price of the path equals $\lambda(\tilde{x})=e^{-1/M}$. Any edge can decrease its price infinitesimally and receive a flow of $\frac{1}{M}$ and increase its profit (since there are $\frac{1}{M}$ buyers who value the path at least $e^{-1/M}$). 

Next consider the possibility of equilibrium at some point where $\lambda'(x)$ is defined. By Lemma~\ref{lemma_eqnconditionsgen}, when $\lambda'_+$ and $\lambda'_-$ are equal, the equilibrium condition reduces to every monopoly edge satisfying $p_e + \tilde{x}\lambda'(\tilde{x}) =0$. Further simplification tells us that if there exists such an equilibrium, it must satisfy $\lambda(\tilde{x}) = M\tilde{x}|\lambda'(\tilde{x})|$. But there is no such point satisfying this when $x > \frac{1}{M}$.

Now consider $\tilde{x} = \frac{1}{M}$. Can this point be an equilibrium? Once again by the conditions of Lemma~\ref{lemma_eqnconditionsgen}, every edge must satisfy $p_e \leq \tilde{x}|\lambda'_+(\tilde{x})| = \frac{e^{-M}}{M}$. So the sum of edge prices is no larger than $e^{-M}$. But in order to induce a flow of $\tilde{x} = \frac{1}{M}$, the total price of path must be exactly $e^{-M}$. So we conclude that at the only possible equilibrium point, all edges have an equal price of $\frac{e^{-M}}{M}$. Clearly, sellers cannot increase their prices at this point. This point also satisfies the conditions of Theorem~\ref{cl_discolambdaex} so it is not hard to show that every buyer's profit is maximized at this price and thus we have an equilibrium.

The social welfare at $\tilde{x}$ is $\frac{e^{-1/M}}{M}$, so the efficiency tends to $1+M$. For this family of instances, this is the unique non-trivial equilibrium.
\end{proof}

\subsubsection{Concave Demand Functions}
We now show that the efficiency for $\lx$ being a concave function is $1+\frac{M}{2}$. Recall that an inverse demand function $\lx$ is said to be concave if its derivative $\ldx$ is decreasing or equivalently $|\ldx|$ is increasing. Although concave functions belong to the family $MHR$, they have a strictly increasing hazard rate, which is why we obtain an improved bound on the efficiency. The efficiency when the market demand is concave has received some attention before~\cite{anderson2003efficiency,johari2005efficiency}, albeit for slightly different models. 

\begin{thm_app}{thm:PoSbody}
\textbf{(Part 2)} The social welfare of Nash equilibrium from Section \ref{sec:existenceBody} is always within a factor of:
\begin{itemize}
\item $1+\frac{M}{2}$ of the optimum for concave $\lambda$ and this bound is tight.
\end{itemize}
\end{thm_app}
\begin{proof}
The proof uses the same notation as the proof MHR functions and we build upon some of the arguments made there. We know that the following is an upper bound on the efficiency,
$$1 + \frac{\int_{\tilde{x}}^{x^*}[\lx - r(x)]dx}{\int_0^{\tilde{x}}[\lx - r(x)]dx}.$$

\begin{figure}[htbp]
\centering
\includegraphics[scale=1.3]{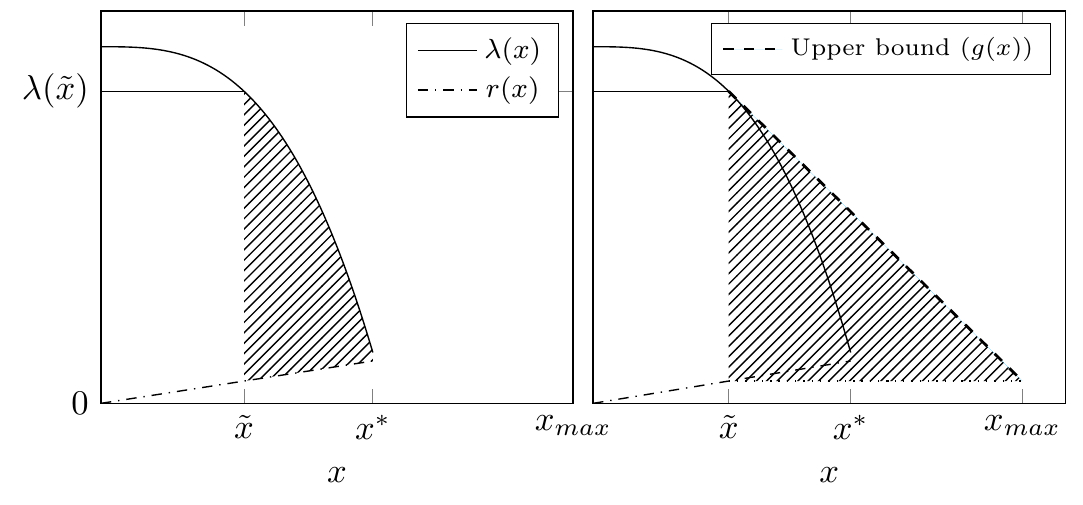}
\caption{Plot of an arbitrary concave inverse demand function}
\label{fig_concavepos}
\end{figure}

In terms of the Figure~\ref{fig_concavepos}, the welfare term in numerator is exactly the shaded area in the left figure, i.e., the area between $\lx$ and $r(x)$ in the desired region. Now since $\lx$ is concave, we know in the region $\tilde{x}$ to $x^*$, the function decreases at a rate of $|\lambda'(\tilde{x})|$ or faster. Therefore we define an auxiliary function $g(x)$ (dashed lines in the right figure), such that $g(\tilde{x}) = \lambda(\tilde{x})$ and $g(x)$ decreases linearly at a rate of $|\lambda'(\tilde{x})|$. Since $\lx$ is concave, we conclude that for all $x \in [\tilde{x},x^*]$.

Now, we can say that the shaded area in the left figure which we need to evaluate is upper bounded by the shaded area in the right figure. That is, the shaded triangle represents the area between $g(x)$ and $r(\tilde{x})$ from $\tilde{x}$ to some point $x_{max} \geq x^*$, where $g(x) = r(\tilde{x})$. The area under this triangle is given by,
$$\Delta = \frac{1}{2}(x_{max}-\tilde{x})(\lambda(\tilde{x}) - r(\tilde{x})) = \frac{1}{2}\frac{\lambda(\tilde{x}) - r(\tilde{x})}{|\lambda'(\tilde{x})|}(\lambda(\tilde{x}) - r(\tilde{x})).$$

Going back to our original expression for efficiency ($\eta$), we have
\begin{align*}
\eta & \leq & 1 + \frac{\int_{\tilde{x}}^{x^*}[\lx - r(x)]dx}{\int_0^{\tilde{x}}[\lx - r(x)]dx}\\
& \leq & \frac{\Delta}{\tilde{x}(\lambda(\tilde{x}) - r(\tilde{x}))} \\
& = & 1+ \frac{1}{2}\frac{\lambda(\tilde{x}) - r(\tilde{x})}{|\lambda'(\tilde{x})|}\frac{(\lambda(\tilde{x}) - r(\tilde{x}))}{\tilde{x}(\lambda(\tilde{x}) - r(\tilde{x}))}\\
& = & 1+ \frac{1}{2}M
\end{align*}
The last step follows from Corollary~\ref{corr_eqconditions}. This completes the proof of the efficiency bound. Looking at the above figure, the bound appears to be a weak one since there is a large gap between the area of the triangle and that of the curve. However, in the worst case $\lx -r(x)$ is a linear decreasing function and the two areas coincide. As the following example illustrates, there exists instances where no equilibrium can achieve a social welfare better than a factor $1+\frac{M}{2}$ over the optimum. %

\textbf{Tight Example.} Consider a simple path of $M$ links, having the following inverse demand function
\begin{align*}
\lx = &  2M+1 & 0 \leq x \leq 1\\
\lx =  & (2M+1)-2(x-1) & 1 \leq x
\end{align*}
This function is only piecewise concave, but a ``smoothed" version of this function where the neighborhood of $x=1$ is continuously differentiable still gives the desired bound (although our upper bound also holds for piecewise concave functions; see Section~\ref{sec:generalizations}). Every edge has a cost function given by $C_e(x) = \frac{1}{M}x$. The following figure shows both $\lx$ and $r(x)$ for a sample instance with $M=4$. 

\begin{figure}[h]
\centering
\includegraphics[scale=0.7]{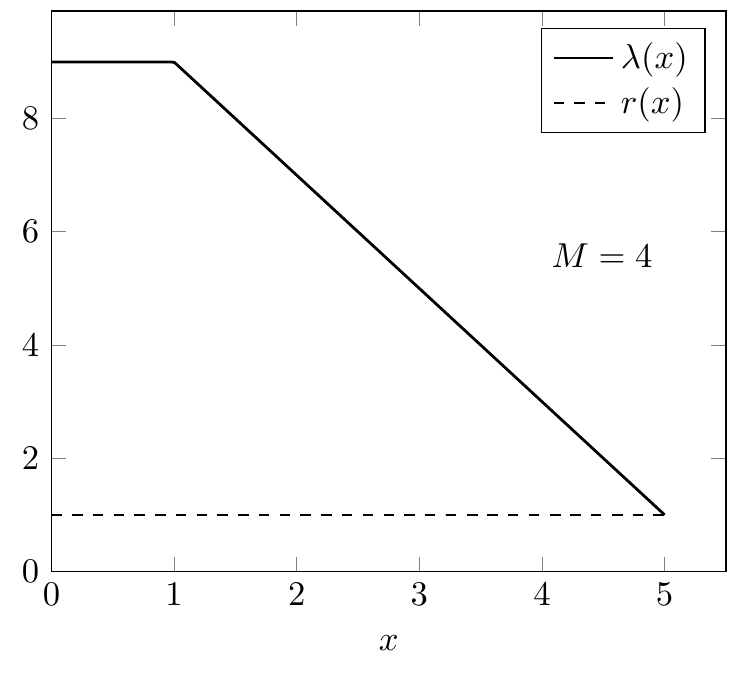}
\caption{Tight example for concave inverse demand functions.}
\label{fig_concavetightex}
\end{figure}

%First we claim that the function is concave. This is not hard to observe since the rate of decrease is $0$ between $x=0$ and $x=1$ and the rate of decrease (slope of the curve) is $2$ between $1$ and $M+1$, so the rate at which the function is decreasing is ``non-decreasing".

We remark here that the cost functions do not satisfy $c_e(0)=0$, this can be viewed as the limiting case for a function with $c_e(0)=0$, but which increases rapidly for $x>0$ until it becomes a constant function $c_e(x)=1/M$.

From the figure, at the optimum solution $\lx$ and $r(x)$ meet at $x^*=M+1$. The social welfare of the solution is the total area under the curve minus the area under the dashed line, which equals $2M + M^2$. We claim that the point $\tilde{x} = 1$ is a NE when the edges are priced according our pricing rule. To see why this is true, note that every edge is priced at $p_e = 2+\frac{1}{M}$ so the total price on the path is $2M+1$ so a flow of $\tilde{x}=1$ is a valid best-response. Each edge makes a profit of $\pi_e = (2+\frac{1}{M})*1 - \frac{1}{M} = 2$. Note that $2M+1$ is the maximum value held by the buyers for this bundle, so no edge has any incentive to increase its price.

We now consider the profit of an edge which decreases its price. Suppose any one seller reduces his price by $\Delta$ and the flow increases by $\epsilon$, then $\frac{\Delta}{\epsilon} = 2$, which is the slope of the linear part of the curve.

The new profit of the edge is given by
\begin{align*}
\pi'_e & = (2+\frac{1}{M} - \Delta)(1+\epsilon) - \frac{1}{M}(1+\epsilon)\\
& = (2+\frac{1}{M} - 2\epsilon)(1+\epsilon) - \frac{1}{M}(1+\epsilon)\\
& = (2 - 2\epsilon^2) \leq 2.
\end{align*}
So the new profit is never larger than the old profit. So clearly this pricing gives an equilibrium. The social welfare at this equilibrium is just $(2M+1)-1 = 2M$.

We now claim that for every equilibrium of this instance, the total flow is $1$, i.e., $\exists$ no equilibrium where the flow $\tilde{x} \neq 1$. Indeed, it is not too hard to show that if there is an equilibrium with $\tilde{x} < 1$, the edge with the largest price can lower its price infinitesimally and increase flow up to one. Suppose there is an equilibrium with flow $x$ strictly larger than 1. The total price on the path must be strictly less than $2M+1$ and so at least one edge has a price $p_e < \frac{2M+1}{M} = 2+\frac{1}{M}$. By some basic algebra, we can verify that if this edge increases its price so that the new flow is $1$, its profit will also strictly increase.

So, we conclude that every equilibrium solution has to have a flow of $\tilde{x}=1$ and has a social welfare of $2M$. The efficiency for this instance is therefore,
$$\eta = \frac{2M+M^2}{2M} = 1 + \frac{M}{2}.$$\end{proof}

\subsection{Proofs: Filling the gaps: Other classes of Demand Functions}
\label{appsection:specificpos}
In this section, we consider $\lx$ parameterized by a single parameter $\alpha$ in order to the spectrum between $\text{Efficiency}=1$ and $\eta=e^{M}$. More specifically, we consider two classes of demand functions which we term $F_p$, $F_{ced}$; we will consider a third specific class $F_{exp}$ in the following sub-section.

\begin{definition}
Let $F_p$ denote concave inverse demand functions that decrease polynomially with degree $\alpha$, i.e., $\lx=\lambda_0(1-x^{\alpha})$ for any $\lambda_0\geq 0$ and $\alpha \geq 1$.
\end{definition}

For demand functions in $F_p$, we will show that the efficiency ($\eta$) is $(1+M\alpha)^{1/\alpha}$. When $\alpha$ is comparable to (or larger than) $M$, this quantity becomes $\approx \frac{\log(M \alpha)}{\alpha}$ and lies between $1$ and $1+M/2$. In other words, for a fixed large value of $\alpha$, $\eta$ increases almost logarithmically with $M$. Note that for the ease of exposition, we have defined the function so that it has a root at $x=1$. These functions can be easily scaled to have a root at any other positive $x$, (e.g, $\lx = \lambda_0(1-(\frac{x}{a})^{\alpha})$) without changing the results. 

\begin{definition}
Let $F_{ced}$ denote the class of MHR functions with inverse demand function $\lx = \lambda_0(1-x)^{\alpha}$ for any $\lambda_0\geq 0$ and $\alpha \geq 1$. %Equivalently, these functions can be represented as constant elastic demand functions in the demand space, i.e., $d(p) = c_0p^{\frac{1}{\alpha}}$.
\end{definition}

{\em CED} stands for {\em constant elastic demand}: it is not hard to verify that these functions have a constant elasticity of demand (refer Appendix for the relation between inverse demand and direct demand). We show that the efficiency for these functions is $1+M\frac{\alpha}{\alpha+1}$ and therefore spans the region between $1+\frac{M}{2}$ and $1+M$.

\begin{thm_app}{thm_spec2}
\begin{enumerate}
\item Let $F_p$ denote functions of the form $\lx=\lambda_0(1-x^{\alpha})$ for any $\lambda_0\geq 0$ and $\alpha \geq 1$. For $\lambda\in F_p$, the efficiency $\eta$ is at most $(1+M\alpha)^{\frac{1}{\alpha}}$. When $\alpha \geq M$, this quantity is approximately $1+\frac{\log(M\alpha)}{\alpha}$.

\item Let $F_{ced}$ denote functions of the form  $\lx = \lambda_0(1-x)^{\alpha}$ for any $\lambda_0\geq 0$ and $\alpha \geq 1$. For $\lambda \in F_{ced}$, $\eta$ is at most $1+M\frac{\alpha}{\alpha+1}$ for $\alpha \geq 1$.

\end{enumerate}
\end{thm_app}

\begin{proof}\textbf{Proof of Statement 1.} Let $\lx = (1-x^{\alpha})$, where $\alpha \geq 1$. The exact same proof holds when there are scaling factors. The function is concave so we know $\exists$ a Nash Equilibrium obeying the conditions of Corollary~\ref{corr_eqconditions}. Let $r(x)$ be the differential min-cost function. At the equilibrium point $\tilde{x}$, we have $\lambda(\tilde{x}) - r(\tilde{x}) - M\tilde{x}|\lambda'(\tilde{x})|=0$. Substituting $\lx = (1-x^{\alpha})$ and $\ldx = -\alpha x^{\alpha-1}$, we get $(1-\tilde{x}^{\alpha} - M\alpha	\tilde{x}^{\alpha}) = r(\tilde{x})$. At the optimum point $x^*$, we know that $\lambda(x^*) \geq r(x^*)$ by Proposition~\ref{prop_optflow}. Since $r(x)$ is non-decreasing, we have, $1-(x^*)^{\alpha} \geq r(x^*) \geq r(\tilde{x}) = 1 - \tilde{x}^{\alpha}(1+M\alpha)$. Therefore, we get $(\frac{x^*}{\tilde{x}})^{\alpha} \leq (1 + M\alpha)$ or equivalently, $\frac{x^*}{\tilde{x}} \leq (1+M\alpha)^{\frac{1}{\alpha}}$.

Recall that social welfare of a min-cost flow is size $x$ is equal to $\int_0^x[\lambda(t)-r(t)]dt$. Now, since $\lambda(x)-r(x)$ is non-increasing, and the social welfare of $\tilde{x}$ equals $\int_0^{\tilde{x}}[\lambda(t)-r(t)]dt$ since it is a min-cost flow, then we know that the efficiency is bounded by $\frac{x^*}{\tilde{x}}$. So for functions in $F_p$, we have the $\eta \leq \frac{x^*}{\tilde{x}} \leq (1+M\alpha)^{\frac{1}{\alpha}}$. For a fixed value of $M$, as $\alpha$ increases the bound tends to $1+\frac{\log(M\alpha)}{\alpha}$. $\blacksquare$

\textbf{Proof of Statement 2.}\\
Once again, we consider $\lx = (1-x)^{\alpha}$ for $\alpha \geq 1$. It is easy to verify that the proof is valid when this function is scaled and/or shifted. The function belongs to MHR, so Corollary~\ref{corr_eqconditions} holds. Moreover, it is convex.

So at equilibrium, we have $(1-\tilde{x})^{\alpha} - r(\tilde{x}) = M\alpha \tilde{x} (1-\tilde{x})^{\alpha-1}$ or $r(\tilde{x}) = (1-\tilde{x})^{\alpha} - M\alpha \tilde{x} (1-\tilde{x})^{\alpha-1}$.	So now the efficiency is bounded by
$$\eta \leq 1 + \frac{\int_{\tilde{x}}^{x^*} \lx - r(x) dx}{\tilde{x}(\lambda(\tilde{x}) - r(\tilde{x}))}.$$

Consider the following function $f(x) = \lx - r(\tilde{x})$. Since $r(x)$ is non-decreasing, we have that $\lx - r(x) \leq \lx - r(\tilde{x})$, $\forall x \geq \tilde{x}$. So we can bound the integral in the numerator above by $\int_{\tilde{x}}^{x^*} f(x)dx$. Furthermore, let $y\geq x^*$ be the point such that $\lambda(y) = (1-y)^{\alpha} = r(\tilde{x})$; this point is larger than $x^*$ since at the optimum solution $\lambda(x^*) \geq r(x^*) \geq r(\tilde{x})$ and $r$ is non-decreasing. For each $x\leq y$, we have that $f(x)\geq 0$, and so the numerator above can be bounded by $\int_{\tilde{x}}^{y} f(x)dx$. Thus our bound on the efficiency becomes,
\begin{align*}
\eta & \leq & 1 + \frac{\int_{\tilde{x}}^{y}\lx dx - r(\tilde{x})(y - \tilde{x})}{\tilde{x}(\lambda(\tilde{x}) - r(\tilde{x}))}\\
& = & 1 + \frac{\frac{1}{\alpha+1}\left((1-\tilde{x})^{\alpha+1} - (1-y)^{\alpha+1}\right) - r(\tilde{x})(y - \tilde{x})}{\tilde{x}(\lambda(\tilde{x}) - r(\tilde{x}))}\\
& = & 1 + \frac{\frac{1}{\alpha+1}\left((1-\tilde{x})^{\alpha+1} - (1-y)^{\alpha+1} - r(\tilde{x})(y - \tilde{x}) - \alpha r(\tilde{x})(y - \tilde{x})\right)}{\tilde{x}(\lambda(\tilde{x}) - r(\tilde{x}))}
\end{align*}
Now, we substitute $r(\tilde{x})(y-\tilde{x}) = r(\tilde{x})((1-\tilde{x}) - (1-y)) = r(\tilde{x})(1-\tilde{x}) - r(\tilde{x})(1-y)$. We know that $r(\tilde{x})(1-\tilde{x}) = (1-\tilde{x})^{\alpha+1} - M \alpha \tilde{x} (1-\tilde{x})^{\alpha}$ from the equilibrium conditions and that $r(\tilde{x})(1-y) = (1-y)^{\alpha+1}$ due to our choice of $y$. So, $r(\tilde{x})(y-\tilde{x}) = (1-\tilde{x})^{\alpha+1} - M \alpha \tilde{x} (1-\tilde{x})^{\alpha} - (1-y)^{\alpha+1}$. Substituting this in the bound for $\eta$, we get
\begin{align*}
\eta & \leq & 1 + \frac{1}{\alpha+1} \frac{M\alpha \tilde{x} (1-\tilde{x})^{\alpha} - \alpha r(\tilde{x})(y - \tilde{x})}{\tilde{x}(\lambda(\tilde{x}) - r(\tilde{x}))}\\
\eta & \leq & 1 + \frac{\alpha}{\alpha+1} \frac{M\tilde{x} (1-\tilde{x})^{\alpha} - r(\tilde{x})(y-\tilde{x})}{\tilde{x}(\lambda(\tilde{x}) - r(\tilde{x}))}
\end{align*}
Now, we use the idea that since $\lx$ is convex, $y - \tilde{x} \geq M\tilde{x}$, which we will prove later. So substituting this in the above inequality, we get
$$\eta \leq 1+\frac{\alpha}{\alpha+1} \frac{M\tilde{x} (1-\tilde{x})^{\alpha} - r(\tilde{x})(M\tilde{x})}{\tilde{x}(\lambda(\tilde{x}) - r(\tilde{x}))} = 1 + \frac{\alpha}{\alpha+1}M,$$
as desire. Now, we prove the required sub-claim.
\begin{lemma}
If $\lx$ is convex and $\lambda(y)=r(\tilde{x})$, then $y - \tilde{x} \geq M \tilde{x}$.
\end{lemma}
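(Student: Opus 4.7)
The plan is to exploit convexity of $\lambda$ via the standard ``tangent line lies below the graph'' inequality, and then substitute the equilibrium characterization from Corollary~\ref{corr_eqconditions}. Concretely, convexity of $\lambda$ on $[\tilde{x}, y]$ gives, for every $x \geq \tilde{x}$,
\[
\lambda(x) \;\geq\; \lambda(\tilde{x}) + \lambda'(\tilde{x})(x - \tilde{x}) \;=\; \lambda(\tilde{x}) - |\lambda'(\tilde{x})|(x - \tilde{x}),
\]
where the sign flip uses that $\lambda$ is non-increasing so $\lambda'(\tilde{x}) \leq 0$.

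First I would apply this inequality at $x = y$, where by definition $\lambda(y) = r(\tilde{x})$. Rearranging yields
\[
|\lambda'(\tilde{x})|\,(y - \tilde{x}) \;\geq\; \lambda(\tilde{x}) - r(\tilde{x}).
\]
Next I would invoke the equilibrium condition from Corollary~\ref{corr_eqconditions}, namely $\lambda(\tilde{x}) - r(\tilde{x}) = M\tilde{x}|\lambda'(\tilde{x})|$, and substitute it into the right-hand side. Assuming $|\lambda'(\tilde{x})| > 0$, dividing through immediately gives $y - \tilde{x} \geq M\tilde{x}$, as desired.

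The only step that requires care is the degenerate case $|\lambda'(\tilde{x})| = 0$. In that case the equilibrium condition forces $\lambda(\tilde{x}) = r(\tilde{x})$, so by Proposition~\ref{prop_optflow} the equilibrium flow $\tilde{x}$ is already optimal ($\tilde{x} = x^*$) and the efficiency gap we are trying to bound is trivially $1$; in particular the claim becomes vacuous since the integral $\int_{\tilde{x}}^{x^*}[\lambda(x)-r(x)]dx$ in the outer proof vanishes and no bound on $y - \tilde{x}$ is needed. I do not expect any substantive obstacle: the whole argument is a one-line consequence of convexity plus the pointwise equilibrium identity, and the only subtlety is correctly handling the sign of $\lambda'(\tilde{x})$ and the degenerate zero-derivative case.
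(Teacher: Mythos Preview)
Your proposal is correct and takes essentially the same approach as the paper: both use the tangent-line inequality from convexity, $\lambda(x)\geq \lambda(\tilde{x})-|\lambda'(\tilde{x})|(x-\tilde{x})$, together with the equilibrium identity $\lambda(\tilde{x})-r(\tilde{x})=M\tilde{x}|\lambda'(\tilde{x})|$. The only cosmetic difference is that the paper evaluates the tangent line at the specific point $x_2=(M+1)\tilde{x}$ to obtain $\lambda(x_2)\geq r(\tilde{x})=\lambda(y)$ and then invokes monotonicity of $\lambda$, whereas you plug in $x=y$ directly and rearrange; your handling of the degenerate case $|\lambda'(\tilde{x})|=0$ is also fine.
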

\begin{proof}
Let $r(x) = r(\tilde{x})$. We know at the equilibrium point that $|\lambda'(\tilde{x})| = \frac{\lambda(\tilde{x}) - r(\tilde{x})}{M\tilde{x}}$. Since $\lx$ is convex, it means that $|\ldx|$ is non-increasing, which means that for any $x \geq \tilde{x}$, $|\ldx| \leq |\lambda'(\tilde{x})|$. So, we have $\lx \geq \lambda(\tilde{x}) - |\lambda'(\tilde{x})|(x - \tilde{x})$. Consider the following point $x_2 = \tilde{x} + M\tilde{x}$. At $x_2$, we have $\lambda(x_2) - r(\tilde{x}) \geq \lambda(\tilde{x}) - |\lambda'(\tilde{x})|(x_2 - \tilde{x}) - r(\tilde{x}) = \lambda(\tilde{x}) - r(\tilde{x}) - \frac{\lambda(\tilde{x}) - r(\tilde{x})}{M\tilde{x}} \times M\tilde{x} = 0$. Since at $x_2$, $\lambda(x_2) \geq r(\tilde{x})=\lambda(y)$, and $\lambda$ is non-increasing, then it must be that $y\geq x_2$. So, we get $y - \tilde{x} \geq x_2 - \tilde{x} = M\tilde{x}$.
\end{proof}
$\blacksquare$ \end{proof}

\subsection{Monotone Price Elasticity (MPE) with $|x\ldx|$ increasing}
Recall the definition of the MPE functions: $x\frac{|\ldx|}{\lx}$ is non-decreasing in $x$ and further tends to zero as $x$ tends to zero. It is not hard to see that MHR functions also fall under this class: if $h(x) = \frac{|\ldx|}{\lx}$ is increasing then $xh(x)$ is also increasing with $x$. So MPE is a generalization of MHR. We consider another class of MPE functions that do not have a monotone hazard rate here. 

First we state without proof a simple claim that incorporates the cost functions into the hazard rate. The proof of this is identical to the proof of Lemma~\ref{lem_mhrcost}, replacing $|\lambda'(x)|$ with $x|\lambda'(x)|$.

\begin{claim}
\label{claim_lrMPE}
If $\lx$ is a MPE function, i.e., $\frac{x|\ldx|}{\lx}$, is non-decreasing, then $\frac{x|\ldx|}{\lx - r(x)}$ is also non-decreasing, where $r(x)$ is the differential min-cost function.
\end{claim}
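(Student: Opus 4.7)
The plan is to mimic the proof of Lemma~\ref{lem_mhrcost} essentially verbatim, with the substitution $|\lambda'(x)| \mapsto x|\lambda'(x)|$; the paper already flags that this is the intended route. To streamline notation I would set $g(x) = x|\lambda'(x)|$. The MPE hypothesis then reads that $g(x)/\lambda(x)$ is non-decreasing, and my goal is to show that $g(x)/(\lambda(x)-r(x))$ is non-decreasing on the range $x\leq x^*$, which is the only range of interest and the one on which $\lambda(x)-r(x)\geq 0$ (because $\lambda$ is non-increasing, $r$ is non-decreasing by Proposition~\ref{prop_mincostfunction}, and $\lambda(x^*)\geq r(x^*)$ by Proposition~\ref{prop_optflow}).

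Fix $x_1 \leq x_2 \leq x^*$. Cross-multiplying turns the target inequality into
$$g(x_1)\lambda(x_2) \;\leq\; g(x_2)\lambda(x_1) + g(x_1)r(x_2) - g(x_2)r(x_1).$$
MPE contributes the piece $g(x_1)\lambda(x_2) \leq g(x_2)\lambda(x_1)$ directly, so the remaining task is to argue that $g(x_1)r(x_2) - g(x_2)r(x_1) \geq 0$. This is where the one mild subtlety lies: the MPE assumption controls the ratio $g/\lambda$ but does not by itself force $g = x|\lambda'(x)|$ to be monotone in $x$, so the sign of $g(x_1)-g(x_2)$ is not known a priori. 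The main (and only real) obstacle is this case split; everything else is bookkeeping.

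In the case $g(x_1) > g(x_2)$, I would decompose
$$g(x_1)r(x_2) - g(x_2)r(x_1) \;=\; g(x_1)\bigl(r(x_2)-r(x_1)\bigr) + r(x_1)\bigl(g(x_1)-g(x_2)\bigr),$$
and both summands are non-negative: the first by monotonicity of $r$, the second by the case assumption together with $r(x_1)\geq 0$. In the complementary case $g(x_1) \leq g(x_2)$, MPE is not even invoked: since $\lambda$ is non-increasing and $r$ is non-decreasing, $\lambda(x_2)-r(x_2) \leq \lambda(x_1)-r(x_1)$, and combining this with $g(x_1)\leq g(x_2)$ gives
$$\frac{g(x_1)}{\lambda(x_1)-r(x_1)} \;\leq\; \frac{g(x_2)}{\lambda(x_1)-r(x_1)} \;\leq\; \frac{g(x_2)}{\lambda(x_2)-r(x_2)}.$$
Either way the desired monotonicity of $g(x)/(\lambda(x)-r(x))$ follows, completing the claim.
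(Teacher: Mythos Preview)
Your proposal is correct and follows exactly the route the paper prescribes: the paper does not write out a proof for this claim but simply states that it is ``identical to the proof of Lemma~\ref{lem_mhrcost}, replacing $|\lambda'(x)|$ with $x|\lambda'(x)|$,'' which is precisely what you do with $g(x)=x|\lambda'(x)|$ and the same two-case split on $g(x_1)$ versus $g(x_2)$.
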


We show a bound on the price on efficiency for another large class of MPE functions: ones in which $x|\ldx|$ is non-decreasing. As an example for such functions, consider $\lx=|\log(\frac{x}{a})|$ for $x=0$ to $a$, which is not MHR but is MPE.\\

\begin{clm_app}{thm_mpepos}
If $\lx$ has $x|\ldx|$ non-decreasing, then the loss in efficiency for any instance is at most $\frac{M}{M-1}e^M$.
\end{clm_app}
\cite{chawla2009price} showed a bound of $e^{k}$ on the Price of Anarchy for instances with non-decreasing $x|\ldx|$, when each edge has a fixed capacity. Here $k$ is the length of the longest $s-t$ path. Viewed in this context, there are two things we can infer from the above theorem. First, our bound depends only on the number of monopolies instead of the longest $s-t$ path, which can indeed be arbitrarily large. Moreover, we obtain a bound tighter than just $e^{M}$. Secondly, our result shows that the bound obtained in~\cite{chawla2009price} can be extended to instances where each item has a production cost as well.
\begin{proof}
Let $\vec{\tilde{x}}$ be the equilibrium flow of size $\tilde{x}$ obeying Corollary \ref{corr_eqconditions}, and let $x^*$ be the size of the optimum flow maximizing social welfare. We first show that $\frac{x^*}{\tilde{x}} \leq e^M$. Note that this immediately gives us a bound of $e^M$ on the $\eta$ since the flow at equilibrium always has a higher value than the flow after that. Following this, we tighten the bound to $\frac{M}{M-1}e^{M-1}$.

\begin{claim}
For any given instance with $x|\ldx|$ being non-decreasing, it must be that $\frac{\tilde{x}}{x^*} \geq e^{-M}$.
\end{claim}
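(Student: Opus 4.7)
The plan is to combine the equilibrium characterization from Corollary~\ref{corr_eqconditions} with the assumption that $x|\lambda'(x)|$ is non-decreasing, and integrate over $[\tilde{x},x^*]$ to obtain the required logarithmic bound. The proof should not require any of the more delicate machinery used for the MHR bound; the key insight is that along the interval $[\tilde{x}, x^*]$ the derivative $|\lambda'(x)|$ is bounded below by a quantity proportional to $1/x$, which is exactly what yields a factor of $e^M$ after integration.

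Specifically, I will proceed as follows. First, invoke Corollary~\ref{corr_eqconditions} to write
\[
\lambda(\tilde{x}) - r(\tilde{x}) = M\tilde{x}|\lambda'(\tilde{x})|,
\]
(the case $\tilde{x}=x^*$ is trivial, since then the bound holds automatically). Next, use the assumption that $x|\lambda'(x)|$ is non-decreasing in $x$: for every $x\geq \tilde{x}$,
\[
x|\lambda'(x)| \;\geq\; \tilde{x}|\lambda'(\tilde{x})| \;=\; \frac{\lambda(\tilde{x})-r(\tilde{x})}{M},
\]
and therefore $|\lambda'(x)| \geq \frac{\lambda(\tilde{x})-r(\tilde{x})}{Mx}$.

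Then I would integrate this inequality from $\tilde{x}$ to $x^*$. Since $\lambda$ is non-increasing, $\int_{\tilde{x}}^{x^*}|\lambda'(x)|\,dx = \lambda(\tilde{x})-\lambda(x^*)$, and thus
\[
\lambda(\tilde{x}) - \lambda(x^*) \;\geq\; \frac{\lambda(\tilde{x})-r(\tilde{x})}{M}\,\ln\!\left(\frac{x^*}{\tilde{x}}\right).
\]
To turn this into the desired bound I use that at optimum $\lambda(x^*)\geq r(x^*)$ (Proposition~\ref{prop_optflow}), combined with the monotonicity of $r$ (Proposition~\ref{prop_mincostfunction}), which gives $\lambda(x^*)\geq r(x^*)\geq r(\tilde{x})$. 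Hence $\lambda(\tilde{x})-\lambda(x^*)\leq \lambda(\tilde{x})-r(\tilde{x})$, and substituting back yields
\[
\lambda(\tilde{x})-r(\tilde{x}) \;\geq\; \frac{\lambda(\tilde{x})-r(\tilde{x})}{M}\,\ln\!\left(\frac{x^*}{\tilde{x}}\right).
\]
Since $\lambda(\tilde{x})-r(\tilde{x})>0$ at a sub-optimal equilibrium (this is the only interesting case, for otherwise $\tilde{x}=x^*$ and the claim holds trivially), we can divide through to get $\ln(x^*/\tilde{x})\leq M$, i.e., $\tilde{x}/x^* \geq e^{-M}$, as desired.

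The main obstacle is really bookkeeping rather than a deep difficulty: I need to verify that the $(\tilde{x}, x^*)$ interval is non-degenerate and that $\lambda'(\tilde{x})$ is well defined at the equilibrium point. Both follow from the setup of Corollary~\ref{corr_eqconditions} together with the MPE assumption, which guarantees continuity of $|\lambda'(\cdot)|$ wherever the equilibrium equation is invoked; boundary cases where $\lambda'$ is not defined can be handled by the piecewise-differentiable extension already discussed in Section~\ref{sec:generalizations}. Everything else is a one-line integration, which is why this special class of MPE functions admits the clean exponential dependence on $M$.
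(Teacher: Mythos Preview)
Your proof is correct and follows essentially the same approach as the paper's own proof: invoke the equilibrium condition $\lambda(\tilde{x})-r(\tilde{x})=M\tilde{x}|\lambda'(\tilde{x})|$, use the monotonicity of $x|\lambda'(x)|$ to lower-bound $|\lambda'(x)|$ by $\frac{\lambda(\tilde{x})-r(\tilde{x})}{Mx}$ on $[\tilde{x},x^*]$, integrate to obtain $\lambda(\tilde{x})-\lambda(x^*)\geq \frac{\lambda(\tilde{x})-r(\tilde{x})}{M}\ln(x^*/\tilde{x})$, and then bound the left side by $\lambda(\tilde{x})-r(\tilde{x})$ via $\lambda(x^*)\geq r(x^*)\geq r(\tilde{x})$. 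The only cosmetic difference is that the paper carries the ratio $\frac{\lambda(\tilde{x})-\lambda(x^*)}{\lambda(\tilde{x})-r(\tilde{x})}$ into the exponent before bounding it by $1$, whereas you bound first and then divide.
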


\begin{proof}
Following our pricing rule, at equilibrium, we have $\frac{\lambda(\tilde{x}) - r(\tilde{x})}{M} = \tilde{x}|\lambda'(\tilde{x})|$. Since $x|\ldx|$ is non-decreasing, it must be that for all $x \geq \tilde{x}$, $x|\ldx| \geq \tilde{x}|\lambda'(\tilde{x})| = \frac{\lambda(\tilde{x}) - r(\tilde{x})}{M}$.
Equivalently for $x \geq \tilde{x}$,
$$|\ldx| = -\ldx \geq \frac{\lambda(\tilde{x}) - r(\tilde{x})}{Mx}$$

Integrating both sides from $x=\tilde{x}$ to $x^*$, we get
$$\lambda(\tilde{x}) - \lambda(x^*) \geq \frac{\lambda(\tilde{x}) - r(\tilde{x})}{M}\ln(\frac{x^*}{\tilde{x}}).$$
This implies
$$\ln(\frac{x^*}{\tilde{x}}) \leq M \frac{\lambda(\tilde{x}) - \lambda(x^*)}{\lambda(\tilde{x}) - r(\tilde{x})}.$$
Finally, we get $\frac{x^*}{\tilde{x}} \leq \exp{\left(M\frac{\lambda(\tilde{x}) - \lambda(x^*)}{\lambda(\tilde{x}) - r(\tilde{x})}\right)} \leq e^{M}$. The last inequality follows from the fact that at the optimum flow $x^*$, $\lambda(x^*) \geq r(x^*) \geq r(\tilde{x})$.
\end{proof}

Now for the rest of the proof. Define $f(x) = \ln(\lx - r(x))$. Then $f'(x) = \frac{\ldx - r'(x)}{\lx - r(x)} \leq \frac{\ldx}{\lx - r(x)}$ since $r'(x)$ is always non-negative as $r(x)$ is non-decreasing. So, by integrating $f'(x)$ between $\tilde{x}$ and $x>\tilde{x}$, we get

$$f(x) - f(\tilde{x}) = \int_{\tilde{x}}^{x} f'(x)dx \leq \int_{\tilde{x}}^{x} \frac{\ldx}{\lx - r(x)}dx.$$
Now since $\lx$ is a MPE function and $\ldx$ is negative, we have for $x \geq \tilde{x}$, $\displaystyle \frac{\ldx}{\lx - r(x)} \leq \tilde{x}\frac{\lambda'(\tilde{x})}{x(\lambda(\tilde{x}) - r(\tilde{x}))} = \frac{\tilde{x}}{-xM\tilde{x}} = \frac{-1}{Mx}$. This is due to our equilibrium conditions from Corollary \ref{corr_eqconditions}. Putting this in the above integral, we get,
$$\ln(\frac{\lx - r(x)}{\lambda(\tilde{x}) - r(\tilde{x})}) \leq \int_{\tilde{x}}^{x} \frac{\ldx}{\lx - r(x)}dx \leq \int_{\tilde{x}}^{x} \frac{-1}{Mx}dx = \frac{-1}{M}\ln(\frac{x}{\tilde{x}})$$

Since $\log(x)$ is a monotone increasing function, we can directly say,
$$\lx - r(x) \leq \displaystyle(\lambda(\tilde{x}) - r(\tilde{x}))(\frac{\tilde{x}}{x})^{\frac{1}{M}}.$$
From our previous proofs, we know that the efficiency is bounded from above by $\displaystyle 1+\frac{\int_{\tilde{x}}^{x^*}[\lx - r(x)]dx}{\tilde{x}(\lambda(\tilde{x}) - r(\tilde{x}))}$. Substituting our bound for $\lx - r(x)$ from the above inequality, we get
$$\eta \leq  1 +  \frac{\int_{\tilde{x}}^{x^*}(\frac{\tilde{x}}{x})^{\frac{1}{M}}dx}{\tilde{x}} =  1 + \frac{M}{M-1}\frac{(x^*)^{\frac{M-1}{M}} - (\tilde{x})^{\frac{M-1}{M}}}{\tilde{x}^{\frac{M-1}{M}}}.$$
We already know that $\frac{x^*}{\tilde{x}} \leq e^M$ from the claim proven above. So putting this in the above inequality gives us,
$$\eta \leq 1 + \frac{M}{M-1}(e^M)^{\frac{M-1}{M}} - \frac{M}{M-1} \leq \frac{M}{M-1}e^{M-1}.\qed$$ 
\end{proof}

\begin{clm_app}{clm_mpespecem}
Let $F_{exp}$ denote the demand and cost functions which can be represented as $\lx - r(x) = |\ln(\frac{x}{a})|^{\frac{1}{\alpha}}$ for $x \leq a$, and $\alpha\geq 1$. Then the efficiency $\eta$ is at most $e^{\frac{M}{\alpha}}$ for $\alpha \geq 1$.
\end{clm_app}
\begin{proof}
Since $\lx - r(x) = |\ln(\frac{x}{a})|^{\frac{1}{\alpha}}$, at the optimum flow $x^*$, we have $\lambda(x^*) - r(x^*) \geq 0$ so $x^* \leq a$ and $\tilde{x}<a$. Since $r(x)$ is non-decreasing, $r'(x)$ is non-negative. This gives us $\lambda'(x) \geq \lambda'(x) - r'(x) = \frac{-1}{\alpha}|\ln(\frac{x}{a})|^{\frac{1}{\alpha}-1}\frac{1}{x}$. At equilibrium, we know that $\lambda(\tilde{x}) - r(\tilde{x}) = M\tilde{x}|\lambda'(\tilde{x})| \leq M\tilde{x} |\lambda'(x) - r'(x)|$. Thus, $|\ln(\frac{\tilde{x}}{a})|^{\frac{1}{\alpha}} \leq M\tilde{x}\frac{1}{\alpha}|\ln(\frac{\tilde{x}}{a})|^{\frac{1}{\alpha}-1}\frac{1}{\tilde{x}}.$ Canceling out the common terms, we get at equilibrium $|\ln(\frac{\tilde{x}}{a})| \leq \frac{M}{\alpha}$ or $\tilde{x} \geq ae^{-\frac{M}{\alpha}}$. So $\frac{x^*}{\tilde{x}} \leq e^{\frac{M}{\alpha}}$, which gives us the desired bound on the efficiency using the arguments from the proof of Statement 1 above. \end{proof}

\subsection*{Example: Unbounded Efficiency for MPE functions}

\begin{claim}
\label{clm_unboundedpos}
The social welfare can be arbitrarily bad for MPE functions.
\end{claim}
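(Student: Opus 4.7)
To prove the claim I would exhibit an explicit MPE family of instances whose efficiency grows without bound, ruling out any uniform upper bound (and in particular any analog of the exponential-in-$M$ bounds from Claim \ref{thm_mpepos} and Claim \ref{clm_mpespecem}) for the full MPE class. The instance has a single monopoly edge (so $M=1$) with zero production cost, so that $r(x)\equiv 0$, together with the piecewise inverse-demand function
$$\lambda(x)=\begin{cases}1 & 0\le x\le 1,\\ x^{-\beta} & 1\le x\le T,\end{cases}$$
parameterized by $\beta>1$ and a population $T$ to be chosen arbitrarily large.

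First I would check that $\lambda$ lies in the generalized MPE class from Appendix \ref{app:inverse_demand}: the quantity $xh(x)=x|\lambda'(x)|/\lambda(x)$ equals $0$ on $(0,1)$ and $\beta$ on $(1,T)$, so it is non-decreasing, tends to zero as $x\to 0$, and satisfies the required one-sided inequality $0\le\beta$ at the kink $x=1$. Next I would verify directly from the definition that pricing the edge at $p=1$ and allocating $\tilde{x}=1$ is a Nash equilibrium: raising to $p=1+\epsilon$ destroys all demand since $\lambda\le 1$ everywhere, while lowering to $p=1-\epsilon$ induces flow $(1-\epsilon)^{-1/\beta}$ and profit $(1-\epsilon)^{1-1/\beta}$, which is strictly less than $1$ for $\beta>1$. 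Hence neither deviation is profitable, and the equilibrium profit equals $1$.

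Finally I would compare welfares. With zero costs the equilibrium social welfare equals $\int_0^1\lambda(x)\,dx=1$, whereas the socially optimal flow $x^*=T$ achieves welfare $\int_0^T\lambda(x)\,dx=1+(1-T^{1-\beta})/(\beta-1)$. Sending $T\to\infty$ first and then $\beta\to 1^+$ drives this ratio to infinity, establishing the claim.

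The main subtlety is that the kink at $x=1$ prevents a direct appeal to the smooth equilibrium characterization from Section \ref{sec:existenceBody}, so I would verify the Nash property by first principles as above rather than invoking Corollary \ref{corr_eqconditions}. If a smooth example is preferred, mollifying $\lambda$ in a small neighborhood of $x=1$ preserves MPE and perturbs both welfares by a vanishing amount, yielding the same unbounded efficiency.
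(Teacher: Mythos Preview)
Your proof is correct. Both you and the paper exhibit a power-law-tailed inverse demand whose MPE ratio $x|\lambda'(x)|/\lambda(x)$ sits at a constant value for large $x$, so that the equilibrium flow $\tilde{x}$ stays bounded while the optimal flow $x^*$ (and hence the optimal welfare) runs off to infinity. The executions differ, however. The paper takes $M=2$ monopolies, a truncated $\lambda(x)=x^{-1/r}$ with $r>M$, and \emph{linear production costs} $C_e(x)=c_0x$; it then derives the closed-form equilibrium from Corollary~\ref{corr_eqconditions} and lets $r\to M^+$. Your construction is more elementary: a single monopoly, zero cost, and a kink at $x=1$ so that the equilibrium $\tilde{x}=1$ and its welfare can be read off directly and verified by first principles without appealing to the smooth characterization. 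What you gain is simplicity; what the paper's version buys is a pointed contrast with Claim~\ref{thm_specposcap}, since for the \emph{same} demand family $\lambda(x)=ax^{-1/r}$ with $r>M$, capacitated edges yield a fully efficient equilibrium while (even linear) production costs yield unbounded inefficiency. Your zero-cost example proves the claim as stated but does not exhibit this capacities-versus-costs dichotomy that motivates the claim's placement in Section~\ref{sec:pos1}.
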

\begin{proof}
Consider a two-link path ($M=2$). The same example can generalized for any value of $M$, however assuming a specific value of $M$ allows us to derive the equilibrium and optimum flows analytically. The demand function has the following structure: $\lx = x^{-1/r}$ for $r > M$. However, this function does not belong to the class MPE since $\lim_{x \to 0}\frac{x|\ldx|}{\lx} \neq 0$. Therefore, we truncate this function at some very small $\epsilon > 0$ such that $\lx=x^{-1/r}$ for $x \geq \epsilon$ and $\lx = (\epsilon)^{-1/r}$ for $x < \epsilon$. A sample demand function of this form is represented in Figure~\ref{fig:posunbounded}.

\begin{figure}
\includegraphics[scale=0.7]{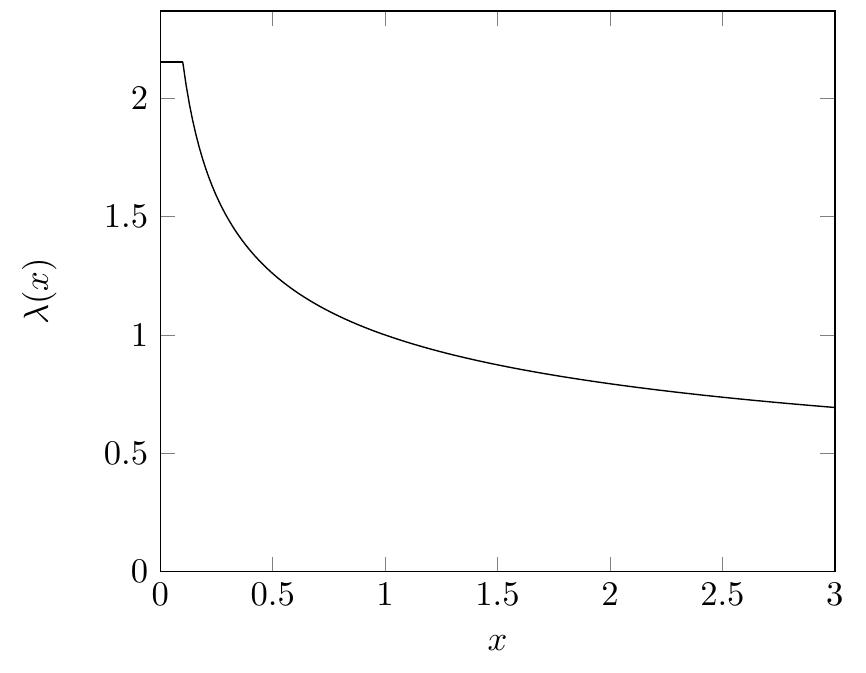}
\caption{Demand function of the form $\lx = x^{-1/r}$ truncated at some small $\epsilon$.}
\label{fig:posunbounded}
\end{figure}

We assume that each edge has a cost function given by $C_e(x) = c_0x$ for some positive $c_0$. The unique optimum flow is the solution with $(x^{*})^{-1/r}=2c_0$. Now, the most general condition that any equilibrium point must obey is given by Lemma~\ref{lemma_eqnconditionsgen}. For points where $\lambda'(x)$ is defined which in our case is everywhere except $x=\epsilon$, the condition reduces to $p_e + \tilde{x}\lambda'(\tilde{x}) - c_e(\tilde{x}) = 0$. Since both edges have the same $c_e$, this means that in any equilibrium at a point other than $\epsilon$, both the edges need to have the exact same price $p_e$.

Since the sum of prices is $\lambda(\tilde{x})$, the price on each edge must be equal to $\frac{\lambda(\tilde{x})}{2}$. First let's see if any $\tilde{x} \geq \epsilon$ satisfies the equilibrium conditions. Substituting the value for $p_e$ and the fact that $x\ldx = \frac{-1}{r}x^{-1/r}$, we get the equilibrium condition as
$$\tilde{x}^{-1/r}(\frac{1}{2} - \frac{1}{r}) = c.$$

It is not hard to see that if we price the edges according to our pricing rule at this point, which once again just reduces to $p_e = \lambda(\tilde{x})/2$, this point satisfies the conditions of Lemmas~\ref{lem_eqnconditionsincr} and~\ref{lem_eqnconditionsdecr} for $x$ in $[\epsilon, x^*]$. We do not care about the region between $[0,\epsilon]$ since $\lx$ is constant in this region so the profit is no larger than $\pi(\epsilon)$.

Now we have a closed form expression for both the optimum point $x^*$ and NE $\tilde{x}$. Using some basic algebra and computing the social welfare of both these solutions, we see that as $r \to M$, the ratio of the social welfares becomes arbitrarily large. At the same time, Theorem~\ref{thm_specposcap} implies that if the edges have capacities instead of costs then for any $r < M$, the efficiency is one.

We can construct similar instances for larger values of $M$ and $r > M$ and show that the unique equilibrium can be arbitrarily worse than the optimum.
\end{proof}

\section{Proofs from Section 5: Generalizations}
\label{appsec:generalizations}

For ease of exposition, we assumed that both $\lambda$ and $C_e$ for all $e$ are continuously differentiable. However, we now reprove our existence and other results for the case when $\lx$ and $C_e$ are continuous but only piecewise differentiable.

\subsection{Dropping Differentiability in Demand}
%Throughout our paper, we have assumed the existence of $\ldx$ and $r(x)$, which relies on the fact that both $\lx$ and $C_e(x)$ are continuously differentiable at every $x$. This was mainly for ease of exposition. In this section, we show that all our results carry over without any change whatsoever when $\lx$ and $C_e(x)$ are continuous but not necessarily differentiable. Notice that we still assume that $\lx$ and $C_e(x)$ are monotone, therefore continuity guarantees the existence of left hand and right hand derivatives at all points.
%
%\subsubsection*{Continuous $\lx$ and atomic-buyers}
We now generalize our model to consider the case when $\lx$ is continuous, monotone non-increasing, but not necessarily differentiable. All our results are valid even for these general demand functions as long as the function is differentiable piecewise. We now formalize this notion.

For almost-everywhere differentiable $\lx$, we define the class MPE to be as follows. Define $\lambda'_-(x) = \lim_{x \to x^-}\lambda'(x)$ and $\lambda'_+(x) = \lim_{x \to x^+}\lambda'(x)$. A function is MPE if these limits exist and for all $x_1< x_2$ we have that $$\frac{x_1|\lambda'_-(x_1)|}{\lambda(x_1)}\leq \frac{x_1 |\lambda'_+(x_1)|}{\lambda(x_1)}\leq \frac{x_2 |\lambda'_-(x_2)|}{\lambda(x_2)}\leq \frac{x_2|\lambda'_+(x_2)|}{\lambda(x_2)},$$ and if $\frac{x_1 |\lambda'_+(x_1)|}{\lambda(x_1)}$ tends to 0 as $x\to 0$.
In particular, the property also implies that for any $x$,
%$$\frac{x|\lambda'_-(x)|}{\lambda(x)} = \lim_{t \to x^-}\frac{t|\lambda'(t)|}{\lambda(t)} \leq \lim_{t \to
%x^+}\frac{t|\lambda'(t)|}{\lambda(t)} = \frac{x|\lambda'_+(x)|}{\lambda(x)}.$$
%
%This implies that for every $x$ where the derivative does not exist, the MPE property forces the function to
we have $|\lambda'_-(x)| \leq |\lambda'_+(x)|$.

\begin{theorem}
\label{cl_discolambdaex}
For continuous but non-differentiable $\lx$ belonging to the class MPE, there always exists a Nash equilibrium with a
non-trivial flow of $\tilde{x} > 0$ obeying our pricing rule. Furthermore this equilibrium obeys all our desired properties: Non-Trivial Pricing, Recovery of Production Costs, Pareto-Optimality in the space of all equilibria and Local Dominance.
\end{theorem}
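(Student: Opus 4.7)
The plan is to follow the template of Theorem~\ref{thm:existence} almost verbatim, treating the one-sided limits $\lambda'_-$ and $\lambda'_+$ separately at every place the original proof invokes $\lambda'(\tilde{x})$. The pricing rule from Equation~\ref{appeqn_ourpricingrule} stays unchanged: non-monopoly edges charge $c_e(\tilde{x}_e)$ and each monopoly edge charges $c_e(\tilde{x})+(\lambda(\tilde{x})-r(\tilde{x}))/M$, so that every flow-carrying $s$-$t$ path still costs exactly $\lambda(\tilde{x})$ and $\vec{\tilde{x}}$ is automatically a buyer best-response.

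First I would re-prove the two seller-deviation lemmas. When a monopoly $e$ raises its price the resulting flow $x$ satisfies $x<\tilde{x}$, so the calculus of Proposition~\ref{sublem_exist_smallx} only uses $\lambda$ to the left of $\tilde{x}$; no monopoly benefits from an increase provided $\pe\geq \tilde{x}|\lambda'_-(\tilde{x})|$. Symmetrically, a price cut yields $x>\tilde{x}$, so Lemma~\ref{sublem_exist_largex} is applied using $|\lambda'_+(\tilde{x})|$, and no monopoly benefits provided $\pe\leq \tilde{x}|\lambda'_+(\tilde{x})|$. The generalized MPE chain $x_1|\lambda'_+(x_1)|/\lambda(x_1)\leq x_2|\lambda'_-(x_2)|/\lambda(x_2)$ for $x_1<x_2$ is exactly what drives both calculus steps, and continuity of $\lambda$ and of $r$ keeps the seller's profit function $\pi(x)$ continuous even at points where $\lambda'$ jumps.

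Next, I would find $\tilde{x}\in(0,x^*]$ with $\lambda(\tilde{x})>0$ satisfying the sandwich
\[
M\tilde{x}|\lambda'_-(\tilde{x})| \;\leq\; \lambda(\tilde{x})-r(\tilde{x}) \;\leq\; M\tilde{x}|\lambda'_+(\tilde{x})|,
\]
which pairs with the pricing rule to supply both one-sided bounds simultaneously. Let $h_\pm(x)=\lambda(x)-r(x)-Mx|\lambda'_\pm(x)|$; the generalized MPE condition gives $h_+\leq h_-$ pointwise, and on each differentiability interval the two agree and are continuous. Mimicking the existence argument in Theorem~\ref{thm:existence}, the MPE tail condition forces $h_+(x)>0$ near $0$, while the sub-optimality case is precisely $h_+(x^*)<0$. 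A sign flip of $h_+$ then occurs either at an interior point of differentiability (yielding the original equality $h_-(\tilde{x})=h_+(\tilde{x})=0$) or at a jump point $x_0$ of $\lambda'$, in which case the left-limit and right-limit definitions of $h_\pm$ force $h_-(x_0)\geq 0\geq h_+(x_0)$, the sandwich. The boundary cases $\tilde{x}=x^*$, $\lambda(x^*)=0$, and zero production cost are handled identically to the end of Theorem~\ref{thm:existence}.

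The jump-point case of this IVT is the main subtlety: it demands careful bookkeeping to confirm that the single scalar $(\lambda(\tilde{x})-r(\tilde{x}))/M$ simultaneously dominates $\tilde{x}|\lambda'_-(\tilde{x})|$ and is dominated by $\tilde{x}|\lambda'_+(\tilde{x})|$, so that the pricing rule's common slack $\pe$ satisfies both deviation-lemma hypotheses at once. Once $\tilde{x}$ is in hand, Non-Trivial Pricing, Recovery of Production Costs, and Local Dominance transfer directly from Claim~\ref{appclm_eqproperties}, since those proofs never appeal to differentiability of $\lambda$. Pareto-Optimality (in particular its Case~IV) carries over after replacing the strict MPE inequality used there by the strict version of the sandwich, and the maximal-$\tilde{x}$ convention from Lemma~\ref{applem_multipletilde} again handles the degenerate case where the sandwich is a flat interval of candidate equilibria.
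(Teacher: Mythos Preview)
Your proposal is correct and follows essentially the same route as the paper: re-prove the two seller-deviation lemmas with $\lambda'_-(\tilde{x})$ governing price increases and $\lambda'_+(\tilde{x})$ governing price cuts, then locate $\tilde{x}$ satisfying the sandwich $M\tilde{x}|\lambda'_-(\tilde{x})|\le\lambda(\tilde{x})-r(\tilde{x})\le M\tilde{x}|\lambda'_+(\tilde{x})|$ via the same IVT-across-jump-points argument, and finally port the four properties from Claim~\ref{appclm_eqproperties}. The only cosmetic difference is your packaging via $h_\pm$; one small imprecision is that the paper triggers the ``suboptimal'' branch on $h_-(x^*)<0$ rather than $h_+(x^*)<0$, but since $h_+\le h_-$ your trigger is implied by theirs and the argument goes through unchanged.
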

\begin{proof}

As with our continuously differentiable case, we give a series of lemmas that capture some sufficient conditions for equilibrium. We then show that for MPE functions, there must always exist a point satisfying these conditions and also obeying our pricing rule. Recall that the quantity $\pe$ is defined to be the increased price on an edge from its marginal cost, i.e., the total price $p_e = \pe + c_e(x_e)$. Usually when we refer to $\pe$, we refer to the equilibrium prices.

First we show that Lemma \ref{lem_eqnconditionsincr} still holds if $\lambda$ is not continuously differentiable, with the condition in the lemma being $\tilde{x}|\lambda'_-(\tilde{x})|$ instead of $\tilde{x}|\lambda'(\tilde{x})|$. The proof is very similar to that of Lemma~\ref{lem_eqnconditionsincr} so we only sketch the differences here. Define the profit function $\pi(x)=[p_e + \lx - \lambda(\tilde{x})]x-C_e(x)$. We will prove that in the domain $(0,\tilde{x}]$, $\pi(x)$ is maximized when $x=\tilde{x}$, thus implying that no matter what amount the seller increases his price by, at the resulting flow of magnitude $x$, his profit cannot be strictly larger than the original profit. Once again, consider the derivative of this function. Since $\lambda'(x)$ may not exist at all points, we simply define the derivative in the limit as well, i.e., $\pi'_-(x)$ and $\pi'_+(x)$ at points where the derivative does not exist.

\begin{align*}
\pi'_+(x) & = p_e + \lx - \lambda(\tilde{x}) + x\lambda'_+(x) - c_e(x) \\
& \geq \lx - \lambda(\tilde{x}) + \tilde{x}|\lambda'_-(\tilde{x})| - x|\lambda'_+(x)| + (c_e(\tilde{x}) - c_e(x))
\end{align*}

Since the last term in the parenthesis, $c_e(\tilde{x}) - c_e(x)$ is non-negative ($c_e$ is non-decreasing), in order to show that $\pi'(x) \geq 0$ for all $x \leq \tilde{x}$, it suffices if we show the first terms are non-negative. The following proposition implies that for MPE functions, this is indeed true.

\begin{proposition} For $\lambda\in MPE$ and $M\geq 1$, we have that $\lambda(\tilde{x})-\lambda(x)\leq \tilde{x}|\lambda'_-(\tilde{x})|-x|\lambda'_+(x)|$ for $x<\tilde{x}$.
\end{proposition}
The proof of this proposition is very similar to that of Proposition~\ref{sublem_exist_smallx} and we use the fact that,
\begin{equation*}
\frac{\lambda(\tilde{x})}{\tilde{x}|\lambda'_-(\tilde{x})|}\leq \frac{\lx}{x|\lambda'_+(x)|}.
\end{equation*}
We have argued now that for all $x$, $\pi'_+(x) \geq 0$. Now for any given $x$, $\pi'_-(x) \geq \pi'_+(x)$, since $|\lambda'_+(x)| \geq |\lambda'_-(x)|$, so $\pi'_-(x) \geq \pi'_+(x) \geq 0$ for all $x \leq \tilde{x}$. This completes the proof of Lemma \ref{lem_eqnconditionsincr}.

Now we show that Lemma \ref{lem_eqnconditionsdecr} still holds, with the condition in the lemma being $\tilde{x}|\lambda'_+(\tilde{x})|$ instead of $\tilde{x}|\lambda'(\tilde{x})|$. We want to show that the new profit a seller makes is at most the old profit as long as Condition (3) is obeyed.  Consider seller $e$'s profit, $\pi(x)=[p_e + \lx - \lambda(\tilde{x})]x-C_e(x)$. Similar to the extension of Lemma~\ref{lem_eqnconditionsincr}, we can show that in the domain $[\tilde{x},T]$, $\pi(x)$ is maximized when $x=\tilde{x}$, thus implying that no matter what amount the seller decreases his price by, at the resulting flow of magnitude $x$, his profit cannot be strictly larger than the original profit. The proof uses the following Lemma,

\begin{lemma}
\label{sublem_exist_largex_diff}
For $\lambda\in MPE$ and $M\geq 1$, we have that $\lambda(\tilde{x})-\lambda(x)\geq \pe-x|\lambda'_-(x)|$ for $x>\tilde{x}$.
\end{lemma}
\begin{proof}
The proof is slightly trickier than all the other cases. The following equation however is the main step

\begin{equation*}
\frac{\lambda(\tilde{x})}{\pe} \geq \frac{\lambda(\tilde{x})}{\tilde{x}|\lambda'_+(\tilde{x})|}\geq \frac{\lx}{x|\lambda'_-(x)|}
\end{equation*}
The equation is true because $\pe \leq \tilde{x}|\lambda'_+(\tilde{x})|$. Now since we have defined all the prices to be positive, it is not possible for any edge in a flow carrying path to be priced larger than $\lambda(\tilde{x})$. This implies that $\lambda(\tilde{x}) \geq p_e \geq \pe$. Now our previous analysis for Proposition~\ref{sublem_exist_largex} carries over to this case.
\end{proof}
Similarly, $\pi'_+(x) \leq \pi'_-(x) \leq 0$. This completes the proof that Lemma \ref{lem_eqnconditionsdecr} still holds, and gives us the following corollary:

\begin{corollary}
\label{lemma_eqnconditionsgen}
Any given solution $(\vec{p}, \vec{x})$ with $\vec{x}$ being a best-response allocation to prices $\vec{p}$ and $\lambda(\tilde{x}) > 0$ is a Nash Equilibrium if the following conditions are met
\begin{enumerate}
\item $\vec{x}$ is a minimum cost allocation of magnitude $x$.

\item All non monopoly edges are priced at their marginal cost, i.e, $\pe= 0$.

\item For all monopoly edges, $\pe \geq \tilde{x}|\lambda'_-(\tilde{x})|$ and one of the following is true,
\begin{enumerate}
\item $\pe \leq \tilde{x}|\lambda'_+(\tilde{x})|$ \emph{or}
\item $\pe = 0$ \emph{or}
\item $\tilde{x} = T$.
\end{enumerate}

\end{enumerate}
\end{corollary}
To show that such a solution exists, once again we price edges according to our pricing rule and so for monopolies $\pe = \frac{\lambda(\tilde{x}) - r(\tilde{x})}{M}$ at some $\tilde{x}$ where we have taken the minimum cost flow. We claim that either the optimum point $x^*$ satisfies $\lambda(x^*) - r(x^*) \geq Mx^*|\lambda'_-(x^*)|$ or there exists $\tilde{x} > 0$ that is a Nash Equilibrium. Clearly if the optimum point satisfies the desired condition, we are done from the corollary, just as in the argument in Section \ref{app:existence}. Suppose that is not true, then $\lambda(x^*) - r(x^*) < Mx^*|\lambda'_-(x^*)|$. But since the function belongs to the class MPE, $\lim_{x \to 0}\lambda(x) - r(x) > \lim_{x \to 0}Mx|\lambda'_+(x)| > \lim_{x \to 0}Mx|\lambda'_-(x)|$. Since $\lx - r(x)$ is continuous, the function $\frac{Mx|\lambda'(x)|}{\lambda(x)}$ is monotone and $\frac{Mx|\lambda'_-(x)|}{\lambda(x)} \leq \frac{Mx|\lambda'_+(x)|}{\lambda(x)}$, we claim there must exist some intermediate value of $x$, where $Mx|\lambda'_-(x)| \leq \lx - r(x) \leq Mx|\lambda'_+(x)|$. Why is this true?

First, if $\lambda'(x)$ is defined in some region (say) $[x_1,x_2]$ such that $\lambda(x) - r(x) > Mx_1|\lambda'(x_1)|$, and $\lambda'(x_2) - r(x_2) < Mx_2|\lambda'(x_2)|$, then there must exist some $x \in [x_1, x_2]$ where $\lx - r(x) = Mx|\ldx|$. Indeed, recall that if a function is differentiable in some region $[x_1,x_2]$, the derivative cannot have jump discontinuities in this region. Moreover since this is  MPE function, the derivative is finite everywhere and the limit at any point exists, and so the curve $\lx - r(x) - Mx|\ldx|$ is continuous and equal 0 at some point in $[x_1,x_2]$.

So the only place where the analysis breaks is for points where $\ldx$ is not defined. Since $\lx$ is differentiable almost everywhere, we only have to worry about point discontinuities. But in this case, we will have that $|\lambda'_-(x)| < |\lambda'_+(x)|$, and since this is an MPE function, we will obtain a point where $\lambda(x) - r(x) - Mx|\lambda_+'(x)| < 0 < \lambda(x) - r(x) - Mx|\lambda_-'(x)|$, as desired.

Summing up, we are therefore, guaranteed the existence of an equilibrium that is either optimal or satisfies the following equation.

\begin{equation}
\label{eqn_generalourrule}
\tilde{x}|\lambda'_-(\tilde{x})| \leq \frac{\lambda(\tilde{x}) - r(\tilde{x})}{M} \leq \tilde{x}|\lambda'_+(\tilde{x})|
\end{equation}

It is not hard to see that this equilibrium obeys all our desired properties. In particular, the proof of Claim~\ref{appclm_eqproperties} carries over with just one minor difference: considering $\lambda'_-$ and $\lambda'_+$ accordingly instead of just the differential.
\end{proof}
%
%Once again we can show that, when $\lx$ is strictly decreasing at $x=\tilde{x}$, this $S$ is a single point. The additional case arises when it is strictly decreasing in one direction only, i.e., $|\lambda'_-(\tilde{x})| = 0$. In this case, as long as $\tilde{x}$ is not optimal, for this function to belong to the class MPE, it is necessary that $\lambda'(x) = 0$ for all $x$ in $[0,\tilde{x})$. This means that a monopoly would want to increase its price since $\pe \leq x|\lambda_+(x)| = 0$ is no longer true here. Therefore we conclude that once again as long as the equilibrium is not optimal, it is unique and as with the previous case, we can show that it can be computed using a binary search algorithm.

All of our other theorems on the efficiency follow almost immediately from Equation~\ref{eqn_generalourrule}. For instance, for MHR functions,
$$\int_{\tilde{x}}^{x^*}(\lx - r(x))dx \leq \frac{\lambda(\tilde{x}) - r(\tilde{x})}{\tilde{x}|\lambda'_+(x)|}\int_{\tilde{x}}^{x^*}(|\ldx|)dx \leq M\int_{\tilde{x}}^{x^*}(|\ldx|)dx.$$ Note that the integration has to be performed piecewise. The same idea applies to the proof of Concave and MPE functions.

We make a small remark on uniqueness here. A restricted version of our uniqueness result does carry over to this setting, mainly that as long as there is some strong monotonicity, the equilibrium flow ($\tilde{x}$) is unique. However, since Equation~\ref{eqn_generalourrule} only gives us upper and lower bounds for prices, the same flow may be induced by several different sets of prices. Notice however, that does not effect our efficiency results because efficiency depends only the allocation: prices are intrinsic and do not affect the welfare directly.

\subsubsection*{Piecewise Cost Functions}

We now move on to the case when the production cost for any item $C_e(x)$ is continuous but not necessarily differentiable. It is extremely common in network flow literature and even in markets of continuous goods~\cite{correa2008pricing} to consider such cost functions. Formally, for a given instance, we assume that $C_e(x)$ is convex, continuous but not (necessarily) differentiable for all $e$. However, we assume that it is piecewise differentiable, i.e., if at some $x=x_0$, $\frac{d}{dx}C_e(x_0) = c_e(x_0)$ does not exist, then both $\lim_{x \to x^-_0}c_e(x) = c^-_e(x_0)$ and $\lim_{x \to x^+_0}=c^+_e(x_0)$ exist and are finite. Once again, if we take $R(x)$ to be the min-cost flow function as defined previously, then it is not too hard to see that $R(x)$ is also continuous and convex.  We define $r^-(x)$ and $r^+(x)$ accordingly wherever $R(x)$ is not differentiable as the left hand and right hand limits of the derivative. Since the functions are convex, we can immediately infer that $c_e^-(x) \leq c_e^+(x)$ and $r^-(x) \leq r^+(x)$ always hold. We still assume that $\forall e$, $c_e(0)= 0$, which we relax in the following section.

Before showing that \textbf{all our efficiency results} extend to this general case, it is useful to first highlight the difficulties encountered when facing piecewise cost functions and why our techniques do not directly apply here. First notice that we used marginal cost pricing extensively for non-monopolies, but this is no longer applicable here because the marginal cost may not be unique. Second, all our efficiently results used Corollary~\ref{corr_eqconditions}. Therefore, we need to define a new version of this based on $r^+(\tilde{x})$ and $r^-(\tilde{x})$ and show that it is still an equilibrium. We first focus on this condition since it is easier to resolve. We prove that mildly modifying the condition for the case with non-differentiable $C_e(x)$ directly yields a sufficient condition for equilibrium existence. We do not reprove the lemmas since their proofs are analogous to the original.

\begin{claim}
\label{clmpccost_sellerpricechange}
Given a solution vector pair $(\vec{p},\vec{\tilde{x}})$ for an instance with $M \geq 1$, where $\vec{\tilde{x}}$ is a best-response flow and all flow carrying paths are priced at $\lambda(\tilde{x}) \geq \tilde{x}|\lambda'(\tilde{x})|$ and non-flow paths are priced at or above $\lambda(\tilde{x})$, no seller $e$ can change his price and improve profits as long as
\begin{enumerate}
\item $p_e \geq \tilde{x}|\lambda'(\tilde{x})| + c^-_e(x)$.

\item $p_e \leq \tilde{x}|\lambda'(\tilde{x})| + c^+_e(x)$ \emph{(or)} $\tilde{x}=T$.
\end{enumerate}
\end{claim}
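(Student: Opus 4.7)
The plan is to adapt the proofs of Lemmas~\ref{lem_eqnconditionsincr} and~\ref{lem_eqnconditionsdecr} to the case where each $C_e$ is convex and continuous but only piecewise differentiable. As in those arguments, the key observation is that whenever a monopolizing seller $e$ changes his price from $p_e$ to some $p'_e$, the fact that buyers play a best response forces the resulting flow magnitude $x$ to satisfy $p'_e = p_e + \lambda(x) - \lambda(\tilde{x})$ (all flow-carrying paths share edge $e$, so the new cheapest-path cost must equal $\lambda(x)$). Consequently, all information about deviations is captured by the single deviation function
\[
\pi(x) = [p_e + \lambda(x) - \lambda(\tilde{x})]\,x - C_e(x),
\]
and it suffices to show that $\pi$ is maximized at $x=\tilde{x}$ on $(0,\tilde{x}]$ (for Condition 1) and on $[\tilde{x},T]$ (for Condition 2). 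The one non-routine ingredient I will use is the monotonicity of the subdifferential of a convex function: for $x_1 < x_2$ we have $c_e^+(x_1) \leq c_e^-(x_2)$, and at every point $c_e^-(x) \leq c_e^+(x)$.

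For Condition 1 (no profitable price increase), I would take the right-hand derivative and substitute the lower bound $p_e \geq \tilde{x}|\lambda'(\tilde{x})| + c_e^-(\tilde{x})$:
\[
\pi'_+(x) \geq \bigl[\lambda(x) - \lambda(\tilde{x}) + \tilde{x}|\lambda'(\tilde{x})| - x|\lambda'(x)|\bigr] + \bigl[c_e^-(\tilde{x}) - c_e^+(x)\bigr].
\]
The first bracket is non-negative for $x \leq \tilde{x}$ by Proposition~\ref{sublem_exist_smallx} applied to the differentiable $\lambda$ (note the hypothesis $\lambda(\tilde{x}) \geq \tilde{x}|\lambda'(\tilde{x})|$ plays exactly the role of $C\geq D$ in that proof). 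The second bracket is non-negative by subdifferential monotonicity since $x \leq \tilde{x}$. Hence $\pi'_+(x)\geq 0$, and because $\pi$ is continuous and $\pi'_-(x)\geq \pi'_+(x)$ on this interval (again by subdifferential monotonicity), $\pi$ is non-decreasing on $(0,\tilde{x}]$.

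For Condition 2 (no profitable price decrease), I would symmetrically take the left-hand derivative and substitute the upper bound $p_e \leq \tilde{x}|\lambda'(\tilde{x})| + c_e^+(\tilde{x})$, obtaining for $x \geq \tilde{x}$:
\[
\pi'_-(x) \leq \bigl[\lambda(x) - \lambda(\tilde{x}) + \tilde{x}|\lambda'(\tilde{x})| - x|\lambda'(x)|\bigr] + \bigl[c_e^+(\tilde{x}) - c_e^-(x)\bigr].
\]
The first bracket is non-positive on this domain by Lemma~\ref{sublem_exist_largex}, and the second bracket is non-positive because $\tilde{x}\leq x$ forces $c_e^+(\tilde{x}) \leq c_e^-(x)$ by subdifferential monotonicity. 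Hence $\pi$ is non-increasing on $[\tilde{x},T]$. The alternative hypothesis $\tilde{x}=T$ is handled trivially since no additional buyers remain to be attracted by a price cut. The non-monopoly case, as well as sellers whose alternative path has a strictly smaller price after a deviation, are handled exactly as in the first parts of Lemmas~\ref{lem_eqnconditionsincr} and~\ref{lem_eqnconditionsdecr} and require no modification.

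The main obstacle I anticipate is making the subdifferential-monotonicity step rigorous at jump points of $c_e$: specifically, verifying that replacing $c_e(x)$ with $c_e^+(x)$ (resp.\ $c_e^-(x)$) in the right (resp.\ left) derivative of $\pi(x)$ is actually correct, rather than naively applying the product/chain rule. I would resolve this by writing $\pi$ as a difference of a smooth function $[p_e+\lambda(x)-\lambda(\tilde{x})]x$ and the convex function $C_e(x)$, then invoking the fact that for convex $C_e$ the one-sided derivatives $c_e^-,c_e^+$ are exactly the one-sided derivatives of $C_e$. Everything else in the argument is a mechanical transcription of the continuously differentiable case, and the efficiency proofs of Section~\ref{sec:efficiency} inherit directly because they depend only on the equilibrium characterization this claim yields together with Equation~\ref{eqn_generalourrule}.
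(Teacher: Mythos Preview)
Your proposal is correct and follows exactly the approach the paper intends: the paper itself omits the proof, stating only that ``their proofs are analogous to the original'' (i.e., to Lemmas~\ref{lem_eqnconditionsincr} and~\ref{lem_eqnconditionsdecr}), and you have faithfully carried out that analogy, replacing $c_e(\tilde{x})-c_e(x)$ with the appropriate one-sided quantities $c_e^-(\tilde{x})-c_e^+(x)$ and $c_e^+(\tilde{x})-c_e^-(x)$ and invoking subdifferential monotonicity of the convex $C_e$ to control their sign.
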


While the previous claim gave us sufficient conditions that the prices should obey so that sellers cannot increase or decrease their price, the following two lemmas give us an idea of what exactly these prices should be.

\begin{lemma}
\label{lem_eqnconditionspccost1}
Given a min-cost flow $\vec{\tilde{x}}$ for an instance with $M \geq 1$ and a quantity $\tilde{c}_e$ for all $e$ such that,
\begin{enumerate}
\item $r^-(\tilde{x}) \leq \lambda(\tilde{x}) - M\tilde{x}|\lambda'(\tilde{x})| \leq r^+(\tilde{x})$.
\item $\forall e$, $c^-_e(\tilde{x}_e) \leq \tilde{c}_e \leq c^+_e(x_e)$.
\item For any flow carrying path $P$, $\sum_{e \in P}\tilde{c}_e = \lambda(\tilde{x}) - M\tilde{x}|\lambda'(\tilde{x})|$.
\end{enumerate}

Then, pricing all non-monopoly edges at $p_e = \tilde{c}_e$ and all monopoly edges at $p_e = \tilde{c}_e + \tilde{x}|\lambda'(x)|$ results in a Nash Equilibrium.
\end{lemma}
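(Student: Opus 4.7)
The plan is to verify both sides of the Nash equilibrium definition in turn: (I) that the flow $\vec{\tilde{x}}$ is a best response by the buyers to the posted prices, and (II) that no individual seller can unilaterally change his price to strictly improve his profit.

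For (I), I would first evaluate the total price along any flow-carrying $s$-$t$ path $P$. Since the $M$ monopolies lie on every such path, $\sum_{e \in P} p_e = \sum_{e \in P} \tilde{c}_e + M\tilde{x}|\lambda'(\tilde{x})|$, which by condition 3 of the lemma equals $\lambda(\tilde{x})$. For any non-flow-carrying $s$-$t$ path $P'$, the KKT optimality conditions for a min-cost flow with piecewise-differentiable convex costs (adapted from Lemma~\ref{lemma_nonzeroflowcost}) imply that any valid subgradient selection $\tilde{c}_e \in [c^-_e(\tilde{x}_e), c^+_e(\tilde{x}_e)]$ matching condition 3 satisfies $\sum_{e \in P'} \tilde{c}_e \geq \sum_{e \in P} \tilde{c}_e$. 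Adding the same monopoly markup to both sides (monopolies lie on $P'$ as well) gives $\sum_{e \in P'} p_e \geq \lambda(\tilde{x})$, so buyers rationally concentrate their flow along flow-carrying paths.

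For (II), monopoly sellers are handled directly by Claim~\ref{clmpccost_sellerpricechange}. Its preconditions hold by part (I), and condition 1 of the lemma yields $\lambda(\tilde{x}) \geq r^-(\tilde{x}) + M\tilde{x}|\lambda'(\tilde{x})| \geq \tilde{x}|\lambda'(\tilde{x})|$ as required (using $r^-(\tilde{x}) \geq 0$ and $M \geq 1$). For any monopoly $e$, since $\tilde{c}_e \in [c^-_e(\tilde{x}_e), c^+_e(\tilde{x}_e)]$ by condition 2, the price $p_e = \tilde{c}_e + \tilde{x}|\lambda'(\tilde{x})|$ lies in the interval $[c^-_e(\tilde{x}_e) + \tilde{x}|\lambda'(\tilde{x})|,\, c^+_e(\tilde{x}_e) + \tilde{x}|\lambda'(\tilde{x})|]$ demanded by the two conditions of the claim, so no monopoly can profitably deviate.

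Non-monopoly edges require separate treatment since $p_e = \tilde{c}_e$ need not exceed $c^-_e(\tilde{x}_e) + \tilde{x}|\lambda'(\tilde{x})|$, so Claim~\ref{clmpccost_sellerpricechange} does not apply out of the box. To rule out price increases by a non-monopoly $e$, I would adapt the shortcut-ear argument from the proof of Claim~\ref{clm_monopolypos1}: since $e$ is non-monopoly, a flow-carrying path through $e$ admits a re-routing via an ear of non-flow edges avoiding $e$, and the KKT conditions combined with condition 3 guarantee that this alternative path has $\tilde{c}$-sum equal to the flow-carrying sum, hence total price exactly $\lambda(\tilde{x})$; any strict price hike by $e$ then diverts all buyers onto it. To rule out price decreases, $p_e = \tilde{c}_e \leq c^+_e(\tilde{x}_e)$ combined with convexity of $C_e$ (the piecewise-differentiable analogue of Lemma~\ref{cl_margcost}) ensures that any lower price $p'_e$ inducing extra flow $\Delta$ satisfies $p'_e \Delta \leq C_e(\tilde{x}_e + \Delta) - C_e(\tilde{x}_e)$, so profit weakly decreases. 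The main obstacle is making this non-monopoly price-increase step fully rigorous in the regime where $e$ happens to sit on every flow-carrying path and the non-flow $\tilde{c}$-sums strictly exceed the flow-carrying sum; here the freedom to choose $\tilde{c}_e$ anywhere in $[c^-_e(\tilde{x}_e), c^+_e(\tilde{x}_e)]$ must be exploited to ensure a tight ear exists, which should be possible by carefully selecting subgradient representatives consistent with a shortest-path decomposition of the graph.
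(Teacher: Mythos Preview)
Your overall plan matches the paper's: it says the proof is ``analogous to the original'' Corollary~\ref{corr_eqconditionssuff}, deferring to Claim~\ref{clmpccost_sellerpricechange} for monopoly edges and to the ear argument plus the marginal-cost Lemma~\ref{cl_margcost} for non-monopolies. But there is a real gap in your Part~(I). You assert that KKT for a piecewise-convex min-cost flow forces $\sum_{e\in P'}\tilde c_e\ge p^*$ on every non-flow path $P'$ for \emph{any} subgradient selection $\tilde c_e\in[c_e^-(\tilde x_e),c_e^+(\tilde x_e)]$ obeying condition~3. That is false: KKT yields node potentials and hence the \emph{existence} of one good selection, not a constraint on all of them. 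Concretely, let a single monopoly $e_0$ be followed by two parallel two-edge paths $e_1e_2$ and $e_3e_4$, each carrying half the flow, together with a zero-cost zero-flow bridge $e_5$ from the midpoint of the first to the midpoint of the second; give each of $e_1,\ldots,e_4$ a kink with $[c^-,c^+]=[0,10]$ at its flow value. The choice $(\tilde c_{e_1},\tilde c_{e_2},\tilde c_{e_3},\tilde c_{e_4})=(1,4,4,1)$ satisfies conditions~2 and~3, yet the non-flow path $e_0e_1e_5e_4$ is strictly cheaper than the flow paths, so $\vec{\tilde x}$ is not a buyer best response and the pricing is not an equilibrium.

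The paper sidesteps this because it only ever applies the lemma with the $\tilde c_e$ produced by Claim~\ref{clm_balancingprices}, whose second conclusion is precisely the missing inequality $\sum_{e\in P'}\tilde c_e\ge p^*$ (so the lemma statement is slightly loose, but harmlessly so in context). Once that inequality is available, Part~(I) is immediate, and the obstacle you flag in your last paragraph also dissolves: if a non-monopoly $e$ carries the full flow, the ear from Claim~\ref{clm_monopolypos1} consists entirely of zero-flow edges, whose $\tilde c$ values are forced to $0$ by $c_e^+(0)=0$; hence the alternative path has $\tilde c$-sum at most $p^*$, and by the Claim~\ref{clm_balancingprices} inequality exactly $p^*$. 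No further ``careful subgradient selection'' is needed beyond invoking that claim.
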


\begin{lemma}
\label{lemma_eqnconditionsoptpccost}
Suppose that the optimal flow satisfies $\lambda(x^*) - Mx^*|\lambda'(x^*)| \geq r^-(x^*)$ and we are given $\tilde{c}_e$ for all edges satisfying the following
\begin{enumerate}
\item $ c^-_e(x_e) \leq \tilde{c}_e \leq c^+_e(x_e)$
\item For all flow carrying paths $P$, $\sum_{e \in P}\tilde{c}_e = r^-(x^*) \leq r^+(x^*) \leq \lambda(x^*)$.
\end{enumerate}
Then $x^*=T$. Further, the solution where all non-monopoly edges are priced at $p_e=\tilde{c}_e$ and monopoly edges are priced at $p_e = \tilde{c}_e + \frac{\lambda(x^*) - r^-(x^*)}{M}$ is a Nash  Equilibrium.
\end{lemma}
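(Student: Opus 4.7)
The plan is to first establish that $x^* = T$ under the given hypotheses, and then verify that the proposed pricing is a Nash equilibrium by separately checking the buyer best-response condition and the absence of profitable seller deviations. The heavy lifting has already been done by Claim~\ref{clmpccost_sellerpricechange} and the existence proofs in Section~\ref{app:existence}, so this lemma is essentially the analog of the ``$\tilde{x}=x^*$'' branch of Corollary~\ref{corr_eqconditions} adapted to piecewise-differentiable cost functions.

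For the claim that $x^* = T$, I would extend Proposition~\ref{prop_optflow} to the piecewise setting: any interior optimum $x^* < T$ of the welfare $\Lambda(x)-R(x)$ must satisfy the subgradient inclusion $\lambda(x^*)\in [r^-(x^*), r^+(x^*)]$. Combining this with the asserted inequality $r^+(x^*)\le \lambda(x^*)$ from condition (2) forces $\lambda(x^*) = r^+(x^*)$. Hypothesis (1) rewritten as $\lambda(x^*) - r^-(x^*) \ge Mx^*|\lambda'(x^*)|$ then says $r^+(x^*) - r^-(x^*) \ge Mx^*|\lambda'(x^*)|$; in every non-degenerate situation (i.e., whenever one is genuinely away from the flat-demand or zero-flow cases handled trivially elsewhere) this produces a contradiction with the local optimality at the jump of $r$, so we must instead have $x^* = T$.

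Next I would verify the buyer best-response. By the pricing rule and condition~(2), the total price of any flow-carrying path $P$ under the proposed $\vec p$ equals
\[
\sum_{e\in P}\tilde c_e \;+\; M\cdot\frac{\lambda(x^*)-r^-(x^*)}{M}\;=\;r^-(x^*) + \lambda(x^*) - r^-(x^*)\;=\;\lambda(x^*).
\]
Since $x^* = T$ buyers have value at least $\lambda(x^*)$ for an $s$-$t$ bundle, all $T$ of them purchase a cheapest path, so $\vec{x^*}$ is a best-response flow to $\vec p$.

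Finally, I would rule out profitable seller deviations in two regimes. For a price \emph{decrease} by any seller $e$, because $x^* = T$ there are no more buyers left to attract, the flow on $e$ can rise by at most $0$, so the new profit $p'_e x^* - C_e(x^*)$ is strictly smaller than the current profit; this is exactly the $\tilde{x}=T$ branch of Claim~\ref{clmpccost_sellerpricechange}. For a price \emph{increase} by a non-monopoly $e$, the argument mirrors the proof of Claim~\ref{clm_monopolypos1}: since $\vec{x^*}$ is a min-cost flow, the KKT characterization (Lemma~\ref{lemma_nonzeroflowcost} adapted to subgradients) guarantees an alternative $s$-$t$ path $P'$ avoiding $e$ whose total $\tilde c$-cost is also $r^-(x^*)$; because every monopoly lies on both $P$ and $P'$, the two paths carry identical total price $\lambda(x^*)$ under $\vec p$, so any strict price increase by $e$ diverts all flow onto $P'$ and drops $e$'s profit to (at most) zero. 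For a price \emph{increase} by a monopoly $e$, I would invoke the profit-derivative argument of Lemma~\ref{lem_eqnconditionsincr} as extended to one-sided derivatives in the proof of Theorem~\ref{cl_discolambdaex}: hypothesis (1) supplies exactly the inequality $p_e \ge c^-_e(x^*_e) + x^*|\lambda'_-(x^*)|$ needed to invoke Claim~\ref{clmpccost_sellerpricechange}(1) for monopolies.

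The main obstacle will be the bookkeeping between the two sides of non-differentiability: one must carefully thread the inequalities $c^-_e(x^*_e)\le \tilde c_e\le c^+_e(x^*_e)$, the equality $\sum_{e\in P}\tilde c_e = r^-(x^*)$, and the one-sided derivatives $\lambda'_-$, $\lambda'_+$ so that the subgradient inclusion giving optimality of $\vec{x^*}$ matches the subgradient inclusion certifying no seller can profitably deviate. Once this choice of $\tilde c_e$'s (which is non-empty precisely because $\vec{x^*}$ is a min-cost flow) is shown to be consistent with all three, the equilibrium check reduces to the differentiable-case arguments already in the paper.
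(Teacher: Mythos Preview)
Your overall approach---verify the buyer best-response, then rule out seller deviations by appealing to Claim~\ref{clmpccost_sellerpricechange}---is exactly what the paper does (the paper's proof is literally the one-liner ``follows almost directly from [Claim]~\ref{clmpccost_sellerpricechange}'').

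That said, two steps in your writeup do not hold as stated. First, your argument for $x^*=T$ is incomplete: you correctly derive that if $x^*<T$ then optimality forces $\lambda(x^*)=r^+(x^*)$, but the resulting inequality $r^+(x^*)-r^-(x^*)\ge Mx^*|\lambda'(x^*)|$ is \emph{not} in contradiction with local optimality at a kink of $R$---the conditions $\lambda(x^*)-r^+(x^*)\le 0$ and $\lambda(x^*)-r^-(x^*)\ge 0$ are perfectly compatible with it. The clean argument is simpler and does not go through hypothesis~(1) at all: if $r^+(x^*)<\lambda(x^*)$ strictly, sending $\epsilon$ more flow raises welfare by roughly $(\lambda(x^*)-r^+(x^*))\epsilon>0$, contradicting optimality of $x^*$; hence $x^*=T$. (The knife-edge $\lambda(x^*)=r^+(x^*)$ is absorbed by Lemma~\ref{lem_eqnconditionspccost1} in the paper's case split.) Second, your no-decrease argument ``because $x^*=T$ there are no more buyers left to attract, the flow on $e$ can rise by at most $0$'' is valid only for monopoly edges, whose flow equals the full $x^*$. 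A non-monopoly $e$ with $x^*_e<x^*$ can still gain flow when it lowers its price, since buyers may switch paths even though the total demand stays at $T$. For non-monopolies you need the piecewise analogue of Lemma~\ref{cl_margcost}: since $p_e=\tilde c_e\le c^+_e(x^*_e)$, any extra unit of flow costs at least $c^+_e(x^*_e)\ge p'_e$ at the margin, so the deviation is unprofitable.
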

The proof follows almost directly from Lemma~\ref{clmpccost_sellerpricechange}. Both the lemmas above depend primarily on the $\tilde{c}_e$ components which ensure that all the paths are balanced. We now show how to compute these $\tilde{c}_e$'s for any given instance satisfying all of the requirements above. The proof uses a primal-dual like approach but is combinatorial and constructive. We actually show something stronger, for any given $p^*$ lying between $r^-(x)$ and $r^+(x)$, we can always compute a set of prices of the edges such that all desired paths have a cost of $p^*$. The following lemma provides the main algorithmic insight required to prove the existence of equilibrium and efficiency bounds.

\begin{claim}
\label{clm_balancingprices}
For any given min-cost flow $\vec{x}$ and a price $p^*$ satisfying $r^-(x) \leq p^* \leq r^+(x)$, we can always compute a vector of $\tilde{c}_e$ for each edge $e$ obeying the following requirements
\begin{enumerate}
\item For all $e$, $c^-_e(x_e) \leq \tilde{c}_e \leq c^+_e(x_e)$.
\item For any flow carrying path $P$, $\sum_{e \in P}\tilde{c}_e = p^*$ and for any non flow-carrying path, $P'$, $\sum_{e \in P'}\tilde{c}_e \geq p^*$.
\end{enumerate}
\end{claim}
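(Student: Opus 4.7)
The plan is to interpret the claim through the lens of convex duality / KKT conditions for a min-cost flow with convex (possibly non-differentiable) edge costs. The values $\tilde{c}_e$ will play the role of selected subgradients of $C_e$ at $x_e$, and the target common path sum $p^*$ will play the role of a subgradient of $R$ at $x$. Since $R$ is convex and continuous (Proposition~\ref{prop:Rstuff}), standard convex analysis gives that its subdifferential at $x$ is exactly $[r^-(x), r^+(x)]$, and by hypothesis $p^* \in \partial R(x)$.

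Given this, I would introduce node potentials $\pi_v$ for $v \in V$, normalized by $\pi_s = 0$ and $\pi_t = p^*$, and show the existence of such potentials satisfying (i) $\pi_v - \pi_u \in [c^-_e(x_e), c^+_e(x_e)]$ for each flow-carrying edge $e=(u,v)$, and (ii) $\pi_v - \pi_u \leq c^+_e(0)$ for each non-flow edge $e=(u,v)$. These are precisely the KKT conditions encoding that $\vec{x}$ is a min-cost flow and $p^* \in \partial R(x)$. I would then set $\tilde{c}_e := \pi_v - \pi_u$ on every flow-carrying edge and $\tilde{c}_e := c^+_e(0)$ on every non-flow edge. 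Condition~(1) of the claim holds by construction. For condition~(2), the telescoping identity $\sum_{e=(u,v)\in P}(\pi_v - \pi_u) = \pi_t - \pi_s = p^*$ along any $s$-$t$ path $P$ gives $\sum_{e \in P}\tilde{c}_e = p^*$ on flow-carrying paths, while the replacement $\tilde{c}_e \geq \pi_v - \pi_u$ for non-flow edges produces $\sum_{e \in P'}\tilde{c}_e \geq p^*$ on the remaining paths.

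For a more constructive version, in line with the paper's stated primal-dual flavour, I would begin with $\tilde{c}_e = c^-_e(x_e)$ on flow-carrying edges and $\tilde{c}_e = c^+_e(0)$ on non-flow edges; by the min-cost optimality of $\vec{x}$, every flow-carrying path then sums to $r^-(x) \leq p^*$ and every non-flow path sums to at least that. I would then iteratively raise $\tilde{c}_e$ at ``kink'' edges (those with $c^-_e(x_e) < c^+_e(x_e)$) along carefully chosen subsets of the flow support, maintaining the invariant that all flow-carrying paths still share a common value, until this value reaches $p^*$. The total slack available across kinks is exactly $r^+(x) - r^-(x)$, which is enough precisely because $p^* \leq r^+(x)$.

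The main obstacle is justifying that the per-edge subgradient intervals $[c^-_e(x_e), c^+_e(x_e)]$ can be simultaneously sampled in a network-consistent way, i.e., that \emph{every} $p^* \in [r^-(x), r^+(x)]$ is realized by a single global choice of $\tilde{c}_e$'s that telescopes through a potential. A clean way to dispatch this is via a smoothing argument: perturb each $C_e$ by $\varepsilon x^2$, apply ordinary differentiable KKT to obtain unique potentials $\pi^\varepsilon$ with $\pi^\varepsilon_t - \pi^\varepsilon_s = R'_\varepsilon(x)$, select a subsequence $\varepsilon \to 0$ along which $R'_\varepsilon(x) \to p^*$, and use compactness of the bounded potential/marginal vectors to extract limiting $\pi_v$ and $\tilde{c}_e$ satisfying the required inclusions.
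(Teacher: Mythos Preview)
Your framework---choose node potentials $\pi_v$ with $\pi_s=0$, $\pi_t=p^*$, and $\pi_v-\pi_u\in[c^-_e(x_e),c^+_e(x_e)]$ on flow edges, then set $\tilde c_e=\pi_v-\pi_u$ there and $\tilde c_e=c^+_e(0)$ on non-flow edges, and telescope---is exactly the mechanism the paper uses (see the paper's Lemma~\ref{sublem_priceusingnodepot}). So at the structural level your proposal and the paper's proof coincide; the only question is how one \emph{obtains} such potentials with $\pi_t-\pi_s=p^*$ for an arbitrary $p^*\in[r^-(x),r^+(x)]$. The paper does this constructively: it initializes $k^-_e=c^-_e(x_e)$, $k^+_e=c^+_e(x_e)$ (so that $\pi^-_t=r^-(x)$), then repeatedly picks an edge crossing the cut $\{v:\pi^-_v=\pi^+_v\}$ and raises $k^-_e$ or lowers $k^+_e$ until $\pi^-_t$ climbs to $p^*$, verifying invariants along the way. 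Your ``constructive'' paragraph gestures at the same idea but omits exactly the part that makes it work (how to choose which $\tilde c_e$ to raise so that \emph{all} flow-carrying paths move in lockstep), which is precisely what the cut mechanism accomplishes.

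The genuine gap is in your smoothing argument. First, perturbing each $C_e$ by $\varepsilon x^2$ does not remove kinks: the sum of a non-differentiable convex function and a smooth one is still non-differentiable at the same points, so ``ordinary differentiable KKT'' is not available and potentials are not unique. Second, even if you used a perturbation that genuinely smooths (e.g., mollification), a one-parameter family $C_e^\varepsilon$ yields a single number $R'_\varepsilon(x)$ for each $\varepsilon$, and as $\varepsilon\to 0$ this sequence converges to \emph{some} element of $[r^-(x),r^+(x)]$, not to a prescribed $p^*$; there is no reason you can ``select a subsequence along which $R'_\varepsilon(x)\to p^*$'' for an arbitrary target. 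A correct non-constructive route is to observe that the set of feasible potentials $\{\pi:\pi_v-\pi_u\in[c^-_e(x_e),c^+_e(x_e)]\text{ on flow edges},\ \pi_v-\pi_u\le c^+_e(x_e)\text{ on non-flow edges}\}$ is a nonempty polyhedron on which the linear functional $\pi\mapsto\pi_t-\pi_s$ attains every value between its minimum $r^-(x)$ and maximum $r^+(x)$; establishing those two extreme values is exactly what the paper's residual-graph potentials $\pi^-,\pi^+$ do.
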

\begin{proof}
Observe that if $\forall$ edges, $c^-_e(x_e) = c^+_e(x_e)$, the proof of this claim is trivial. We just set $\tilde{c}_e = c^-_e(x_e) = c^+_e(x_e)$ for all edges. It is not hard to see that for this case, $r^+(x) = r^-(x)$. Since what we have is a min-cost flow, by Lemma~\ref{lemma_nonzeroflowcost}, for all flow paths $\sum_{e}c_e(x_e) = \sum_e \tilde{c}_e = r(x) = p^*$. Indeed, when all the cost functions are twice differentiable, the way to ensure equal price on all paths is to set the price on every edge to be its marginal cost. What about the general case, where for each edge the marginal cost of removing and adding flow does not coincide?

We begin by extending Lemma~\ref{lemma_nonzeroflowcost} to piecewise cost functions. If $\mathcal{P}$ is the set of paths with non-zero flow in $\vec{x}$ and $\mathcal{P'}$ is the set of paths with zero flow, then it is not hard to see the following.
$$\max_{P \in \mathcal{P}}(\sum_{e \in P}c_e^-(x_e)) \leq r^-(x) \leq p^* \leq r^+(x) \leq \min_{P \in \mathcal{P} \cup \mathcal{P'}}(\sum_{e \in P}c_e^+(x_e)).$$

In the other words, $r^-(x)$ is at least the marginal cost of removing one unit of flow from any flow carrying path and $r^+(x)$ is at most the marginal cost of adding one unit of flow to any path. So even if we price edges at either of the two marginals, all the paths may not have the same price of $p^*$. In order to balance the price on all the paths, we turn to the concept of node potentials which are usually obtained from the dual to the min-cost flow LP. First, given the marginal cost of removing an infinitesimal unit of flow ($k^-_e$) and the marginal cost of adding an infinitesimal unit of flow ($k^+_e$) for every single edge, we formally define two node potentials as follows.

\begin{definition}
The negative potential $\pi^-_v$ is defined on every node to be the maximum cost `saved' by removing an infinitesimal unit of flow from the source $s$ to $v$.
\end{definition}

\begin{definition}
The positive potential $\pi^+_v$ is defined on every node to be the minimum cost incurred for sending an infinitesimal unit of flow from the source $s$ to $v$.
\end{definition}

Given a min-cost flow $\vec{x}$, these two marginal costs are well defined for every edge as $c^-_e(x_e)$ and $c^+_e(x_e)$ respectively. In this case, the node potentials obey some nice properties which we now show. First, we redefine these node potentials formally in terms of the residual graph obtained for a min-cost flow, although the (re)definition extends to any negative and positive marginal costs as long as the residual graph does not have negative cycles.

\textbf{Alternative definition of node potentials.}
Define $G^R$ to have the same set of nodes as the original graph but for every $(u,v)$ in the original graph, the residual graph has two directed edges $(u,v)$ and $(v,u)$. The weight on the forward edge $(u,v)$ is said to be $k^+_e = c^+_e(x_e)$ and the reverse edge has a negative weight of $-k^-_e = -c^-_e(x_e)$. Now for any given node $v$, it is not hard to verify that there exists at least one $v$-$s$ path with a negative total weight denoting the change in cost by removing an infinitesimal unit of flow along that path. Then, $\pi^-_v$ is the absolute value of the total cost of the minimum cost $v$-$s$ path. Similarly $\pi^+_v$ denotes the cost of sending some flow along the minimum cost $s$-$v$ path, which is positive.

We begin by making the following observations about the node potentials when the marginal costs are derived from a min-cost flow.
\begin{enumerate}
\item For the sink node $t$, $\pi^-_t = r^-(x)$ and $\pi^+_t = r^+(x)$. Indeed $r^-(x)$ and $r^+(x)$ denote the cheapest possible way of removing and sending one infinitesimal unit of flow respectively, which coincide with the definition of node potentials for the sink node $t$.

\item For every node $v$, $\pi^-_v \leq \pi^+_v$. If this is not true, then we would have identified a negative cycle in the residual graph.

\item For every edge $e=(u,v)$ in the original graph, $\pi^-_v \geq \pi^-_u + c^-_e(x_e)$ and $\pi^-_u \geq \pi^-_v - c^+_e(x_e)$.

\item For every edge $e=(u,v)$ in the original graph, $\pi^+_v \leq \pi^+_u + c^+_e(x_e)$ and $\pi^+_u \leq \pi^+_v - c^-_e(x_e)$.
\end{enumerate}

The third observation is true because the path from $v$ to $s$ via $u$ is a candidate for the path which provides the maximum `gain' in cost from $v$ to $s$. So the cost along this path is just $c^-_e(x_e) + \pi^-_u$ which provides a lower bound for the actual maximum savings path from $v$ to $s$. Similarly, returning to the residual graph interpretation, flow can be removed from $u$ to $s$ by sending some additional flow along $(u,v)$ and then removing the sent flow from $v$ to $s$. So, this path acts as a lower bound for $\pi^-_u$. A similar argument can be used to prove Observation (4). Before giving our algorithm to adjust prices along the paths, we first show how these node potentials can be applied to derive edge prices.

\begin{lemma}
\label{sublem_priceusingnodepot}
Given any set of node potentials obeying Observations (2), (3) and (4) above, if we define a quantity $\tilde{k}_e$ on every flow carrying edge $e=(u,v)$ to be $\pi^-_v - \pi^-_u$ and on non flow-carrying edges to be $c^+_e(0)=0$, then the vector of $\tilde{k}_e$'s satisfies the following conditions
\begin{enumerate}
\item For all $e$, $c^-_e(x_e) \leq \tilde{k}_e \leq c^+_e(x_e)$.
\item For any flow-carrying path $P$, $\sum_{e \in P} \tilde{k}_e = \pi^-_t$. For any non-flow carrying path, this quantity cannot be smaller.
\end{enumerate}
\end{lemma}
\begin{proof}
The first part of the lemma follows almost directly from Observation (3). First notice that any flow carrying is a simple path and thus each node must appear only once on this path. So $\sum_{e \in P}\tilde{k}_e = \sum_{e = (u,v) \in P}\pi^-_v - \pi^-_u = \pi^-_t - \pi^-_s = \pi^-_t$. Now what about non-flow carrying simple paths (paths with cycles will clearly have a larger cost)? Let $P'$ be some such path. There must be several contiguous sequences in this path consisting only of non-flow carrying edges. Let $P'' = e_1, e_2, \ldots, e_{|P''|}$ be any such contiguous segment connecting two nodes $v_i$ and $u_{i+1}$. Using the second half of Observation (3) for all nodes in this sequence, we get $\pi^-_{v_{i}} \geq \pi^-_{u_{i+1}} - 0$ since $c^+_{e_i}(x_{e_i}) = 0$ for all these edges as they do not carry flow.

Let us decompose the path $P'$ into flow-carrying and non-flow carrying segments. In particular, let $u_1, v_1, u_2, v_2  , \ldots, u_k, v_k$ denote the starting and ending nodes of every flow carrying segment in that order so that $u_1 = s$ and $v_k=t$. Then $\sum_{e \in P'}\tilde{k}_e = \sum_{i = 1}^k(\pi^-_{v_i} - \pi^-_{u_i})$. Using, our previous observation that $\pi^-_{v_i} \geq \pi^-_{u_{i+1}}$, this becomes,
$$\sum_{e \in P'}\tilde{k}_e = \sum_{i=1}^{k-1}(\pi^-_{v_i} - \pi^-_{u_{i+1}}) + \pi^-_t \geq \pi^-_t.$$
This completes the proof that all non flow-carrying paths are priced no smaller than $\pi^-_t$.
\end{proof}

Notice that for the given instance, $\pi^-_t \leq p^* \leq \pi^+_t$. If $\pi^-_t = p^*$, then by Lemma~\ref{sublem_priceusingnodepot}, we have proved our main claim since setting $\tilde{c}_e = \tilde{k}_e$ for every edge satisfies the requirements of the claim. Suppose this is not the case and we have, $\pi^-_t < p^* \leq \pi^+_t$, then we somehow need to transform the instance into a new instance satisfying all the observations above such that $\pi^-_t = p^*$. We do this via the following high-level procedure.

Consider the same graph with some given marginal costs $k^-_e$ and $k^+_e$ on the edges. At every iteration of the procedure, we choose an edge $e$ and either increase $k^-_e$ or decrease $k^+_e$. However, we ensure that at each step some invariants similar to the observations above are maintained. Changing the marginal  costs on the edges in turn changes the node potentials but we ensure that no negative cycles are created in the residual graph. We terminate the procedure when $\pi^-_t = p^*$.

We now describe the procedure exactly. Consider the same network as before and initialize $\forall e$, $k^-_e = c^-_e(x_e)$ as the negative marginal cost and $k^+_e = c^+_e(x_e)$ as the positive marginal cost. That is if we call this iteration $0$, then $k^-_e(0) = c^-_e(x_e)$ and $k^+_e(0) = c^+_e(x_e)$. Since the node potentials depend only on the negative and positive marginal costs, we can define the node potentials for this graph. At the beginning of some iteration $i+1$, let $(S,\bar{S})$ denote a cut of this graph such that $S$ is the set of nodes $v$ satisfying $\pi^-_v(i) = \pi^+_v(i)$. Clearly at initialization, $s \in S$ and $t \in \bar{S}$. Each iteration of the algorithm is described by the following steps,
\begin{enumerate}
\item Beginning of iteration $i+1$
\item Pick any edge $e=(u,v)$ of the graph going across the cut
\begin{itemize}
\item If $u \in S$ and $v \in \bar{S}$, then increase $k^-_e$ until either $\pi^-_v(i+1) = \pi^+_v(i)$ or $\pi^-_t(i+1) = p^*$. In the latter case, we terminate the algorithm.
\item If $v \in S$ and $u \in \bar{S}$, then decrease $k^+_e$ until either $\pi^-_v(i+1) = \pi^+_v(i)$ or $\pi^-_t(i+1) = p^*$. In the latter case, we terminate the algorithm.
\end{itemize}
\item Recompute the $(S,\bar{S})$-cut as defined above for the new values of $k^-_e$ and $k^+_e$.
\end{enumerate}

We now show some invariants that the algorithm maintains. Notice that the end of any iteration (say $i$) coincides with the beginning of the next iteration $(i+1)$. When we say $\pi^-_v(i)$ or $k^-_e(i)$, we refer to the values at the end of iteration $i$.

\begin{lemma}
\label{sublem_pccostinvariants}
The above algorithm maintains the following invariants at the end of every iteration
\begin{enumerate}
\item $c^-_e(x_e) \leq k^-_e \leq k^+_e \leq c^+_e(x_e)$ for all $e$.
\item For all $v$, $\pi^-_v \leq \pi^+_v$.
\item For every edge $e=(u,v)$, $\pi^-_v \geq \pi^-_u + k^-_e$ and $\pi^-_u \geq \pi^-_v - k^+_e$.
\item For every edge $e=(u,v)$, $\pi^+_v \leq \pi^+_u + k^+_e$ and $\pi^+_u \leq \pi^+_v - k^-_e$.
\end{enumerate}
\end{lemma}
\begin{proof}
We prove this by induction on the steps of the algorithm. Since we initialized $k^-_e(0) = c^-_e(x_e)$ and $k^+_e(0) = c^+_e(x_e)$, the invariants are definitely true at the base step. At each iteration, we pick a single edge going across the cut and either increase $k^-_e$ or decrease $k^+_e$, so we just have to show that increasing or decreasing these as defined in the algorithm will not violate the invariants. Suppose that the invariants above are true until the end of the $i^{th}$ iteration. First consider the case when in the $i+1$ iteration we pick some edge $e=(u,v)$ such that $u \in S$, $v \in \bar{S}$ and so we increase $k^-_e$.

At the beginning of the iteration $k^-_e(i)$ and $k^+_e(i)$ still satisfy the requirements before we increase $k^-_e$. Since we do not alter $k^-$ and $k^+$ for any other edge, we only need to show that the first invariant is not violated for $e$. Recall that by definition, $\pi^-_u(i) = \pi^+_u(i)$ and $\pi^-_v(i) < \pi^+_v(i)$. During this iteration, we increase $k^-_e$ such that the path from $v$ to $s$ along which the maximum cost can be saved must pass through $e$ and moreover, the cost saved cannot be larger than $\pi^+_v(i)$. We begin by making three simple observations that must be obeyed after the increase in $k^-_e$.
\begin{enumerate}

\item \emph{The new value of $k^-_e$, i.e., $k^-_e(i+1)$ cannot be larger than $\pi^+_v(i) - \pi^-_u(i)$.} \\This is not hard to see since when $k^-_e(i+1) = \pi^+_v(i) - \pi^-_u(i)$, we have a flow removing path from $v$ to $s$ that removes flow on edge $e$ providing a gain of $\pi^+_v(i) - \pi^-_u(i)$ and then recursively removes flow from $u$ along the maximum gain path of $u$, which provides a gain of at least $\pi^-_u(i)$. The total gain along this path is therefore $\pi^+_v(i)$ and so the algorithm must halt here.

\item \emph{At the end of iteration $i+1$, the residual graph cannot have any negative cycles and therefore all the maximum gain or minimum cost paths are simple paths.}\\
Assume recursively that the residual graph does not have any negative cycles upto iteration $i$. Now, the only edge whose marginal cost we've changed is the reverse edge $(v,u)$ whose cost went from $-k^-_e(i)$ to $-k^-_e(i+1)$. So any new negative cycle, if at all it exists now must use this edge $(v,u)$. But we claim that the rest of the cycle which is essentially some path from $u$ to $v$ cannot have a cost smaller than $\pi^+_v(i) - \pi^+_u(i)$. This is easy to see because if such a path with smaller cost exists then at the end of iteration $i$, while computing $\pi^+_v$, we could have sent flow from $s$ to $u$ and then along this path from $u$ to $v$, which is smaller than $\pi^+_v(i)$, a contradiction. So the total cost along the cycle after the increase in $k^-_e$ is not smaller than $-k^-_e(i+1) + \pi^+_v(i) - \pi^+_u(i) \geq 0$.

\item \emph{At the end of iteration $i+1$, $\pi^+_v(i+1) = \pi^+_v(i)$, $\pi^+_u(i+1) = \pi^+_u(i) = \pi^-_u(i) = \pi^-_u(i+1)$.}
The first statement that $\pi^+_v(i+1) = \pi^+_v(i)$ is easy to see. $\pi^+_v$ is the minimum cost of sending flow from $s$ to $v$ and such a path cannot send flow to $v$ and then remove flow along $e$. At iteration $i+1$, the maximum gain path for $u$ cannot remove flow along the edge $e$ since it would violate the simple path condition, so $\pi^-_u(i+1) = \pi^-_u(i)$. Suppose that $\pi^+_u(i+1)$ is no longer equal to $\pi^+_u(i)$. Then, it could have only decreased since looking at the residual graph, the only change is that $k^-_e$ became more negative. Then, it is the case that,
$$\pi^+_u(i) > \pi^+_u(i+1) = \pi^+_v(i+1) - k^-_e(i+1) \geq \pi^-_u(i).$$
This is a contradiction since at iteration $i$, $\pi^-_u(i) = \pi^+_u(i)$.
\end{enumerate}

Now we can show that Invariant (1) is true at the end of iteration $i+1$ as well.Using the fact that $\pi^-_u(i) = \pi^+_u(i)$, we can derive a bound for $k^-_e(i+1)$ as,
\begin{align*}
k^-_e(i+1) & \leq \pi^+_v(i) - \pi^-_u(i) & \\
& = \pi^+_v(i) - \pi^+_u(i) & \\
& \leq k^+_e(i) & \text{By Invariant (4), $\pi^+_v(i) \leq \pi^+_u(i) + k^+_e(i)$} \\
& = k^+_e(i+1) &
\end{align*}

So, the first invariant is not violated at the end of the current iteration. We also make a small remark about the node potentials here. Since the maximum gain path for $v$ at the end of the iteration uses edge $e$, we have
\begin{equation}
\label{eq:pccosthelper1}
\pi^-_v(i+1) = k^-_e(i+1) + \pi^-_u(i+1) \leq (\pi^+_v(i) - \pi^-_u(i)) + \pi^-_u(i) = \pi^+_v(i).
\end{equation}

Moving on to invariant (2), Equation~\ref{eq:pccosthelper1} shows that $\pi^-_v(i+1) \leq \pi^+_v(i) = \pi^+_v(i+1)$. Invariant (2) is therefore, not violated for vertex $v$ or $u$ (Observation 3). What about the other nodes? First, we show that the increase in $k^-_e$ does not lead to a change in $\pi^+_w$ for any node $w$. In other words, the cost of sending an infinitesimal unit of flow from $s$-$w$ does not change. Assume by contradiction that for some $w$, the quantity $\pi^+_w(i+1) < \pi^+_w(i)$\footnote{We have already seen from the residual graph interpretation that this cannot increase}. Then, since we changed the marginal cost for only one edge, the minimum cost $s$-$w$  path for sending flow must contain $e=(u,v)$. In particular, since we only increased the negative marginal cost of $e$, the minimum cost $s$-$w$ path must be removing flow along $e$. So, $\pi^+_w(i+1) = \pi^+_v(i+1) - k^-_e(i+1) + q$, where $q$ represents the cost of the path from $u$ to $w$. Since $\pi^+_v(i+1) = \pi^+_v(i)$ and $k^-_e(i+1) \leq (\pi^+_v(i) - \pi^+_u(i))$, we have
\begin{align*}
\pi^+_w(i) & > \pi^+_w(i+1) = \pi^+_v(i+1) - k^-_e(i+1) + q \\
& \geq \pi^+_v(i) - (\pi^+_v(i) - \pi^+_u(i)) + q \\
& \geq \pi^+_u(i) + q
\end{align*}

So, we have shown that $\pi^+_w(i) > \pi^+_u(i) + q.$ However, this contradicts the fact that we have sent flow on the minimum cost path at time $i$, since we might have as well directly sent flow from $s$ to $u$ on a path costing $\pi^+_u(i)$. This means that $\pi^+_w(i) \leq \pi^+_u(i) + q$, and so the positive potential on any node cannot change. What about the negative potential? Since we are increasing $k^-_e$, it might increase for some nodes. However, we show that Invariant(2) is still obeyed and so for every $w$, $\pi^-_w(i+1) \leq \pi^+_w(i+1) = \pi^+_w(i)$. Suppose the negative potential increases for some node $w$. Then the path from $w$ to $s$ must be removing flow on the edge $e$. So this gives us, $\pi^-_w(i+1) = q_1 + k^-_e(i+1) + q_2 \leq q_1 + (\pi^+_v(i) - \pi^-_u(i)) + \pi^-_u(i+1)$, where $q_1$ is the amount gained by removing flow from $w$ to $v$ and $q_2$ from $u$ to $s$. Now let us go back to time $i$, one possible way of sending flow from $s$ to $v$ is to send flow from $s$ to $w$ and then remove flow from the $w$-$v$ path gaining cost of $q_1$. Mathematically, this gives us $\pi^+_v(i) \leq \pi^+_w(i) - q_1$. Now we can show our desired bound, namely
\begin{align*}
\pi^-_w(i+1) & \leq q_1 + (\pi^+_v(i) - \pi^-_u(i)) + \pi^-_u(i+1)\\
& = (q_1 + \pi^+_v(i))\\
& \leq \pi^+_w(i)\\
& = \pi^+_w(i+1)
\end{align*}
This completes the proof of invariant (2).

Invariants (3) and (4) must hold, because they are a property of node potentials and do not depend on the exact value of $k^-_e$ or $k^+_e$. In particular, given any edge $e'=(w,y)$, the maximum gain from removing an infinitesimal unit of flow from $s$ to $y$ is not smaller than the gains from removing a unit of flow along the $(w,y)$ edge and then removing the flow along the path providing maximum gains for $w$. Mathematically, this translates to $\pi^-_y \geq \pi^-_w + k^-_{e'}$. Similarly given the same edge, flow from $s$ to $w$ can be removed by sending an infinitesimal amount of flow from $w$ to $y$ and then removing flow along the maximum gains path for $y$. Mathematically, this means that $\pi^-_w \geq \pi^-_y - k^+_e$. Invariant (4) can be argued similarly.

\textbf{What if $u \in \bar{S}$ and $v \in S$.}\\ The proof for this case is similar so we only sketch it. By definition, $v$ must satisfy $\pi^-_v(i) = \pi^+_v(i)$. In this case, we decrease $k^+_e$ until $\pi^-_u(i+1) = \pi^+_u(i)$ or $\pi^-_t(i+1)=p^*$. As with our previous proof, it is not hard to see that at iteration $i+1$, $\pi^+_u(i) \geq \pi^-_u(i+1) = \pi^-_v(i) - k^+_e(i+1)$. This means that, $k^+_e(i+1) \geq \pi^-_v(i) -\pi^+_u(i) \geq k^-_e(i+1)$. The last inequality comes from Invariant (4) which states that $\pi^+_u(i) \leq \pi^+_v(i) - k^-_e(i)$. Invariant (1) is therefore not violated.

Once again, $\pi^+_w(i+1) = \pi^+_w(i)$ for every node $w$, since we can show as we did previously that the minimum cost path for sending flow from $s$ to $w$  will not add any flow on the $(u,v)$ edge. Now consider $\pi^-_w(i+1)$. It will change only if the new `maximum savings' path will add flow on $e$. This means that (if $q_1$ is as defined before)
\begin{align*}
\pi^-_w(i+1) & = q_1 - k^+_e(i+1) + \pi^-_v(i) &\\
& \leq q_1 - (\pi^-_v(i) -\pi^+_u(i)) + \pi^-_v(i) &\\
& = q_1 + \pi^+_u(i)& \\
& \leq \pi^+_w(i) = \pi^+_w(i+1) & \text{Since $\pi^+_u(i) \leq \pi^+_w(i) - q_1$ by Invariant (4)}
\end{align*}
\end{proof}

Now we have shown that our algorithm has some nice properties. We just need to show that it terminates and when it does at say some step $K$, $\pi^-_t(K) = p^*$.

\begin{lemma}
\label{sublem_termination_pccost}
As long as $\pi^-_t < p^*$, there exists a cut $(S,\bar{S})$ as defined above with $t \in \bar{S}$. Moreover, at every iteration of the algorithm, for some edge $e$, there is either a non-zero increase in $k^-_e$ or a non-zero decrease in $k^+_e$.
\end{lemma}
\begin{proof}
The first part is easy to see. Recall from the proof of Lemma~\ref{sublem_pccostinvariants} that $\pi^+_v$ does not change for any node $v$. Therefore $\forall v$, at any iteration $i$, $\pi^+_v(i) = \pi^+_v(0)$. This means for the sink node $\pi^+_t(i) = r^+(x)$ at every iteration. In order for $t$ to be in the $S$ side of the cut at the end of some iteration $i$, it must be that $\pi^-_t(i) = \pi^+_t(i) = r^+(x) \geq p^*$. So the contraposition of this statement is that as long as $\pi^-_t(i) < p^*$, $t \in \bar{S}$. So a non-trivial cut must exist at every iteration of the algorithm.

Suppose at the beginning of some iteration $i+1$, $\pi^-_t(i) < p^*$, and we pick some edge $e=(u,v)$. Is it possible that $k^-_e = k^+_e$ and therefore there is a zero change in the marginal costs in this iteration? Assume by contradiction that this is the case. First suppose that $v \in \bar{S}$ and $u \in S$ so that $\pi^-_u(i) = \pi^+_u(i)$ and $\pi^+_v(i) > \pi^-_v(i)$. We know from Invariants (3) that the following must be true,

$$\pi^-_v(i) \geq \pi^-u(i) + k^-_e(i) = \pi^-u(i) + k^+_e(i)$$

and by invariant (4),

$$\pi^+_v(i) \leq \pi^+_u(i) + k^+_e(i) \leq \pi^+_u(i) + (\pi^-v(i) - \pi^-_u(i)) = \pi^-_v(i).$$

This is clearly a contradiction since $\pi^+_v(i) > \pi^-_v(i)$. Similarly suppose the edge $e=(u,v)$ is directed such that $u \in \bar{S}$. Then, assuming to the contrary that $k^-_e(i) = k^+_e(i)$, gives us a contradiction via the exact same inequalities mentioned above.
\end{proof}

Now, it is easy to why the algorithm must terminate when $\pi^-_t(K) = p^*$. At every iteration, for some edge $e$, the difference $k^+_e - k^-_e$ strictly decreases without violating $k^+_e \geq k^-_e$. And as long as $\pi^-_t < p^*$, we are guaranteed an edge satisfying $k^-_e < k^+_e$. This cannot happen forever and so the algorithm must terminate when $\pi^-_t = p^*$.

Once this set of node potentials are reached, we can just apply Lemma~\ref{sublem_priceusingnodepot} and obtain a $\tilde{c}_e = \tilde{k}_e$ for each edge as defined in the lemma. That is, the quantity $\tilde{c}_e$ on each edge $e = (u,v)$ is simply $\pi^-_v(K) - \pi^-_u(K)$. By Lemma~\ref{sublem_priceusingnodepot}, we are guaranteed that for all flow carrying paths these quantities sum up to $\tilde{c}_e$ and that $c^-_e(x_e) \leq \tilde{c}_e c^+_e(x_e)$ since our node potentials satisfy the required invariants.
\end{proof}

Now, we can immediately obtain our existence and efficiency results.

\begin{claim}
For any instance without monopolies, we can obtain a Nash Equilibrium with the optimal flow.
\end{claim}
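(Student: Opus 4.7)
The plan is to mimic the strategy of Claim~\ref{clm_monopolypos1}, pricing every edge at (a suitable choice of) its marginal cost at the optimal flow, and then using the no-monopoly structure to forbid profitable deviations. The extra wrinkle compared to the differentiable case is that the marginal cost of each edge is now an interval $[c_e^-(x_e^*), c_e^+(x_e^*)]$; the key new tool is Claim~\ref{clm_balancingprices}, which picks one value from each such interval so that all flow-carrying paths come out with the same total price.

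Concretely, I would first compute a min-cost optimal flow $\vec{x^*}$ of size $x^*$. By the piecewise analogue of Proposition~\ref{prop_optflow}, either $x^*<T$ with $\lambda(x^*)\in[r^-(x^*),r^+(x^*)]$, or $x^*=T$ with $\lambda(x^*)\geq r^-(x^*)$. Set $p^*=\min\{\lambda(x^*),r^+(x^*)\}$; in both cases $p^*\in[r^-(x^*),r^+(x^*)]$ and $p^*\leq\lambda(x^*)$. Invoke Claim~\ref{clm_balancingprices} to get values $\tilde c_e\in[c_e^-(x_e^*),c_e^+(x_e^*)]$ with $\sum_{e\in P}\tilde c_e=p^*$ on every flow-carrying $P$ and $\geq p^*$ on every non-flow-carrying path, and declare $p_e=\tilde c_e$. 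For any edge with $x_e^*=0$, the standing assumption $c_e^+(0)=0$ (together with $\tilde c_e\geq 0$) forces $\tilde c_e=0$. Buyer best-response is then immediate: every flow-carrying path is a cheapest $s$-$t$ path at cost $p^*\leq\lambda(x^*)$, so exactly $x^*$ buyers are willing to purchase (all $T$ of them when $x^*=T$), matching the flow $\vec{x^*}$.

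For seller incentives, the no-monopoly hypothesis does the main work. If a seller on a flow-carrying edge $e$ raises her price, I would argue as in Claim~\ref{clm_monopolypos1} that some $s$-$t$ path avoiding $e$ still has total price at most $p^*$: if $x_e^*<x^*$, some other flow-carrying path already avoids $e$ and has cost exactly $p^*$ by Claim~\ref{clm_balancingprices}; if $x_e^*=x^*$, the no-monopoly condition produces an ``ear'' of non-flow edges bypassing $e$, and each such edge has price $\tilde c_e=0$, so the spliced path also has cost $\leq p^*$. Either way, after any increase by $e$ every path through $e$ is strictly more expensive than the alternative, so all of $e$'s flow vanishes and profit cannot strictly rise. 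For a price decrease by $e$, convexity of $C_e$ plus $p_e=\tilde c_e\leq c_e^+(x_e^*)$ gives the piecewise analogue of Lemma~\ref{cl_margcost}: for any $x>x_e^*$, $C_e(x)-C_e(x_e^*)\geq c_e^+(x_e^*)(x-x_e^*)\geq p_e(x-x_e^*)$, so extra flow acquired by price-cutting cannot improve profit.

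The main obstacle I anticipate is the upward-deviation argument in the case $x_e^*=x^*$, since there is no alternative flow-carrying path to fall back on; the entire argument hinges on extracting a zero-price ear from the no-monopoly condition, and that in turn relies on the assumption $c_e^+(0)=0$ to guarantee that all non-flow edges do receive $\tilde c_e=0$ from Claim~\ref{clm_balancingprices}. A secondary, more cosmetic issue is handling the boundary case $x^*=T$ with $\lambda(x^*)>r^+(x^*)$, where $p^*<\lambda(x^*)$; here one must check that the full demand $T$ still uses the prescribed flow $\vec{x^*}$ (it does, because all $T$ buyers can afford the cheapest paths and the total available flow is $T=x^*$), and that no seller benefits from lowering price since no additional buyers exist.
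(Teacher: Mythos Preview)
Your proposal is correct and follows essentially the same approach as the paper's proof. The paper also splits into the two optimality cases $r^-(x^*)\le\lambda(x^*)\le r^+(x^*)$ and $\lambda(x^*)>r^+(x^*)$ (with $x^*=T$), invokes Claim~\ref{clm_balancingprices} with $p^*=\lambda(x^*)$ or $p^*=r^+(x^*)$ respectively, and then uses the ear argument (relying on $c_e^+(0)=0$ for non-flow edges) together with the piecewise analogue of Lemma~\ref{cl_margcost}; your formula $p^*=\min\{\lambda(x^*),r^+(x^*)\}$ simply unifies these two cases.
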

\begin{proof}
At the optimum $x^*$, it must be true that either $r^+(x^*) \geq \lambda(x^*) \geq r^-(x^*)$ or $\lambda(x^*) \geq r^+(x^*) \geq r^-(x^*)$ and $x^*=T$. In the first case, we simply set $p^* = \lambda(x^*)$, use Claim~\ref{clm_balancingprices} and set the price on every edge $p_e = \tilde{c}_e$ as per that claim. We already that the buyer behavior is a best-response flow since the price on all flow carrying paths must sum up to $p^* = \lambda(x^*)$. No seller would wish to decrease their price since their current price $p_e \leq c^+_e(x_e)$ and so by Lemma~\ref{cl_margcost}, an increased flow cannot lead an increase in profits. It is also not hard to see that for every edge $e$, there must be a shortcut not containing this edge and having the same cost as the flow containing sub-path. The argument for this is similar to what we showed in Claim~\ref{clm_monopolypos1} since even in this case, $p_e = c^+_e(0) = 0$ for all edges without flow.

What if $\lambda(x^*) > r^+(x^*)$. In this case, we set $p^* = r^+(x^*)$ and use Claim~\ref{clm_balancingprices} and once again price edges to be $p_e = \tilde{c}_e$. 
\end{proof}

We now summarize the properties of the Nash Equilibrium whose existence we just proved. 

\begin{claim}
For instances with MPE demand $\lambda$ and $M\geq 1$, $\exists$ a Nash Equilibrium which is either optimal or satisfies
$$ \lambda(\tilde{x}) - r^+(\tilde{x}) \leq M\tilde{x}|\lambda'(\tilde{x})| \leq \lambda(\tilde{x}) - r^-(\tilde{x}).$$
Moreover, this equilibrium obeys all our desired results: Non-Trivial Pricing, Recovery of Production Costs, Pareto-Optimality over the space of equilibria and Local Dominance.
\end{claim}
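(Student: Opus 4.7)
The plan is to combine the two generalizations already developed in the appendix: the treatment of non-differentiable $\lambda$ from Theorem~\ref{cl_discolambdaex} (which yields the one-sided derivative conditions $|\lambda'_-|$ and $|\lambda'_+|$) and the treatment of non-differentiable costs from Claim~\ref{clm_balancingprices} (which yields the bracket conditions $c_e^-\leq \tilde{c}_e\leq c_e^+$ and the corresponding $r^-\leq p^*\leq r^+$). The existence proof mirrors Theorem~\ref{thm:existence} almost verbatim, except that we replace ``pricing every non-monopoly at its (unique) marginal cost'' with ``choosing edge costs $\tilde{c}_e$ via the potential-balancing algorithm of Claim~\ref{clm_balancingprices},'' and we replace the single differential $|\lambda'(\tilde{x})|$ by its one-sided counterparts $|\lambda'_-(\tilde{x})|$ and $|\lambda'_+(\tilde{x})|$ in the sufficient equilibrium conditions of Lemma~\ref{lem_eqnconditionspccost1} and Lemma~\ref{lemma_eqnconditionsoptpccost}.

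First I would compute an optimal flow $\vec{x^*}$ of magnitude $x^*$. If $\lambda(x^*)-r^-(x^*)\geq Mx^*|\lambda'_-(x^*)|$, I invoke Lemma~\ref{lemma_eqnconditionsoptpccost}: apply Claim~\ref{clm_balancingprices} with $p^{*}=r^-(x^*)$ to obtain $\tilde c_e\in[c_e^-(x_e^*),c_e^+(x_e^*)]$ summing to $r^-(x^*)$ along every flow-carrying path, then add the slack $\frac{\lambda(x^*)-r^-(x^*)}{M}$ uniformly to the monopoly edges; this gives an equilibrium with $\tilde{x}=x^*$, which is the ``either optimal'' case. Otherwise, I would apply an intermediate-value argument to the function
\[
\Phi(x) \;=\; \lambda(x)-r(x)-Mx|\lambda'(x)|,
\]
interpreted with one-sided limits where needed. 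The MPE hypothesis plus $c_e(0)=0$ force $\Phi(x)>0$ as $x\to 0^+$ (since $Mx|\lambda'_+(x)|/\lambda(x)\to 0$), while the failure of the optimum condition gives $\lambda(x^*)-r^-(x^*)-Mx^*|\lambda'_-(x^*)|<0$. Because $\lambda$ and $R$ are continuous and monotone, any jump in $r$ or $|\lambda'|$ satisfies $r^-\leq r^+$ and $|\lambda'_-|\leq |\lambda'_+|$, so there must be some $\tilde{x}\in(0,x^*]$ with $\lambda(\tilde{x})-r^+(\tilde{x})-M\tilde{x}|\lambda'_+(\tilde{x})|\leq 0 \leq \lambda(\tilde{x})-r^-(\tilde{x})-M\tilde{x}|\lambda'_-(\tilde{x})|$, which is exactly the two-sided inequality in the theorem statement. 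Setting $p^*=\lambda(\tilde{x})-M\tilde{x}|\lambda'(\tilde{x})|$ (any value in the valid interval) and invoking Claim~\ref{clm_balancingprices} again produces $\tilde c_e$, and then Lemma~\ref{lem_eqnconditionspccost1} certifies that the resulting price vector (with monopolies marked up by $\tilde{x}|\lambda'(\tilde{x})|$) is a Nash equilibrium.

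The property verifications follow the template of Claim~\ref{appclm_eqproperties} with minor one-sided modifications. Non-Trivial Pricing is immediate since Claim~\ref{clm_balancingprices} sets $\tilde{c}_e=c_e^+(0)=0$ on edges without flow. Recovery of Production Costs holds because $\tilde{c}_e\geq c_e^-(x_e)$ and monopolies are marked up non-negatively. For Local Dominance, when a small mass $\varepsilon$ shifts between paths, the monopolies see no change in aggregate flow, and a non-monopoly $e$ whose flow changes by $\pm\varepsilon$ cannot profit because $\tilde{c}_e\in[c_e^-(x_e),c_e^+(x_e)]$ exactly brackets the incremental production cost in either direction. Pareto-Optimality would be argued by the same case split on $\lambda(x)$ versus $\lambda(\tilde{x})$ as in the original proof, using $r^-(\tilde{x})\leq \lambda(\tilde{x})-M\tilde{x}|\lambda'(\tilde{x})|$ and $r^+(x)\geq r^-(\tilde{x})$ in place of the smooth inequalities.

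The main obstacle I anticipate is the intermediate-value step: when both the cost and the demand have kinks at the same flow value, $\Phi$ can jump in two places simultaneously, and one must verify that at such a joint jump the inequality interval $[\lambda-r^+-Mx|\lambda'_+|,\,\lambda-r^--Mx|\lambda'_-|]$ actually contains $0$. This requires a careful monotonicity argument showing that the MPE property (applied on the $\lambda$ side) interacts correctly with the convexity of $R$ (on the cost side), so that the signs of the one-sided quantities align in the desired way. The secondary subtlety is proving Pareto-Optimality in Case IV (where the competing equilibrium has larger flow): the original proof relied on strict inequality from differentiability, and replacing this with the one-sided MPE inequality requires that we take $\tilde{x}$ to be \emph{maximal} among solutions to the bracketed equilibrium condition, in analogy with Lemma~\ref{applem_multipletilde}.
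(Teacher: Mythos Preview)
Your proposal is essentially correct and follows the same approach as the paper's proof: split on whether the optimal flow already satisfies the no-price-increase inequality, use Claim~\ref{clm_balancingprices} to manufacture edge-level costs $\tilde{c}_e$ in the bracket $[c_e^-,c_e^+]$, and otherwise run an intermediate-value argument on $\Phi(x)=\lambda(x)-r(x)-Mx|\lambda'(x)|$ using the one-sided inequalities.

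One small case you glossed over: when the optimum satisfies $Mx^*|\lambda'(x^*)|\leq \lambda(x^*)-r^-(x^*)$ you cannot always invoke Lemma~\ref{lemma_eqnconditionsoptpccost} with $p^*=r^-(x^*)$. That lemma concludes $x^*=T$, which fails precisely when $r^-(x^*)\leq \lambda(x^*)\leq r^+(x^*)$ (a kink in $R$ at the optimum with $x^*<T$). In that sub-case the paper instead sets $p^*=\lambda(x^*)$ and applies Lemma~\ref{lem_eqnconditionspccost1}; since $\lambda(x^*)-M x^*|\lambda'(x^*)|\in[r^-(x^*),r^+(x^*)]$ here, the balancing algorithm still applies and the equilibrium is optimal. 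Adding this one branch completes your argument. Your anticipation of the joint-jump issue in the intermediate-value step and the need to take the maximal $\tilde{x}$ for Pareto-optimality are both on point; note that the paper's stated claim treats $\lambda$ as differentiable (only the cost is piecewise), so your simultaneous handling of one-sided $\lambda'_\pm$ is a harmless overgeneralization.
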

\begin{proof}
First suppose that the optimum solution $x^*$ satisfies, $Mx^*|\lambda'(x^*)| \leq \lambda(x^*) - r^-(x^*)$ and that $r^-(x^*) \leq r^+(x^*) \leq \lambda(x^*)$. In the first case, we set $p^*=r^-(x^*)$ and use Claim~\ref{clm_balancingprices} to obtain $\tilde{c}_e$ on all edges. These quantities satisfy the requirements of Lemma~\ref{lemma_eqnconditionsoptpccost} and so we can set prices as mentioned in that Lemma to obtain a Nash Equilibrium.

Next if the optimum still satisfies $Mx^*|\lambda'(x^*)| \leq \lambda(x^*) - r^-(x^*)$ but $r^-(x^*) \leq \lambda(x^*) \leq r^+(x^*)$. Then we set $p^* = \lambda(x^*)$, run the algorithm in claim~\ref{clm_balancingprices} and price each edge as mentioned in Lemma~\ref{lem_eqnconditionspccost1}. This clearly is a Nash Equilibrium.

Finally, what if $Mx^*|\lambda'(x^*)| > \lambda(x^*) - r^-(x^*)$? Then, we claim that for MPE functions, there must exist some $\tilde{x} > 0$ such that $\lambda(\tilde{x}) - r^+(\tilde{x}) \leq M\tilde{x}|\lambda'(\tilde{x})| \leq \lambda(\tilde{x}) - r^-(\tilde{x}).$ This is true because as $x \to 0$, $Mx|\ldx| < \lx - r^+(x)$ and so there must exist an intermediate $\tilde{x} > 0$ satisfying the requirements. Now, we set $p^*=\lambda(\tilde{x}) - M\tilde{x}|\lambda'(\tilde{x})|$ and run the algorithm of Claim~\ref{clm_balancingprices} to obtain quantities $\tilde{c}_e$ on all edges. These quantities clearly fulfill the requirements of Lemma~\ref{lem_eqnconditionspccost1} so we can price edges as mentioned in the lemma to obtain a Nash Equilibrium. This completes our existence proof.
\end{proof}

All our bounds for the efficiency hold since the the only property that we require about the equilibrium is given by
$$\lambda(\tilde{x}) - r^+(\tilde{x}) \leq M\tilde{x}|\lambda'(\tilde{x})| \leq \lambda(\tilde{x}) - r^-(\tilde{x}).$$

While calculating the social welfare of the Nash Equilibrium, we have so far used the fact that the integral of $\lx - r(x)$ from $0$ to $\tilde{x}$ is at least $\tilde{x}(\lambda(\tilde{x}) - r(\tilde{x}))$. Here in this case, we can use the fact that the integral is at least $\tilde{x}(\lambda(\tilde{x}) - r^-(\tilde{x}))$ since $r(x)$ is still non-decreasing so we can do the integration piecewise. We do not reprove all our theorems here as the proofs are essentially the same.

\subsection{Proofs from Section 5: Looking Beyond Graphical Markets}
Before showing the more complex proof that our results hold even when buyers demand bundles not coming from a graph, we show that in markets where all the buyers have the same monotone valuation function, efficient equilibrium is guaranteed.

\subsection*{Uniform Demand Markets with arbitrary combinatorial valuations}

\begin{thm_app}{thm_gen_uniformdemand}
In any setting where uniform buyers have arbitrary combinatorial valuations, there exists an efficient Nash Equilibrium.
\end{thm_app}

\begin{proof} Let $\vec{x^*}$ be the optimum allocation vector with the maximum possible allocation $x^*$. The following must be true for any bundle $S$ with $x^*_S > 0$ and any other bundle $S'$ with $x^*_{S'} \geq 0$.
$$v(S) - \sum_{e \in S}c_e(x^*_e) \geq v(S') - \sum_{e \in S'}c_e(x^*_e).$$

Let $\mathbb{M}$ be the set of monopolies and virtual monopolies (VM) at the optimum: these are the items that belong to all bundles with non-zero allocation at the optimum. In particular, we claim that the following simple algorithm applied for the optimal allocation $\vec{x^*}$ gives us a Nash Equilibrium. Let $B$ be the set of bundles with non-zero consumption and $B'$ be the bundles with zero consumption. Begin with a price of $p_e^0 = c_e(x^*_e)$ on each item. We now describe the ascending price process.
\begin{enumerate}
\item Pick any item $e$, which belongs to all optimal bundles $B$.
\item Increase the price on $e$ such that for all bundles $S$ in $B$, either $v(S) - \sum_{e \in S}p_e = 0$ or $\exists$ some $S' \in B'$, such that $v(S') - \sum_{e \in S'}p_e = v(S) - \sum_{e \in S}p_e$, i.e., $S'$ is now as good as any optimal bundle.
\item Repeat this process until you cannot increase the price on any VM anymore.
\end{enumerate}

We claim that the above algorithm along with the flow $\vec{x^*}$ gives us a Nash Equilibrium. The following observations are pertinent here: (i) At any step of the algorithm, all bundles in $B$ give the same utility (ii) At any step in the above algorithm, no bundle in $B'$ gives more utility than any bundle in $B$ (iii) Non-monopoly items are priced at $c_e(x^*_e)$ throughout. The first point is true because the VMs belong to all bundles in $B$, so increasing the price on such an edge leads to an equal increase in the price of any such bundle. Moreover, initially all such bundles are priced such that the utility $v(S) - \sum_{e \in S}p_e^0$ is the same. For any item $e$, we stop the increase when some bundle $S'$ in $B'$ will become better than the bundles in $B$. This means that $e \notin S'$.

Now once again, it is obvious by Lemma~\ref{cl_margcost} that no non-monopoly edge will increase or decrease its price. Suppose that the final price on any item $e$ is $p^*_e$. By point (ii) above, no bundle with zero allocation provides more utility than the bundles in $B$, so buyers are choosing best-response bundles. We only need to focus on seller behavior. We know $\forall S, \tilde{S} \in B$, $v(S) - \sum_{e \in S}p^*_e = v(\tilde{S}) - \sum_{e \in \tilde{S}}p^*_e \geq 0$. Suppose this quantity equals $0$, then no seller will increase the price on his item, because the bundles containing that item will give strictly negative utility and buyer will never purchase such a bundle. If the utility is strictly larger than zero, then it means the algorithm terminated because no VM could increase its price further without losing all its buyers to some bundle in $B',$ so in this case sellers would not increase their price, or they would lose everything.

Can sellers lower their price? We claim that this cannot happen. Suppose that the final price of all sellers is $p^*_e = p_e^0$. In this case, all items are priced at their marginal so no one would wish to lower their price as per Lemma~\ref{cl_margcost}. If even one VM is priced at $p^*_e > p_e^0$, this means that initially $v(S) - \sum_{e \in S}c_e(x^*_e) > 0$ and $x^* = T$, all the buyers have received some bundle so no buyer remains unallocated. This is true because if $v(S) - \sum_{e \in S}c_e(x^*_e) > 0$ for some bundle $S$ and there are still some buyers with no allocation, then if we allocate the bundle $S$ to an infinitesimal amount of buyers, we could receive additional non-negative utility of $v(S) - \sum_{e \in S}c_e(x^*_e)$ (Since $c_e(x)$ are all continuous). This violates the optimality of $x^*$ and therefore it must be that there are no unallocated buyers left.

So, decreasing the price has no effect. We therefore conclude that sellers have no incentive to change their price and ergo the prices given the algorithm along with the optimal flow $\vec{x^*}$ is a Nash Equilibrium optimizing social welfare.
\end{proof}

\subsection{General markets with non-graphical bundles}
We begin by formally redefining our two-stage pricing game to the case where buyers may desire arbitrary bundles and not just $s$-$t$ paths. Once again, let $E$ be any set of items. We use $\mathcal{B} \subseteq 2^E$ to represent the bundles desired by all the buyers. Buyers are indifferent between these bundles so we can use an inverse demand function $\lx$ such that at least $x$ of the population holds a value of $\lx$ or more for the bundles. We no longer assume that $c_e(0) =0$ for any of the items.

\textbf{Virtual Monopolies (VM)} A small discussion on virtual monopolies is in order here. Given any solution to our two-stage pricing game $(\vec{p}, \vec{x})$, an item $e \in E$ is called a virtual monopoly if it belongs to all the bundles consumed by the buyers, i.e., $e \in B_i$ for all $B_i\in\mathcal{B}$ with $x_{B_i} > 0$. Note that $e$ need not necessarily belong to all buyer bundles in $B$, however due to the difference in production costs and the nature of the bundles, the item belongs to all the bundles that are actually consumed by the buyers. Therefore, in that particular solution of the market game the item has the power of a monopoly: it can increase its price (up to some extent) and not lose its flow due to the absence of competition. We call such items ``Virtual Monopolies" keeping with the notation used in~\cite{chawla2008bertrand}.

Clearly, any `pure monopoly' (item belonging to all desired bundles) is also a virtual monopoly. For markets with graph structure obeying $c_e(0)=0$, the set of pure and virtual monopolies coincide. This is no longer true for general markets.

Before showing that all our techniques extend to general markets, we first redefine our pricing rule to apply for the case with virtual monopolies. However, we do not directly define a closed form expression as in Theorem~\ref{thm:existence}. Instead for any given minimum cost allocation $\vec{x}$, we define an ascending price process (Algorithm \ref{alg}) beginning with all items priced at their marginal cost. The difficulty in obtaining a closed form expression is that the price of a VM cannot be arbitrarily increased since there are usually higher priced alternative bundles that do not contain this item. The following ascending price process returns the prices on each item given by $p_e(x)$. Let $\mathcal{M}$ be the set of items monopolizing the allocation $\vec{x}$ and let $M$ be the cardinality of this set. Our complete existence proof depends on a series of properties we prove about our algorithm culminating in showing that the prices returned vary continuously with the input $x$ for every single good. 

\begin{algorithm}[tbp]
\caption{Ascending Price Algorithm for Prices}\label{alg}
\begin{algorithmic}[1]
\REQUIRE A min-cost allocation $\vec{x}$ of magnitude $x$, $\lx$, $c_e(x_e)$.
\STATE Let the initial price on each item be $p^0_e=c_e(x_e)$.
\STATE Let $M_A$ be the set of active monopolies initialized to $\mathcal{M}$ and $M_I$ be the set of inactive monopolies, initially empty.
\STATE Let $B$ be the set of bundles with non-zero allocation and $B'$ be the bundles with zero allocation.
\STATE Increase the price on all active monopolies uniformly until
\begin{enumerate}
\item One or more of the monopolies become tight, i.e., there now exists some monopoly $e \in M_A$ such that some bundle in $B'$ that does not contain $e$ now has the same price as all bundles in $B$.
\item The price of all optimal bundles is now exactly $\lambda(x)$. In this case, exit the algorithm.
\end{enumerate}
\STATE In the first case, remove all the `tight' monopolies from $M_A$ and add them to $M_I$ and repeat the above step.
\STATE The algorithm terminates when either the total price on the bundles equals $\lx$ or the set $M_A$ is empty.
\end{algorithmic}
\end{algorithm}

\begin{lemma}
\label{lem_genmarketsprice}
The following is true for our algorithm
\begin{enumerate}
\item At any time step $t$, all bundles belonging to $B$ have the same price (say $p^t$).
\item At any time step $t$, no bundle belonging to $B'$ can have a price smaller than that of any bundle in $B$.
\item For every good $e$ that belongs to $M_I$ at some time step $t$, there exists some bundle in $B'$ not containing $e$ that has the same price as all bundles in $B$ at that time step.
\end{enumerate}
\end{lemma}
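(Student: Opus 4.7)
The plan is to prove all three invariants simultaneously by induction over the progress of the ascending-price algorithm, viewing the execution as a sequence of rounds: within a round, prices on the current $M_A$ rise continuously and uniformly; between rounds (at a ``tightening'' event), some monopolies are reclassified from $M_A$ to $M_I$. I will show that each property is preserved both within a round (a continuous argument) and across the transition (a discrete argument), with the induction interlocking across all three.

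For the base case ($t=0$), every item is priced at $c_e(x_e)$. Since $\vec{x}$ is a min-cost allocation over the general bundle family $\mathcal{B}$, the KKT optimality conditions (exactly analogous to those invoked in Lemma~\ref{lemma_nonzeroflowcost} for the path case) yield that every bundle in $B$ has common total marginal cost and that every bundle in $B'$ has total marginal cost at least this value; these give (1) and (2), and (3) is vacuous since $M_I=\emptyset$ initially. For the within-round step, I would use the defining fact that each monopoly in $M_A\subseteq\mathcal{M}$ belongs to every bundle in $B$: raising every price in $M_A$ by $\Delta$ increases each bundle in $B$ by exactly $|M_A|\,\Delta$, preserving (1). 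For each $S'\in B'$, its price rises by $|S'\cap M_A|\,\Delta\le |M_A|\,\Delta$, so the gap $P_{S'}-P_B$ is non-increasing, and since the algorithm halts at the first $\Delta$ where some gap reaches zero, (2) is preserved. For (3), the induction hypothesis gives that each $e'\in M_I$ already has a tight witness $S'_j$ with $P_{S'_j}=P_B$ and $M_A\subseteq S'_j$; the latter inclusion implies $|S'_j\cap M_A|=|M_A|$, so the prices of $S'_j$ and of $B$ rise at the same rate, preserving the equality $P_{S'_j}=P_B$ and hence (3).

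For the transition between rounds, let $\{S'_i\}_i$ be the set of bundles in $B'$ that become tight at the round's endpoint. The algorithm reclassifies to $M_I$ exactly those $e\in M_A$ for which some $S'_i$ omits $e$. Properties (1) and (2) are trivially preserved since no prices change. Each newly added $e\in M_I$ has a witness $S'_i\not\ni e$ with $P_{S'_i}=P_B$ at the moment of transition, so (3) holds for $e$ immediately. The crucial observation is that the surviving active set is $M_A\cap\bigcap_i S'_i$: an $e\in M_A$ is retained precisely when $e$ lies in every tight $S'_i$. This means the updated $M_A$ lies inside every currently tight witness, so the invariant $M_A\subseteq S'_j$ required by the within-round argument of (3) holds going into the next round, both for the newly added and for the previously stored elements of $M_I$ (for the latter, monotonicity of $M_A$ preserves the inclusion). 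Thus the induction closes.

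The main obstacle is property (3): ensuring that the tight witness assigned to each $e\in M_I$ persists through all future rounds, even as $M_A$ keeps shrinking and prices keep rising. The analysis above handles this by forcing $M_A$ at each transition to lie inside the intersection of all new tight witnesses, and observing that subsequent shrinkings can only refine this inclusion. A secondary concern is the well-definedness of the stopping rule when several bundles become tight simultaneously, or when $M_A\subseteq S'$ for some $S'\in B'$ that has not yet caught up (so its gap never closes under uniform raising); the $\bigcap_i S'_i$ formulation handles simultaneous ties uniformly, and bundles already satisfying $M_A\subseteq S'$ simply remain slack and play no role in the current round's stopping time.
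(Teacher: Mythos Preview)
Your proposal is correct and follows essentially the same inductive approach as the paper: establish the base case via the optimality conditions of the min-cost allocation, then use that every $e\in M_A$ lies in all bundles of $B$ (so prices in $B$ rise uniformly), that the gap to bundles in $B'$ is non-increasing and the round halts at the first tie, and that each witness bundle for an $e\in M_I$ must contain all of the current $M_A$ so that it stays tight going forward. Your explicit formulation of the surviving active set as $M_A\cap\bigcap_i S'_i$ and the appeal to monotonicity of $M_A$ make the bookkeeping for property (3) a bit more transparent than the paper's presentation, but the underlying argument is the same.
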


\begin{proof}
We prove these claims inductively. All the claims are true at the initial time step $t=0$ when edges are priced at their marginal cost. Indeed all bundles belonging to $B$ must necessarily have the same marginal cost (optimality conditions). Moreover, no bundle in $B'$ can have a smaller marginal cost since these bundles are not consumed. If any virtual monopoly is inactive, then clearly there must exist some bundle in $B'$ not containing $e$ with the same marginal cost as all bundles in $B$.

Suppose that all the claims are true up to some time $t-1$. After this time step we increase the price uniformly for all VMs still belonging to $M_A$. Note that since these are virtual monopolies, they belong to all bundles in $B$ with non-zero allocation. Therefore, the price of all the bundles in $B$ increases uniformly and thus our first claim must be true at time $t$. Now is it possible that increasing the price for VMs in $M_A$ can make some bundle in $B'$ cheaper than the optimal bundles? This is not the case. By definition any bundle in $B'$ that has the same price as the bundles in $B$ must contain all $e \in M_A$. If not, then the VM $e$ would have become tight and we would have shifted it to $M_I$. So this means that increasing the price on the active monopolies also leads to an uniform increase in the price of bundles belonging to $B'$ that have the same price as the optimal bundles. Moreover, we stop the increase when some new bundle in $B'$ has the same price as bundles in $B$ but does not contain some $e \in M_A$. This becomes time step $t$. So claim 2 is also true.

Notice that for every monopoly $e \in M_I$ at time $t-1$, there must exist some bundle $B_1 \in B'$ such that the price of $B_1$ at $t-1$ equals the price of any bundle in $B$. By the same reasoning as before, all VMs in $M_A$ at time $t-1$ are present in $B_1$. Therefore, since these are the only goods whose prices are increased, the price of $B_1$ at time $t$ is still the same as the price of the optimal bundles so $e$ is still `tight'. Also, if some VM newly becomes tight at time $t$, then by definition, we transfer it to $M_I$ and there must exist some bundle not containing this item with the same price as the optimal bundles. So claim 3 is true as well.
\end{proof}

Notice that when we refer to a time step $t$, it is the instant after all the tight monopolies have been transferred to $M_I$. The next corollary follows from applying the lemma at the termination step.
\begin{corollary}
\label{corr_tightness}
For a given $x$, when the algorithm terminates, the following must hold.
\begin{enumerate}
\item All bundles with non-zero allocation have the same price (Let's call this $P(x)$).
\item No bundle with zero allocation has a price smaller than $P(x)$.
%\item For every inactive monopoly $e \in M_I(x)$, there exists some $B_1 \in B'$ not containing $e$ whose price is $P(x)$.
\end{enumerate}
\end{corollary}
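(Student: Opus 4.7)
The plan is to read off the corollary directly from Lemma~\ref{lem_genmarketsprice} evaluated at the terminal time of Algorithm~\ref{alg}, so the work is almost entirely book-keeping rather than new argument. First I would verify that the algorithm actually reaches a terminal state in finitely many phases. Each phase either raises the common price of all bundles in $B$ until it equals $\lambda(x)$ (the explicit exit condition), or else transfers at least one newly tight virtual monopoly from $M_A$ into $M_I$. Because $|\mathcal{M}| = M$ is finite and $M_I$ grows monotonically, after at most $M$ such transfers one of the two stopping conditions in Algorithm~\ref{alg} must trigger. Call the resulting terminal time $t^*$.

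Next I would invoke Lemma~\ref{lem_genmarketsprice} at the distinguished time $t^*$. Conclusion~(1) of the corollary, that all bundles in $B$ share a common price which we name $P(x)$, is precisely part~(1) of the lemma specialized to $t = t^*$. Conclusion~(2), that no bundle in $B'$ is priced strictly below $P(x)$, is precisely part~(2) of the same lemma at the same time step. No arithmetic or new structural argument is needed at this stage; the lemma was established for arbitrary $t$ during the run, and the corollary simply picks out the value $t = t^*$.

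The only subtlety worth flagging is that Lemma~\ref{lem_genmarketsprice} is proved by induction on phases, so I must make sure its invariants survive the \emph{last} update before termination rather than only up to but excluding it. This is immediate: the final update is either a uniform increase on the currently active monopolies (which preserves equality among bundles in $B$ and keeps bundles in $B'$ no cheaper, by the same reasoning already used in the inductive step of the lemma), or the trivial exit in which no further price change is made, in which case the state at $t^*$ is identical to the state at the end of the previous phase where the invariants provably hold. I do not anticipate any substantive obstacle: the corollary is best viewed as a restatement of Lemma~\ref{lem_genmarketsprice} applied to the final configuration produced by Algorithm~\ref{alg}, with the name $P(x)$ introduced for the common terminal price of bundles in $B$.
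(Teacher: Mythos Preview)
Your proposal is correct and matches the paper's approach exactly: the paper simply states that the corollary follows from applying Lemma~\ref{lem_genmarketsprice} at the termination step. Your additional care about termination and the final update is sound bookkeeping but not strictly needed beyond what the paper already assumes.
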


We now show that for a given $x$, the final prices given by the algorithm our independent of which min-cost flow is actually used. We begin by establishing that the value of the differential cost function
$c_e(x_e)$ is the same for all min-cost allocations of a given magnitude $x$. 

\begin{lemma}
\label{lem_genbundle1}
As long as the production costs are convex, the set of min-cost allocations of magnitude $x$ is closed and convex.
\end{lemma}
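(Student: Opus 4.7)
The plan is to establish closedness and convexity separately, both via elementary arguments that only use (i) linearity of the feasibility constraints defining an allocation of magnitude $x$, and (ii) convexity/continuity of each $C_e$.

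For convexity, I would take any two min-cost allocations $\vec{x}^1$ and $\vec{x}^2$ of magnitude $x$, both with total cost $R(x)$ by assumption, and form the convex combination $\vec{x}^\theta = \theta \vec{x}^1 + (1-\theta)\vec{x}^2$ for $\theta\in[0,1]$. The set of feasible allocations of magnitude $x$ is cut out by linear constraints: nonnegativity of bundle allocations, and the total allocation summing to $x$ (together with, in the graphical case, flow conservation at internal nodes, which is also linear). Hence $\vec{x}^\theta$ is feasible of magnitude $x$, and its edge loads are $x_e^\theta = \theta x_e^1 + (1-\theta) x_e^2$. Convexity of each $C_e$ then gives
\begin{equation*}
\sum_e C_e(x_e^\theta) \;\leq\; \theta \sum_e C_e(x_e^1) + (1-\theta)\sum_e C_e(x_e^2) \;=\; \theta R(x)+(1-\theta)R(x) \;=\; R(x).
\end{equation*}
Since $R(x)$ is by definition the minimum cost of any flow of magnitude $x$, the inequality must hold at equality, so $\vec{x}^\theta$ is itself a min-cost allocation. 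This shows the set is convex.

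For closedness, I would take any sequence $(\vec{x}^n)$ of min-cost allocations of magnitude $x$ converging coordinate-wise to some $\vec{x}^\infty$. The feasibility constraints (nonnegativity, linear equalities for magnitude and conservation) are preserved under limits, so $\vec{x}^\infty$ is itself a feasible allocation of magnitude $x$. Each $C_e$ is continuous, so $\sum_e C_e(x_e^n) \to \sum_e C_e(x_e^\infty)$; since the left side equals $R(x)$ for every $n$, the limit also has cost $R(x)$ and is therefore a min-cost allocation. Thus the set is closed.

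There is no serious obstacle here: the only thing to be careful about is making explicit that the feasible set of allocations of magnitude $x$ is itself convex and closed (because the defining constraints are linear equalities/inequalities in $\vec{x}$), so that convex combinations and coordinate-wise limits of feasible allocations are again feasible. Once that is noted, the convexity of $R$ established in Proposition~\ref{prop:Rstuff} plus the minimality condition force both properties immediately.
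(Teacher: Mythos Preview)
Your proposal is correct and is exactly the standard argument the paper has in mind; indeed, the paper states this lemma without proof, and your convexity step mirrors precisely the argument already used in Proposition~\ref{rcontinuous} (convex combination of two min-cost flows has cost at most the convex combination of costs, hence at most $R(x)$), while closedness follows immediately from continuity of the $C_e$'s and the fact that the feasible region is a polyhedron. The one remark I would drop is the reference to ``the convexity of $R$ established in Proposition~\ref{prop:Rstuff}'' at the end---your argument does not actually use convexity of $R$, only the definition of $R(x)$ as a minimum together with convexity of each $C_e$, so that sentence is slightly misleading.
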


\begin{lemma}
\label{lem_genbundle3}
For any two minimum-cost allocations of magnitude $x$, say $\vec{a}$ and $\vec{b}$, and any good $e$, $c_e(a_e) = c_e(b_e)$, where $c_e$ denotes the marginal cost.
\end{lemma}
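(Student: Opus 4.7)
The plan is to exploit convexity together with Lemma~\ref{lem_genbundle1}, which tells us that the segment between the two min-cost allocations is entirely contained in the set of min-cost allocations. The idea is that along this segment, the total cost is constant, but each edge's contribution is convex, so each edge's contribution must in fact be affine on the segment. This then forces the marginal cost to be constant on the interval between $a_e$ and $b_e$.

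First I would set $\vec{z}(\lambda) = \lambda \vec{a} + (1-\lambda)\vec{b}$ for $\lambda \in [0,1]$. By Lemma~\ref{lem_genbundle1}, $\vec{z}(\lambda)$ is a min-cost allocation of magnitude $x$ for every $\lambda$, and hence
\[
\sum_{e \in E} C_e(z_e(\lambda)) \;=\; R(x) \qquad \text{for all } \lambda \in [0,1].
\]
Since $C_e$ is convex and $z_e(\lambda) = \lambda a_e + (1-\lambda) b_e$ is affine in $\lambda$, each function $\lambda \mapsto C_e(z_e(\lambda))$ is convex on $[0,1]$. The next step is the elementary observation that if a sum of convex functions on an interval is constant, then each summand must be affine: applying the defining inequality of convexity to each summand and summing, the sum-inequality holds with equality, which forces each individual inequality to hold with equality as well.

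Consequently, for each edge $e$, the map $\lambda \mapsto C_e(\lambda a_e + (1-\lambda) b_e)$ is affine on $[0,1]$. If $a_e = b_e$ the conclusion $c_e(a_e)=c_e(b_e)$ is immediate. Otherwise, as $\lambda$ sweeps $[0,1]$, the quantity $z_e(\lambda)$ sweeps the entire interval $I_e$ with endpoints $a_e$ and $b_e$ at a nonzero linear rate, so the restriction of $C_e$ to $I_e$ is itself affine. Since $C_e$ is twice differentiable, this means $C_e''\equiv 0$ on $I_e$, hence $c_e = C_e'$ is constant on $I_e$, yielding $c_e(a_e) = c_e(b_e)$.

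I do not expect a serious obstacle here, since the argument is essentially an extremality property of convex functions; the only place to be a little careful is the ``sum of convex equals constant implies each affine'' step, which I would either prove in one line from the convexity inequality or, equivalently, deduce from the fact that the sum is simultaneously convex and concave and each summand is convex.
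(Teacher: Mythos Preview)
Your proposal is correct and is essentially the same argument as the paper's: both take convex combinations of the two min-cost allocations, use that the total cost is constant along the segment, and then observe termwise that each convex $C_e$ must be affine on $[a_e,b_e]$, forcing $c_e$ to be constant there. The only cosmetic difference is that the paper writes the equality $\sum_e C_e(\alpha a_e+(1-\alpha)b_e)=\sum_e[\alpha C_e(a_e)+(1-\alpha)C_e(b_e)]$ directly and compares term by term, whereas you phrase it as ``sum of convex functions constant $\Rightarrow$ each affine''; these are the same step.
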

\begin{proof}
If for some edge $e$, $a_e = b_e$, then the lemma trivially holds. So without l.o.g, we assume that $a_e < b_e$. First consider any $\vec{x} = \alpha(\vec{a}) + (1-\alpha) \vec{b}$ for some $\alpha > 0$. Clearly $\vec{x}$ is also a minimum cost allocation. Moreover, we know $C(\vec{x}) = \alpha C(\vec{a}) + (1-\alpha)C(\vec{b})$ and so we have,
$$\sum_{e \in E}C_e(\alpha a_e + (1-\alpha)b_e) = \sum_{e \in E}\alpha C_e(a_e) + (1-\alpha)C_e(b_e).$$
But each term in the left hand side is no larger than the corresponding term in the right hand side for convex $C_e$. This means that the equality can hold iff $\forall e$, $C_e(\alpha a_e + (1-\alpha) b_e) = \alpha C_e(a_e) + (1-\alpha)C_e(b_e)$. Moreover this must be true for all $\alpha \in (0,1)$. So this is only possible if the derivative of $C_e$ is constant in this region. In other words, $c_e(x_e)$ must have the same value for all $x$ in $[a_e,b_e]$. 
\end{proof}

We now show that this differential cost of a good varies continuously as we vary $x$.

\begin{claim}
\label{corr_margcontinuity}
For any $e$, the marginal cost of the item at the min-cost allocation is a continuous function of $x$, i.e., $c_e(x_e)$ is continuously changing as we vary $x$ and compute the min-cost allocation. Therefore, $c_{B_i}(x)$ is also continuous for all bundles $B_i$.
\end{claim}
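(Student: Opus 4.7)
The plan is to use a compactness/subsequence argument to deduce continuity of $c_e(x_e)$ from the continuity of the scalar function $c_e$ and the continuity of the min-cost value $R(x)$. The key preliminary observation is that although the set of min-cost allocations of a given magnitude $x$ may not be a singleton, Lemma~\ref{lem_genbundle3} tells us that the scalar $c_e(x_e)$ is the same for every min-cost allocation of magnitude $x$, so the map $x \mapsto c_e(x_e)$ is at least well-defined.

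First I would fix $e$ and take an arbitrary sequence $x_n \to x$. For each $n$, pick any min-cost allocation $\vec{x}^{(n)}$ of magnitude $x_n$. Since the sequence $\{x_n\}$ is bounded, the vectors $\vec{x}^{(n)}$ lie in a bounded, and hence precompact, subset of the nonnegative orthant, so some subsequence $\vec{x}^{(n_k)}$ converges to a vector $\vec{y}$. The magnitude of $\vec{y}$ equals $x$, and by continuity of the sum of cost functions together with continuity of $R$ (Proposition~\ref{prop:Rstuff}), we have
\[
\sum_e C_e(y_e) = \lim_{k\to\infty} \sum_e C_e(x^{(n_k)}_e) = \lim_{k\to\infty} R(x_{n_k}) = R(x),
\]
so $\vec{y}$ is itself a min-cost allocation of magnitude $x$. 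Continuity of the scalar function $c_e$ then gives $c_e(x^{(n_k)}_e) \to c_e(y_e)$, and by Lemma~\ref{lem_genbundle3} this limit equals $c_e(x_e)$ for any min-cost allocation of magnitude $x$.

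Next I would upgrade this subsequential convergence to convergence of the full sequence by a standard argument: every subsequence of $\{c_e(x^{(n)}_e)\}$ admits, by the reasoning just given, a further subsequence converging to $c_e(x_e)$, and a sequence in $\mathbb{R}$ whose every subsequence has a further subsequence converging to the same limit must itself converge to that limit. Since the initial sequence $x_n \to x$ was arbitrary, this establishes continuity of $x\mapsto c_e(x_e)$. Finally, $c_{B_i}(x) = \sum_{e\in B_i} c_e(x_e)$ is a finite sum of continuous functions and so is continuous in $x$ as well.

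The main subtlety here is the possible nonuniqueness of min-cost allocations: naively one might worry that as $x_n \to x$ we could have $\vec{x}^{(n)}$ oscillating among different min-cost allocations. This is resolved by Lemma~\ref{lem_genbundle3}, which pins down $c_e$ to a single value across the (closed, convex) set of min-cost allocations of a given magnitude, so every limit point $\vec{y}$ produced by the compactness argument yields the same value $c_e(y_e)$. No additional regularity of $C_e$ beyond convexity and differentiability of its derivative is needed.
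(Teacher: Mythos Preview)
Your proposal is correct and takes essentially the same approach as the paper: extract a convergent subsequence of min-cost allocations via compactness, show the limit is itself min-cost using continuity of $R$, and then invoke Lemma~\ref{lem_genbundle3} to identify the limiting marginal. The paper's version is somewhat more roundabout (it builds the convergent subsequence coordinate-by-coordinate and finishes with a contradiction argument), whereas your direct Bolzano--Weierstrass plus ``every subsequence has a further subsequence with the same limit'' is cleaner, but the underlying idea is identical.
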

The second part of the claim follows trivially from the first since the marginal cost of a bundle is simply the sum of marginal costs of the goods constituting it. Furthermore, the sum of a finite number of continuous functions is continuous. So we only prove the first part here.
\begin{proof}
We begin with a simple but fundamental result about minimum cost allocations that we will need later, namely that the limit of a sequence of minimum cost allocations of converging magnitude is also a min-cost allocation. i.e,

\begin{lemma}
\label{lem_convergingalloc}
Let $(\vec{X}_1, \vec{X}_2, \ldots, \vec{X}_n)$ be a sequence of min-cost allocations of magnitude $x_1, x_2, \ldots, x_n$ approaching $X$ as $n \to \infty$. If $\lim_{n \to \infty}\vec{X}_n = \vec{X}$, then $\vec{X}$ is a min-cost allocation of magnitude $X$.
\end{lemma}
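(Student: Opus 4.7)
The plan is to argue purely by continuity and closedness; nothing deeper should be needed. First I would fix what ``allocation'' means in this general bundle setting: each $\vec{X}_n$ is a nonnegative vector indexed by $\mathcal{B}$ (with induced per-item loads $x_e = \sum_{B_i \ni e} (X_n)_{B_i}$) satisfying any linear feasibility constraints of the market, and its magnitude is the linear functional $x_n = \sum_{B_i \in \mathcal{B}} (X_n)_{B_i}$. The feasibility set is the intersection of finitely many closed halfspaces, hence closed, so the limit $\vec{X}$ is again feasible. Since magnitude is linear and therefore continuous, the magnitude of $\vec{X}$ is $\lim_n x_n = X$.

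Next I would pin down the cost. Each $C_e$ is continuous by assumption, so $C(\vec{x}) = \sum_e C_e(x_e)$ is continuous in $\vec{x}$, which gives
\[
C(\vec{X}) \;=\; \lim_{n \to \infty} C(\vec{X}_n) \;=\; \lim_{n \to \infty} R(x_n),
\]
where the last equality uses that each $\vec{X}_n$ is min-cost of magnitude $x_n$. By Proposition~\ref{prop:Rstuff}, $R$ is continuous, so $\lim_n R(x_n) = R(X)$. Hence $C(\vec{X}) = R(X)$, and since $\vec{X}$ is a feasible allocation of magnitude $X$ achieving the min-cost value, it is itself a min-cost allocation, as claimed.

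The only real obstacle, and the reason I would be a bit careful, is checking that Proposition~\ref{prop:Rstuff} (in particular continuity of $R$) carries over from the graphical, single-source single-sink setting in which it was proved to the present bundle setting. Inspecting that earlier proof, however, the only ingredients used are convexity of the $C_e$ and the ability to average two feasible allocations of different magnitudes to obtain a feasible allocation of the averaged magnitude; both hold verbatim for nonnegative allocations over an arbitrary bundle set $\mathcal{B}$, so the continuity of $R$ transports without change. With that observation the argument above is complete, and no alternative ``contradict optimality of $\vec{X}_n$ by perturbing a hypothetically better $\vec{Y}$'' route is needed.
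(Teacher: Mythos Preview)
Your argument is correct and is exactly the continuity-based approach the paper intends: the paper's own ``proof'' of this lemma is the single sentence ``The proof follows from the basic continuity properties of flows,'' and you have simply supplied the details (closedness of the feasible set, continuity of magnitude, continuity of $C$, and continuity of $R$). Your extra care in verifying that the continuity of $R$ from Proposition~\ref{prop:Rstuff} transports to the general bundle setting is appropriate and correct, since the averaging argument used there does not rely on the graph structure.
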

The proof follows from the basic continuity properties of flows.

Now, we prove yet another fundamental result: for any given sequence of positive real numbers approaching $X$, there must exist a subsequence also approaching $X$ such that the min-cost allocations at each point along this subsequence approaches a min-cost allocation of magnitude $X$. As a first step towards this, we prove a simpler lemma that implies that for any single good $e$, there must be a subsequence along which the total allocation of that good converges to a finite value.

\begin{lemma}
\label{lem_allocexistence}
Let $(x_1, \ldots x_n)$ be a sequence of positive real numbers (denoting the total allocation) approaching $X$ as $n \to \infty$. Then for any given good $e$, there must exist
\begin{enumerate}
\item a value $X_e$,
\item a subsequence $(y_1, \ldots, y_n)$ also approaching $X$ as $n \to \infty$,
\item min-cost allocations $(\vec{Y_1}, \ldots, \vec{Y_n})$ where $\vec{Y_n}$ is a min-cost allocation of magnitude $y_n$
\end{enumerate}
 such that $\lim_{n \to \infty}(Y_{n})_e = X_e,$ where $(Y_{n})_e$ is the allocation of good $e$ in $\vec{Y_n}$.
\end{lemma}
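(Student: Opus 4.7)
The plan is to apply the Bolzano--Weierstrass theorem directly to the real-valued sequence of allocation amounts on good $e$. The lemma is a single-coordinate compactness statement, and the setup is designed so that no machinery heavier than this is needed at this stage; the substantive work is deferred to the follow-up argument that combines this extraction with Lemma~\ref{lem_convergingalloc}.

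\textbf{Steps.} First, for each $n$, I would fix any min-cost allocation $\vec{X_n}$ of magnitude $x_n$. Such an allocation exists for every positive magnitude because the objective $\sum_e C_e(\sum_{B_i \ni e} x_{B_i})$ is continuous (costs are continuous and convex) and the feasible simplex $\{\vec{x}\,:\,\sum_{B_i} x_{B_i} = x_n,\; x_{B_i}\geq 0\}$ is compact. Second, I would invoke the trivial bound
\[
(X_n)_e \;=\; \sum_{B_i \ni e}(X_n)_{B_i} \;\leq\; \sum_{B_i}(X_n)_{B_i} \;=\; x_n,
\]
which shows that since $x_n \to X$, the real sequence $\big((X_n)_e\big)_{n}$ is bounded (by, say, $X+1$ for all sufficiently large $n$). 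Third, applying Bolzano--Weierstrass gives a convergent subsequence $\big((X_{n_k})_e\big)_k$ with some limit $X_e \in \mathbb{R}_{\geq 0}$. Setting $y_k := x_{n_k}$ and $\vec{Y_k} := \vec{X_{n_k}}$, every subsequence of a convergent sequence converges to the same limit, so $y_k \to X$; by construction each $\vec{Y_k}$ is a min-cost allocation of magnitude $y_k$ with $(Y_k)_e \to X_e$, as required.

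\textbf{Main obstacle.} There is no genuine obstacle inside this lemma itself: the claim isolates a single good $e$ and asks only that one real-valued subsequence converges, which is precisely the content of Bolzano--Weierstrass. The real difficulty will arise in the subsequent step of Claim~\ref{corr_margcontinuity}, where one must obtain simultaneous convergence of the allocation vector across \emph{all} goods; that will require iterating this lemma good by good with a diagonal subsequence argument, after which Lemma~\ref{lem_convergingalloc} can be invoked to conclude that the limit vector is itself a min-cost allocation of magnitude $X$, and finally that continuity of $c_e$ forces $c_e(X_n^{(e)}) \to c_e(X_e)$ along the extracted subsequence.
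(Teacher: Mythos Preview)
Your proposal is correct and follows essentially the same approach as the paper: fix an arbitrary min-cost allocation at each $x_n$, observe that the $e$-coordinate is bounded (you use $x_n\to X$, the paper uses the standing assumption $x_n\le x^*$), and extract a convergent subsequence. The only difference is cosmetic---you invoke Bolzano--Weierstrass by name, whereas the paper sketches a nested-interval argument to the same effect.
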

We also assume that the sequence $(x_1, \ldots, x_n)$ is bounded from above by (say) the magnitude of the optimal allocation $x^*$. This is a reasonable assumption which we will not violate when we invoke this lemma.
\begin{proof}
We prove the result using a fundamental definition of limits. Suppose that $\vec{X_n}$ is any given minimum cost allocation of magnitude $x_n$. Then, the allocation of good $e$ $(X_n)_e$ must be finite, non-negative and bounded. This is because the allocation of any good cannot be larger than the overall allocation $x_n$ which in turn is no larger than $x^*$. Now, at each point $x_n$, fix one of the many possible min-cost allocations (arbitrarily) and look at the allocations of good $e$ at each of these min-cost allocations, i.e., $((X_1)_e, \ldots (X_n)_e)$.

Then, for any given $\epsilon > 0$, there must exist at least one subsequence of $(x_1, \ldots, x_n)$ also approaching $X$ and some positive value $f$ such that at all points in this subsequence, the total allocation of good e in the given min-cost allocations lies between $[f-\epsilon, f+\epsilon]$. Now take the set of all such subsequences where the allocation falls within this region and recursively apply this idea for a smaller and smaller $\epsilon > 0$. In the limit, there must exist at least one such subsequence, which converges to a single allocation value of $X_e = f$.
\end{proof}

The rest of the proof to show the existence of a converging sequence of allocations is easy to see. Apply the above lemma for any given item, to obtain a sequence approaching $X$ and some potential min-cost allocations at each point satisfying the limit conditions for that item. Now, use the obtained subsequence as input for the above lemma once again but this time with a different item to obtain a subsequence where the allocations of two different items converge. Recursively applying the above lemma, we can obtain a sequence where the allocation of every good converges to some allocation of magnitude $X$. By Lemma~\ref{lem_convergingalloc}, we know that the allocation obtained in the limit has to be a min-cost allocation.

Finally, we are in a position to prove our main claim. As we did previously, we assume that $c_e(X_e)$ is the marginal cost of good $e$ for any min-cost allocation of magnitude $X$. Assume by contradiction that $\lim_{x \to X}c_e(x_e) \neq c_e(X_e)$ for some good $e$. Then without loss of generality, there must exist some $\epsilon > 0$ such that for every small interval of size $\delta$ around $X$, there must exist at least one point where the value of $c_e(x_e) > c_e(X_e) + \epsilon$ or $c_e(x_e) < c_e(X_e) - \epsilon$. Pick some sufficiently small $\delta$ and let $K_\epsilon$ represent the set of points in a region of size $\delta$ around $X$ where $c_e(x_e)$ does not fall within a $\epsilon$ window. Clearly the set of points $K_\epsilon$ when sorted must converge to $X$.

By the result of Lemma~\ref{lem_allocexistence} and its following reasoning, there must exist a subsequence of $K_\epsilon$ where the valid min-cost allocations converge to (say) a min-cost allocation $\vec{Y}$ of magnitude $X$. By the uniqueness of the marginal cost at $X$ as per Lemma~\ref{lem_genbundle3}, we know if the allocation of $e$ according to $\vec{Y}$ is $(Y)_e$, then $c_e((Y)_e) = c_e(X_e)$. This means that along this subsequence of $K_\epsilon$, the value of the differential cost of good $e$ converges to $c_e(X_e)$. But by construction, at every point $x$ contained in $K_\epsilon$, it must either be that $c_e(x_e) > c_e(X_e) + \epsilon$ or $c_e(x_e) < c_e(X_e) - \epsilon$, which means no subsequence should be converging to $c_e(X_e)$, a contradiction. So, $\lim_{x \to X}c_e(x_e) = c_e(X_e)$. This completes the proof of the claim. $\qed$\end{proof}

\begin{claim}
\label{clm_allmincostflows}
Our pricing rule returns the same set of prices $(\vec{p})$ on every minimum cost allocation of magnitude $x$.
\end{claim}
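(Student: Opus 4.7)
The plan is to prove that Algorithm~\ref{alg} produces the same price vector on any two min-cost allocations $\vec{a}$ and $\vec{b}$ of magnitude $x$, by identifying an allocation-independent ``skeleton'' that drives the execution. Define the common marginal cost $\tilde c_e := c_e(a_e) = c_e(b_e)$ (equal by Lemma~\ref{lem_genbundle3}), the marginal bundle cost $m(B) := \sum_{e \in B} \tilde c_e$, the minimum value $m^* := \min_{S \in \mathcal{B}} m(S)$, the set of optimal bundles $B^* := \{B \in \mathcal{B} : m(B) = m^*\}$, and the set of ``true'' monopolies $\mathcal{M}^* := \bigcap_{B \in B^*} B$. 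All of $\tilde c_e$, $B^*$, $\mathcal{M}^*$, and $\mathcal{B} \setminus B^*$ depend only on $x$, not on the choice of min-cost allocation.

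First I will observe that $B_a, B_b \subseteq B^*$ by the KKT conditions for min-cost allocations (any bundle carrying flow minimizes the marginal-cost sum over $\mathcal{B}$), and therefore $\mathcal{M}^* \subseteq \mathcal{M}_a \cap \mathcal{M}_b$. The initial prices $p_e^0 = \tilde c_e$ agree across the two executions. Now consider a ``spurious'' monopoly $e \in \mathcal{M}_a \setminus \mathcal{M}^*$: by definition of $\mathcal{M}^*$ there exists some $B_3 \in B^*$ with $e \notin B_3$, and since $e$ lies in every bundle of $B_a$ we must have $B_3 \notin B_a$, so $B_3 \in B_a' \cap B^*$. The initial price of $B_3$ is $m^*$, equal to that of every bundle in $B_a$. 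Consequently, the instant any positive uniform increase is attempted, the price of $B_3$ rises more slowly than the $B_a$-bundles (which contain $e$), so the tightness criterion of the ascending phase fires at $\delta = 0$ and $e$ is moved into $M_I$ with no net price change. Applying this argument simultaneously to every spurious monopoly shows that after the initial ``spurious subphase'' both executions reach an identical state: the active set equals $\mathcal{M}^*$ and the price vector is still $\tilde c_e$ on every good.

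Once the active set is $\mathcal{M}^*$, the remainder of each execution is governed solely by allocation-independent data. Every bundle in $B^*$ contains all of $\mathcal{M}^*$, so uniform increases on $\mathcal{M}^*$ raise the prices of all $B^*$-bundles at the same rate and keep them mutually tied; in particular the bundles in $B^* \setminus B_a$ (respectively $B^* \setminus B_b$), while formally in $B'$, never trigger the tightness check for any active monopoly. Only bundles in $\mathcal{B} \setminus B^*$ can cause further tightening events, and this set is the same for both inputs. The termination test (common price of $B$-bundles equals $\lambda(x)$) is also allocation-independent, because the common $B$-bundle price equals $m^*$ plus the cumulative uniform increase, and both $m^*$ and $\lambda(x)$ depend only on $x$. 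It follows that the sequence of tightening events, the increments applied to each surviving monopoly between consecutive events, and hence the final price on every good coincide across the two executions. The main obstacle is verifying that the algorithm's bookkeeping (which formally references $B_a$ or $B_b$) does not inject a hidden dependence on the allocation into later iterations; this is dispatched by the identity $B_a' \setminus B^* = B_b' \setminus B^* = \mathcal{B} \setminus B^*$, combined with the observation that bundles in $B^* \setminus B_a$ remain inert throughout the tightness check because they contain every active monopoly.
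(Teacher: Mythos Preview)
Your proof is correct and follows essentially the same approach as the paper's: both rely on Lemma~\ref{lem_genbundle3} to show that the marginal costs $c_e(x_e)$, and hence all bundle marginal costs, are allocation-independent, and then argue that the algorithm's execution is determined solely by these costs and $\lambda(x)$. Your version is in fact more careful than the paper's: the paper asserts that ``every iteration of our algorithm must be exactly the same for all such minimum cost allocations'' without explicitly addressing the fact that the sets $B$, $B'$, and $\mathcal{M}$ can differ across min-cost allocations of the same magnitude, whereas you handle this by introducing the allocation-independent objects $B^*$ and $\mathcal{M}^*$ and showing that all ``spurious'' monopolies in $\mathcal{M}_a \setminus \mathcal{M}^*$ become inactive at $\delta = 0$, after which the two executions coincide.
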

\begin{proof}
Intuitively this is true because for all minimum cost flows of a given magnitude and any given edge $e$, $c_e(x_e)$ is uniquely defined. Suppose that $S$ is the set of all minimum cost allocations of this given magnitude. First if for all $\vec{x} \in S$, an item $e$ does not belong to $\mathcal{M}$, its price is $c_e(x_e)$ (which is a fixed for a given $x$). So for such items, the pricing rule returns the same price for all allocations.

For every min-cost allocation, the marginal cost of every bundle remains the same. This is because the marginal cost of a bundle is simply the sum total of the marginal cost of all the items belonging to that bundle. Therefore, every iteration of our algorithm must be exactly the same for all such minimum cost allocations. Namely, any good $e$ that becomes inactive at time step $0$ for any one allocation $\vec{x}$ must do so for all allocations. This is because for $\vec{x}$, there exists a bundle not containing $e$ with the same marginal cost as the bundles $e$. Moreover the marginal costs are constant across all min-cost allocations.

Notice now that the set of active monopolies for which we increase the price remains the same for all allocations. Moreover, the initial price (marginal cost) of all bundles are also the same across allocations. This means that
(inductively) at ever time step $t \leq K$,  all goods and bundles have the same price and the same set of active monopolies become inactive. Therefore, we conclude that the algorithm terminates at the exact same set of prices for all minimum cost allocations of a given magnitude.
\end{proof}

Now we know that the prices returned by our pricing rule do not really depend on which flow is input and only depends on the magnitude of $x$, i.e., we can use the term $p_e(x)$ without any ambiguity. We also define a quantity $\pe(x)$ to be the increase in the price of an item from its marginal cost, i.e., $p_e(x) = c_e(x_e) + \pe(x)$. In order to show our main existence result, we require the fact that as we vary $x$ and compute the prices returned by our algorithm, $\pe(x)$ and hence $p_e(x)$ varies continuously. In a first step towards showing this, we define some properties that a single good can obey at a given value of $x$ and show that as long as these properties are fixed in a region, the pricing rule behaves nicely.

\begin{definition}{\textbf{(Monopolies and Non-monopolies)} }
A good $e$ is said to be a virtual monopoly for a given value of $x$ if there exists at least one min-cost allocation where the total allocation of $e$, $x_e = x$.

A good $e$ is said to be a non-monopoly for a given value of $x$ if the price of the good using our pricing rule is $c_e(x_e)$, its marginal cost.
\end{definition}

Notice that the set of monopolies and non-monopolies need not be disjoint, for a given $x$, a single good {\em could belong to both}.

\begin{definition}{\textbf{(Active Monopolies and Inactive monopolies)} }
For a given $x$, a monopoly $e$ is said to be an active monopoly if upon running our pricing rule, it belongs to the active set when the algorithm terminates.

For a given $x$, a monopoly $e$ is said to be an inactive monopoly if given the prices returned by our algorithm, there exists a bundle $B_1$ not containing $e$ such that the price of this bundle equals the price of any optimal bundle containing $e$.
\end{definition}

Once again, these two sets can overlap: this can happen for example if the termination condition for the algorithm occurs simultaneously with $e$ becoming tight.

\begin{definition}{\textbf{(Rank and tight bundles)} }
For a given $x$, and a set of inactive monopolies, the position or rank of each monopoly $e$ in the set is a number greater than or equal to one that denotes when in the process of the algorithm, it becomes active. If several items become inactive at the same time, then the rank includes all of those positions. For instance if three monopolies become tight at the beginning, then they all have rank $1$, $2$, and $3$; just as $e$ can simultaneously be both active and inactive, it can also have several simultaneous ranks.

For any given inactive monopoly $e$, a bundle $B_i \not\owns e$ is said to make $e$ tight if at the point in the algorithm where $e$ first becomes inactive, the price of the bundle equals the price of any bundle with non-zero allocation containing $e$.
\end{definition}

\textbf{Profile Space} For a given value of $x$, an item could have several valid possible property sets. For instance it could be a monopoly, an inactive monopoly, have a rank $k$ and some bundle that makes it inactive. For the same value of $x$, it could be considered a monopoly, an active monopoly and have no rank or bundle. We aggregate the valid properties of all items into a single profile vector, namely a valid profile vector is a $4$-tuple $\vec{v}=(\mathcal{M}, M_A, \mathcal{R}, B)$, where $\mathcal{M}$ is the set of monopolies, $M_A\subseteq\mathcal{M}$ is the set of active monopolies, $\mathcal{R}$ is a set of monopolies and the order in which they become tight, and for every inactive monopoly, there must be one bundle listed in $B$ that makes it inactive. Here $\mathcal{R}$ consists exactly of $\mathcal{M} \setminus M_A$, the set of inactive monopolies. The set of all such profiles is the profile space.

A profile vector $\vec{v}$ is said to be {\em consistent} at a given value of $x$ if there exists a valid min-cost allocation $\vec{x}$ with a corresponding set of prices $\vec{p}$ returned by our pricing rule where,
\begin{enumerate}
\item For every $e \in \mathcal{M}$, we have $x_e = x$ and for every $e \notin \mathcal{M}$, we have $p_e = c_e(x_e)$.
\item Every $e \in M_A$ belongs to $\mathcal{M}$ and it satisfies the active monopoly property. For every $e \in \mathcal{M} \setminus M_A$, there must exist   a bundle not containing $e$ and having the same price as the allocated bundles.
\item For every $e \in \mathcal{M} \setminus M_A$, its rank in $\mathcal{R}$ is one of its valid ranks during the run of the algorithm. In other words, there is some tie-breaking for when several monopolies become tight simultaneously which will make $e$ become tight in the position given by $\mathcal{R}$.
\item For every $e \in \mathcal{M} \setminus M_A$, the certificate bundle given by $B$ really does make it tight during the run of our algorithm (even though other bundles may have become tight at the same time).
\end{enumerate}

We define $S_x$ to be the set of all consistent profile vectors for an allocation of magnitude $x$. Now, finally we can define sets of points where our pricing rule shows nice properties. Formally, define the set $I_v$ with respect to a profile vector $v$ as the set of all $x$ such that $\vec{v} \in S_x$. We also define an additional quantity $\Gamma(x)$ to be the quantity $\pe(x)$ for all the active monopolies in $M_A$ for a given interval $I_v$. Notice that this quantity is exactly the same for all $e \in M_A$. This is true because the increase in price is uniform from time step $0$ until time step $K$ for these items.

Now, in order to prove our existence result, we need to show that the prices returned by the algorithm vary continuously with $x$. As a first step towards this goal, we show some nice properties obeyed by the prices inside each $I_v$ and the fact that the $I_v$'s are closed. 

\begin{claim}\label{lem_ivcontinuous}
\textbf{(Continuity)} The price of every item is continuous within $I_v$. Formally, for a given profile vector $\vec{v}$, suppose $\exists$ an infinite sequence of points $(x_1, x_2, \ldots, x_n)$ converging to $X$ as $n \to \infty$ all belonging to $I_v$. Moreover if $X \in I_v$, then $\lim_{n \to \infty}p_e(x_n) = p_e(X)$ for all items. \end{claim}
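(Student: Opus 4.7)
Fix a profile $\vec{v}=(\mathcal{M},M_A,\mathcal{R},B)$ and a sequence $x_n\to X$ with $x_n,X\in I_v$. The guiding idea is that inside $I_v$ the combinatorial structure of the ascending-price algorithm is frozen: the same items are monopolies, the same subset $M_A\subseteq\mathcal{M}$ is still active at termination, the remaining monopolies are made inactive in the same order prescribed by $\mathcal{R}$, and with the same certificate bundles recorded in $B$. Only the marginal costs $c_e(x_e)$ and the demand value $\lambda(x)$ vary with $x$, and both are continuous on $I_v$ (the former by Claim~\ref{corr_margcontinuity}, the latter by our assumption on $\lambda$). The plan is to express each $p_e(x)$ as a continuous function of these two inputs, item by item.

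For a non-monopoly $e\notin\mathcal{M}$ the pricing rule outputs $p_e(x)=c_e(x_e)$, so continuity is immediate. For the monopolies I would proceed by induction on the stage $k=1,2,\ldots$ of the algorithm to show that the uniform increment $\Delta_k(x)$ applied during stage $k$ is continuous on $I_v$. Let $e_k$ be the rank-$k$ monopoly and $B_k\in B$ its certificate bundle. At the end of stage $k$, the price of any allocated bundle $P$ (which contains every monopoly) and of $B_k$ (which does not contain $e_k$) can each be written as $c_S(x)+\sum_{j\le k}\alpha_{S,j}\,\Delta_j(x)$ with nonnegative integer coefficients $\alpha_{S,j}$ depending only on the frozen profile $\vec{v}$. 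Equating these two prices yields
$$\bigl[\alpha_{P,k}-\alpha_{B_k,k}\bigr]\,\Delta_k(x)\;=\;c_{B_k}(x)-c_P(x)\;+\;\sum_{j<k}\bigl[\alpha_{B_k,j}-\alpha_{P,j}\bigr]\,\Delta_j(x),$$
where the leading coefficient on the left is a strictly positive integer because $e_k\in P\setminus B_k$ is still active at the start of stage $k$ and hence contributes to $\alpha_{P,k}$ but not to $\alpha_{B_k,k}$. By the induction hypothesis the right-hand side is continuous on $I_v$, so $\Delta_k(x)$, and therefore each cumulative sum $\sigma_k(x):=\sum_{j\le k}\Delta_j(x)$, is continuous. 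The final price of the rank-$k$ monopoly is then $c_{e_k}(x_{e_k})+\sigma_k(x)$, continuous in $x$.

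The remaining case is the common increment $\Gamma(x)$ applied to items of $M_A$ still active at termination. If the algorithm terminates by emptying $M_A$ there is nothing to do; otherwise it terminates because the price of an allocated bundle first reaches $\lambda(x)$, and $\Gamma(x)$ satisfies a single linear equation of the same form with leading coefficient $|M_A|>0$, hence is continuous. Assembling the three cases gives $p_e(x_n)\to p_e(X)$ for every $e$, as desired. The main technical point—and the only place where the hypothesis $x\in I_v$ is really used—is that the leading coefficient $\alpha_{P,k}-\alpha_{B_k,k}$ is strictly positive at every $x\in I_v$; this is precisely what consistency of $\vec{v}$ guarantees, since if that coefficient ever vanished, no choice of $\Delta_k$ could produce the required tightness event, contradicting $x\in I_v$.
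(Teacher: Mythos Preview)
Your proposal is correct and follows essentially the same approach as the paper: handle non-monopolies via Claim~\ref{corr_margcontinuity}, then induct on the stage index to express each increment $\Delta_k(x)$ as a ratio whose numerator is a continuous function of bundle marginal costs and earlier increments and whose denominator $|M^{k-1}|-|M^{k-1}\cap B_k|$ is a fixed positive integer determined by $\vec{v}$, with the final increment for $M_A$ handled analogously via $\lambda(x)$. Your notation with the coefficients $\alpha_{S,j}$ is slightly more explicit than the paper's, but the argument is the same.
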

\begin{proof} Since the set of monopolies remains consistent in this sequence for all non-monopoly goods $e$, the price on the non-monopolies is $c_e(x_e)$ which is continuous by Claim~\ref{corr_margcontinuity}. Notice also that in this sequence all inactive monopolies become tight in the exact same order and the same active monopolies remain active. Now, notice that for a fixed $x_i$, we can express the price of any good $e$ as a simple linear combination of the marginal costs of various bundles. For instance if $e_1$ is the item that becomes inactive first (position 1) and the bundle not containing $e_1$ that has the same price as the optimum bundles at time $t=1$ is $B_1$, its given certificate. Suppose that $B^*$ is a bundle with non-zero allocation. Then, the increased price of $e_1$ as a function of $x_i$ in this sequence is simply, $\bar{p}_{e_1}(x_i)= \frac{c_{B_1}(x_i) - c_{B^*}(x_i)}{|\mathcal{M}| - |\mathcal{M} \cap B_1|}$, where $|\mathcal{M}|$ is the cardinality of the set of monopolies.

Notice that since $\mathcal{M}$ is fixed in this interval, the price depends only on the marginal costs, which are all continuous. That is, for all bundles $\lim_{n \to \infty}c_B(x_n) = c_B(X)$. So, $\bar{p}_{e_1}(x)$ is continuous for this sequence. Similarly if $e_2$ is some item that becomes tight at position $2$, then we can express $\bar{p}_{e_2}(x_i)$ as $\bar{p}_{e_1}(x_i)$ plus a function of the marginal costs, which is just a more complex expression but still a linear function of the marginal costs. Continuing this way, it is not difficult to see that all prices are continuous as $x_i \to X$.
%Notice that since the order is fixed , we can define sets $\mathcal{M}_{-k}$ which is the set of initial monopolies with the first k positions removed. This way, we can extend the reasoning for the items that become tight at every instant. Suppose there are no active monopolies in this interval, then we can conclude that $\pe(x)$ is continuous for all edges.

What if there are active monopolies? Then, suppose the algorithm proceeds for $K-1$ rounds and in the last iteration, the price of the active monopolies are increased until $P(x_i)$ becomes equal to $\lambda(x_i)$. In that case, if the price of the optimal bundles at the beginning of round $K$ is $p^{K-1}_{B^*}(x_i)$, then the increase in the final iteration for any one active monopoly $e$ is simply $\frac{\lambda(x_i) - p^{K-1}_{B^*}(x_i)}{\mathcal{M}_{K-1}}$ . Note that  $\mathcal{M}_{K-1}$ is just the set of active monopolies plus the ones which become inactive at the very end. $\lambda(x_i)$ tends to $\lambda(X)$ as $i \to \infty$ by definition and the second term can be broken down into a linear function of the marginal costs and the increase in each previous iteration, which are all continuous.

The total price of any VM is simply $c_e(x_i) + \pe(x_i)$ which is a continuous function since the marginal costs are assumed to be continuous. Since the price on every good is continuous, the price on every bundle is also continuous. Finally since the set of active monopolies at termination $M_A(x)$ is the same in this interval (or at least there is a tie-breaking rule which will make it the same), then the increased price on these fixed items $\Gamma(x)$ is clearly continuous.
\end{proof}

For the rest of this section, we will use $M^k$ to denote the set of active monopolies at the end of round (time step) $k$ of our algorithm and $p^k_B(x)$ to represent the price of any bundle $B$ at the same time instant. The following corollary gives us a closed form expression for the increase in price of all monopolies during any one round. The reasoning follows from the arguments made in Claim~\ref{lem_ivcontinuous}.

\begin{corollary}
\label{corr_closedformprices}
Suppose that for a fixed $x$, a good $e$ becomes tight in round $k+1$ during the course of our algorithm. Also suppose that $B_{k+1}$ is its witness bundle and $B^*$ is any optimal bundle. Then the price increase of all active monopolies during round $k+1$ is given by
$$\frac{p^k_{B_{k+1}}(x) - p^k_{B^*}(x)}{M^k - M^k \cap B_{k+1}}.$$
\end{corollary}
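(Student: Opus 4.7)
The plan is to read off the formula directly from the mechanics of round $k{+}1$ of Algorithm~\ref{alg}. The round begins with a fixed profile of prices, then uniformly raises the price of every active monopoly (i.e., every $e' \in M^k$) by a common increment $\Delta$, and terminates the instant some new bundle becomes tight against the allocated bundles. I will set $\Delta$ as the unknown and solve for it using the termination condition.

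First, I would track how each relevant bundle's price changes under a uniform raise of $\Delta$ on the items of $M^k$. Since only items in $M^k$ have their prices moved, any bundle $B$ has its price increase by exactly $|B \cap M^k| \cdot \Delta$. By Lemma~\ref{lem_genmarketsprice}, every active monopoly belongs to every allocated bundle, so in particular $M^k \subseteq B^*$ and the optimal bundle's price rises by $|M^k| \cdot \Delta$. The witness bundle $B_{k+1}$, by hypothesis, does not contain the item $e$ that becomes tight in round $k{+}1$, and so its price rises by $|M^k \cap B_{k+1}| \cdot \Delta$, where $|M^k \cap B_{k+1}| < |M^k|$.

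Next I would invoke the defining property of the witness bundle: $B_{k+1}$ makes $e$ tight in round $k{+}1$ precisely when, at the close of that round, $B_{k+1}$ has the same total price as every allocated bundle. Combining this with the update formulas above yields
$$p^k_{B^*}(x) + |M^k| \cdot \Delta \;=\; p^k_{B_{k+1}}(x) + |M^k \cap B_{k+1}| \cdot \Delta,$$
and solving for $\Delta$ gives $\Delta = \bigl(p^k_{B_{k+1}}(x) - p^k_{B^*}(x)\bigr)\big/\bigl(|M^k| - |M^k \cap B_{k+1}|\bigr)$, matching the stated expression.

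The only subtlety I foresee is bookkeeping when several bundles become tight simultaneously at the close of round $k{+}1$. This is handled by the consistent-profile framework introduced before the corollary: a tie-breaking rule fixes which of the simultaneously tight bundles is recorded as $B_{k+1}$, and the derivation applies unchanged to any such choice because each valid witness yields the same value of $\Delta$ via the equality above. Apart from that, the argument is a single algebraic step driven entirely by the uniform-raise structure of the algorithm, so I do not expect any genuine obstacle.
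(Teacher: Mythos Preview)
Your proposal is correct and follows the same reasoning the paper relies on: the corollary is stated to follow from the arguments in Claim~\ref{lem_ivcontinuous}, where the paper computes exactly this quotient by observing that a uniform raise of $\Delta$ on $M^k$ adds $|M^k|\Delta$ to any allocated bundle and $|M^k\cap B_{k+1}|\Delta$ to the witness bundle, then equating the two. Your treatment is in fact more explicit than the paper's, and your remark on simultaneous tightness aligns with the consistent-profile framework the paper uses.
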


\begin{claim}\label{lem_ivclosed}
For any given profile vector, the set $I_v$ is closed. Formally suppose there $\exists$ a sequence $(x_1, \ldots, x_n)$ all of which belong to $I_v$ for some profile vector $\vec{v}$. If $\lim_{n \to \infty}x_n = X$, then $X \in I_v$.
\end{claim}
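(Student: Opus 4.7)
The plan is to extract a converging subsequence of consistent min-cost allocations from $(x_n)$ and use the continuity lemmas already in hand (Claim~\ref{corr_margcontinuity} and especially Claim~\ref{lem_ivcontinuous}, together with Corollary~\ref{corr_closedformprices}) to push every defining equality of the profile $\vec{v}$ to the limit point $X$.

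First I would pick, for each $n$, a min-cost allocation $\vec{x}_n$ of magnitude $x_n$ that realizes the consistency of $\vec{v}$ at $x_n$: every $e \in \mathcal{M}$ satisfies $(x_n)_e = x_n$, the run of the algorithm at $x_n$ makes the items in $\mathcal{M}\setminus M_A$ inactive in the order prescribed by $\mathcal{R}$ with the given certificate bundles in $B$, and every $e \in M_A$ is still active at termination. I would then apply the iterated subsequence argument used in the proof of Claim~\ref{corr_margcontinuity} (itself built on Lemma~\ref{lem_allocexistence}) to pass to a subsequence along which $\vec{x}_n \to \vec{X}$ coordinate-wise, and invoke Lemma~\ref{lem_convergingalloc} to conclude that $\vec{X}$ is a min-cost allocation of magnitude $X$.

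Next I would verify the four parts of the consistency definition at $X$ using the allocation $\vec{X}$. The monopoly condition is immediate, since for every $e \in \mathcal{M}$ we have $(X)_e = \lim_n (x_n)_e = X$, and for $e \notin \mathcal{M}$ the marginal-cost price $c_e((X)_e)$ is continuous by Claim~\ref{corr_margcontinuity}. For the inactive monopolies and their witness bundles, I would use Corollary~\ref{corr_closedformprices} to express the per-round increase $\Delta_k(x)$ as a rational function of the round-$(k-1)$ bundle prices and the cardinalities $|M^{k-1}|$ and $|M^{k-1}\cap B_k|$, which are constant along $I_v$. Inducting on $k$ and invoking Claim~\ref{lem_ivcontinuous}, each $p^k_B(x)$ and each $\Delta_{k+1}(x)$ is continuous on $(x_n)$, so the witness identity
\[
p^{k-1}_{B_k}(x_n) - p^{k-1}_{B^*}(x_n) = \bigl(|M^{k-1}| - |M^{k-1}\cap B_k|\bigr)\,\Delta_k(x_n)
\]
passes to the limit, ensuring that the same bundle witnesses tightness for the same monopoly at the same rank when the algorithm is run on $\vec{X}$. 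For $M_A$, the termination condition that every optimal bundle is priced at $\lambda(x_n)$ and no non-optimal bundle is strictly cheaper is a continuous equality together with a continuous weak inequality, and therefore persists at $X$ by continuity of $\lambda$ and of the $p^k_B$.

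The main obstacle I anticipate is handling coincidences at the limit: strict inequalities along $(x_n)$ — say, some non-witness bundle being strictly more expensive than the optimal bundles during an intermediate round, or an active monopoly failing to become tight at termination — can degenerate into equalities at $X$, giving $X$ additional admissible tie-breakings not present along the sequence. My argument will be that such extra coincidences can only enlarge the set of valid algorithm runs at $X$ and never violate an equality already recorded by $\vec{v}$; since consistency requires only the existence of some run realizing $\vec{v}$, the profile $\vec{v}$ remains consistent at $X$ with the allocation $\vec{X}$, so $\vec{v} \in S_X$ and $X \in I_v$.
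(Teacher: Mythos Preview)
Your approach is essentially the paper's approach, and the final paragraph captures exactly the right idea: along $(x_n)$ the profile $\vec v$ is pinned down by a collection of \emph{equalities} (witness identities, termination at $\lambda$) together with \emph{weak inequalities} (no other bundle undercuts earlier, no active monopoly becomes tight sooner); by continuity of the closed-form round-$k$ prices, equalities pass to the limit verbatim and weak inequalities remain weak, so at $X$ the same run of the algorithm is still admissible, possibly alongside new tied runs. The paper merely organizes this same reasoning as a case analysis over ``boundary events'' (change in $\mathcal M$, change in order, change in witness, change in $M_A$), and for each case derives a contradiction from the assumption that a weak inequality has flipped strictly.

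One technical point to clean up: your invocation of Claim~\ref{lem_ivcontinuous} is circular as stated, since that claim explicitly assumes $X\in I_v$, which is what you are proving. You do not actually need it. The induction on $k$ you describe already gives continuity of $p^k_B(\cdot)$ and $\Delta_{k+1}(\cdot)$ along the sequence directly from Corollary~\ref{corr_closedformprices} and Claim~\ref{corr_margcontinuity}, because the denominators $|M^{k-1}|-|M^{k-1}\cap B_k|$ are fixed by $\vec v$ and the numerators are built from marginal costs. Replace the appeal to Claim~\ref{lem_ivcontinuous} with this bare induction (exactly as the paper does in its ``Order'' and ``Bundle'' paragraphs) and the argument is complete. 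Also, for $e\notin\mathcal M$ you should say a word beyond ``the price is continuous'': what is needed is that some bundle $B'\not\ni e$ has marginal cost at most that of the optimal bundles at $X$, which follows by taking a bundle minimizing $\lim_n c_{B'}(x_n)$ among those witnessing $e$'s non-monopoly status along the sequence.
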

\begin{proof} We break this proof down into several parts, each identifying one new boundary event that could occur at $X$, potentially causing $X$ to not belong to $I_v$. However, we show that whatever this boundary event can be, the profile vector $\vec{v}$ still has to be active at $X$ by continuity arguments. We will be using the closed form prices given by Corollary~\ref{corr_closedformprices} throughout this proof. Since each profile vector $\vec{v}$ can be characterized as a $4$-tuple, the only possible new events are
\begin{enumerate}

\item An item $e$ stops being a monopoly or a new item becomes a monopoly.

\item The order in which some (inactive) monopolies become tight changes.
\item A new witness bundle makes a given inactive monopoly tight.

\item The set of active monopolies changes.
\end{enumerate}

\begin{itemize}
\item \textbf{Monopolies} We want to show that if $e \in \mathcal{M}$ in profile $v$ which is a valid profile along the sequence, then there exists some valid flow at $X$, where $e \in \mathcal{M}$. At every point in the infinite sequence $(x_1, \ldots, x_n, \ldots)$, $e \in M$ and at each of these points there exist a set of valid allocations where $e$ is a monopoly. From Lemma~\ref{lem_genbundle1} we know that the limiting case of a minimum cost allocation is also a minimum cost allocation.  Therefore, the allocation of item $e$ at some minimum cost allocation of magnitude $X$ is given by $X_{e} = \lim_{n \to \infty}x_{n_e} = \lim_{n \to \infty}x_n = X$. So for the given sequence of min-cost allocations, the limiting min-cost allocation also has $e$ to be a monopoly.

\item \textbf{Non-Monopoly}  This half of the claim is easier. If $e \notin \mathcal{M}$ in this sequence, it means that for every $x_n$, $\exists$ some bundle whose marginal cost is equal to the marginal cost of any min-cost bundle carrying $e$. If $e$ has zero allocation in the sequence, then the marginal cost of optimal bundles can be strictly larger than that of the bundles containing $e$.

Consider the bundle $B'$ minimizing $\lim_{n \to \infty}c_{B_i}(x_n)$ over all such bundles that do not contain $e$. Let $B_e$ be some bundle with non-zero allocation at $X$ containing $e$. Clearly $\lim_{n \to \infty}c_{B_e}(x_n) \geq \lim_{n \to \infty}c_{B'}(x_n)$ since $e$ is not a monopoly at any point in the sequence. So $c_{B_e}(X) \geq c_{B'}(X)$, which implies that by our definition, $e$ is a non-monopoly at $x_1$ as well since there exists some bundle with equal price.

The following simple lemma helps us to establish that at the limit $X$, as long as the order does not change, the prices remain continuous.

\begin{lemma}
Let $\mathcal{M}$ be the set of monopolies for the sequence $(x_1, \ldots, x_n)$ converging to $X$ which become tight in the order $e_1, e_2, \ldots, e_M$. Let $k \leq M$ denote the largest index such that at $X$, the monopolies become tight in the order $e_1, e_2, \ldots, e_k$. Then for all $1 \leq i \leq k$, $\lim_{n \to \infty}\bar{p}_{e_i}(x_n)=\bar{p}_{e_i}(X)$.
\end{lemma}
We can prove this inductively. Suppose that $B_1$ is the bundle that is $e_1$'s witness at each $x_n$. Then $\bar{p}_{e_1}(x_n)= \frac{c_{B_1}(x_n) - c_{B^*}(x_n)}{|\mathcal{M}| - |\mathcal{M} \cap B_1|}$, where $B^*$ is some bundle with non-zero allocation. The denominator is the same at $X$ and the numerator is a continuous function. Therefore $\bar{p}_{e_1}(x)$ is continuous. Similarly we can show that $\bar{p}_{e_i}(x)$ for all $i \leq k$ is also continuous. Now, using this we prove that at $x$, both the order in which the monopolies become tight and the witness bundles remain the same for some tie-breaking rule.

Once again assume that at $X$,  $(e_1, \ldots, e_k)$ are the first few monopolies that become tight in the same order as in the sequence $(x_1, \ldots x_n)$ as defined in the above lemma.

\item \textbf{Order} Suppose the monopoly which becomes tight after $e_k$ is $e \neq e_{k+1}$. In the limit $x_n \to X$, $e_{k+1}$ becomes tight before $e$. Suppose $B$ is the bundle which makes $e$ tight at $X$ and $B_{k+1}$ is the witness bundle for $e_{k+1}$ in the sequence of $x_i$'s. Consider the instant at which $e$ becomes tight for a magnitude of $X$ and let $M^k$ be the set of active monopolies at this instant including $e$. $\bar{p}_e(X) = \bar{p}_{e_k}(X) + \frac{p^k_{B}(X) - p^k_{B^*}(X)}{M^k - M^k \cap B}$. We also introduce two new quantities $\Delta_e(x)$ and $\Delta_{e_{k+1}}(x)$ for $e$ and $e_{k+1}$ at every $x$, which are defined as,

$$\Delta_e(x) = \frac{p^k_{B}(x) - p^k_{B^*}(x)}{M^k - M^k \cap B},$$
$$\Delta_{e_{k+1}}(x) = \frac{p^k_{B_{k+1}}(x) - p^k_{B^*}(x)}{M^k - M^k \cap B_{k+1}}.$$

These quantities refer to the price increase of every active monopoly required after $e_k$ became inactive to make the goods $e$ and $e_{k+1}$ tight respectively. Recall that the price of every active monopoly is increased uniformly so during any one round, the price increase of the active monopolies is the same.

In particular, since for the $x_i$'s $e_{k+1}$ becomes tight before $e$, this means that $\Delta_{e_{k+1}}(x_i)$ is exactly equal to $\bar{p}_{e_{k+1}}(x_i) - \bar{p}_{e_{k}}(x_i)$. This is the price increase of $e_{k+1}$ after $e_k$ becomes tight. This quantity has to be smaller than $\Delta_e(x_i)$ since $e_{k+1}$ becomes tight first. Similarly at $X$, $\Delta_e(X)$ gives us the price increase of $e$ after $e_k$ becomes tight. However, since $\Delta_{e_{k+1}}(X)$ is the increase in price of active monopolies required to make $e_{k+1}$ tight after $e_k$ became inactive, this means that $\Delta_{e_{k+1}}(X) > \Delta_e(X)$.

We can now infer that
$$\lim_{n \to \infty}\Delta_e(x_n) = \Delta_e(X) < \Delta_{e_{k+1}}(X) = \lim_{n \to \infty}\Delta_{e_{k+1}}(x_n).$$

This is however a contradiction since we know that for all $x_n$, $\Delta_e(x_n) \geq \Delta_{e_{k+1}}(x_n)$ because $e_{k+1}$ becomes tight before $e$. This way, we can inductively prove that the order in which the monopolies become inactive at $X$ is the same as in the sequence $(x_1,\ldots, x_n)$.

\item \textbf{Bundle} For an inactive monopoly $e_k$, suppose a new bundle $B'_k$ makes it tight at $X$ as opposed to the limit where the certificate is $B_k$. Without loss of generality, we can assume that for all $e$ in the order before $e_k$, the certificate bundle remains the same (or just take the first such $k$ for which the certificate changes strictly for $e_k$). Clearly the price of the bundle $B_k$ as $x_n$ approaches $X$ at the discrete instant $k$ is smaller than or equal to that of the bundle $B'_k$. Mathematically, this means that in the limit,
$$\Delta_{B_k}(x_n) := \frac{p^{k-1}_{B_{k}}(x_n) - p^{k-1}_{B^*}(x_n)}{M^{k-1} - M^{k-1} \cap B_{k}} \leq \frac{p^{k-1}_{B'_{k}}(x_n) - p^{k-1}_{B^*}(x_n)}{M^{k-1} - M^{k-1} \cap B'_{k}}=: \Delta_{B'_k}(x_n),$$

where $\Delta_{B_k}(x)$ and $\Delta_{B'_k}(x)$ are defined as mentioned above, analogous to our previous definitions of $\Delta_e(x)$. But since $B_k$ is not a witness bundle at $X$, this means $\Delta_{B_k}(X) > \Delta_{B'_k}(X)$, which is a contradiction since these are all continuous functions. And so $B_k$ has to be a witness bundle at $X$.

%The price on the bundle cannot have any jump discontinuities because it is a simple function of the marginal costs and the $\pe$ for the items that became inactive before $e_1$, which are all continuous.

\item \textbf{Active Monopoly.} Is it possible that $e \in M_A$ as $x_n \to X$ but this is not true at $X$? We have already shown that at $X$, all the inactive monopolies become tight in the same order with the same witness bundle. Is it possible that a monopoly which was previously active becomes inactive in the final iteration at $x=X$? Let this monopoly be $e_{k+1}$ and $(e_1, \cdots, e_k)$ represent the monopolies which became inactive before $e_{k+1}$ (in that order). Let $B_{k+1}$ be the witness bundle that makes $e_{k+1}$ tight at $X$. Notice that since this bundle does not make $e_{k+1}$ tight for $x_n$, this can only mean

\begin{equation}
\label{eqn_activemonop}
(\sum_{i=1}^k \bar{p}_{e_i}(x_n)) + M_A\displaystyle\left(\bar{p}_{e_k}(x_n) + \frac{p^k_{B_{k+1}}(x_n) - p^k_{B^*}(x_n)}{M^k - M^k \cap B_{k+1}}\right) \geq \lambda(x_n) - r(x_n),
\end{equation}

where $M_A$ is the number of active monopolies at each $x_n$ as per the given profile vector. We now explain the above inequality: the first term in the LHS refers to the total increased price for monopolies $e_1$ through $e_k$. The second term inside the parenthesis is the total price increase required after $e_{k}$ becomes inactive if $e_{k+1}$ is to become tight with bundle $B_{k+1}$. Since at $x_n$, $e_{k+1}$ is a active monopoly, the term inside the parenthesis acts as an upper bound for $e_{k+1}$'s increased price. This because because for an active monopoly, the price of the monopoly cannot become tight before the algorithm terminates, by definition.

All the $M_A$ active monopolies have the same increased price and therefore, the second term in the LHS is simply an upper bound for the increased price of all active monopolies. The inequality implies that an upper bound on the total increased price of all monopolies cannot be smaller than the available slack $\lambda(x_n) - r(x_n)$.

Suppose that at $X$, $e_{k+1}$ is to be an inactive monopoly. We have already established the continuity of all the functions in the above inequality. Then taking the limit of the above equation as $x_n$ tends to $X$, would give us a jump discontinuity at $X$. This is a contradiction and therefore, by our definition of an active monopoly $e_{k+1}$ has to remain active at $x=X$.

\item \textbf{Inactive Monopoly} Suppose that an inactive monopoly $e_k$ becomes active at $X$. Without loss of generality, let $k$ be the smallest index for which an inactive monopoly becomes active and let $B_k$ be $e_k$'s certificate as $x_n$ approaches $X$. Since the monopoly is inactive for all $x_n$, we have the following inequality analogous to Inequality~\ref{eqn_activemonop} for active monopolies.

$$(\sum_{i=1}^{k-1} \bar{p}_{e_i}(x_n)) + (M - (k-1))\displaystyle \left(\bar{p}_{e_{k-1}}(x_n) + \frac{p^{k-1}_{B_{k}}(x_n) - p^{k-1}_{B^*}(x_n)}{M^{k-1} - M^{k-1} \cap B_{k}}\right) \leq \lambda(x_n) - r(x_n).$$

The first term in the left hand side of the above inequality is the increased price for the monopolies $e_1$ through $e_k$ and the second term inside the large parenthesis is the increased price of monopoly $e_k$, which acts as a lower bound for the price of the other monopolies. What the inequality represents is that the sum of the increased prices of all monopolies cannot exceed the total slack $\lambda(x) - r(x)$. This inequality should hold as in the limiting case of $x_n$ tending to $X$ and therefore at $X$ since these are all continuous functions. However, that would mean that by definition $e_k$ cannot be a strictly active monopoly at $X$. So $B_k$ still remains $e_k$'s certificate at that point.
\end{itemize}

Now, it is not hard to reason that no combination of these events can lead to $\vec{v}$ becoming an invalid profile vector at $X$.
\end{proof}

\begin{claim}
\label{corr_prices are continuouseverywhere}
For every good $e$, the increased price on that good $\pe$ is continuous for all $x$ in $[0, x^*]$. Moreover, the increased price on the active monopolies $\Gamma(x)$ is also a continuous function of $x$ in any region where $M_A$ is non-empty.
\end{claim}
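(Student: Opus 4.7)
The plan is to combine the two preceding claims (Claim~\ref{lem_ivcontinuous} and Claim~\ref{lem_ivclosed}) with a finiteness argument on the profile space. The finiteness is immediate: the set $E$ of goods is finite, the set of valid bundles is finite, and therefore the $4$-tuple $\vec{v}=(\mathcal{M}, M_A, \mathcal{R}, B)$ can take only finitely many values. Consequently, the family $\{I_v\}_{\vec v}$ contains only finitely many sets, each of which is closed in $[0,x^*]$ by Claim~\ref{lem_ivclosed}, and their union covers all of $[0,x^*]$ since every $x$ admits at least one consistent profile from the definition of our algorithm.

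To prove continuity of $\pe$ at an arbitrary point $X \in [0,x^*]$, I would argue as follows. Fix any sequence $(x_n)\to X$. Because there are only finitely many profile vectors but infinitely many terms, the sequence can be partitioned into finitely many subsequences $(x_n^{(v)})_n$, where each subsequence lies entirely inside a single $I_v$. For each such subsequence: since $I_v$ is closed (Claim~\ref{lem_ivclosed}) and $x_n^{(v)} \to X$, we obtain $X\in I_v$; since prices are continuous inside $I_v$ (Claim~\ref{lem_ivcontinuous}), we get $\pe(x_n^{(v)}) \to \pe(X)$. Because $\pe(X)$ is a single, well-defined number (our pricing rule returns the same set of prices on every min-cost allocation of magnitude $X$ by Claim~\ref{clm_allmincostflows}), every subsequence shares the same limit $\pe(X)$. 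Hence the original sequence $\pe(x_n)$ converges to $\pe(X)$, which establishes continuity at $X$. Since $X$ was arbitrary, $\pe$ is continuous on $[0,x^*]$.

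For the statement about $\Gamma(x)$, I would proceed analogously but restrict attention to the subregion where $M_A$ is non-empty. Within any $I_v$ having $M_A \neq \emptyset$, pick any active monopoly $e\in M_A$; by the proof of Claim~\ref{lem_ivcontinuous}, $\Gamma(x)=\pe(x)-c_e(x_e)$ is continuous because both the total price $\pe$ and the marginal cost $c_e(x_e)$ are continuous (Claim~\ref{corr_margcontinuity}). Repeating the finite-cover-plus-subsequence argument above on the subset of $[0,x^*]$ where active monopolies exist gives continuity of $\Gamma$ on that subregion.

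The only non-routine part is making sure the subsequence argument is valid, which requires that different $I_v$'s containing $X$ agree on the value of $\pe(X)$. This is exactly what Claim~\ref{clm_allmincostflows} guarantees (the output of our pricing rule depends only on $x$), so there is no ambiguity when we equate the limits obtained along different subsequences. No further computation is needed beyond invoking the continuity of the marginal costs $c_e(x_e)$, the bundle marginal costs $c_B(x)$, and the demand $\lambda(x)$, all of which have already been established in the preceding lemmas.
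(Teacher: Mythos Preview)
Your proposal is correct and follows essentially the same approach as the paper: both arguments combine finiteness of the profile space, closedness of each $I_v$ (Claim~\ref{lem_ivclosed}), and within-$I_v$ continuity (Claim~\ref{lem_ivcontinuous}), with the paper phrasing the final step as an explicit $\epsilon$--$\delta$ argument on one-sided limits while you use the equivalent subsequence formulation. One cosmetic slip: since $\pe$ already denotes the \emph{increased} price, for $e\in M_A$ you have $\Gamma(x)=\pe(x)$ directly, not $\pe(x)-c_e(x_e)$.
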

\begin{proof}
Fix any value of $x$, say $x_0$. We want to show that $\lim_{x \to x^-_0}\pe(x) = \lim_{x \to x^+_0}\pe(x) = \pe(x_0)$ for every single good $e$. Let's pick any one direction, say $x \to x^+_0$.

Let $V$ represent the minimal but complete set of profile vectors that are all active as $x$ approaches $x^+_0$. Notice that $V$ is finite and non-empty since at least one profile vector has to be active at every point in the limit and there are only a finite number of profile vectors.  When $|V|=1$, then only a single profile vector $\vec{v}$ is active and using Lemma~\ref{lem_ivclosed}, $\vec{v} \in S_{x_0}$. Further, using the continuity argument in Lemma~\ref{lem_ivcontinuous}, we see that $\pe(x)$ is continuous as $x$ approaches $x_0$ from the right. The case where $|V| > 1$ is slightly more tricky and so we will resort to a more fundamental definition of continuity.
%
%Now, either $\exists$ a single profile vector $\vec{v}$ and $x_1 > x_0$ such that for all $x \in (x_0, x_1)$, $\vec{v} \in S_x$ or there is a collection (set) of profile vectors $V$ which are all active in the limit $x \to x^+_0$. Formally, we can state the second part of the above statement as: $\exists$ a minimal set of profile vectors $V$ with $|V| > 1$ and a $x_1 > x_0$, such that for every $x \in (x_0, x_1)$, at least one $\vec{v} \in V$ also belongs to $S_x$. Moreover, for every such $x$ and every $\vec{v} \in V$, $\exists$ $x_0 < x' < x$ such that $\vec{v} \in S_{x'}$. That is, these are exactly the profile vectors that belong to $\lim_{x \to x^+_0}S_x$.

For every $\vec{v} \in V$, define $X_v$ to be the set of points $x$ in $(x_0, x_1)$ such that $v \in S_x$ (and in decreasing order). By definition, $X_v$ should be an infinite set of points converging to $x_0$. Using Lemma~\ref{lem_ivclosed}, we can say that every $\vec{v} \in V$ also belongs to $S_{x_0}$. Suppose we represent $X_v$ in the form of $(x^{(1)}, \ldots, x^{(n)}, \ldots)$ so that $\lim_{n \to \infty}x^{(n)} = x_0$. Then Lemma~\ref{lem_ivcontinuous} tells us that $\lim_{n \to \infty}\bar{p}_e(x^{(n)}) = \bar{p}_e(x_0)$ for each good $e$.

This means (applying the $\epsilon - \delta$ definition of continuity), for every $\vec{v} \in V$ and every $\epsilon > 0$, there must exist a positive integer $n_v$ such that for all $n > n_v$, $\bar{p}_e(x_0) - \epsilon \leq \bar{p}_e(x^{(n)}) \leq \bar{p}_e(x_0) + \epsilon$. Define,
$$\delta = \min_{\vec{v} \in V}{(x^{(n_v)} - x_0)}.$$

Now we can say that for every $\epsilon > 0$, there exists a $\delta > 0$ such that for all $x_0 \leq x < x_0 + \delta$, we have

$$\bar{p}_e(x) - \epsilon \leq \bar{p}_e(x_0) \leq \bar{p}_e(x) + \epsilon.$$

Therefore, we can conclude that for every good $e$, $\pe(x)$ is continuous at every $x$. Moreover, this would imply that the total price $p_e(x) = c_e(x) + \pe(x)$ is also continuous since the marginal costs are also continuous. Finally, the term $\Gamma(x)$ is simply the increased price of some active monopoly at $x$, which also has to be continuous. This is indeed true, because the set active monopolies is closed and all active monopolies have the exact same increased price.
\end{proof}

\begin{lemma}
\label{lem_activeprices}
If the set of active monopolies $M_A(x)$ at some $x$ is non-empty, then for $e \in M_A(x)$, $\bar{p_e}(x) \geq \frac{\lambda(x) - r(x)}{M}$, where $M$ is the number of virtual monopolies at the minimum cost allocation of magnitude $x$.
\end{lemma}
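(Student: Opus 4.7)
The plan is to use the termination condition of Algorithm~\ref{alg} to pin down exactly how much total price increase has been distributed across monopolies, and then observe that active monopolies must absorb at least the average share of this increase because inactive monopolies were frozen earlier at lower values.

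First, I would argue that because $M_A(x)$ is non-empty at termination, the algorithm cannot have stopped because the active set emptied out, so it must have stopped for the other reason: the price $P(x)$ of every optimal (non-zero allocation) bundle has reached $\lambda(x)$. Next, I would compute the makeup of an optimal bundle's price. Every non-monopoly item is left at its marginal cost $c_e(x_e)$ throughout the algorithm, while each virtual monopoly $e \in \mathcal{M}$ has price $c_e(x_e) + \pe(x)$. Since every virtual monopoly belongs to every non-zero allocation bundle $B^*$, and since the marginal costs of items on any flow-carrying bundle sum to $r(x)$ (the analogue of Lemma~\ref{lemma_cequivalence} for bundles, which holds because $\vec{x}$ is min-cost), we obtain
$$\lambda(x) \;=\; P(x) \;=\; \sum_{e \in B^*} c_e(x_e) + \sum_{e \in \mathcal{M}} \pe(x) \;=\; r(x) + \sum_{e \in \mathcal{M}} \pe(x).$$
Rearranging gives the identity $\sum_{e \in \mathcal{M}} \pe(x) = \lambda(x) - r(x)$.

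The final step is to exploit the fact that prices are only ever raised uniformly across the currently active set. Concretely, at every time step of the algorithm, each still-active monopoly carries the same increased price, and once a monopoly is moved to the inactive set its $\pe$ is frozen. Hence at termination, all $e \in M_A(x)$ share a common value $\pe(x) = \Gamma(x)$, while every inactive monopoly satisfies $\pe(x) \leq \Gamma(x)$ (its freeze time was earlier, when $\Gamma$ was smaller). Substituting these inequalities into the identity above yields
$$\lambda(x) - r(x) \;=\; \sum_{e \in \mathcal{M}} \pe(x) \;\leq\; |\mathcal{M}| \cdot \Gamma(x) \;=\; M \cdot \Gamma(x),$$
and so $\pe(x) = \Gamma(x) \geq (\lambda(x) - r(x))/M$ for every $e \in M_A(x)$, as required.

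The main obstacle is really only a bookkeeping one: ensuring that ``inactive monopolies never have a larger $\pe$ than active ones'' is literally a property of the algorithm and not an artifact of some tie-breaking. This follows directly from Step 4 of Algorithm~\ref{alg}, which raises prices uniformly on $M_A$ and transfers items out of $M_A$ only at the exact moment they become tight, after which they receive no further increase. No additional continuity or profile-vector machinery from Claims~\ref{lem_ivcontinuous}--\ref{corr_prices are continuouseverywhere} is needed here; the argument is a direct consequence of the termination rule together with the fact that every virtual monopoly sits in every flow-carrying bundle.
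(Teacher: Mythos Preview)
Your proof is correct and follows essentially the same approach as the paper's: both argue that termination with $M_A$ non-empty forces the optimal-bundle price to reach $\lambda(x)$, so the total increase over marginal costs is exactly $\lambda(x)-r(x)$, and since active monopolies carry the maximum $\pe$ value (inactive ones were frozen earlier), each must hold at least the average $\frac{\lambda(x)-r(x)}{M}$. Your write-up is in fact a more careful, fully spelled-out version of the paper's terse proof sketch.
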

\begin{proof}
This is not hard to see since we begin by pricing all edges at the marginal cost and distribute the slack among the remaining virtual monopolies. Further, if there are still active monopolies at termination, then it means that the final price of $\lambda(x)$ has been reached. The total slack must therefore equal $\lambda(x) - r(x)$. Moreover, for some items, we stop increasing in the middle, so among all goods, the active monopolies must have the largest value of $\bar{p_e}(x)$ which is at least $\frac{\lambda(x) - r(x)}{M}$ since the prices are increased uniformly.
\end{proof}

Now we can show that for $MPE$ functions, $\exists$ at least one non-trivial equilibrium point via the following theorem.
\begin{theorem}
\label{thm_existencegenmodel}
As long as the demand function is continuously differentiable and belongs to the class MPE, there exists at least one $\tilde{x} \leq x^*$ which is a Nash Equilibrium.
\end{theorem}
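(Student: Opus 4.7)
My plan has three stages: derive sufficient equilibrium conditions analogous to Corollary~\ref{corr_eqconditions}, verify the ascending prices satisfy them, and use an intermediate value argument on a continuous quantity to locate the appropriate $\tilde{x}$.

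In Stage 1, I claim it suffices to find $\tilde{x} \leq x^*$ such that either (i) $\tilde{x}=x^*$ and $\Gamma(x^*) \geq x^*|\lambda'(x^*)|$, or (ii) $M_A(\tilde{x})$ is nonempty and $\Gamma(\tilde{x}) = \tilde{x}|\lambda'(\tilde{x})|$, where $\Gamma(x)$ denotes the common increased price on the active monopolies. Given such $\tilde{x}$, the min-cost allocation of magnitude $\tilde{x}$ together with the prices $p_e(\tilde{x})$ output by Algorithm~\ref{alg} forms an equilibrium. For Stage 2, verification that no seller profitably deviates mirrors the proofs of Lemmas~\ref{lem_eqnconditionsincr} and~\ref{lem_eqnconditionsdecr}: those arguments analyzed the single-seller profit function $\pi(x)=[p_e+\lambda(x)-\lambda(\tilde{x})]x - C_e(x)$, which is independent of the bundle structure. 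Non-monopolies at marginal cost are handled by Lemma~\ref{cl_margcost}. Inactive monopolies have by Lemma~\ref{lem_genmarketsprice} a witness bundle not containing them at equal price, so they cannot raise price without losing all flow; since their $\bar{p}_e \leq \Gamma(\tilde{x}) = \tilde{x}|\lambda'(\tilde{x})|$, the downward-deviation calculation carries over. Active monopolies have $\bar{p}_e = \Gamma(\tilde{x})$ exactly, so the MPE-based proofs of Propositions~\ref{sublem_exist_smallx} and~\ref{sublem_exist_largex} apply unchanged. The buyer best-response property is secured by $P(\tilde{x}) = \lambda(\tilde{x})$, which holds because in case (ii) we have $\Gamma(\tilde{x})>0$ so the algorithm cannot terminate via an empty $M_A$ below the buyer valuation (combined with Corollary~\ref{corr_tightness}).

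Stage 3 is the heart of the argument. If $\lambda(x^*)-r(x^*) \geq M x^*|\lambda'(x^*)|$ at the optimum, case (i) holds directly by Lemma~\ref{lem_activeprices} and Proposition~\ref{prop_optflow}. Otherwise, define $\phi(x) = \Gamma(x) - x|\lambda'(x)|$. By Claim~\ref{corr_prices are continuouseverywhere}, each $\bar{p}_e(x)$ is continuous, and $\Gamma(x)$ can be written as $\max_e \bar{p}_e(x)$ taken over virtual monopolies at $x$, hence is continuous (more on this below). The MPE property yields $x|\lambda'(x)|/\lambda(x) \to 0$ as $x \to 0^+$, while Lemma~\ref{lem_activeprices} gives $\Gamma(x) \geq (\lambda(x) - r(x))/M$, so $\phi(x) > 0$ for small $x$ (assuming the market is nontrivial, $\lambda(0) > r(0)$). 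The failure of the optimum-branch inequality at $x^*$ translates into $\phi(x^*) < 0$. The intermediate value theorem then furnishes $\tilde{x} \in (0, x^*)$ with $\phi(\tilde{x}) = 0$, completing the proof.

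The principal obstacle is showing that $\Gamma$ is continuous across boundary values of $x$ where the profile vector $\vec{v}=(\mathcal{M}, M_A, \mathcal{R}, B)$ changes: a good can become or cease to be a virtual monopoly, the order of tightening can change, or the active/inactive partition can shift. Claim~\ref{corr_prices are continuouseverywhere} gives continuity of each fixed $\bar{p}_e(x)$, but $\Gamma$ is an envelope over a varying index set. The resolution is Lemma~\ref{lem_ivclosed}: profile vectors persist to limits, so when a monopoly $e$ joins or leaves the active set at some boundary $x_0$, its $\bar{p}_e(x_0)$ agrees in the limit with $\Gamma(x_0)$, ensuring the envelope is continuous. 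Once this continuity is secured, the IVT step is routine; the remaining work is bookkeeping to ensure the profile underlying the $\tilde{x}$ found by IVT really places the maximizing edge in $M_A$, which follows because $\Gamma(\tilde{x})>0$ forces a nonempty active set at termination.
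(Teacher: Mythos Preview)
Your overall architecture matches the paper's, but there is a real gap in Stage 3. You invoke Lemma~\ref{lem_activeprices} to get $\Gamma(x)\geq (\lambda(x)-r(x))/M$ for small $x$, yet that lemma has the explicit hypothesis that $M_A(x)$ is non-empty. In the general-bundle setting with $c_e(0)\neq 0$ this hypothesis can fail: if every virtual monopoly $e$ has an alternative bundle $B'\not\ni e$ with $c_{B'}(0)<\lambda(0)$, then for small $x$ the ascending-price process makes every virtual monopoly tight before the bundle price reaches $\lambda(x)$, so $M_A(x)=\emptyset$ and the lemma does not apply. Relatedly, your assertion that ``$\Gamma(\tilde{x})>0$ so the algorithm cannot terminate via an empty $M_A$ below the buyer valuation'' is not correct: $\Gamma(\tilde{x})>0$ only says the last monopoly to go inactive raised its price before doing so; the algorithm may still halt with $P(\tilde{x})<\lambda(\tilde{x})$, in which case $\vec{\tilde{x}}$ is not a buyer best response. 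So the root furnished by your IVT may fall in a region where $M_A$ is empty and the construction fails.

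The paper's proof addresses exactly this. It does not run IVT on all of $(0,x^*)$; instead it defines $x_0$ as the smallest $x$ with $M_A$ non-empty on $[x_0,x^*]$, and then splits into three sub-cases: (a) $\Gamma(x^*)\geq x^*|\lambda'(x^*)|$, take $\tilde{x}=x^*$; (b) $\Gamma(x_0)\geq x_0|\lambda'(x_0)|>\Gamma(x^*)$, use IVT on $[x_0,x^*]$; (c) $\Gamma(x_0)<x_0|\lambda'(x_0)|$, take $\tilde{x}=x_0$. Case (c) is precisely the scenario your dichotomy omits: at $x_0$ the last active monopolies are simultaneously active and inactive (by closedness of the profile regions), so every monopoly is tight (cannot raise price) and satisfies $\bar{p}_e\leq\Gamma(x_0)<x_0|\lambda'(x_0)|$ (does not want to lower price), while the boundary condition forces $P(x_0)=\lambda(x_0)$. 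The paper then uses the MPE limit and $\Gamma(x_0)\geq(\lambda(x_0)-r(x_0))/N$ to conclude $x_0>0$. A separate easy case handles $M_A(x^*)=\emptyset$. To repair your argument, either restrict the IVT to $[x_0,x^*]$ and add the $x_0$-branch, or prove independently that your extended $\Gamma=\max_e\bar{p}_e$ still dominates $x|\lambda'(x)|$ near $0$ even when $M_A$ is empty, and that any root with $M_A$ empty nonetheless yields $P(\tilde{x})=\lambda(\tilde{x})$; neither is immediate.
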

\begin{proof}
First consider the case when running the algorithm at the optimum point $x^*$ returns a non-empty set $M_A(x^*)$. Define $x_0$ to be the smallest $x \leq x^*$ such that for all $x \geq x_0$, the set of active monopolies is non-empty. Since the region is closed, we know that $x_0$, all the monopolies belonging to $M_A(x_0)$ are both active and in-active and therefore the total price of all optimal bundles at $x_0$ is $P(x_0) = \lambda(x_0)$. Recall that $\Gamma(x)$ is the quantity $\bar{p}_e(x)$, the increased price for all active monopolies.

Since Lemmas~\ref{lem_eqnconditionsincr} and~\ref{lem_eqnconditionsdecr} did not depend on flows anywhere, the same argument applies for general markets as well. First suppose that at $x^*$, $\Gamma(x^*) \geq x^*|\lambda'(x^*)|$, then we set $\tilde{x}=x^*$ and we are done. This is because both the non-monopolies and the inactive monopolies are tight. The active monopolies satisfy the condition that $\pe \geq x|\ldx|$ at the given value. Now what if $\Gamma(x^*) < x^*\lambda'(x^*)$?

Suppose that at $x_0$, $\Gamma(x) \geq x_0|\lambda'(x_0)|$. Then we claim that there exists a $\tilde{x} \geq x_0$ satisfying the equilibrium condition, namely $\Gamma(\tilde{x}) = \tilde{x}\lambda'(\tilde{x})$. This follows because $\Gamma(x) - x|\ldx|$ is a continuous function so if it is positive at one point ($x_0$) and negative at the other ($x^*$), then clearly there must exist a root somewhere in that interval. It is easy to see that if such a point does indeed exist, it satisfies all our equilibrium conditions. First no active monopoly can increase or decrease the price due to Lemmas~\ref{lem_eqnconditionsincr} and~\ref{lem_eqnconditionsdecr}. Look at any inactive monopoly $e$. Clearly for this item, we know $\pe(\tilde{x}) \leq \Gamma(x)$ because active monopolies have the largest increase in price. So this satisfies condition (3) of Lemma~\ref{lem_eqnconditionsdecr}. Moreover, inactive monopolies cannot increase their price because by definition, there exists some other bundle not containing this and having the same price as all the used bundles.

Finally, what about $\Gamma(x_0) < x|\lambda'(x_0)|$? In this case, we set $\tilde{x} =x_0$ and claim that it is a Nash Equilibrium. First, it is not too hard to see that if this condition is true, then $x_0 > 0$. Let $N$ be the size (cardinality) of the largest bundle. Clearly at any value of $x$, $\Gamma(x) \geq \frac{\lambda(x) - r(x)}{N}$ according to Lemma~\ref{lem_activeprices}. For $x_0$, it is true that $\frac{\lambda(x_0) - r(x_0)}{N} \leq \Gamma(x_0) < x_0|\lambda'(x_0)|$. If it is the case that $x_0 = 0$, then the function $\frac{x|\ldx|}{\lx}$ tends to a non-zero number as $x \to 0$ which violates the requirement for MPE functions. And so, $x_0 > 0$.

We now claim that if at $x_0$, $\Gamma(x_0) < x|\lambda'(x_0)|$, then this point along with the prices returned by our algorithm form a Nash Equilibrium. First for every monopoly edge $e$, $\pe(x_0) \leq \Gamma(x_0) \leq x_0|\lambda'(x_0)|$, so no edge would wish to decrease its price as per Condition (3) of Lemma~\ref{lem_eqnconditionsdecr}. The monopolies cannot increase their price either because they are all inactive. And so we are done. Finally, the case when at $x^*$, there are no active monopolies is actually quite simple. It is not hard to show that our pricing rule leads to a Nash Equilibrium because all sellers are tight.
\end{proof}

Recall that the main tool that enabled us to show all the efficiency results was the Equilibrium conditions in~\ref{corr_eqconditionssuff}. We show that similar conditions apply here when the equilibrium is not optimal.

\begin{corollary}
There exists some $M' \leq M(\tilde{x})$ such that for any demand function $\lambda$ in the class MPE, there exists a non-trivial equilibrium obeying
\begin{enumerate}
\item Either $\lambda(\tilde{x}) - r(\tilde{x}) = M'\tilde{x}|\lambda'(\tilde{x})|$ or $\tilde{x}$ is optimal.
\item Every non-monopoly and non-VM edge is priced at its marginal cost.
\end{enumerate}
\end{corollary}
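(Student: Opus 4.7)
The plan is to derive the corollary as a direct consequence of Theorem~\ref{thm_existencegenmodel}. That theorem already produces a non-trivial equilibrium $\tilde{x}\leq x^*$ together with the prices output by the ascending-price algorithm. The only thing left is to repackage the equilibrium condition $\Gamma(\tilde{x}) = \tilde{x}|\lambda'(\tilde{x})|$ (in the sub-optimal case) into the aggregate form $\lambda(\tilde{x}) - r(\tilde{x}) = M'\tilde{x}|\lambda'(\tilde{x})|$, and to read off the marginal-cost property for non-VM edges from the initialization of the algorithm.

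First I would recall the bookkeeping of the algorithm: it initializes every edge at $p_e = c_e(\tilde{x}_e)$, and subsequently raises only the prices of monopoly edges. Thus for any item that is neither a pure monopoly nor a virtual monopoly of $\tilde{x}$, the price remains exactly $c_e(\tilde{x}_e)$, giving item~(2) immediately. For item~(1), I split on whether $\tilde{x}$ is optimal. If $\tilde{x} = x^*$, there is nothing to prove. Otherwise the proof of Theorem~\ref{thm_existencegenmodel} constructed $\tilde{x}$ so that the set of active monopolies $M_A(\tilde{x})$ is non-empty and every active monopoly satisfies $\bar{p}_e(\tilde{x}) = \Gamma(\tilde{x}) = \tilde{x}|\lambda'(\tilde{x})|$; inactive monopolies $e$ instead satisfy $0 \leq \bar{p}_e(\tilde{x}) \leq \Gamma(\tilde{x})$ by construction of the ascending process.

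Next I would sum the per-edge excess prices across all virtual monopolies. Because every bundle with positive allocation is priced at $\lambda(\tilde{x})$ and has marginal cost $r(\tilde{x})$, and because the excess is concentrated on the virtual monopolies (the other edges contribute zero excess), we get the key identity
\begin{equation*}
\lambda(\tilde{x}) - r(\tilde{x}) \;=\; \sum_{e \in \mathcal{M}(\tilde{x})} \bar{p}_e(\tilde{x}) \;=\; |M_A(\tilde{x})|\cdot \tilde{x}|\lambda'(\tilde{x})| \;+\; \sum_{e \in M_I(\tilde{x})} \bar{p}_e(\tilde{x}).
\end{equation*}
Writing $\bar{p}_e(\tilde{x}) = \alpha_e\, \tilde{x}|\lambda'(\tilde{x})|$ with $\alpha_e\in[0,1]$ for each inactive monopoly, and defining
\begin{equation*}
M' \;:=\; |M_A(\tilde{x})| + \sum_{e \in M_I(\tilde{x})}\alpha_e,
\end{equation*}
yields $\lambda(\tilde{x})-r(\tilde{x}) = M'\tilde{x}|\lambda'(\tilde{x})|$. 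Since each $\alpha_e\leq 1$ and $|M_A|+|M_I|=M(\tilde{x})$, the bound $M' \leq M(\tilde{x})$ is automatic.

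The only subtlety, and really the main obstacle, is to make sure the identity above genuinely holds at the boundary case where $\tilde{x} = x_0$ in the existence proof (where $\Gamma(x_0) < x_0|\lambda'(x_0)|$ is allowed and all monopolies are simultaneously active and inactive). There one must check that the total excess still equals $\lambda(\tilde{x}) - r(\tilde{x})$ because the termination condition of the algorithm gives $P(x_0) = \lambda(x_0)$, so the sum of excesses fills exactly the slack; this forces $M' \leq M$ via the same $\alpha_e\leq 1$ argument (using $\Gamma(\tilde{x})$ as the uniform upper bound on all monopoly excesses). Aside from this boundary check, the proof is purely an accounting identity on top of the already-established existence theorem.
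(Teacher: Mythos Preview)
Your proposal is correct and follows essentially the same approach as the paper. The paper's proof is terser: rather than summing all monopoly excesses and writing $M' = |M_A| + \sum_{e\in M_I}\alpha_e$, it picks a single monopoly (an active one, or the last to go inactive) whose excess $\bar p_e$ satisfies $\frac{\lambda(\tilde{x})-r(\tilde{x})}{M}\leq \bar p_e \leq \tilde{x}|\lambda'(\tilde{x})|$, from which $M' := (\lambda(\tilde{x})-r(\tilde{x}))/(\tilde{x}|\lambda'(\tilde{x})|)\leq M$ follows. Your explicit summation identity $\lambda(\tilde{x})-r(\tilde{x})=\sum_{e\in\mathcal{M}}\bar p_e$ combined with $\bar p_e\leq \tilde{x}|\lambda'(\tilde{x})|$ is the same inequality in aggregate form, and your handling of the boundary case $\tilde{x}=x_0$ matches the paper's second case (no strictly active monopolies).
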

\begin{proof}
This statement is not hard to see. Notice from Lemma~\ref{lem_activeprices}, that at the equilibrium, any active monopoly $e$ must have $\bar{p_e}(\tilde{x}) = \tilde{x}|\lambda'(\tilde{x})|$ and $\bar{p_e}(\tilde{x})$ lies between $\lambda(\tilde{x}) - r(\tilde{x})$ and $\frac{\lambda(\tilde{x}) - r(\tilde{x})}{M}$. If there are no active monopolies at equilibrium, then let $e$ be an edge which becomes inactive at termination. Clearly, this edge must satisfy  $\frac{\lambda(\tilde{x}) - r(\tilde{x})}{M} \leq \bar{p_e}(\tilde{x}) \leq \tilde{x}|\lambda'(\tilde{x})|$, so for some $M'$, the above equality must hold.
\end{proof}

All our previous efficiency bounds therefore hold, with $M$ being the number of virtual monopolies. It is important to recall here that even in the absence of `pure' monopolies, equilibria are no longer efficient. This is because the equilibrium may be dominated by virtual monopolies. 

\begin{theorem}
For buyers with identical sets of valid bundles $B_i$, all the results from Section~\ref{sec:efficiency} hold, but with $M$ being the number of {\em virtual monopolies} of the equilibrium solution $\tilde{x}$, instead of the number of monopolies.
\end{theorem}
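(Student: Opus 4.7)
The plan is to reduce the general-bundle case to the graph-market efficiency proofs of Section~\ref{sec:efficiency} by reusing the equilibrium characterization that was just established in Theorem~\ref{thm_existencegenmodel} and its subsequent corollary. My key observation will be that none of the efficiency proofs in Section~\ref{sec:efficiency} actually use the graph structure of the market; they use only four facts about the equilibrium $\tilde{x}$: (i) $\tilde{x}$ is a min-cost allocation of size at most $x^*$, (ii) every non-monopoly good is priced at its marginal cost, (iii) the sum of prices along every consumed bundle equals $\lambda(\tilde{x})$, and (iv) the ``first order'' condition $\lambda(\tilde{x}) - r(\tilde{x}) = M\tilde{x}|\lambda'(\tilde{x})|$ whenever $\tilde{x} < x^*$. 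I will therefore go through each efficiency bound and verify that the proof carries over word-for-word once $M$ is reinterpreted as the number of virtual monopolies.

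Concretely, first I would invoke the corollary after Theorem~\ref{thm_existencegenmodel} to obtain a non-trivial equilibrium satisfying facts (i)-(iii) and the relaxed version $\lambda(\tilde{x}) - r(\tilde{x}) = M'\tilde{x}|\lambda'(\tilde{x})|$ with $M' \leq M(\tilde{x})$, where $M(\tilde{x})$ counts the virtual monopolies at $\tilde{x}$. Then I would replay the MHR proof of Theorem~\ref{thm:PoSbody}: the chain of inequalities
\[
\int_{\tilde{x}}^{x^*}(\lambda(x)-r(x))\,dx \;\leq\; \frac{\lambda(\tilde{x})-r(\tilde{x})}{|\lambda'(\tilde{x})|}\int_{\tilde{x}}^{x^*}|\lambda'(x)|\,dx \;=\; M'\tilde{x}\bigl(\lambda(\tilde{x})-\lambda(x^*)\bigr),
\]
uses only Lemma~\ref{lem_mhrcost} (which is purely a statement about $\lambda$ and $r$, not about graphs) and the equilibrium equation. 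The same substitution works for the concave bound ($1+M/2$), for the $F_p$ and $F_{ced}$ bounds of Theorem~\ref{thm_spec2}, and for the MPE bounds of Claims~\ref{thm_mpepos} and~\ref{clm_mpespecem}, which all rely only on Claim~\ref{claim_lrMPE} plus the $M\tilde{x}|\lambda'(\tilde{x})|$ identity. In every case the argument delivers efficiency at most $f(M')\leq f(M(\tilde{x}))$ for the appropriate function $f$, since all the bounds are monotone in $M$.

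For the uniform demand case (Theorem~\ref{theorem_linearutilitymain}), I would reuse its existing proof essentially verbatim: the ascending-price construction already identified and handled virtual monopolies under the name ``VMs,'' and none of its steps needed $s$-$t$ paths instead of abstract bundles, so efficient equilibrium still exists. The only place that requires any care is ensuring the welfare integral $\int_0^{\tilde{x}}(\lambda-r)\,dx$ in the denominator of the efficiency bounds still admits the lower bound $\tilde{x}(\lambda(\tilde{x})-r(\tilde{x}))$, but this uses only monotonicity of $\lambda-r$, which holds in the abstract-bundles setting since $R(x)$ remains convex (min-cost allocations over an arbitrary set of bundles still give a convex $R$ by the same averaging argument as in Proposition~\ref{prop:Rstuff}).

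The main obstacle I anticipate is cleanly asserting that $M'\leq M(\tilde{x})$ when virtual monopolies that were ``half-active'' at termination of the ascending-price algorithm contribute with a coefficient lying between $1/M$ and $1$ rather than exactly $1/M$. I would handle this by the same bucketing used in the paragraph preceding the theorem: take the active monopolies' common extra price $\Gamma(\tilde{x})$ and verify $\Gamma(\tilde{x})\geq (\lambda(\tilde{x})-r(\tilde{x}))/M(\tilde{x})$ via Lemma~\ref{lem_activeprices}, which pins the effective monopoly count $M'$ below $M(\tilde{x})$ and closes the argument.
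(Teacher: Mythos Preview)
Your proposal is correct and matches the paper's approach essentially exactly: the paper proves the heavy machinery (ascending-price algorithm, continuity of prices, Theorem~\ref{thm_existencegenmodel}, and the corollary giving $\lambda(\tilde{x})-r(\tilde{x})=M'\tilde{x}|\lambda'(\tilde{x})|$ with $M'\le M(\tilde{x})$) first, and then simply asserts that ``all our previous efficiency bounds therefore hold, with $M$ being the number of virtual monopolies,'' which is precisely the observation you spell out---that the Section~\ref{sec:efficiency} proofs depend only on the equilibrium identity and properties of $\lambda$ and $r$, not on the graph structure. Your treatment of the $M'\le M(\tilde{x})$ issue via Lemma~\ref{lem_activeprices} is also exactly how the paper handles it.
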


While in the worst case, the number of virtual monopolies can be as large as the number of sellers, for markets with reasonable (but not perfect) competition it is likely to be a lot less.

\section{Proofs from Section 6: Multiple Source Networks and Efficient Equilibrium}
\label{appsec:pos1}
\begin{thm_app}{thm_specposcap}
For any single-source single-sink instance with edge capacities, $M$ monopolies, and demand function $\lx = ax^{1/r}$, the efficiency is one as long as $r > M$.
\end{thm_app}

\begin{proof} Consider any optimal flow $(x^*_e)$ of size $x^*$. Clearly, since the demand function is defined for all $x > 0$, at the optimal flow $x^*$,  the minimum cut of the network is saturated. First assume that at least one monopoly edge is saturated. That is, for some monopoly $c_e = x^*$. For ease of exposition, we prove the claim when $\lx = x^{-1/r}$, the same proof applies when the function is scaled by a constant. Observe that $x\ldx = -\frac{1}{r}x^{-1/r} = -\frac{1}{r}\lx$. We now price each edge as follows. If it is a non-monopoly edge, then set $p^*_e = 0$; if it is a monopoly edge which is unsaturated, set $p^*_e = \frac{1}{r}\lambda(x^*)$, and distribute the remaining price equally on the saturated monopolies. Let $U$ denote the set of unsaturated monopolies and $S$, the set of saturated monopolies.

Note that since $|S| + |U| = M$ and $r > M$, then the price of each saturated monopoly $p^*_e > \frac{1}{M}\lambda(x^*)$. We now claim that these prices along with the flow of $(x^*_e)$ form a Nash equilibrium for the given instance. First observe that the cost of all paths exactly equals $\lambda(x^*)$, so this is indeed a best-response flow to the prices.

No non-monopoly edge has any incentive to raise or lower its price since their price is zero. Now, consider the saturated monopolies. Clearly, these monopolies do not wish to decrease their price as they are already saturated and cannot sustain a larger flow. Suppose one of these monopolies $e$ increases its price to $p$, resulting in a new flow of $x \leq x^*$. Consider the profit of this monopoly at this new price and flow: $\pi(x)=px=[p_e^*+\lambda(x)-\lambda(x^*)]x$. To show that this deviation is not beneficial, we will show that $\pi(x)$ is maximized at $x=x^*$.

The derivative of $\pi(x)$ is
$$\pi'(x)=p^*_e-\lambda(x^*)+\lambda(x)+x\lambda'(x) > \frac{1}{M}\lambda(x^*)-\lambda(x^*)+\lambda(x)-\frac{1}{r}\lx,$$
due to our observations about $p^*_e$ and $x\lambda'(x)$ above. Thus, to show that $\pi(x^*)\geq \pi(x)$ for all $x\leq x^*$, we only need to show that the above is at least 0, i.e., $$\frac{M-1}{M}\lambda(x^*)\leq\frac{r-1}{r}\lambda(x).$$
This is true since $r>M$ and $\lambda$ is non-increasing, as desired. Thus, all saturated monopolies have no incentive to change their price.

 %with its new profit $\pi(x) = px$. Then, we have
%$$\frac{d}{dx}pro = p + x\frac{dp}{dx} = p + x\ldx = p - \frac{x}{r}\lx.$$ Suppose that $p_0$ is the price of all the other monopolies, we know that $p_0 + p^*_e = \lambda(x^*)$. Since $p^*_e \geq \frac{1}{M}\lambda(x^*)$, $p_0 \leq (1 - \frac{1}{M})\lambda(x^*)$. At the new price $p$, we know that $p + p_0 = \lx$, so
%$$p = \lx - p_0 \geq \lx - \frac{M-1}{M}\lambda(x^*).$$ So, we can conclude that
%$$\frac{d}{dx}pro = p - \frac{x}{r}\lx \geq \lx - \frac{M-1}{M}\lambda(x^*) - \frac{x}{M}\lx \geq \frac{M-1}{M}(\lx - \lambda(x^*)).$$
%
%Since $x \leq x^*$, $\lx \geq \lambda(x^*)$, so $\frac{d}{dx}pro \geq 0$ for all $x \leq x^*$ or equivalently $\frac{d}{dp}pro \leq 0$ for all $p \geq p^*_e$. In other words, whatever be the monopoly's profit upon increasing its price to $p$, it could have made  better profit by increasing it by a smaller amount. By this recursive reasoning, its profit is maximized at $p=p^*_e$, so saturated monopolies are at equilibrium.

Consider any unsaturated monopoly $e$, with a price of $p^*_e=\frac{1}{r}\lambda(x^*)$. Note that if $p_0$ is the total price of all the other monopolies, then $p_0 = \frac{r-1}{r}\lambda(x^*)$. Suppose this monopoly changes its price to $p$ and the resulting flow is $x$. Then, by the same argument as above, we know that its profit is $\pi(x)=px=[p_e^*+\lambda(x)-\lambda(x^*)]x$, and so
$$\pi'(x)=p^*_e-\lambda(x^*)+\lambda(x)+x\lambda'(x) = \frac{r-1}{r}(\lx - \lambda(x^*)).$$
Since $\lx$ is non-increasing, we know that for all $x \leq x^*$, $\pi'(x) \geq 0$, and for all $x\geq x^*$, $\pi'(x)\leq 0$. Thus, $\pi(x)$ is maximized at $x^*$, and so there is no incentive for $e$ to change its price.

The above argument of stability was only for the case when at least one monopoly is saturated. Now, suppose that no monopoly is saturated. Then we price all monopolies at $p^*_e = \frac{1}{r}\lambda(x^*)$. Since the minimum cut must be saturated in an optimum flow, this cut consists only of non-monopoly edges. Distribute the remaining surplus of $p=\lambda(x^*) - \frac{M}{r}\lambda(x^*)$ among any minimum cut. That is for any edge in the minimum cut, set the price equal to $p$. Clearly every $s-t$ path with flow is now priced at $\lambda(x^*)$, while all other paths are more expensive. Thus, $x^*$ is a best-response flow for these prices. It is not hard to reason that this is an equilibrium.
\end{proof}

\subsection{Bad Examples for Multiple-Source Networks.}
\label{app:multsourcesink}

We show two bad examples that illustrate the difficulty in extending our results for the single-source case to multiple-sources, i.e., when different buyers have different desired bundles. 
\begin{claim}
There exists an instance of our two-stage pricing game with two sources and a single sink that does not admit a Nash Equilibrium even when the buyers have uniform demand functions.
\end{claim}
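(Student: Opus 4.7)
I plan to exhibit a minimal two-source network where the best-response dynamics form an Edgeworth-style cycle with no pure-strategy fixed point. The instance has nodes $s_1, s_2, t$ and three edges $e_1=(s_1,t)$, $e_2=(s_2,t)$, $e_3=(s_2,s_1)$; edges $e_1$ and $e_3$ have zero cost, while $e_2$ carries a convex cost saturating at capacity $1$ (for example $C_{e_2}(x)=0$ for $x\le 1$ and $C_{e_2}(x)=M(x-1)$ for $x>1$ with $M$ much larger than $\lambda_2$, which is admissible under the generalized model of Section~\ref{sec:generalizations}). Source $s_1$ has uniform demand $\lambda_1=1$ with $T_1=10$ buyers; source $s_2$ has uniform demand $\lambda_2=4$ with $T_2=3$ buyers. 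Thus $s_1$'s unique path is $\{e_1\}$, while $s_2$'s options are the direct link $\{e_2\}$ and the detour $\{e_3,e_1\}$.

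My first step is to show that in any putative Nash equilibrium the seller of $e_1$ must charge $p_1=\lambda_1=1$. Any $p_1\le 1$ secures all $10$ of $s_1$'s captive buyers plus whatever share of $s_2$ routes through the detour, with revenue bounded below by $10$ and monotonically increasing in $p_1$ on $[0,1]$; any $p_1>1$ loses $s_1$ entirely and can extract at most $3\lambda_2=12$ from $s_2$. A short enumeration over the relevant signs of $p_1+p_3-p_2$ shows $p_1=1$ weakly dominates. The boundary $p_3>3$ is also easy to discard: then the detour is unaffordable for $s_2$, so $e_3$ earns $0$ and strictly prefers $p_3=3$, which yields at least $2\cdot 3=6$.

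Fixing $p_1=1$ and $p_3\le 3$, I analyze the remaining game between $e_2$ and $e_3$ by splitting on the sign of $p_2-(1+p_3)$. If $p_2>1+p_3$, seller $e_2$ has no flow and profits strictly by undercutting to $p_2'=1+p_3-\epsilon$, earning revenue close to $1+p_3$. If $p_2<1+p_3$, seller $e_2$ earns only $p_2$ and strictly improves by pushing $p_2$ toward the (unattained) supremum $1+p_3$, so no stable price exists in this regime. At the tie $p_2=1+p_3$, every buyer best-response places some $x\in[0,1]$ of $s_2$'s buyers on $e_2$ and $3-x$ on the detour: when $x<1$, seller $e_2$ improves by undercutting by $\epsilon$; when $x=1$, seller $e_3$ improves by cutting $p_3$ to $p_3-\epsilon$, moving all three $s_2$ buyers onto the detour and raising its revenue from $2p_3$ to $3(p_3-\epsilon)$, which is strictly larger for small $\epsilon$ whenever $p_3>0$. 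The corner $p_3=0$ is ruled out separately: $e_3$ can raise to $\epsilon$ and earn $2\epsilon>0$.

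The main obstacle is handling the tie regime $p_2=1+p_3$ under the paper's strong equilibrium definition, which demands that for \emph{every} feasible best-response flow the deviating seller's profit cannot strictly increase; equivalently, to defeat a putative equilibrium I only need to exhibit a deviation together with a single induced best-response yielding strictly higher profit. The argument must therefore, for each possible split $x$ specified by the candidate NE, produce a concrete deviation whose induced best-response is either unique or contains at least one strictly profitable outcome. A secondary subtlety is confirming that the capacity-saturating cost on $e_2$ lies in the model's admissible class; this I handle by invoking Section~\ref{sec:generalizations}, which relaxes both the $c_e(0)=0$ and twice-differentiability requirements to piecewise convex costs.
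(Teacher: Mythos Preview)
Your construction has a genuine gap: the instance you propose actually \emph{admits} a Nash equilibrium, so it cannot establish the claim. The error is in how you treat the capacity on $e_2$. In this model capacities are encoded purely as seller-side production costs; buyers see only posted prices and route along any cheapest path. Hence when $e_2$ strictly undercuts the detour by setting $p_2<1+p_3$, \emph{all three} units of $s_2$'s demand move onto $e_2$ in the (now unique) buyer best-response, and $e_2$ incurs cost $2M$. Your step ``when $x<1$, seller $e_2$ improves by undercutting by $\epsilon$'' is therefore false: undercutting yields profit roughly $3(1+p_3)-2M<0$, not $1\cdot(1+p_3)$. Likewise, the assertion that at a tie ``every buyer best-response places some $x\in[0,1]$ of $s_2$'s buyers on $e_2$'' is wrong---the best-response set is $x\in[0,3]$, since nothing on the buyer side enforces the cap.

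Concretely, take $(p_1,p_2,p_3)=(1,2,1)$ with $s_1$ sending all $10$ units on $e_1$ and $s_2$ sending all $3$ units on the detour $e_3e_1$. This is a Nash equilibrium of your instance: $e_1$ earns $13$, and any strict price increase loses both sources (for $s_2$ the detour then costs more than $p_2=2$), while any decrease lowers revenue proportionally; $e_3$ earns $3$ and faces the same trade-off against $p_2=2$; $e_2$ earns $0$, but any strict undercut floods it with three units at cost $2M$, and any increase still yields zero flow. More generally, once $T_2$ exceeds the capacity of $e_2$, a ``detour-only'' allocation of this form stabilizes, so no tweak of the parameters rescues the construction. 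The paper's own instance avoids this trap by using linear, uncapacitated costs on every edge, so that the undercutting seller ($e_3$ there) can absorb the extra unit of flow at bounded marginal cost; that structural feature is exactly what your example is missing.
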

\begin{example}
\label{ex:multsourcenoeq}
Let the two source nodes be $s_1$ and $s_2$. There are two paths between $s_1$ and $t$, a direct edge $e_1 = (s_1,t)$, and a simple path $s_1 \rightarrow s_2 \rightarrow t$. In other words, $s_1$ has two preferred bundles $\left\lbrace (e_1), (e_2,e_3) \right\rbrace$, and $s_2$ has exactly one bundle $e_3$ that it desires. Source $s_1$ has a demand for one unit of flow which it values at $\lambda_1 = 100$, and $s_2$ also has demand for one unit of flow, which it values at $\lambda_2 = 25$. We assume that $e_1$ has a cost function $C_1(x) = 3x$ and $e_3$ has a cost $C_3(x) = 2x$. $e_2$ has no production cost. The same example can actually be shown to hold for more convex cost functions, although for simplicity we stick to these linear cost functions.
\end{example}
\begin{proof}
The following points establish that no Nash Equilibrium exists. We will let $p_i$ denote the price of edge $e_i$.
\begin{enumerate}
\item In any Nash equilibrium, $e_3$ makes a profit of at least $23$, else it would change its price to $25-\epsilon$ and receive at least one unit of flow, obtaining a profit of at least $25-\epsilon - 2(1)$. Thus, since the total flow possible on $e_3$ is at most size 2, it must be that $p_3\geq 13.5$ in every Nash equilibrium (since then $2p_3-2(2)=23$).

\item In any Nash equilibrium, all of $s_1$'s flow must be on edge $e_1$. If not, then $e_1$ can always lower its price for more flow since $p_3\geq 13.5$. 

\item Since $e_3$ is only receiving flow from $s_2$ in a Nash equilibrium, it must be that $p_3=25$. Otherwise $e_3$ would benefit from changing its price: if $p_3<25$ it could raise the price by some tiny amount and still receive the same flow; if $p_3>25$ then its profit is 0 since it receives no flow. Therefore $p_1\leq 25+p_2$: if it were larger then $e_1$ would receive 0 flow because it would take the $e_2,e_3$ path.

\item Now consider the prices $p_1\leq 25+p_2$ and $p_3=25$. If $p_1>25$, then $p_2$ could lower its price to be small enough until $25+p_2<p_1$, and then $e_2$ would receive positive profit because it would receive flow from $s_1$. Thus, $p_1\leq 25$. $e_1$ is receiving all of $s_1$'s flow, so it could raise its price unless $p_1=25+p_2$; since $p_1\leq 25$ this implies that $p_1=p_3=25$ and $p_2=0$ at every Nash equilibrium.

\item $p_1=p_3=25$ and $p_2=0$ is not a Nash equilibrium, however: $e_3$ has incentive to lower its price to $25-\epsilon$, and receive utility $2(25-\epsilon)-4$ instead of $25-2$. \end{enumerate}\end{proof}

\begin{clm_app}{ex:nomonopolyeq_main}
There exists an instance with two sources, one sink, and no monopoly edges for either source, where all equilibria are inefficient.
\end{clm_app}
\begin{proof} Consider the instance shown in the Figure~\ref{fig_monopolyposgeq1_main} with two sources $s_1$ and $s_2$. Let the demand of $s_1$ be $\lx = 1-x$ for $x\leq 1$ and 0 afterward, and the demand of $s_2$ be $\lx = 4-x$ for $x\leq 4$ and 0 afterward. At the unique optimum point $s_1$ sends a flow of $1/3$ on its direct link (edge $e_1$) to the sink and $s_2$ sends $2$ units of flow, one on each of its paths.

First, we claim that there exist no set of prices stabilizing a flow of $\frac{1}{3}$ for $s_1$ and $2$ units for $s_2$ divided equally on both the paths. Since $s_1$ is sending a total of a third of a unit of flow, the price on any flow path must be $1-\frac{1}{3} = \frac{2}{3}$. Similarly, the price on any flow-containing path for $s_2$ must be $2$. Any set of prices stabilizing this flow must therefore have $e_1$ being priced at $p=\frac{2}{3}$. Consider the flow carrying path for $s_2$ consisting of the edges $(s_2,i_1)$ and $(i_1,t)$. We claim that in any Nash Equilibrium with this flow, the edge $e_2=(i_1,t)$ must be priced at $p_2 \geq 1$. Indeed, if the price on this edge is less than $1$, then its profit is $p_2*1 - 1 < 0$, which means that the edge is no longer stable.

At the current prices, look at the two paths available for $s_1$ to send flow on. It's direct edge $e_1$ is priced at $p_1=\frac{2}{3}$ and the other path has a price no smaller than $p_2 \geq 1$. So the best-response is to always send flow on the edge $e_1$ and so $e_1$ is a virtual monopoly for this source (refer Section~\ref{appsec:generalizations} for a formal definition). Can these prices be a Nash Equilibrium? We claim that this is not the case. Suppose $e_1$ increases his price to $p'_1=\frac{3}{4}$. Note that for $s_1$, the $e_1$ path is still strictly cheaper than the other path at the current prices, so its best-response for this price would be to send a flow of $\frac{1}{4}$ on this edge. The seller's new profit is $\frac{3}{4}\times \frac{1}{4} - \frac{1}{4^2} = \frac{1}{8}$ which is greater than its old profit of $\frac{1}{9}$. The original solution is therefore not stable for the sellers.\end{proof}

\begin{thm_app}{thm_seriesparallelpos}
A multiple-source single-sink series-parallel network with no monopolies admits a Nash Equilibrium that has the same social welfare as the optimum solution for any given instance where all edges have $c_e(0)=0$.
\end{thm_app}

\begin{proof}
We begin by showing two simple structural properties on MS series-parallel graphs. If $P$ is a given path on a graph, we use the notation $P_{uv}$ to refer to the sub-path of $P$ between nodes $u$ and $v$.

\begin{lemma}
\label{sublem_sublem_serparbridge}
Let $E_1$ and $E_2$ be two node-disjoint paths in the graph between nodes $u$ and $v$ and $w \in E_2$. Then any $w$-$t$ path must go through $v$.
\end{lemma}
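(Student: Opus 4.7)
The plan is to induct on the series-parallel decomposition of the super-source-augmented graph $G^+$, viewed as a two-terminal directed SP graph with source $s_0$ (the super-source) and sink $t$. Recall that such a graph is either a single directed edge (base case), a series composition $G^+ = G_1 \circ G_2$ in which the sink of $G_1$ is identified with the source of $G_2$, or a parallel composition $G^+ = G_1 \| G_2$ in which the two sources are identified and the two sinks are identified. The base case is vacuous since a single edge admits no pair of internally disjoint $u$-$v$ paths.

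For the series step, write $G^+ = G_1 \circ G_2$ with $G_1 \cap G_2 = \{x\}$, where $x$ is the sink of $G_1$ and the source of $G_2$. First I would verify the subclaim that both $E_1$ and $E_2$ must lie entirely inside a single side $G_i$: any $u$-$v$ path that uses vertices from both sides must pass through the cut-vertex $x$, and since $E_1, E_2$ are internally node-disjoint and $x$ is a single vertex, at most one of them can use $x$ as an interior vertex; furthermore, once such a path crosses into $G_2$ through $x$, simplicity prevents it from returning to $G_1$, so if $u \in G_1 \setminus \{x\}$ and $v \in G_1 \setminus \{x\}$ both paths remain in $G_1$. Suppose both paths lie in $G_1$, so $w \in G_1$. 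Any directed $w$-$t$ path in $G^+$ must traverse $x$ (the unique portal into $G_2$), and its prefix from $w$ to $x$ lives inside $G_1$. Applying the inductive hypothesis to $G_1$ (whose sink is $x$) forces this prefix to visit $v$, completing this case. The symmetric case where both paths lie in $G_2$ is analogous.

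For the parallel step, write $G^+ = G_1 \| G_2$ with $G_1 \cap G_2 = \{s_0, t\}$. If $E_1$ and $E_2$ both lie in a single side $G_i$, then $u, v, w \in G_i$, and any directed $w$-$t$ path in $G^+$ must remain in $G_i$: the only way to switch sides is through a shared vertex, but $s_0$ has no incoming edges and reaching $t$ ends the path. The inductive hypothesis applied to $G_i$ (which carries the same terminals $s_0, t$ as $G^+$) then forces the path through $v$. Otherwise, one of $E_1, E_2$ lies in $G_1$ and the other in $G_2$, so both endpoints $u$ and $v$ must lie in $G_1 \cap G_2 = \{s_0, t\}$. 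Directedness combined with the absence of incoming edges at $s_0$ then forces $u = s_0$ and $v = t$, and the conclusion is immediate because every $w$-$t$ path terminates at $v$.

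The main obstacle I anticipate is the subclaim in the series step that two internally disjoint $u$-$v$ paths cannot straddle both sides of a series composition; handling every placement of $u, v$ relative to the cut-vertex $x$ requires a careful but elementary case analysis of simple paths through an articulation vertex. The parallel step is smoother precisely because directedness at the super-source confines the relevant $w$-$t$ paths to a single side, letting induction do the work.
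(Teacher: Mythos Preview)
Your induction on the series-parallel decomposition of $G^+$ is correct and constitutes a genuinely different argument from the paper's. The paper does not induct at all: it invokes a cited structural fact about two-terminal series-parallel graphs (if $P_1,P_2$ are internally node-disjoint paths between two vertices, then no ``bridge'' path can run from an internal vertex of $P_1$ to an internal vertex of $P_2$), and then argues by contradiction. Assuming a $w$-$t$ path $P'$ avoids $v$, it takes any $v$-$t$ path $P$, lets $z$ be the first vertex where $P'$ meets $P$, and builds two $u$-$z$ paths, namely $E_1$ followed by $P_{vz}$, and $E_{2_{uw}}$ followed by $P'_{wz}$; the segment $E_{2_{wv}}$ is then a forbidden bridge between them. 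Your proof trades this short contradiction (which leans on an external reference and on a ``clearly disjoint'' step that actually requires some care) for a longer but fully self-contained structural induction. The series case is where your version does real work, and your anticipated obstacle is exactly right: the placement of $u,v$ relative to the articulation vertex $x$ needs a small case analysis, but directedness (the sink of $G_1$ has no outgoing edges in $G_1$, the source of $G_2$ has no incoming edges in $G_2$) kills the cross-side cases cleanly. The parallel case is indeed routine for the reason you give. Net effect: the paper's route is shorter if one is willing to cite the no-bridge characterization; yours is more elementary and avoids the somewhat delicate disjointness verification hidden in the paper's contradiction.
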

\begin{proof}
The series-parallel nature of the directed graph precludes the existence of bridges. Formally, if $P_1$ and $P_2$ are node-disjoint paths between two given nodes, then for any $i \in P_1$ and $j \in P_2$, there cannot exist a path between $i$ and $j$ (and vice-versa)~\cite{chang1981characterization}. For our given lemma, assume by contradiction that there exists a $w$-$t$ path $P'$ not containing $v$. Let $P$ be any $v$-$t$ path and suppose that $P'$ first intersects $P$ at some node $z$.

Consider the following two paths between the nodes $u$ and $z$: $E_1 \rightarrow P_{vz}$ and $E_{2_{uw}} \rightarrow
P'_{wz}$. Clearly, these two paths are mutually disjoint. But there exists a bridge between $w$ and $v$, namely $E_{2_{wv}}$, which contradicts the series-parallel property.
\end{proof}

\begin{lemma}
\label{lem_nosubs}
Let $E_1$ and $E_2$ be any two node-disjoint paths between two nodes $u$ and $v$ and $w \in E_2$. Then any $s_j$-$w$ path must go through $u$, where $s_j$ is any source node.
\end{lemma}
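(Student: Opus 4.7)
The plan is to proceed by contradiction, extending the argument used for Lemma~\ref{sublem_sublem_serparbridge}. I would assume there exists an $s_j$-to-$w$ path $P'$ that does not contain $u$, and work in the supergraph $G^*$ obtained by adding the super-source $s^*$ with edges to all original sources; by hypothesis, $G^*$ is series-parallel with terminals $s^*, t$. Prepending the edge $s^* \to s_j$ to $P'$ produces an $s^*$-to-$w$ path $P''$ in $G^*$ that still avoids $u$. As an immediate preliminary step, I would observe that $P''$ must also avoid $v$: any $v$-to-$w$ sub-path of $P''$ would combine with $E_{2_{wv}}$ to form a directed cycle, contradicting the DAG assumption (exactly the cycle trick from Lemma~\ref{sublem_sublem_serparbridge}).

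The core of the proof would then be to exhibit a forbidden theta ($W_4$) subdivision in $G^*$, contradicting series-parallelism. Let $w^*$ be the first node of $P''$ (after $s^*$) lying on $E_1 \cup E_2$, which is well-defined because $P''$ ends at $w \in E_2$. If $w^*$ is an interior node of $E_1$, then the sub-path $P''_{w^* w}$ is a directed path from an interior node of $E_1$ to an interior node of $E_2$ that avoids both $u$ and $v$, which is precisely the ``bridge between disjoint parallel paths'' configuration ruled out by the series-parallel characterization cited in Lemma~\ref{sublem_sublem_serparbridge}. If instead $w^*$ is an interior node of $E_2$ (possibly $w^* = w$), I would fix an $s^*$-to-$u$ path $Q$ in $G^*$ that avoids the interiors of $E_1$ and $E_2$ (such a $Q$ exists because, in the SP decomposition of $G^*$, the parallel sub-component containing $E_1, E_2$ is reached from $s^*$ through an outer series layer), and let $y$ be the last shared node of $Q$ and $P''_{s^* w^*}$. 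Then the five paths $Q_{yu}$, $P''_{y w^*}$, $E_{2_{u w^*}}$, $E_1$, and $E_{2_{w^* v}}$ would be pairwise internally node-disjoint and form a theta subdivision on the nodes $y, u, w^*, v$, again violating series-parallelism.

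The main obstacle I anticipate is Case~2, where internal disjointness of the five paths must be checked carefully; this depends on both the clean choice of $Q$ (avoiding $E_1^\circ \cup E_2^\circ$) and on choosing $w^*$ as the \emph{first} node of $P''$ on $E_1 \cup E_2$, which guarantees that the interior of $P''_{y w^*}$ touches neither $E_1$ nor $E_2$. A useful simplifying observation is that $s_j$, being an original source with no incoming edges in $G$ and only the edge $s^* \to s_j$ in $G^*$, cannot be an interior node of $E_1$ or $E_2$ and cannot equal $v$; so the only degenerate possibility is $s_j = u$, which makes the lemma trivial, while in all other cases $s_j$ lies ``outside'' the parallel sub-component containing $E_1, E_2$, which is exactly what justifies attempting the theta construction.
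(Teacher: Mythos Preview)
Your proposal is correct and takes essentially the same approach as the paper: both proceed by contradiction, build two node-disjoint paths from a common branch point to $v$ (one through $u$ and $E_1$, one through the offending path and $E_{2_{wv}}$), and exhibit the sub-path of $E_2$ from $u$ toward $w$ as the forbidden bridge. The paper's version is more direct---it fixes an $s$-to-$u$ path $P$, lets $z$ be the node on $P$ closest to $u$ that also lies on $P'$, and uses $P_{zu}\!\to\!E_1$ and $P'_{zw}\!\to\!E_{2_{wv}}$ as the two paths, without your case split; your Case~1 and the $w^*$ refinement just make explicit a disjointness check the paper leaves implicit (and note that your $Q$ needs no SP-decomposition argument: in a DAG every path into $u$ already avoids the interiors of $E_1,E_2$).
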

\begin{proof}
The series-parallel nature of the directed graph precludes the existence of bridges~\cite{chang1981characterization}. Suppose that the premise is false and there is some $s_j$-$w$ path $P'$ not containing $u$. Let $P$ be some $s$-$u$ path in the graph. Among all the nodes on $P$ that the path $P'$ intersects, let $z$ be the one the one closest to $u$ (along path $P$). In other words, there exist a path from $z$ to $u$ along $P$, and $z$ to $w$ along $P'$ that do not have any nodes in common. Now consider the following two mutually disjoint paths, $P_{zu} \rightarrow E_1$ and $P'_{zw} \rightarrow E_{2_{wv}}$. There exists a bridge between $u$ and $w$ along $E_2$, which contradicts the series-parallel assumption. Therefore, any $s_j$-$w$ path must contain $u$.
\end{proof}

Now consider any optimal solution $\vec{x^*}$. Suppose that we price each edge $e$ at its marginal cost $c_e(x^*_e)$. We claim that these prices along with the optimal flow give us a Nash Equilibrium. One half of the proof is fairly easy: Lemma~\ref{cl_margcost} tells us that no edge would ever decrease its price from the marginal cost irrespective of the amount of flow it gains. So we only need to show that sellers cannot increase their price. We show something stronger namely that if any edge $e$ increases its price from $p_e = c_e(x^*_e)$, then it would lose all of its flow. This would imply that for every edge $e$ and every source $s_i$ sending some non-zero flow on that edge, there exists a $s_i$-$t$ path not containing $e$ with some flow on it (sent by $s_i$).

Consider any edge $e$ with some flow on it (or else its price is zero), and some source $s_i$ which has flow on that edge. The total price of any $s_i$-$t$ path with $s_i$-flow on it is $\lambda_i(x^*)$, where $x^*$ is the amount being sent by source $s_i$, and so all we need to show is that there exists some $s_i$-$t$ path not containing $e$ and having exactly this price. Let $P$ be any flow-carrying $s_i$-$t$ path that contains the edge $e$. Since no single edge monopolizes any source, this means that there must exist a $s_i$-$t$ path not containing $e$. More specifically, there is some predecessor $u$ and successor $v$ of $e$ along the path $P$, such that there exists an alternative $P'$ that has no edge in common with $P_{uv}$. That is $P'$ is a shortcut as we used for no monopolies case with just one source and sink (See Claim~\ref{clm_monopolypos1}).

If we show that (upon marginal pricing), $P_{uv}$ and $P'$ have the same cost, then we are done since the new path $P_{s_iu} \rightarrow P' \rightarrow P_{vt}$ and the original flow carrying path $P$ would have the same price, and thus $e$ would lose everything if it were to increase its price. Suppose that $s_i$ sends non-zero flow on some edge $e_0=(w_1,w_2)$ in $P'$, then we claim that $s_i$ must send some flow on a $u$-$v$ path that has no edge in common with $P_{uv}$. Since $e \in P_{uv}$ and $e_0 \in P'$ and the two paths are disjoint, then there cannot be any bridge connecting these two edges. More specifically, source $s_i$ is sending flow on $e_0$ through some path that does not contain $e$. All flow carrying paths must have the same marginal cost and therefore some price of $\lambda_i(x^*)$ and so for this case, $e$ cannot increase its price as it would lose its flow to the $s_i$-$t$ path not containing $e$. What if $s_i$ does not send flow on any edge belonging to $P'$?

In this case, we would like to show that $P_{uv}$ and $P'$ have exactly the same marginal cost or there exists some other path $P_j$ that is also mutually disjoint from $P_{uv}$ with the same price. Assume that $P_{uv}$ has a  strictly smaller price $\sum_{e \in P_{uv}}c_e(x^*_e)$ than $P'$'s price of $\sum_{e \in P'}c_e(x^*_e)$. Note that $P'$ cannot have the smaller price because $s_i$ has access to both these sub-paths and if $P'$ has a smaller marginal cost, we could have easily shifted some flow on $P'$ and reduced the cost of the optimal solution.

Since $\sum_{e \in P'}c_e(x^*_e) > \sum_{e \in P_{uv}}c_e(x^*_e)$, there must be some flow carrying edge $e_1=(w_3,w_4)$ on $P'$ that contains flow from some source $s_j$. By Lemma~\ref{lem_nosubs}, all $s_j$-$w_3$ paths must pass through $u$. And By Lemma~\ref{sublem_sublem_serparbridge}, all $w_3$-$t$ paths must go through $v$. Consider some path on which source $j$ sends non-zero flow throughout which also contains $w_3$ (say $P_j$). $P_j$ passes through both $u$ and $v$ but is disjoint from $P_{uv}$ because $P_j$ contains $w_3$ which also contained in $P'$. This means that if $P_{uv}$ and $P_{j_{uv}}$ are not disjoint, then there exists two internal nodes $u$ and $v$ in $P_{uv}$ and $P_{j_{uv}}$ (or vice-versa) that are connected which cause a bridge between $P_{uv}$ and $P'$. Therefore, it must be the case that $P_{uv}$ and $P_{j_{uv}}$ have no edge in common.

Now we know that source $i$ sends flow on $P_{uv}$ and source $j$ on $P_{j_{uv}}$. We claim now that $\sum_{e \in P_{uv}} c_e(x^*_e) = \sum_{e \in P_{j_{uv}}}c_e(x^*_e)$. If this is not the case, then one of $s_i$ or $s_j$ could shift their flow between these two sub-paths and reduce the overall cost, which is a contradiction. Therefore, for any edge $e$, there exists an alternative path not containing that edge. Therefore, marginal cost pricing results in  a Nash Equilibrium.
\end{proof}

\begin{clm_app}{clm_multsourceundemand}
There exists efficient equilibrium in multiple-source multiple-sink networks with uniform demand buyers at each source if one of the following is true
\begin{enumerate}

\item Buyers have a large demand and production costs are strictly convex.

\item Every source node is a leaf in the network.
\end{enumerate}
\end{clm_app}
We define these two cases formally before proving the result. 
\begin{enumerate}
\item We consider an arbitrary multiple-source multiple-sink network such that for every source $s_i$, 

$$\lambda_i(x) = \lambda_i \quad 0 \leq x \leq L_i \quad \text{$L_i$ is large}.$$

We also assume that the cost functions are strictly convex, i.e. for every given edge $C_e(x)$ is strictly convex or $\frac{d}{dx}C_e(x) = c_e(x)$ is increasing. 

\item For every source $s_i$, there exists a personal (last-mile) monopoly $e_i$ such that every $s_i$-$t_i$ path contains $e_i$ and for any other $j$, $e_i$ is not reachable from $j$.
\end{enumerate}
\begin{proof} \textbf{(Statement 1)} The proof is relatively straightforward. If the costs are convex, then there is a unique optimum flow $(x^*_e)$. Price each edge at $p_e = c_e(x^*_e)$, where $x^*_e$ is the total flow on that edge at the optimum. We claim that this marginal pricing results in a Nash Equilibrium. First, no source will send more than $L_i$ units of flow. Now, observe that by the properties of optimal flows, for any buyer $i$ with a non-zero flow $x^*(i)$ at the optimum and for any $s_i$-$t_i$ paths $P_i$ with non-zero flow and $P_j$ with no flow, we have
\begin{align}
\label{eqn_shortestpm} \lambda_i &= \sum_{e \in P_i} c_e(x^*_e)  = \sum_{e \in P_i}p_e\\
\label{eqn_shortestpm2} \lambda_i & \leq \sum_{e \in P_j} c_e(x^*)  = \sum_{e \in P_i}p_e
\end{align}
%Similarly for any buyer $i$ with zero flow at the optimum and any path $P_i$ for that buyer
%$$\lambda_i \leq \sum_{e \in P_i} c_e(x^*_e)  = \sum_{e \in P_i}p_e.$$
It is not hard to see that the buyer behavior is a best-response to these prices. Next, consider any seller $e$. By Claim~\ref{cl_margcost}, if they reduce their price, whatever be the resulting flow, the deviation will not be profitable (this claim still holds even in the presence of multiple buyer types, i.e., multiple sources). Suppose that such a seller increases its price. First assume that there is a non-zero flow $x^*_e$ on the edge. Then, from Equations~\ref{eqn_shortestpm}, \ref{eqn_shortestpm2}, for every source $i$ for whom there is a bundle ($s$-$t$ path) containing $e$, and every such bundle $P_i$, we know $\sum_{e \in P_i}p_e \geq \lambda_i$. This means that if the seller increase his price, its flow would drop to zero as it is not profitable for any buyer to send flow on a path costing strictly more than $\lambda_i$. The same applies to edges with no flow on them since they are priced at $p_e = c_e(0)$. This concludes our proof. $\blacksquare$

\textbf{(Statement 2)}The proof is almost identical to the previous case so we only focus on the differences. Once again consider the optimal flow $x^*$ and price each edge at its marginal cost. If the conditions in equations~\ref{eqn_shortestpm}, \ref{eqn_shortestpm2} hold, then marginal pricing gives us a Nash Equilibrium. Suppose that is not the case, then the condition of Equation~\ref{eqn_shortestpm} may not hold. That is for some source $i$ and every $s_i-t_i$ path with non-zero flow $P_i$, it may be the case that
\begin{equation}
\label{eqn_sourcenottight}
\lambda_i > \sum_{e \in P_i}c_e(x^*_e)
\end{equation}

In this case, consider the personal monopoly $e_i$ for that source. Suppose the current price of this edge is $p_{e_i}$, increase the price to $p'_{e_i} = p_{e_i} + \lambda_i - \sum_{e \in P_i}c_e(x^*_e)$, where $P_i$ is any $s_i-t_i$ path with non-zero flow. Since the flow is a min-cost flow all such paths have equal marginal cost, so the increase in price is unambiguous. Repeat this for all sources which satisfy equation~\ref{eqn_sourcenottight}. Note that with the new prices $(x^*_e)$ is still a best-response flow for each buyer since the price of each $s_i-t_i$ path for buyer $i$ is now exactly $\lambda_i$. Also note that we have increased prices on only the leaf edges, so the flow of other sources is not affected by this change.

Now, no node in the network can increase its price as it would lose all the flow. Look at the leaf monopoly edges with increased price: if they decrease their price, then their flow cannot increase since that source node is already sending its full quota of $l_i$ units of flow. Leaf monopolies which are still priced at the marginal have no incentive to decrease their price either. So this set of prices forms a Nash Equilibrium supporting the optimal flow.
\end{proof}
\end{document}